\documentclass[twocolumn,prx,longbibliography,nofootinbib, superscriptaddress]{revtex4-2}
\usepackage{mathrsfs}
\usepackage[dvips]{graphicx} % for figures
\usepackage{amsfonts}
\usepackage{amssymb}
\usepackage{amscd}
\usepackage{amsmath}    % need for subequations	

\usepackage{amsthm}
\usepackage{appendix}
\usepackage{bbm}
\usepackage{dsfont}
\usepackage{mathptmx}
\usepackage{enumerate}
\usepackage{enumitem}
\usepackage{epsfig}
\usepackage{float}
\usepackage[colorlinks=true,linkcolor=teal,citecolor=teal,urlcolor=teal]{hyperref}
\usepackage{makecell}
\usepackage{pifont}
\usepackage{subfigure}
\usepackage{subfloat}
\usepackage{xcolor}		
\usepackage{physics}
\usepackage[most]{tcolorbox}
\usepackage{tabularx}
\usepackage{MnSymbol}
\usepackage{soul}
\usepackage{tikz}
\usepackage{tikzpeople}
\usepackage{pgfplots}
\usepackage[normalem]{ulem}
\usepackage{newtxtext}
\usepackage{titletoc}
\usepackage{multirow}

\usetikzlibrary{quantikz}
\usetikzlibrary{arrows}
\usetikzlibrary{shapes,fadings,snakes}
\usetikzlibrary{decorations.pathmorphing,patterns}
\usetikzlibrary{calc}
\usetikzlibrary{positioning}
\usepackage[linesnumbered,ruled,vlined]{algorithm2e}

\newtheorem{theorem}{Theorem}

\newtheorem{lemma}{Lemma}
\newtheorem{corollary}{Corollary}

\newtheorem{observation}{Observation}
\newtheorem{definition}{Definition}
\newtheorem{proposition}{Proposition}

\newcommand{\bI}{\mathbb{I}}

\newcommand{\cT}{\mathcal{T}}

\newcommand{\E}{\mathbb{E}}
\newcommand{\Cl}{\mathrm{Cl}}

\newcommand{\ot}{\otimes}
\newcommand{\C}{\mathbb{C}}

\newcommand{\spec}{\operatorname{spec}}
\newcommand{\Sep}{\mathrm{Sep}}
\newcommand{\ME}{\mathrm{ME}}
\newcommand{\Wg}{\mathrm{Wg}}

\newcommand{\esssup}{\operatorname*{esssup}}

\newcommand{\G}{\mathcal{G}}

\newcommand{\Fset}{\mathcal{F}}

\newcommand{\STAB}{\mathrm{STAB}}
\newcommand{\Ufree}{\mathcal{U}_\Fset}

\newcommand{\floor}[1]{\left\lfloor #1\right\rfloor}
\newcommand{\Coeff}[2]{\left[#1\right]#2}

\definecolor{ingo}{rgb}{1,.7,.02}

\usepackage[most]{tcolorbox}
\newtcolorbox[auto counter]{mybox}[2][]{
	enhanced,
	breakable,
	colback=blue!5!white,
	colframe=blue!75!black,
	fonttitle=\bfseries,
	title=Box \thetcbcounter: #2,#1
}

\makeatletter

\newif\ifappendixtocrecord
\appendixtocrecordfalse

\newcommand{\appendixtableofcontents}{%
  \section*{Contents of Appendices}%
  \setcounter{tocdepth}{3}%
  \@starttoc{atoc}%
  \appendixtocrecordtrue
}

\let\oldsection\section
\let\oldsubsection\subsection
\let\oldsubsubsection\subsubsection

\renewcommand{\section}{%
  \@ifstar{\app@sectionstar}{\@ifnextchar[{\app@sectionopt}{\app@sectionnoopt}}%
}

\newcommand{\app@sectionstar}[1]{%
  \oldsection*{#1}%
}

\newcommand{\app@sectionnoopt}[1]{%
  \oldsection{#1}%
  \ifappendixtocrecord
    \addcontentsline{atoc}{section}{\protect\numberline{\thesection}#1}%
  \fi
}

\def\app@sectionopt[#1]#2{%
  \oldsection[#1]{#2}%
  \ifappendixtocrecord
    \addcontentsline{atoc}{section}{\protect\numberline{\thesection}#1}%
  \fi
}

\renewcommand{\subsection}{%
  \@ifstar{\app@subsectionstar}{\@ifnextchar[{\app@subsectionopt}{\app@subsectionnoopt}}%
}

\newcommand{\app@subsectionstar}[1]{%
  \oldsubsection*{#1}%
}

\newcommand{\app@subsectionnoopt}[1]{%
  \oldsubsection{#1}%
  \ifappendixtocrecord
    \addcontentsline{atoc}{subsection}{\protect\numberline{\thesubsection}#1}%
  \fi
}

\def\app@subsectionopt[#1]#2{%
  \oldsubsection[#1]{#2}%
  \ifappendixtocrecord
    \addcontentsline{atoc}{subsection}{\protect\numberline{\thesubsection}#1}%
  \fi
}

\renewcommand{\subsubsection}{%
  \@ifstar{\app@subsubsectionstar}{\@ifnextchar[{\app@subsubsectionopt}{\app@subsubsectionnoopt}}%
}

\newcommand{\app@subsubsectionstar}[1]{%
  \oldsubsubsection*{#1}%
}

\newcommand{\app@subsubsectionnoopt}[1]{%
  \oldsubsubsection{#1}%
  \ifappendixtocrecord
    \addcontentsline{atoc}{subsubsection}{\protect\numberline{\thesubsubsection}#1}%
  \fi
}

\def\app@subsubsectionopt[#1]#2{%
  \oldsubsubsection[#1]{#2}%
  \ifappendixtocrecord
    \addcontentsline{atoc}{subsubsection}{\protect\numberline{\thesubsubsection}#1}%
  \fi
}

\makeatother

\begin{document}
\title{Witness expansion: \\
A unified framework for analytical and measurable mixed-state resource detection}

\author{Yifan Tang}
\thanks{\href{mailto:yifta@zedat.fu-berlin.de}{yifta@zedat.fu-berlin.de}}
\affiliation{Dahlem Center for Complex Quantum Systems, Freie Universität Berlin, 14195 Berlin, Germany}

\author{Chengkai Zhu}
\thanks{\href{mailto:zhuchengkai7@gmail.com}{zhuchengkai7@gmail.com}}
\affiliation{Thrust of Artificial Intelligence, Information Hub, The Hong Kong University of Science and Technology (Guangzhou), Guangdong 511453, China}

\author{Yuzhen Zhang}
\affiliation{Department of Physics, University of California, Santa Barbara, CA 93106, USA}

\author{Jens Eisert}
\affiliation{Dahlem Center for Complex Quantum Systems, Freie Universität Berlin, 14195 Berlin, Germany}
\affiliation{Helmholtz-Zentrum Berlin für Materialien und Energie, 14109 Berlin, Germany}

\author{\mbox{Zi-Wen Liu}}
\affiliation{Yau Mathematical Sciences Center, Tsinghua University, Beijing 100084, China}  

\author{Ingo Roth}
\affiliation{Quantum Research Center, Technology Innovation Institute (TII), Abu Dhabi, United Arab Emirates}

\author{Otfried Gühne}
\affiliation{Naturwissenschaftlich-Technische Fakultät, Universität Siegen, Walter-Flex-Straße 3, 57068 Siegen, Germany}

\author{Xin Wang}
\thanks{\href{mailto:wangxinfelix@gmail.com}{wangxinfelix@gmail.com}}
\affiliation{Thrust of Artificial Intelligence, Information Hub, The Hong Kong University of Science and Technology (Guangzhou), Guangdong 511453, China}

\author{Zhenhuan Liu}
\thanks{\href{mailto:qubithuan@gmail.com}{qubithuan@gmail.com}}
\affiliation{Quantum Research Center, Technology Innovation Institute (TII), Abu Dhabi, United Arab Emirates}

\begin{abstract}
Quantum information science aims to harness different kinds of quantum resources to accomplish specific information-processing tasks. These resources also play an increasingly important role in addressing fundamental questions concerning quantum phases and dynamics. Therefore, developing powerful and practical methods for identifying and detecting quantum resources is of great significance, with applications ranging from benchmarking quantum devices to understanding the fundamental structure of quantum theory. In this work, we propose \emph{witness expansion}, a unified framework for constructing nonlinear criteria for detecting quantum resources that are associated with a well-defined group of free unitaries. 
These criteria apply to both pure and mixed quantum states and are based on polynomial functions of the target state, which can be estimated experimentally using multiple copies of the state and evaluated analytically in certain physical models. 
We show how several well-known resource-detection quantities naturally emerge from our framework, including the $l_2$ norm of coherence, partial-transpose moments for entanglement, stabilizer entropy for nonstabilizerness (quantum magic), and fermionic antiflatness for fermionic non-Gaussianity. Beyond recovering these existing structures, our framework also yields new criteria for detecting qubit and qudit magic states, substantially enhancing witness-based detection capabilities.
In addition, it gives, to the best of our knowledge, the first analytical criterion for detecting mixed-state fermionic non-Gaussianity with respect to the convex hull of pure fermionic Gaussian states that remains nontrivial for arbitrary numbers of qubits, demonstrating the broad applicability and conceptual unifying power of the framework.

\end{abstract}

\maketitle

\section{Introduction}\label{sec:intro}

Since the late 20th century, the integration of quantum mechanics with information theory has given rise to quantum information science, which opens new perspectives on computation, communication, and quantum many-body physics.
At its core, quantum information science concerns the manipulation of different quantum resources for different information-processing tasks~\cite{chitambar2019resource,gour2025QuantumResourceTheories}.  
For example, quantum coherence serves as a key resource for quantum random number generation~\cite{herrero2017qrng,baumgratz2014Quantifying}; entanglement is essential in quantum secure communication, quantum networks, and quantum simulation~\cite{ekert1991QuantumCryptographyBased,Horodecki2009entanglement}; and quantum magic distinguishes quantum states that enable computational advantages from those that do not~\cite{howard2017ApplicationResourceTheory,gottesman1998HeisenbergRepresentationQuantum}.  
A quantum computer capable of performing certain tasks must therefore be able to generate, preserve, and manipulate the corresponding resources.  
In recent years, quantum resources have also emerged as valuable probes of quantum many-body physics, exhibiting distinctive signatures across different phases of matter~\cite{amico2008entanglement,jens2010area,tarabunga2023manybody,wei2025long}. 
Therefore, identifying and detecting quantum resources is of both practical and theoretical significance, with applications ranging from benchmarking quantum devices 
in the quantum technologies 
\cite{Eisert2020certification,Kliesch2021Certification} to probing the foundational structure of quantum theory.

Despite their importance, detecting quantum resources is often highly nontrivial because many resources are defined through mathematically complicated free sets.
Prominent examples include entanglement and magic, where resource states are specified by exclusion: entangled states are those that are not separable, and magic states are those that are not stabilizer states.
The difficulty becomes even more pronounced for mixed states, where one often has to decide whether a density matrix lies inside the convex hull of pure free states.
The geometric feature of convexity makes many mathematically well-defined pure-state resource criteria difficult to extend directly to mixed states~\cite{bermejo2025CharacterizingQuantumResourcefulness}, and often requires dedicated numerical optimization methods~\cite{Zhu2025unified}.
Fermionic non-Gaussianity provides a representative example, since the convex hull of pure fermionic Gaussian states is naturally connected to classical simulation complexity~\cite{Valiant2002QuantumCircuits}.
However, existing mixed-state criteria either rely on formulations other than this convex-hull structure~\cite{bittel2025OptimalTraceDistanceBoundsa,haug2026practical}, or lack a general method applicable to arbitrary system sizes without case-by-case optimization~\cite{oszmaniec2014ClassicalSimulationFermionic,vershynina2014CompleteCriterionConvexGaussianstate}.
Indeed, even when the full classical description of a density matrix is given, deciding whether it is resourceful is sometimes computationally hard~\cite{gurvits2003complexity,leone2026UnbearableHardnessDeciding}.
Thus, a central first step toward practical resource detection is to identify simpler, more transparent mathematical signatures of resourcefulness, or of broad, physically relevant subclasses of resource states.

Entanglement theory provides a paradigmatic example of this strategy.
Although no simple criterion gives a complete characterization of mixed-state entanglement of any size, a rich family of entanglement criteria 
has been developed, including entanglement witnesses, the \emph{positive partial transpose} (PPT) criterion, and moment-based criteria~\cite{peres1996separability,terhal2000BellInequalitiesSeparability,GUHNE2009detection,chruscinski2014EntanglementWitnessesConstruction}.
These criteria do not merely provide practical detection tools; they also reveal important structural and operational aspects of entanglement.
For instance, the 
PPT criterion is closely related to entanglement distillation and bound entanglement~\cite{vidal2002computable,PhD}, 
while witness-based and moment-based criteria have enabled a variety of practical detection protocols in experiments~\cite{pan2012multiphoton,elben2020mix,Brydges2019Probing,yu2021optimal,rico2024poly,cieslinski2024AnalysingQuantumSystems}. 
Moreover, concise entanglement quantifiers such as entanglement entropy and negativity often admit analytical expressions in quantum field theories and conformal field theories~\cite{Pasquale2004qft,calabrese2012negativity}, thereby deepening our understanding of many-body quantum physics.
These successes suggest that, even when a resource is difficult to characterize exactly, suitably designed detection criteria can capture its essential physical and operational content.

However, extending the success of entanglement detection criteria to other quantum resources is far from straightforward.
A fundamental obstacle is that different resources are characterized by genuinely different mathematical structures.
Many entanglement criteria are designed around the tensor-product structure of multipartite systems and therefore do not directly provide a general methodology for resources with different free state structures.
Witnesses provide a universal linear approach: as separating hyperplanes in the state space, they can be defined for any convex free state set and form a complete detection scheme.
However, the detection capability of any single linear witness is typically limited, and effective detection often requires access to a large family of witnesses~\cite{liu2022FundamentalLimitationDetectability}.
General resource-theoretic quantifiers, such as robustness measures and distillable resources~\cite{vidal1999RobustnessEntanglement,napoli2016RobustnessCoherenceOperational,campbell2017RoadsFaulttolerantUniversal}, can in principle be defined across different resource theories and detect all resource states.
However, they often require difficult optimizations over the free set or asymptotic operational tasks, and are therefore not always analytically tractable or directly computable.
Thus, rather than aiming at a complete resource quantifier, we seek a unified framework for constructing practical and analytical resource detection criteria.
The key is to identify simple common structures shared across different resources, while avoiding direct optimization over the full set of free states whenever possible.
This leads us to the following design principles:
\begin{enumerate}
\item \textbf{Universal:} Applicable to both pure and mixed states, and to various resources.
\item \textbf{Analytical:} Capable of yielding closed-form or analytically tractable expressions for concrete resources.
\item \textbf{Measurable:} Experimentally accessible through multi-copy measurement protocols.
\item \textbf{Adjustable:} Equipped with a tunable parameter that balances the detection capability and the practical cost.
\item \textbf{Pure-state faithful:} Capable of detecting all pure resource states with a single nonlinear criterion.
\end{enumerate}

\begin{figure}
\centering
\includegraphics[width=1\linewidth]{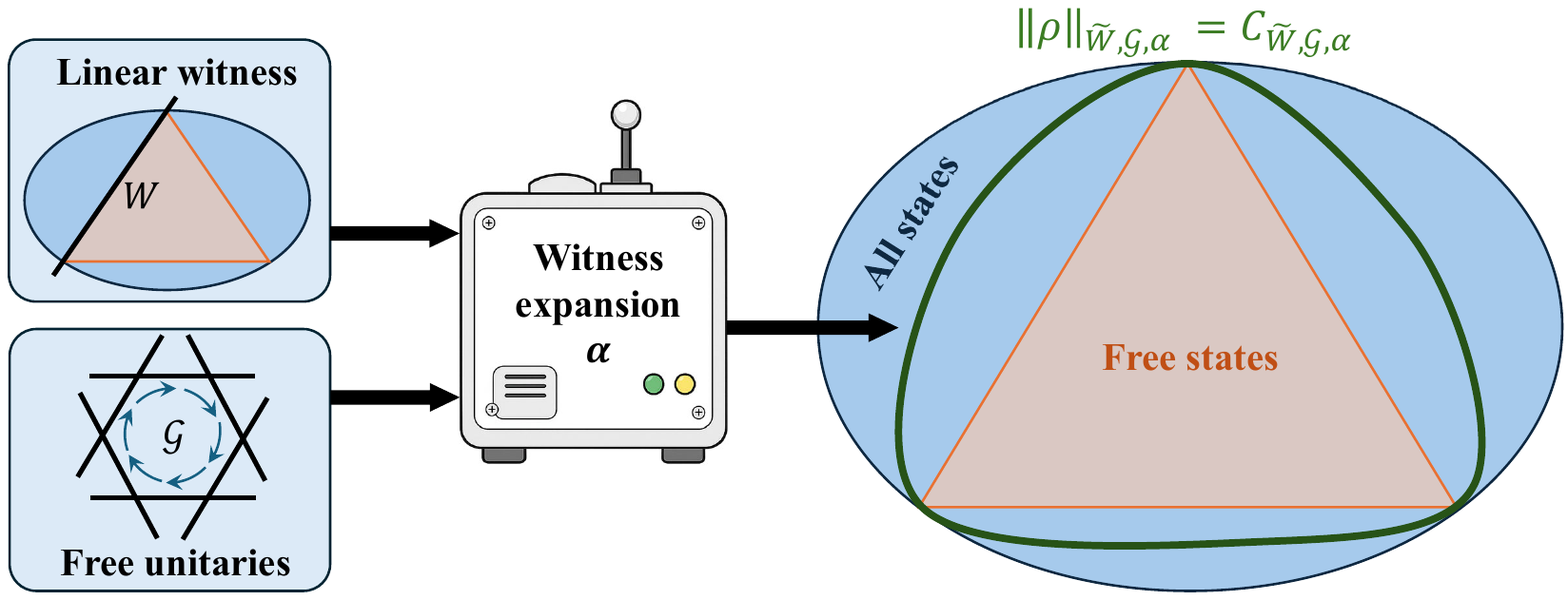}
\caption{ \textbf{Conceptual illustration of \emph{witness expansion} (WE).} The WE framework provides a unified methodology for quantum resource detection. By taking a linear seed witness $W$ and a corresponding group of free unitaries $\mathcal{G}$ as inputs, it systematically constructs a nonlinear detection criterion (the green curve). This generated criterion features a tunable parameter, $\alpha$, which governs the trade-off between the test's detection capability and its complexity.}
\label{fig:overview}
\end{figure}

In this work, we introduce \emph{witness expansion} (WE), a unified and versatile framework for systematically constructing nonlinear resource detection criteria that satisfy all the requirements listed above. 
As illustrated in Fig.~\ref{fig:overview}, the input of the framework consists of a linear seed resource witness $W$ and a set of free unitaries $\mathcal{G}$, preferably forming a compact group, associated with the resource theory under consideration. 
A nonlinear criterion is then generated by evaluating the $L^\alpha$-norm of the witness expectations over the free unitary orbit of $W$ and comparing it with the maximal value attainable by free states. 
The parameter $\alpha$ provides a tunable tradeoff between detection power and practical cost: larger $\alpha$ generally yields stronger detection, while smaller $\alpha$ often leads to simpler analytical form and experimental implementations. 
Importantly, apart from the choice of $W$ and $\mathcal{G}$, the construction does not require detailed resource-specific information about the full geometry of the free state set. 
The framework can also be optimized by restricting to suitable subgroups of free unitaries or by choosing a linear operator whose expectation value is highly sensitive to the target resource.
We note that nonlinear improvements of linear witnesses have been studied for entanglement detection~\cite{moroder2008IterationsNonlinearEntanglement,hyllus2006OptimalEntanglementWitnesses}, but these approaches rely on the special mathematical structure of entanglement and apply to restricted classes of detectable states.

We demonstrate the universality of WE by first showing that several well-known nonlinear resource quantifiers arise naturally within this framework.
These include coherence norms for coherence~\cite{baumgratz2014Quantifying}, R\'enyi entanglement entropy and partial-transpose moments for entanglement~\cite{peres1996separability}, stabilizer R\'enyi entropy for pure-state magic~\cite{Leone2022sre}, and fermionic antiflatness for fermionic non-Gaussianity of pure states~\cite{sierant2026FermionicMagicResources}.
The fact that quantities associated with such different resources can be generated from the 
same framework highlights the unifying power of WE.
In these examples, the nonlinear criteria generated by WE also exhibit stronger detection capability than the original linear seed witnesses; in particular, a single criterion constructed by WE can detect all pure resource states in the corresponding settings.
Beyond recovering existing structures, we derive new mixed-state detection criteria for qubit magic, odd-prime-dimensional qudit magic, and fermionic non-Gaussianity.
In particular, to the best of our knowledge, we obtain the first nontrivial arbitrary-size 
criterion for detecting mixed-state fermionic non-Gaussianity with respect to the convex hull 
of pure fermionic Gaussian states, while remaining faithful on all pure fermionic states.
These examples show that WE is not only a conceptual unification of known resource quantifiers, 
but also a practical method for designing new resource detection schemes.

The remainder of this work is organized as follows.
In Section~\ref{sec:framework}, we introduce the motivation and mathematical formulation of WE, and discuss possible extensions and optimization strategies.
In Section~\ref{sec:coherence}, we use the structure of the diagonal unitary group to show how WE relates to the coherence norm.
In Section~\ref{sec:entanglement}, we exploit a tensor product unitary group to demonstrate how entanglement quantities such as R\'enyi entanglement entropy and partial-transpose moments naturally emerge from the framework.
In Section~\ref{sec:qudit_magic}, we combine WE with Wigner operators to derive new mixed-state magic detection criteria for odd-prime-dimensional qudit systems and illustrate how restricting to suitable subgroups can enhance the detection capability.
In Section~\ref{sec:qubit_magic}, we use fourth-order Clifford twirling to recover stabilizer R\'enyi entropy for pure states, derive new mixed-state qubit magic detection criteria, and illustrate how replacing a linear seed witness with a resource-sensitive projector can improve the detection capability of WE.
In Section~\ref{sec:fermion}, we use the structure of the matchgate group to show how WE reproduces fermionic antiflatness for pure states and yields new mixed-state criteria for detecting fermionic non-Gaussianity outside the convex hull of pure Gaussian states.
Finally, in Section~\ref{sec:outlook}, we summarize our results and discuss several promising directions for future work.

\section{Framework}\label{sec:framework}
\subsection{Nonlinear criterion from linear witnesses}
The motivation behind WE comes from a simple observation: by applying a vector norm to 
the expectation values of many linear witnesses, one can 
combine them into a single nonlinear witness, which has a stronger detection capability than any individual linear witness in the limiting case.
Let $\mathcal{H}$ be a finite-dimensional Hilbert space, and let $\mathcal{D}(\mathcal{H})$ denote the set of (finite-dimensional)  quantum states, so the set of density operators.
Throughout this work, we consider a quantum resource theory that specifies a convex set of free states $\mathcal{F}\subseteq\mathcal{D}(\mathcal{H})$, together with free operations that preserve $\mathcal{F}$~\cite{chitambar2019resource}.
We will mainly use a compact subgroup $\mathcal{G}$ of all free unitaries $\Ufree$.

In quantum resource theories, witnesses constitute a fundamental tool for detecting resource states.
A Hermitian operator $W$ is a resource witness if $\Tr(W\sigma)\geq0$ for all $\sigma\in\mathcal{F}$, while $\Tr(W\rho)<0$ for at least one resource state $\rho$.
For later convenience, we convert each witness into a positive semi-definite \emph{standard witness}
\begin{equation}\label{eq:def_tilde_W}
    \widetilde{W}
    =
    \bI-\frac{W}{\norm{W}_\infty},
\end{equation}
where $\mathbb{I}$ represents the identity matrix and $\norm{\cdot}_\infty$ denotes the spectral or operator norm.
This normalization has two useful properties: ${\rm Tr}(\widetilde{W}\rho)\geq0$ for every quantum state $\rho$, and ${\rm Tr}(\widetilde{W}\sigma)\leq1$ for every free state $\sigma\in\mathcal{F}$.
Therefore, if a state $\rho$ satisfies ${\rm Tr}(\widetilde{W}\rho)>1$, then $\rho$ must be resourceful.

Let $(\Omega,\mu)$ be a probability space, and let $\widetilde{W}_\Omega=\{\widetilde{W}_m\}_{m\in\Omega}$ be a measurable family of standard resource witnesses.
For a density matrix $\rho$, define the expectation value of each witness as a function $f_{\widetilde{W}_\Omega,\rho}:\Omega\rightarrow\mathbb{R}_{\geq0}$ given by
\begin{equation}\label{eq:witness_expectation_function}
f_{\widetilde{W}_\Omega,\rho}(m)=\Tr(\widetilde{W}_m\rho).
\end{equation}
For $1\leq \alpha<\infty$, notice that $\widetilde{W}_m\geq0$, we aggregate the witness expectations using the $L^\alpha(\mu)$-norm
\begin{equation}
    \norm{f_{\widetilde{W}_\Omega,\rho}}_{L^\alpha(\mu)}
    =
    \left(
    \int_\Omega
    \Tr(\widetilde{W}_m\rho)^\alpha
    \mathrm{d}\mu(m)
    \right)^{1/\alpha}.
    \label{eq:def_R_alpha_mu}
\end{equation}
Because $\mu$ is a probability measure, every free state $\sigma$ satisfies by definition $\|{f_{\widetilde{W}_\Omega,\sigma}}\|_{L^\alpha(\mu)}\leq1$.
Thus, we obtain the nonlinear detection criterion
\begin{equation}\label{eq:WE_1}
    \norm{f_{\widetilde{W}_\Omega,\rho}}_{L^\alpha(\mu)}>1
    \quad\Longrightarrow\quad
    \rho\notin\mathcal{F}.
\end{equation}

The parameter $\alpha$ controls the detection capability of the criterion.
Since $L^\alpha$-norms are monotone in $\alpha$ for probability measures, increasing $\alpha$ can only strengthen the criterion.
In the limit $\alpha\rightarrow\infty$, the norm becomes the essential supremum taken over witness expectations as
\begin{equation}
\begin{aligned}
    \norm{f_{\widetilde{W}_\Omega,\rho}}_{L^\infty(\mu)}
    &\coloneqq
    \esssup_{m\in\Omega}
    \Tr(\widetilde{W}_m\rho)
    \\
    &=
    \inf\left\{
    t\in\mathbb{R}
    \middle|
    \mu\left(
    \left\{
    m\in\Omega
    \middle|
    \Tr(\widetilde{W}_m\rho)>t
    \right\}
    \right)=0
    \right\}.
\end{aligned}
\end{equation}
Consequently, the $\alpha=\infty$ criterion detects exactly those states that are detected by at least one witness in the ensemble, up to measure-zero exceptions.
A proof of this statement is given in Appendix~\ref{app:framework_nonlinear}.

For a fixed value of $\alpha$, the threshold $1$ in Eq.~\eqref{eq:WE_1} is not necessarily optimal.
Indeed, although every free state $\sigma$ satisfies $\|{f_{\widetilde{W}_\Omega,\sigma}}\|_{L^\alpha(\mu)}\leq 1$, the maximal value achievable by free states can be strictly smaller than $1$ for some chosen witness ensemble and measure.
This observation allows us to tighten the threshold and leads to the refined criterion
\begin{equation}\tag{\ensuremath{\diamond}}\label{eq:criterion_alpha_C}
\norm{f_{\widetilde{W}_\Omega,\rho}}_{L^\alpha(\mu)}
>
C_{\widetilde{W}_\Omega,\mu,\alpha}
\coloneqq
\sup_{\sigma\in\mathcal{F}}
\norm{f_{\widetilde{W}_\Omega,\sigma}}_{L^\alpha(\mu)}
\quad\Longrightarrow\quad
\rho\notin\mathcal{F}.
\end{equation}
Geometrically, this criterion combines the family of linear witness tests into a single nonlinear detection boundary, illustrated by the red curve in Fig.~\ref{fig:comparison}.
By construction, $C_{\widetilde{W}_\Omega,\mu,\alpha}\leq 1$, therefore, Eq.~\eqref{eq:criterion_alpha_C} is always at least as strong as Eq.~\eqref{eq:WE_1} for the same $\alpha$.

\begin{figure}
\centering
\includegraphics[width=0.9\linewidth]{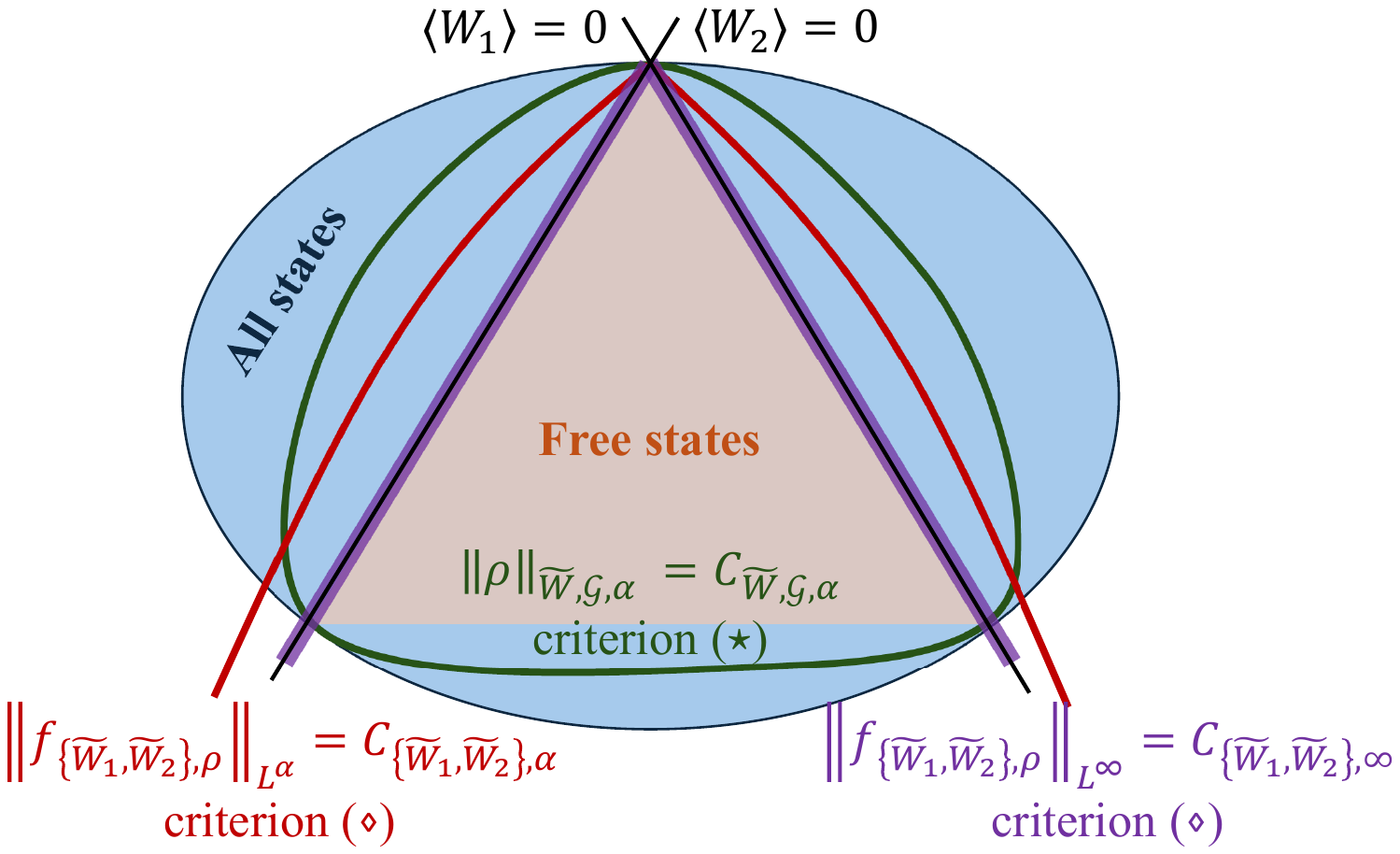}
\caption{
\textbf{Comparison of nonlinear criteria constructions.} The nonlinear criterion~\eqref{eq:criterion_alpha_C} (the red curve) combines a predetermined set of linear witnesses (here, $W_1$ and $W_2$) and can detect all resource states on the opposite side of the red curve from free states. The $\alpha \to \infty$ limit (the purple curve) recovers the union of their detection boundaries and detects all states that can be detected by any of the linear witnesses. The WE criterion~\eqref{eq:WE_alpha_criterion} (the green curve) generates a broader witness family through the free unitary twirling of a seed witness, detecting states that the seed witness misses. In all the examples we analyze, the WE norm in criterion~\eqref{eq:WE_alpha_criterion} takes the same value for all pure free states.
}
\label{fig:comparison}
\end{figure}

The same improvement persists in the limit $\alpha\rightarrow\infty$.
Since the refined threshold is never larger than the original one, the $\alpha=\infty$ version of Eq.~\eqref{eq:criterion_alpha_C} detects all states detectable by a positive-measure subset of the witness ensemble, and may detect additional states due to the tighter free state threshold, as illustrated by the purple lines in Fig.~\ref{fig:comparison}.
The construction also applies directly to a discrete witness ensemble, in which case the integral in Eq.~\eqref{eq:def_R_alpha_mu} is simply replaced by an average uniformly over the chosen witnesses.

\subsection{Witness expansion criterion}\label{subsec:WE_intro}
The refined criterion in Eq.~\eqref{eq:criterion_alpha_C} is still very general, and its use in practice faces two difficulties.
First, constructing a useful resource witness is already a nontrivial task, and constructing a large family of independent witnesses can be even more difficult.
Second, the threshold $C_{\widetilde{W}_\Omega,\mu,\alpha}$ requires an optimization over all free states.
In general, it is not clear which free state maximizes the $L^\alpha$-norm, making the threshold difficult to compute.

The key idea of the WE framework is to exploit a fundamental property of resource witnesses, which allows us to address both difficulties simultaneously: 
If $W$ is a valid resource witness and $U$ is a free unitary, then
$U^\dagger WU$ is again a valid resource witness.
Indeed, for every free state $\sigma$, the state $U\sigma U^\dagger$ is still free, and, therefore,
\begin{equation}
\Tr(U^\dagger WU\sigma)=\Tr(WU\sigma U^\dagger)\geq0. 
\end{equation}
Also, if $W$ detects a resource state $\rho$, then $U^\dagger WU$ can detect the resource state $U^\dagger \rho U$.
Thus, starting from a single standard seed witness $\widetilde{W}$, a whole family of standard witnesses can be generated by the free unitary orbit
\begin{equation}
    \left\{
    U^\dagger \widetilde{W}U
    \middle|
    U\in\mathcal{G}
    \right\},
\end{equation}
where $\mathcal{G}$ is a compact subgroup of the free unitary group $\mathcal{U}_{\mathcal F}$.
Equipped with the normalized Haar measure $\mu_\mathcal{G}$ on $\mathcal{G}$, the general construction in Eq.~\eqref{eq:def_R_alpha_mu} becomes
\begin{equation}\label{eq:def_R_alpha_mu_Haar}
    \left(
    \int_{\mathcal{G}}
    \Tr(U^\dagger\widetilde{W}U\rho)^\alpha
    \mathrm{d}\mu_\mathcal{G}(U)
    \right)^{1/\alpha}.
\end{equation}
In practice, we would like the WE criterion to apply uniformly to systems of arbitrary size.
This imposes a natural requirement that the seed witnesses should form a scalable family rather than have a limitation on system sizes.
Otherwise, the resulting WE criterion would inherit the same size restriction and could not provide a scalable resource test.

For practical applications, we mainly focus on positive integers $\alpha\in\mathbb{N}_+$.
In this case, the above quantity can be written as a unitary twirling over $\alpha$ tensor-copies of the state.
We define the \emph{WE $\alpha$-norm} associated with the standard seed witness $\widetilde{W}$ and the free unitary subgroup $\mathcal{G}$ as
\begin{equation}\label{eq:def_WE_alpha_norm}
\norm{\rho}_{\widetilde{W},\mathcal{G},\alpha}
\coloneqq
\left[
\Tr\left(
\mathbb{E}_{U\sim\mu_{\mathcal{G}}}
\left[
{U^\dagger}^{\otimes\alpha}
\widetilde{W}^{\otimes\alpha}
U^{\otimes\alpha}
\right]
\rho^{\otimes\alpha}
\right)
\right]^{1/\alpha}.
\end{equation}
We use the term ``norm'' in the sense that the WE $\alpha$-norm is the $L^\alpha$-norm of the witness expectation function associated to $\rho$ (see Eq.~\eqref{eq:def_R_alpha_mu}), and it need not be a norm on $\rho$ itself.
We then introduce the \emph{WE $\alpha$-criterion}
\begin{equation}\tag{\ensuremath{\star}}\label{eq:WE_alpha_criterion}
\norm{\rho}_{\widetilde{W},\mathcal{G},\alpha}
>
C_{\widetilde{W},\mathcal{G},\alpha}
\coloneqq
\sup_{\sigma\in\mathcal{F}}
\norm{\sigma}_{\widetilde{W},\mathcal{G},\alpha}
\quad\Longrightarrow\quad
\rho\notin\mathcal{F}.
\end{equation}

This construction has the following advantages.
First, instead of searching for many unrelated witnesses, one only needs to choose a single standard seed witness $\widetilde{W}$ and a physically motivated compact subgroup $\mathcal{G}$ of free unitaries.
Second, the choice of $\mathcal{G}$ can greatly simplify the threshold $C_{\widetilde{W},\mathcal{G},\alpha}$ when it captures enough symmetry of the set of free states.
The quantity $\norm{\rho}_{\widetilde{W},\mathcal{G},\alpha}$ is invariant under every unitary in $\mathcal{G}$.
In particular, if the free set is the convex hull of pure free states that form a single orbit under this group, then the threshold no longer requires a global optimization over all free states.
By Minkowski's inequality, the semi-positivity of $\widetilde{W}$ ensures the convexity of the mapping
\begin{equation}
\rho\mapsto\norm{\rho}_{\widetilde{W},\mathcal G,\alpha}, 
\end{equation}
thus it is sufficient to evaluate the threshold value on any one pure free state.

\begin{observation}[Threshold from any pure free state]\label{obs:threshold_single_orbit}
Suppose that the free set is the convex hull of pure free states $\mathcal F_{\rm pure}$, i.e.,
\begin{equation}
    \mathcal F=\mathrm{Conv}(\mathcal F_{\rm pure}),
\end{equation}
and that all pure free states belong to a single orbit under $\mathcal G$.
Then, for $1\leq\alpha\leq\infty$ and any representative pure free state $\psi\in\mathcal F_{\rm pure}$,
\begin{equation}
    C_{\widetilde{W},\mathcal G,\alpha}
    =
    \norm{\psi}_{\widetilde{W},\mathcal G,\alpha}.
\end{equation}
\end{observation}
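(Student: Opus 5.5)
The plan is to show two facts and then combine them. First, the map $\rho\mapsto\norm{\rho}_{\widetilde{W},\mathcal G,\alpha}$ is convex on $\mathcal D(\mathcal H)$. Second, it is constant on the $\mathcal G$-orbit of any state. Convexity reduces the supremum over $\mathcal F=\mathrm{Conv}(\mathcal F_{\rm pure})$ to a supremum over $\mathcal F_{\rm pure}$. Invariance makes that supremum equal to the value at any single representative $\psi$.

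\textbf{Step 1 (convexity).} Write $\norm{\rho}_{\widetilde{W},\mathcal G,\alpha}=\norm{f_\rho}_{L^\alpha(\mu_{\mathcal G})}$ with $f_\rho(U)=\Tr(U^\dagger\widetilde W U\rho)$, using Eq.~\eqref{eq:def_R_alpha_mu_Haar}; for $\alpha=\infty$ this is the essential supremum. The map $\rho\mapsto f_\rho$ is linear, and $f_\rho\geq0$ because $\widetilde W\geq0$. Hence for $\rho=\sum_i p_i\rho_i$ we have $f_\rho=\sum_i p_i f_{\rho_i}$ pointwise, and Minkowski's inequality for $1\le\alpha\le\infty$ gives
\begin{equation}
\norm{\rho}_{\widetilde W,\mathcal G,\alpha}\leq\sum_i p_i\norm{\rho_i}_{\widetilde W,\mathcal G,\alpha}.
\end{equation}
Every $\sigma\in\mathcal F$ is a finite convex combination of pure free states, by Carathéodory's theorem in finite dimension. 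So $\norm{\sigma}\le\max_i\norm{\psi_i}\le\sup_{\phi\in\mathcal F_{\rm pure}}\norm{\phi}$. The reverse inequality is trivial since $\mathcal F_{\rm pure}\subseteq\mathcal F$. Therefore $C_{\widetilde W,\mathcal G,\alpha}=\sup_{\phi\in\mathcal F_{\rm pure}}\norm{\phi}_{\widetilde W,\mathcal G,\alpha}$.

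\textbf{Step 2 (orbit invariance).} For $V\in\mathcal G$ we have $f_{V\rho V^\dagger}(U)=\Tr\bigl((UV)^\dagger\widetilde W(UV)\rho\bigr)=f_\rho(UV)$. Right-invariance of the Haar measure on the compact group $\mathcal G$ then shows that $f_{V\rho V^\dagger}$ and $f_\rho$ have the same distribution under $\mu_{\mathcal G}$. Consequently all $L^\alpha$ norms coincide, including the essential supremum, since measure-zero sets are mapped to measure-zero sets. By the single-orbit assumption, every $\phi\in\mathcal F_{\rm pure}$ equals $V\psi V^\dagger$ for some $V\in\mathcal G$. The supremum in Step 1 is therefore attained and equals $\norm{\psi}_{\widetilde W,\mathcal G,\alpha}$.

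No step is a serious obstacle. The only points needing care are the $\alpha=\infty$ case and the orbit-invariance claim for the essential supremum. These follow because right translation preserves Haar-null sets. Alternatively, $f_\rho$ is continuous on $\mathcal G$ and the Haar measure has full support, so the essential supremum equals the true maximum, which is clearly translation invariant. If $\mathcal F$ is understood as the closed convex hull, the same conclusion follows from continuity of the norm in $\rho$.
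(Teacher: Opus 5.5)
Your proposal is correct and follows the same route as the paper: Minkowski's inequality gives convexity of $\rho\mapsto\norm{\rho}_{\widetilde W,\mathcal G,\alpha}$ (with $\widetilde W\geq 0$ ensuring the witness expectations are nonnegative), which reduces the supremum over $\mathrm{Conv}(\mathcal F_{\rm pure})$ to pure free states, and $\mathcal G$-invariance of the norm together with the single-orbit assumption then makes that value the same for every representative $\psi$. You simply write out details the paper leaves implicit, namely the right-invariance of the Haar measure behind the orbit invariance and the $\alpha=\infty$ case.
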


In the concrete resource theories studied later, the relevant choices of $\mathcal{G}$ have particularly clear mathematical structures.
For example, local Haar unitaries naturally appear in entanglement theory, Clifford groups in magic resource theory, and matchgate groups in fermionic non-Gaussianity.
Their corresponding twirling formulas are known or tractable for the finite moments considered here, allowing both $\norm{\rho}_{\widetilde{W},\mathcal{G},\alpha}$ and the free state threshold $C_{\widetilde{W},\mathcal{G},\alpha}$ to be evaluated analytically in many cases.
Moreover, since Eq.~\eqref{eq:def_WE_alpha_norm} is an $\alpha$-copy polynomial function of $\rho$, it can be estimated experimentally using measurements on $\alpha$ copies.
Finally, in the limit $\alpha\to\infty$, the criterion detects all states detectable by at least one witness in the orbit $\{U^\dagger WU\}_{U\in\mathcal{G}}$, and may detect additional states due to the optimized threshold in Eq.~\eqref{eq:WE_alpha_criterion}; see Appendix~\ref{app:framework_nonlinear} for the proof.

The criteria established here are practically minded and operationally meaningful.
In particular, a large WE value certifies not only that the state lies outside the free set $\mathcal F$, but also that it is robustly separated from $\mathcal F$.
This robustness gives the criteria predictive power in the presence of small errors and allows WE norms to provide quantitative information about the amount of resource present in a state.
To make this statement precise, we consider the \emph{free robustness} for a state w.r.t. a quantum resource with free states $\mathcal{F}$~\cite{vidal1999RobustnessEntanglement,howard2017ApplicationResourceTheory,PhysRevLett.123.020401,liu2022magic}, defined as
\begin{equation}
\mathcal{R}(\rho)
=
\inf\left\{
t\ge0\middle\mid
\exists \tau\in\mathcal F
\text{ s.t. }
\frac{\rho+t\tau}{1+t}\in\mathcal F
\right\}.
\end{equation}
This quantity measures the minimum amount of any free state that must be mixed with $\rho$ so that the resulting state becomes free.
Thus $\mathcal{R}(\rho)=0$ for all free states and $\mathcal{R}(\rho)>0$ certifies that $\rho\notin\mathcal F$.
Moreover, free robustness is a natural resource-theoretic quantifier, since it is monotone under free operations that map free states to free states~\cite{howard2017ApplicationResourceTheory}.
Therefore, it provides an operational way to interpret the strength of a violation of the WE $\alpha$-criterion in Eq.~\eqref{eq:WE_alpha_criterion}: a larger violation implies that the state remains resourceful even after mixing with a larger amount of free noise. 
The proof of the following result is given in Appendix~\ref{app:free_robustness_WE}, and one can similarly derive a lower bound for $\mathcal{R}(\rho)$ using the violation of criterion~\eqref{eq:criterion_alpha_C}.

\begin{proposition}[Lower bound of free robustness from WE violation]\label{prop:lower_bound_free_robustness}
For $\alpha\in\mathbb{N}_+$, any positive semi-definite operator $\widetilde{W}$ and any compact subgroup $\mathcal{G}$ of free unitaries, the free robustness of any state $\rho$ has a lower bound when $C_{\widetilde{W},\mathcal{G},\alpha}>0$:
\begin{equation}
\mathcal{R}(\rho)
\ge
\frac{\norm{\rho}_{\widetilde{W},\mathcal{G},\alpha}
-
C_{\widetilde{W},\mathcal{G},\alpha}}{2C_{\widetilde{W},\mathcal{G},\alpha}},
\end{equation}
where the WE $\alpha$-norm $\norm{\rho}_{\widetilde{W},\mathcal{G},\alpha}$ and the threshold value $C_{\widetilde{W},\mathcal{G},\alpha}$ for free states are defined in Eq.~\eqref{eq:def_WE_alpha_norm} and Eq.~\eqref{eq:WE_alpha_criterion}.
\end{proposition}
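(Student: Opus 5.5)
The plan is to exploit the convexity and absolute homogeneity of the map $X\mapsto\norm{X}_{\widetilde W,\mathcal G,\alpha}$, extended to positive semi-definite operators $X$ (not just states) by the same formula as Eq.~\eqref{eq:def_WE_alpha_norm}. Concretely, for $X\ge0$ this quantity equals the $L^\alpha(\mu_{\mathcal G})$-norm of the nonnegative function $U\mapsto\Tr(U^\dagger\widetilde W U X)$. It is therefore positively homogeneous, $\norm{\lambda X}=\lambda\norm{X}$ for $\lambda\ge0$. By Minkowski's inequality it is subadditive on positive semi-definite operators, since the witness expectation functions are linear in $X$ and nonnegative there. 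These are exactly the properties already invoked before Observation~\ref{obs:threshold_single_orbit}, so I will take them as given.

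Fix $t>\mathcal R(\rho)$ (or $t=\mathcal R(\rho)$ if the infimum is attained; otherwise take a limit at the end). By definition there exist $\tau\in\mathcal F$ and $\sigma\in\mathcal F$ such that $\rho+t\tau=(1+t)\sigma$. The idea is to rewrite this as $\rho=(1+t)\sigma-t\tau$. A difference is awkward for a norm that is only controlled on positive operators, so I will avoid differences and use $\rho\le\rho+t\tau=(1+t)\sigma$ in the positive semi-definite order. Since each witness expectation $\Tr(U^\dagger\widetilde W U\,\cdot)$ is monotone with respect to the positive semi-definite order (as $\widetilde W\ge0$), the $L^\alpha$ norm is monotone too. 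This gives
\begin{equation}
\norm{\rho}_{\widetilde W,\mathcal G,\alpha}\le\norm{(1+t)\sigma}_{\widetilde W,\mathcal G,\alpha}=(1+t)\norm{\sigma}_{\widetilde W,\mathcal G,\alpha}\le(1+t)C_{\widetilde W,\mathcal G,\alpha}.
\end{equation}
Rearranging yields $t\ge(\norm{\rho}-C)/C$. This is in fact stronger than the claimed bound $(\norm{\rho}-C)/(2C)$ whenever the numerator is nonnegative. When the numerator is negative, the claimed bound holds trivially since $\mathcal R\ge0$. Taking the infimum over admissible $t$ gives the proposition.

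The main obstacle is only to make sure the monotonicity and homogeneity steps are legitimate for unnormalized operators and for integer $\alpha$. This means checking that $\Tr(\mathbb E[U^{\dagger\otimes\alpha}\widetilde W^{\otimes\alpha}U^{\otimes\alpha}]X^{\otimes\alpha})=\int\Tr(U^\dagger\widetilde WUX)^\alpha d\mu$ still holds for $X\ge0$, which is immediate. If one instead followed the route suggested by the factor $2$, namely the triangle inequality on $(1+t)\sigma-t\tau$ using $\norm{\tau}\le C$, one would get $\norm{\rho}\le(1+2t)C$, matching the stated constant exactly. I will present that version if needed to mirror the statement. That route requires the triangle inequality on a signed combination, which holds because the witness expectation functions are linear, so Minkowski applies to $|f|$. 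The monotone route is cleaner and implies the statement.
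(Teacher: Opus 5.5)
Your argument is correct, and it takes a different and sharper route than the paper. The paper writes $\rho=(1+t)\sigma-t\tau$ and applies Minkowski's inequality to the signed combination $f_\rho=(1+t)f_\sigma-tf_\tau$ of witness-expectation functions. It then bounds both $\norm{\sigma}_{\widetilde W,\mathcal G,\alpha}$ and $\norm{\tau}_{\widetilde W,\mathcal G,\alpha}$ by $C_{\widetilde W,\mathcal G,\alpha}$, which is where the factor $1+2t$, and hence the $2$ in the denominator, comes from.

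You instead use $\widetilde W\ge0$ to get the pointwise sandwich $0\le f_\rho(U)\le(1+t)f_\sigma(U)$, since the gap is $t\Tr(U^\dagger\widetilde WU\tau)\ge0$. Monotonicity and homogeneity of the $L^\alpha(\mu_{\mathcal G})$-norm on nonnegative functions then give $\norm{\rho}_{\widetilde W,\mathcal G,\alpha}\le(1+t)C_{\widetilde W,\mathcal G,\alpha}$ without Minkowski at all. This yields $\mathcal R(\rho)\ge(\norm{\rho}_{\widetilde W,\mathcal G,\alpha}-C_{\widetilde W,\mathcal G,\alpha})/C_{\widetilde W,\mathcal G,\alpha}$, twice the stated bound. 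The triangle inequality loses the sign information that the subtracted term $-tf_\tau$ is nonpositive.

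The paper's route buys generality: it uses only linearity of $f$ and the identification of the WE norm with $\norm{f}_{L^\alpha}$. It therefore also covers non-positive Hermitian seeds when $\alpha$ is even, such as the Majorana-product seed $\widehat\gamma_{T_\ell}$ of Theorem~\ref{thm:FNG_Pauli}. Your pointwise comparison fails once $f_\tau$ can be negative. For the proposition as stated, with a positive semi-definite seed, your route suffices and is stronger.

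One presentational fix: rather than fixing $t>\mathcal R(\rho)$ and asserting it is feasible, take an arbitrary feasible $t$ and pass to the infimum, as the paper does. Your assertion is true, because the feasible set is upward closed by convexity of $\mathcal F$, but you do not justify it.
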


\subsection{Enhancing the detection capability}\label{subsec:enhancing_WE}

The detection capability of the WE criterion in Eq.~\eqref{eq:WE_alpha_criterion} can be improved in two direct ways.
The first is to optimize the standard seed witness $\widetilde W$.
Since WE builds a nonlinear criterion by expanding the free unitary orbit of a single linear witness, a seed witness with strong linear detection capability generally leads to a stronger expanded criterion.
This point is illustrated in Fig.~\ref{fig:comparison}: the nonlinear boundary generated by expansion depends crucially on the initial hyperplane from which the witness orbit is generated.

In some resource theories, the quality of a linear witness can be geometrically characterized.
For instance, in entanglement theory, the detectable volume of a witness is closely related to the distance between the maximally mixed state and the hyperplane defined by $\langle W\rangle=0$~\cite{liu2022FundamentalLimitationDetectability}.
In Section~\ref{sec:entanglement}, we choose the SWAP operator as the seed witness, whose corresponding hyperplane is tangent to the largest separable ball around the maximally mixed state in Frobenius norm~\cite{gurvits2002largest}.
We further show that, within the second-order WE construction, the SWAP witness yields the strongest detection capability among all possible choices of linear seed witnesses.

The second way is to optimize the free unitary set $\mathcal G$ used in the expansion.
Taking $\mathcal G$ to be the full free unitary group automatically generates a large family of rotated linear witnesses.
As $\alpha$ increases, this allows the criterion to approach the union of the detection regions of these witnesses.
However, for small or moderate $\alpha$, the same twirling also introduces an averaging effect, which may weaken the detection capability of highly mixed states of the resulting nonlinear test.
Thus, a smaller subgroup $\mathcal G\subsetneq\Ufree$ can sometimes be advantageous: a judicious choice of $\mathcal{G}$ allows the WE criterion to emphasize the witness directions most relevant to the states of interest, rather than averaging over less informative directions.
In the limiting case where $\mathcal G$ contains only one element, the WE criterion reduces to that of a single linear witness.
We will see an explicit example in the qudit magic setting in Section~\ref{sec:qudit_magic}, where restricting the expansion to a suitable subgroup can yield a strictly stronger WE $2$-criterion than the one obtained from the full free unitary group.

\subsection{Generalization}\label{subsec:generalization}
Beyond optimizing the seed witness and the free unitary set within the WE framework, one can further extend the framework by relaxing the choice of the seed operator itself.
Using a standardized resource witness is a natural route to Eq.~\eqref{eq:WE_alpha_criterion}, but the construction only requires a resource-sensitive seed whose induced functional is convex on the free set.
A positive semi-definite seed operator gives a simple sufficient condition for this convexity for arbitrary $\alpha\ge 1$, and it need not arise from standardizing a linear witness.
A typical choice is the projector onto a highly resourceful pure state.
This is motivated by the fact that, as the system size grows, such states often have decreasing maximal fidelity with the free set, leading to directly computable free state thresholds.
We use this idea in Section~\ref{subsec:magic_projector} and Section~\ref{subsec:fermionic_projector}, where highly resourceful pure states serve as seed operators for new WE criteria detecting qubit magic and fermionic non-Gaussianity.
In principle, such positive semi-definite resource-sensitive operators may also be viewed as arising from a suitable standardization of linear witnesses, but working with them directly avoids the difficult optimization problem involved in explicitly constructing the corresponding witnesses.

A second extension becomes available when $\alpha$ is an even integer.
In this case, the convexity argument underlying Observation~\ref{obs:threshold_single_orbit} does not rely on the positive semi-definiteness of the seed operator, because the relevant functional is built from even powers of linear expectation values.
Therefore, the seed can be chosen from a broader class of Hermitian operators, not necessarily positive operators or standardized witnesses.
This flexibility is exploited in Section~\ref{subsec:fermionic_pauli}, where a special family of Pauli operators is used as the seed.
The resulting WE $2$-criterion provides a strong test for fermionic non-Gaussianity, illustrating that useful nonlinear resource criteria can also arise from algebraically structured observables rather than from the fidelity with positive semi-definite operators.

A still more general direction is to go beyond the tensor-power WE ansatz and optimize directly over the $\alpha$-copy observables.
Such a criterion has the general form
\begin{equation}\label{eq:nonlinear_alpha_criterion}
\Tr(W_\alpha\rho^{\otimes\alpha})
>
\sup_{\sigma\in\mathcal F}
\Tr(W_\alpha\sigma^{\otimes\alpha})
\quad\Longrightarrow\quad
\rho\notin\mathcal F.
\end{equation}
For analytical tractability, it is convenient to require $\Tr(W_\alpha\rho^{\otimes\alpha})$ to be invariant under free unitary conjugation $\rho\mapsto U\rho U^\dagger$ for all $U\in\Ufree$, or equivalently to take $W_\alpha$ from the $\alpha$-th order commutant of the free unitary group, i.e., ${U^\dagger}^{\otimes\alpha}
W_\alpha
U^{\otimes\alpha}=W_\alpha$ for all $U\in\Ufree$~\cite{Mele2024introductiontohaar,bittel2025CompleteTheoryClifford,haug2026EfficientWitnessingTesting,sierant2026TheoryMatchgateCommutanta,braccia2026CommutantFermionicGaussian,sierant2026FermionicMagicResources}.
WE realizes a natural subfamily of these commutants, where
\begin{equation}\label{eq:nonlinear_witness_twirling}
W_\alpha
=
\mathbb E_{U\sim\mu_{\Ufree}}
\left[
{U^\dagger}^{\otimes\alpha}
\widetilde W^{\otimes\alpha}
U^{\otimes\alpha}
\right]\in\mathrm{Comm}(\Ufree^{\otimes\alpha}).
\end{equation}
More general and even more powerful nonlinear criteria can be obtained by optimizing over the operators in the 
commutant.
The price to pay is that the free state threshold may no longer be determined by a 
single representative 
of pure free states as in Observation~\ref{obs:threshold_single_orbit}, and often 
has to be analyzed case by case.

In the following sections, we provide concrete examples of the WE criterion~\eqref{eq:WE_alpha_criterion} applied to several quantum resources: coherence, entanglement, magic, and fermionic non-Gaussianity.

\section{Coherence}\label{sec:coherence}
The first example is the detection of quantum coherence~\cite{streltsov2017coherence}, which plays an essential role in quantum random number generation~\cite{yuan2015randomness}, quantum thermodynamics\cite{cwiklinski2015Limitations,lostaglio2015DescriptionQuantumCoherence}, and quantum metrology~\cite{napoli2016RobustnessCoherenceOperational,ahnefeld2026coherence}.
Protocols for detecting 
and quantifying coherence have been both theoretically explored~\cite{baumgratz2014Quantifying,napoli2016RobustnessCoherenceOperational,ma2021DetectingEstimatingCoherence} and experimentally conducted~\cite{ringbauer2018CertificationQuantificationMultilevel}.
We will show in this section how a typical coherence quantifier can be generated via WE by choosing a suitable witness and a free unitary set.

In the resource theory of quantum coherence, the set of free states consists of the incoherent states
\begin{equation}
\Fset=\mathcal{I}=\left\{\sigma\in\mathcal{D}(\mathcal{H})\middle\mid \bra{j}\sigma\ket{k}=0,\forall j\neq k\right\}
\end{equation}
which are diagonal in a given fixed reference basis $\{\ket{j}\}_{j=0}^{d-1}$ of a Hilbert space with dimension $d\geq2$.
To fit into the WE framework, we select the coherence witness and its standardized version as
\begin{equation}\label{eq:coherence_witness}
W=\Xi=\frac1{d-1}\sum_{j\neq k}\ketbra{k}{j}, \qquad\widetilde{W}=\widetilde{\Xi}=\bI-\Xi.
\end{equation}
One can easily check that $\Tr(\Xi\sigma)=0$, $\forall\sigma\in\Fset$ and $\Xi$ can detect the coherent state $\rho=(\ket{0}-\ket{1})(\bra{0}-\bra{1})/2$.
We choose the subgroup of free unitaries to be all diagonal unitaries
\begin{equation}\label{eq:coherence_free_unitaries}
\mathcal{G}=\mathcal{U}_{\textup{diag}}(d)=\left\{\sum_{k=0}^{d-1}e^{i\theta_k}\ket{k}\bra{k}\middle\mid\theta_k\in[0,2\pi)\right\}.
\end{equation}
We provide the analytical form of WE $\alpha$-norm for any $\alpha\in\mathbb{N}_+$ in the following theorem (see the proof in Appendix~\ref{app:coherence}).
\begin{theorem}[WE criterion for coherence]\label{thm:WE_criterion_coherence}
For any state $\rho\in\mathcal{D}(\mathcal{H})$ with $\dim\mathcal{H}=d$ and integer $\alpha\in\mathbb{N}_+$, the WE $\alpha$-norm in criterion~\eqref{eq:WE_alpha_criterion} with standard coherence witness $\widetilde{\Xi}$ in Eq.~\eqref{eq:coherence_witness} and group $\mathcal{U}_{\textup{diag}}(d)$ in Eq.~\eqref{eq:coherence_free_unitaries} is given by
\begin{equation}\label{eq:R_evaluation_coherence}
\norm{\rho}_{\widetilde{\Xi},\mathcal{U}_{\textup{diag}}(d),\alpha}^\alpha=\sum_{\ell=0}^\alpha\binom{\alpha}\ell\left(-\frac1{d-1}\right)^\ell\ell!\sum_{M\in\mathcal{E}_\ell}\prod_{\substack{j,k=0\\j\neq k}}^{d-1}\frac{\bra{j}\rho\ket{k}^{M_{j,k}}}{M_{j,k}!},
\end{equation}
where $\mathcal{E}_\ell$ denotes the set of all $d \times d$ matrices $M$ with entries $M_{j,k}\in\mathbb{N}$, satisfying: (i) $M_{j,j} = 0$ for all $j$. (ii) $\sum_k M_{j,k} = \sum_k M_{k,j}$ for all $j$. (iii) $\sum_{j,k} M_{j,k} = \ell$. The threshold value for all incoherent states is $C_{\widetilde{\Xi},\mathcal{U}_{\textup{diag}}(d),\alpha}=1$.
\end{theorem}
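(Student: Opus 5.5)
The plan is to work at the level of the scalar witness expectation and only then average over the group, rather than twirling the $\alpha$-fold tensor power directly. Write $U=\sum_k e^{i\theta_k}\ketbra{k}{k}$ with $\theta_k$ i.i.d.\ uniform on $[0,2\pi)$, which is the normalized Haar measure on $\mathcal{U}_{\textup{diag}}(d)$. Then
\begin{equation}
\Tr(U^\dagger\widetilde{\Xi}U\rho)=1-\frac{1}{d-1}\sum_{j\neq k}e^{i(\theta_j-\theta_k)}\bra{j}\rho\ket{k}\eqqcolon 1-\frac{X(\theta)}{d-1}.
\end{equation}
Since $\alpha\in\mathbb{N}_+$, the WE $\alpha$-norm in Eq.~\eqref{eq:def_WE_alpha_norm} is exactly $\mathbb{E}_\theta[(1-X/(d-1))^\alpha]$. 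Expanding binomially gives
\begin{equation}
\sum_{\ell=0}^\alpha\binom{\alpha}{\ell}\left(-\tfrac{1}{d-1}\right)^\ell\mathbb{E}_\theta[X^\ell].
\end{equation}
This reproduces the outer structure of Eq.~\eqref{eq:R_evaluation_coherence}.

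The core step is computing $\mathbb{E}_\theta[X^\ell]$. Expand $X^\ell$ with the multinomial theorem over the $d(d-1)$ off-diagonal pairs $(j,k)$. A term is labelled by a matrix $M$ of nonnegative integers with zero diagonal and total sum $\ell$, and it carries coefficient $\ell!/\prod M_{j,k}!$ together with the phase
\begin{equation}
\exp\Big(i\sum_{j\neq k}M_{j,k}(\theta_j-\theta_k)\Big)=\exp\Big(i\sum_m\theta_m\big(\textstyle\sum_k M_{m,k}-\sum_k M_{k,m}\big)\Big).
\end{equation}
Independence of the uniform phases, together with $\mathbb{E}[e^{in\theta}]=\delta_{n,0}$ for integer $n$, kills every term except those with balanced row and column sums. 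These are exactly conditions (i)--(iii) defining $\mathcal{E}_\ell$. The surviving terms give $\ell!\sum_{M\in\mathcal{E}_\ell}\prod_{j\neq k}\bra{j}\rho\ket{k}^{M_{j,k}}/M_{j,k}!$, which completes the formula.

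For the threshold, the incoherent set is the convex hull of the basis states $\ket{j}$. The basis states are not a single orbit of $\mathcal{U}_{\textup{diag}}(d)$, so Observation~\ref{obs:threshold_single_orbit} does not apply directly. The argument is simpler anyway: for any diagonal $\sigma$, the off-diagonal entries vanish, so $X\equiv0$ and $\Tr(U^\dagger\widetilde{\Xi}U\sigma)=1$ for every $U$. Hence every incoherent state has WE norm exactly $1$, and $C=1$. Equivalently, in the formula only the $\ell=0$ term survives, because every nonzero $M$ has some off-diagonal entry that multiplies a vanishing matrix element.

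The main obstacle is only the bookkeeping in the multinomial expansion. The point to be careful about is that the phase condition is a net-flow (Eulerian) balance at each vertex $m$, and that no additional cancellation between distinct $M$ occurs. This holds because distinct $M$ correspond to distinct monomials in the matrix entries, and the averaging acts monomial by monomial.
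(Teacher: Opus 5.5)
Your proposal is correct and follows essentially the same route as the paper: binomial expansion of $(1-\xi/(d-1))^\alpha$, then a multinomial expansion of $\xi^\ell$ whose torus phase average enforces the Eulerian in-/out-degree balance (the paper's Proposition~\ref{prop:eulerian}), with the threshold $C=1$ read off directly from $\xi(U,\sigma)\equiv 0$ for every diagonal $\sigma$. Your side remark that Observation~\ref{obs:threshold_single_orbit} does not apply here is accurate, since each basis state is a fixed point of $\mathcal{U}_{\textup{diag}}(d)$ rather than part of a single orbit; this is why the direct evaluation is the natural argument.
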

Specifically, the $\alpha=1$ criterion 
is 
trivial and when $\alpha=2$, criterion~\eqref{eq:WE_alpha_criterion} reduces to
\begin{equation}
\sum_{\substack{j,k=0\\j\neq k}}^{d-1}\abs{\bra{j}\rho\ket{k}}^2>0\quad\Longrightarrow\quad\rho\notin\mathcal{I}.
\end{equation}
At this point,
we see that $\norm{\rho}_{\widetilde{\Xi},\mathcal{U}_{\textup{diag}}(d),2}^2$ is a linear function of the $l_2$ norm of coherence~\cite{baumgratz2014Quantifying} and the WE 2-criterion already characterizes all the coherent states $\rho$, so taking larger $\alpha$ values is not necessary for detecting coherence.

\section{Entanglement}\label{sec:entanglement}
Entanglement is a foundational feature of quantum mechanics and one of the most profound operational distinctions between quantum and classical theories~\cite{einstein1935CanQuantumMechanicalDescription,schrodinger1935DiscussionProbabilityRelations,bell1964EinsteinPodolskyRosen,clauser1969ProposedExperimentTest}. It serves as the indispensable quantum resource for tasks including quantum teleportation~\cite{bennett1993TeleportingUnknownQuantum,yin2012QuantumTeleportationEntanglement}, quantum cryptography~\cite{ekert1991QuantumCryptographyBased,bennett1984quantum}, and quantum error correction~\cite{shor1995SchemeReducingDecoherence,bennett1996MixedstateEntanglementQuantum,brun2006CorrectingQuantumErrors}. This resource-theoretic viewpoint motivated a broad range of methods of certifying, quantifying, and manipulating entanglement~\cite{peres1996separability,horodecki1996SeparabilityMixedStates,brandao2005QuantifyingEntanglementWitness,hyllus2006OptimalEntanglementWitnesses,eisert2007QuantitativeEntanglementWitnesses,GUHNE2009detection,Horodecki2009entanglement,elben2020mix,imai2021bound,yu2021optimal,liu2022correlation,rico2024poly,Eisert2020certification,Kliesch2021Certification}.
We focus on the quantum entanglement between two subsystems with the same dimension, so that $\mathcal H_A\cong\mathcal H_B\cong\mathbb C^d$.
% and leave the discussion about non-equal subsystem dimensions to Appendix~\ref{app:entanglement}.

The free states in entanglement resource theory are those separable under this bipartition
\begin{equation}
\Fset=\Sep=\left\{\sum_kp_k\sigma_A^k\otimes\sigma_B^k\middle\mid \sum_kp_k=1,p_k\geq0,\sigma_\bullet^k\in\mathcal{D}(\mathcal{H}_\bullet),\forall k\right\}.   
\end{equation}
We select the SWAP operator between subsystems $A$ and $B$ as the witness to be expanded and its standardized version
\begin{equation}\label{eq:entanglement_witness}
W=\mathbb{S}=\sum_{j,k=0}^{d-1}\ket{j,k}\bra{k,j},\qquad\widetilde{W}=\widetilde{\mathbb{S}}=\bI-\mathbb{S}.
\end{equation}
It is easy to verify that for all separable state $\sigma_{A,B}$, $\Tr(\mathbb{S}\sigma_{A,B})=\sum_kp_k\Tr(\sigma_A^k\sigma_B^k)\ge0$ and $\mathbb{S}$ can detect the entangled state $\rho=(\ket{0,1}-\ket{1,0})(\bra{0,1}-\bra{1,0})/2$.
The reason for choosing $\mathbb{S}$ as the seed witness is due to its strong detection capability.
With the standard Hilbert--Schmidt normalization, the hyperplane of $\Tr(\mathbb{S}\rho)=0$ is tangent to the largest separable ball around the
maximally mixed state, and $\mathbb{S}$ detects an entangled state with the minimal purity~\cite{gurvits2002largest}.
The free unitaries are chosen to be the tensor products of local unitaries on each subsystem
\begin{equation}\label{eq:entanglement_free_unitaries}
\mathcal{G}=\mathcal{U}_{\textup{loc}}(d,d)=\left\{U_A\otimes U_B\middle\mid U_\bullet\in\mathcal{U}(\mathcal{H}_\bullet)\right\}.
\end{equation}
For $\alpha\in\mathbb{N}_+$, the analytical expression of $\norm{\rho}_{\widetilde{\mathbb{S}},\mathcal{U}_{\textup{loc}}(d,d),\alpha}$ can be calculated using the Schur-Weyl duality~\cite{collins2006IntegrationRespectHaar,Mele2024introductiontohaar}, its analytical expression is provided in the following theorem (see Appendix~\ref{app:proof_WE_criterion_entanglement} for a proof).

\begin{theorem}[WE criterion for entanglement]\label{thm:WE_criterion_entanglement}
For any state $\rho\in\mathcal{D}(\mathcal{H}_A\otimes\mathcal{H}_B)$ with $\dim\mathcal{H}_A=\dim\mathcal{H}_B=d$ and integer $\alpha\in\mathbb{N}_+$, the WE $\alpha$-norm in criterion~\eqref{eq:WE_alpha_criterion} with standard entanglement witness $\widetilde{\mathbb{S}}$ in Eq.~\eqref{eq:entanglement_witness} and group of free unitaries $\mathcal{U}_{\textup{loc}}(d,d)$ in Eq.~\eqref{eq:entanglement_free_unitaries} is given by
\begin{equation}\label{eq:R_evaluation_entanglement}
\begin{aligned}
\norm{\rho}_{\widetilde{\mathbb{S}},\mathcal{U}_{\textup{loc}}(d,d),\alpha}^\alpha=&1+\sum_{\ell=1}^\alpha\binom{\alpha}\ell(-1)^\ell\sum_{\pi,\omega\in S_\ell}\Wg_d(\pi\omega)
\\&\Tr\left[(\mathbb{P}_\pi^{(\ell),A}\otimes \mathbb{P}_\omega^{(\ell),B})\rho^{\otimes\ell}\right],
\end{aligned}
\end{equation}
where $\Wg_d$ denotes the unitary Weingarten function and $\mathbb{P}_\pi^{(\ell)}$ is the permutation operator for $\pi\in S_\ell$ on $\ell$ copies
\begin{equation}
\mathbb{P}_\pi^{(\ell)}\left(\ket{\psi_1}\otimes\cdots\otimes\ket{\psi_\ell}\right)
=\ket{\psi_{\pi^{-1}(1)}}\otimes\cdots\otimes\ket{\psi_{\pi^{-1}(\ell)}}.
\end{equation}
The threshold value for all separable states is 
\begin{equation}
C_{\widetilde{\mathbb{S}},\mathcal{U}_{\textup{loc}}(d,d),\alpha}=
\left(\frac{d-1}{d+\alpha-1}\right)^{1/\alpha}.
\end{equation}
\end{theorem}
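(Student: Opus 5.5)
The plan is to expand the tensor power binomially, twirl each term with the Weingarten calculus on each subsystem, and collapse one Weingarten sum using the Gram-matrix identity. The threshold then comes from Observation~\ref{obs:threshold_single_orbit}. Since $\mathbb{S}$ has eigenvalues $\pm1$, $\widetilde{\mathbb{S}}=\bI-\mathbb{S}\geq0$, so the framework of Section~\ref{subsec:WE_intro} applies.

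\textbf{Step 1 (binomial expansion).} Write $\widetilde{\mathbb{S}}^{\otimes\alpha}=\sum_{\ell=0}^{\alpha}\binom{\alpha}{\ell}(-1)^\ell\,\bI^{\otimes(\alpha-\ell)}\otimes\mathbb{S}^{\otimes\ell}$, symmetrized over which copies carry $\mathbb{S}$. This symmetrization is harmless because $\rho^{\otimes\alpha}$ is permutation invariant and $\Tr\rho=1$. Hence the $\alpha$-th power of the norm is $\sum_\ell\binom{\alpha}{\ell}(-1)^\ell\Tr\big(\mathbb{E}[(U_A\otimes U_B)^{\dagger\otimes\ell}\mathbb{S}^{\otimes\ell}(U_A\otimes U_B)^{\otimes\ell}]\rho^{\otimes\ell}\big)$. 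The $\ell=0$ term gives the leading $1$.

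\textbf{Step 2 (Weingarten twirl).} Regroup the $\ell$ copies as $A^{\otimes\ell}\otimes B^{\otimes\ell}$ and apply the standard formula
\[
\mathbb{E}_U[U^{\otimes\ell}XU^{\dagger\otimes\ell}]=\sum_{\pi,\sigma\in S_\ell}\Wg_d(\pi\sigma^{-1})\Tr(\mathbb{P}_\sigma^{\dagger}X)\mathbb{P}_\pi
\]
independently on $A$ and on $B$. This produces
\[
\sum_{\pi,\omega,\sigma,\tau}\Wg_d(\pi\sigma^{-1})\Wg_d(\omega\tau^{-1})\Tr\big[(\mathbb{P}_\sigma^{A}\otimes\mathbb{P}_\tau^{B})^\dagger\mathbb{S}^{\otimes\ell}\big]\,\mathbb{P}_\pi^{A}\otimes\mathbb{P}_\omega^{B}.
\]
Since $\mathbb{S}^{\otimes\ell}$ swaps $A_i\leftrightarrow B_i$ copy by copy, a direct index contraction should give $\Tr[(\mathbb{P}_\sigma^{A}\otimes\mathbb{P}_\tau^{B})^\dagger\mathbb{S}^{\otimes\ell}]=d^{\#(\sigma^{-1}\tau)}$, where $\#$ counts cycles. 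The Gram identity $\sum_\tau \Wg_d(\omega\tau^{-1})d^{\#(\tau^{-1}\sigma)}=\delta_{\omega,\sigma}$ collapses one sum. This leaves $\sum_{\pi,\omega}\Wg_d(\pi\omega^{-1})\,\mathbb{P}_\pi^A\otimes\mathbb{P}_\omega^B$. Because $\Wg_d$ is a class function and $\omega\mapsto\omega^{-1}$ is a bijection of $S_\ell$, one can relabel to get $\Wg_d(\pi\omega)$ (transferring the inversion to $\mathbb{P}_\omega^B$ if needed). Taking the trace against $\rho^{\otimes\ell}$ then gives Eq.~\eqref{eq:R_evaluation_entanglement}.

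I expect the main bookkeeping obstacle to be Step 2: getting the inverse and adjoint conventions right and verifying the cycle-count trace identity for $\mathbb{S}^{\otimes\ell}$. I would check it first at $\ell=1$, where it should reproduce $\mathbb{E}[U_A^\dagger\otimes U_B^\dagger\,\mathbb{S}\,U_A\otimes U_B]=\bI/d$, i.e.\ $\Wg_d(e)=1/d$ times the identity.

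\textbf{Step 3 (threshold).} Pure separable states $\ket{a}\ket{b}$ form a single $\mathcal{U}_{\rm loc}(d,d)$ orbit, and $\Sep$ is their convex hull, so Observation~\ref{obs:threshold_single_orbit} lets us evaluate the norm on $\ket{00}$. For this state the witness expectation is $\Tr[\widetilde{\mathbb{S}}\,U_A\ketbra{0}U_A^\dagger\otimes U_B\ketbra{0}U_B^\dagger]=1-|\langle\psi|\phi\rangle|^2$, with $\psi,\phi$ independent Haar-random vectors. The overlap $x=|\langle\psi|\phi\rangle|^2$ is $\mathrm{Beta}(1,d-1)$ distributed, so $1-x\sim\mathrm{Beta}(d-1,1)$ and $\mathbb{E}[(1-x)^\alpha]=\frac{d-1}{d-1+\alpha}$. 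Taking the $1/\alpha$ power gives $C_{\widetilde{\mathbb{S}},\mathcal{U}_{\rm loc}(d,d),\alpha}$.
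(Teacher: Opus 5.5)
Your overall route (binomial expansion, independent Weingarten twirls on $A$ and $B$, Gram collapse) can work, and your Step~3 is correct and matches the paper's convexity-plus-Beta argument. But Step~2 has a concrete error exactly where you expected trouble, and the relabeling you use to reach the target formula is not legitimate.

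Because $\mathbb{S}^{\otimes\ell}$ is the swap between $A^{\otimes\ell}$ and $B^{\otimes\ell}$, the swap trick $\Tr[(X\otimes Y)\mathbb{S}^{\otimes\ell}]=\Tr(XY)$ gives
\[
\Tr\big[(\mathbb{P}_\sigma^{A}\otimes\mathbb{P}_\tau^{B})^\dagger\mathbb{S}^{\otimes\ell}\big]=\Tr\big(\mathbb{P}_{\sigma^{-1}}\mathbb{P}_{\tau^{-1}}\big)=d^{\#(\sigma^{-1}\tau^{-1})}=d^{\#(\sigma\tau)},
\]
not $d^{\#(\sigma^{-1}\tau)}$. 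The two differ once $\ell\ge3$: for $\sigma=\tau$ a $3$-cycle you get $d^{1}$ versus $d^{3}$. Your $\ell=1$ check cannot detect this, and neither could an $\ell=2$ check, since every element of $S_1$ and $S_2$ is an involution.

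With your identity, the Gram collapse sets $\omega=\sigma$ and leaves $\sum_{\pi,\omega}\Wg_d(\pi\omega^{-1})\mathbb{P}^A_\pi\otimes\mathbb{P}^B_\omega$. Relabeling $\omega\mapsto\omega^{-1}$ then produces $\mathbb{P}^B_{\omega^{-1}}$, and $\Tr[(\mathbb{P}^A_\pi\otimes\mathbb{P}^B_\omega)\rho^{\otimes\ell}]$ is not invariant under inverting $\omega$ alone. At $\ell=3$, the pair $(\pi,\omega)=(c_+,c_+)$ gives $\Tr(\rho^3)$, whereas $(c_+,c_-)$ gives $\Tr[(\rho^{\mathrm{T}_B})^3]$. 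So the functional you would actually derive is different and wrong. For the maximally entangled state, $y_\rho(U)=\Tr(U\bar U)/d$, and for $d\ge3$ one finds $m_3=4/d^3$, which the stated formula reproduces. By contrast, $\sum_{\pi,\omega}\Wg_d(\pi\omega^{-1})\Tr[(\mathbb{P}^A_\pi\otimes\mathbb{P}^B_\omega)\rho^{\otimes3}]$ evaluates to $3!/d^3$.

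The repair is simple and makes the relabeling unnecessary. Write $d^{\#(\sigma\tau)}=d^{\#(\tau^{-1}\sigma^{-1})}$; the Gram identity then yields $\delta_{\omega,\sigma^{-1}}$. The remaining sum $\sum_{\pi,\sigma}\Wg_d(\pi\sigma^{-1})\mathbb{P}^A_\pi\otimes\mathbb{P}^B_{\sigma^{-1}}$ is literally $\sum_{\pi,\omega}\Wg_d(\pi\omega)\mathbb{P}^A_\pi\otimes\mathbb{P}^B_\omega$. The inverted $B$-permutation is the partial transpose, which is exactly where $\tau_3$ in Corollary~\ref{cor:WE_2_3_norm_entanglement} comes from.

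With that fix, your two-sided operator-level twirl is a valid alternative to the paper's proof. The paper first uses $\mathbb{S}(V\otimes V)=(V\otimes V)\mathbb{S}$ to reduce to the single relative unitary $U_A^\dagger U_B$, then applies one Weingarten integral at the level of matrix indices. The paper's reduction halves the Weingarten bookkeeping, while your route makes the partial-transpose structure visible through the twisted Gram matrix.
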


As an example, the $\alpha=1$ criterion is again trivial. We provide the exact calculations for $\alpha=2$ and $3$.

\begin{corollary}[WE $2,3$-norms for entanglement]
\label{cor:WE_2_3_norm_entanglement}
For every bipartite state $\rho$ with subsystems $A$ and $B$ of dimension $d$ each, 
we have
\begin{equation}
\label{eq:R2_entanglement}
\norm{\rho}_{\widetilde{\mathbb{S}},\mathcal{U}_{\textup{loc}}(d,d),2}^2=
1-\frac{2}{d}
+
\frac{1+r_2-(r_{2,A}+r_{2,B})/d}{d^2-1},
\end{equation}
and when $d\geq3$
\begin{equation}
\label{eq:R3_entanglement}
\begin{aligned}
\norm{\rho}_{\widetilde{\mathbb{S}},\mathcal{U}_{\textup{loc}}(d,d),3}^3
=&
1-\frac{3}{d}
+\frac{3\left(1+r_2-(r_{2,A}+r_{2,B})/d\right)}{d^2-1}\\
&
-\frac{\Delta_3(\rho)}{d(d^2-1)(d^2-4)},
\end{aligned}
\end{equation}
where
\begin{equation}
\label{eq:Delta3new}
\begin{aligned}
\Delta_3(\rho)
&=(d^2-2)\left(1+3r_2+2\tau_3\right)\\
&\quad-3d\left(r_{2,A}+r_{2,B}+2u_A+2u_B\right)\\
&\quad+4\left(r_{3,A}+r_{3,B}+3v+r_3\right),
\end{aligned}
\end{equation}
and
\begin{equation}\label{eq:trace_abbr}
\begin{aligned}
r_2&=\Tr(\rho^2),
& r_3&=\Tr(\rho^3),\\
\rho_A&=\Tr_B(\rho),&\rho_B&=\Tr_A(\rho),\\
r_{2,A}&=\Tr(\rho_A^2),
& r_{2,B}&=\Tr(\rho_B^2),\\
r_{3,A}&=\Tr(\rho_A^3),
& r_{3,B}&=\Tr(\rho_B^3),\\
u_A&=\Tr\left[\rho^2(\rho_A\otimes \bI)\right],&
u_B&=\Tr\left[\rho^2(\bI\otimes \rho_B)\right],\\
\tau_3&=\Tr\left[(\rho^{\mathrm{T}_B})^3\right],&
v&=\Tr\left[\rho(\rho_A\otimes \rho_B)\right].
\end{aligned}
\end{equation}
\end{corollary}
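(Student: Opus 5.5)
The plan is to specialize the closed formula of Theorem~\ref{thm:WE_criterion_entanglement} to $\alpha=2$ and $\alpha=3$. This requires three ingredients: the explicit unitary Weingarten function on $S_2$ and $S_3$, a bookkeeping of the pairs $(\pi,\omega)$ according to the conjugacy class of $\pi\omega$ (since $\Wg_d$ is a class function), and an evaluation of each trace $\Tr[(\mathbb{P}_\pi^{(\ell),A}\otimes\mathbb{P}_\omega^{(\ell),B})\rho^{\otimes\ell}]$ in terms of the invariants in Eq.~\eqref{eq:trace_abbr}. The $\ell=1$ term is always $-\alpha\,\Wg_d(e)\Tr\rho=-\alpha/d$, because $\Wg_d(e)=1/d$ on $S_1$. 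This produces the terms $1-2/d$ and $1-3/d$.

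\textbf{Case $\alpha=2$.} Here I use $\Wg_d(e)=1/(d^2-1)$ and $\Wg_d((12))=-1/(d(d^2-1))$. For the four pairs, I would evaluate the traces with the standard identity $\Tr[\mathbb{S}_{12}\,X^{\otimes 2}]=\Tr X^2$ applied to the relevant reduced operator:
\begin{itemize}
\item $(e,e)$ gives $1$;
\item $((12),(12))$ gives $r_2$;
\item $((12),e)$ gives $r_{2,A}$;
\item $(e,(12))$ gives $r_{2,B}$.
\end{itemize}
The product $\pi\omega$ is trivial exactly for the first two pairs. Hence the $\ell=2$ contribution is $\bigl(1+r_2-(r_{2,A}+r_{2,B})/d\bigr)/(d^2-1)$, and multiplying by $\binom22(-1)^2=1$ gives Eq.~\eqref{eq:R2_entanglement}.

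\textbf{Case $\alpha=3$.} The $\ell=2$ term is three times the bracket computed above. For $\ell=3$ I would use the $S_3$ Weingarten values with common denominator $D=d(d^2-1)(d^2-4)$:
\begin{itemize}
\item identity: $\Wg_d(e)=(d^2-2)/D$;
\item transpositions: $\Wg_d=-d/D$;
\item 3-cycles: $\Wg_d=2/D$.
\end{itemize}
These values require $d\ge3$ so that $D\neq0$, which explains the hypothesis in the corollary. I would then split the $36$ pairs by the class of $\pi\omega$.
\begin{itemize}
\item \emph{$\pi\omega=e$ (6 pairs).} The pair $(e,e)$ gives $1$, the three pairs $(t,t)$ give $r_2$, and the two pairs $(c,c^{-1})$ give $\Tr[(\rho^{\mathrm T_B})^3]=\tau_3$. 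The last identification is the familiar partial-transpose-moment identity: a forward cycle on $A$ together with the reverse cycle on $B$ is the same as a cycle applied to $\rho^{\mathrm T_B}$.
\item \emph{$\pi\omega$ a transposition (18 pairs).} The expected sum is $3r_{2,A}+3r_{2,B}+6u_A+6u_B$. It comes from the pairs $(t,e)$ and $(e,t)$, and from the pairs mixing a 3-cycle with a transposition, which contract to $\Tr[\rho^2(\rho_A\otimes\bI)]$ or its $B$ analogue.
\item \emph{$\pi\omega$ a 3-cycle (12 pairs).} The expected sum is $r_{3,A}+r_{3,B}+3v+r_3$, with counts $2,2,6,2$. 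The contributions are: $(c,e)$ giving $r_{3,A}$; $(e,c)$ giving $r_{3,B}$; distinct transpositions $(t,t')$ giving $\Tr[\rho(\rho_A\otimes\rho_B)]=v$; and $(c,c)$ giving $r_3$.
\end{itemize}
Assembling these with the sign $\binom33(-1)^3=-1$ yields $-\Delta_3(\rho)/D$, which is Eq.~\eqref{eq:R3_entanglement}.

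The main obstacle is this last enumeration. The delicate points are identifying each mixed contraction correctly, in particular which ones give $u_A$ versus $u_B$ and which give $\tau_3$ versus $r_3$, and matching the multiplicities to the convention $\mathbb{P}_\pi$ versus $\mathbb{P}_{\pi^{-1}}$. I would carry this out by writing each trace as a tensor-network contraction of three copies of $\rho$ with index wiring $\pi$ on $A$ and $\omega$ on $B$. As consistency checks, I would verify that the total multiplicity in each class equals $6$, $18$ and $12$, and confirm the final result on $\rho=\bI/d^2$ and on a product pure state. For the latter, Theorem~\ref{thm:WE_criterion_entanglement} predicts the value $(d-1)/(d+2)$.
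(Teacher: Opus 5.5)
Your proposal is correct and follows the paper's proof essentially step by step. Like the paper, you specialize the Weingarten expansion to $\ell\le 3$, group the pairs $(\pi,\omega)$ by the conjugacy class of $\pi\omega$, and identify the contractions the same way, including $(c,c^{-1})\mapsto\tau_3$ in the identity class and $(c,c)\mapsto r_3$ in the 3-cycle class. There is one small wording slip: with multiplicities $2,2,6,2$ the 3-cycle class sum is $2(r_{3,A}+r_{3,B}+3v+r_3)$, not $r_{3,A}+r_{3,B}+3v+r_3$, and it is this factor of $2$, times $\Wg_d=2/D$, that produces the coefficient $4$ in $\Delta_3(\rho)$.
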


The proof of Corollary~\ref{cor:WE_2_3_norm_entanglement} is given in Appendix~\ref{app:proof_WE_criterion_entanglement}.
For $\alpha=2$, Corollary~\ref{cor:WE_2_3_norm_entanglement} gives the WE 2-criterion
\begin{equation}\label{eq:WE_2_criterion_entanglement}
\Tr(\rho^2)>
\frac{1}{d}\left(\Tr(\rho_A^2)+\Tr(\rho_B^2)\right)+1-\frac{2}{d}
\quad\Longrightarrow\quad
\rho\notin\Sep.
\end{equation}
This criterion detects all pure entangled states, and therefore gives a substantial improvement over the original linear SWAP witness.
Moreover, for a pure state $\rho=\ketbra{\psi}{\psi}$, the criterion depends on the subsystem purity 
${\rm Tr}(\rho_A^2)={\rm Tr}(\rho_B^2)$, or equivalently on the R\'enyi-$2$ entanglement entropy of $\ket{\psi}$.
Thus, in the entanglement setting, WE naturally produces a well-known entanglement quantifier from the local unitary expansion of the SWAP witness.
However, as discussed in Section~\ref{sec:framework}, WE also introduces an averaging effect over the local unitary orbit of the seed witness.
In Eq.~\eqref{eq:WE_2_criterion_entanglement}, this averaging effect is reflected by the factor $1/d$ in front of the local purities ${\rm Tr}(\rho_A^2)$ and ${\rm Tr}(\rho_B^2)$.
As a consequence, the criterion becomes relatively insensitive to the imbalance between global and local purities, and it cannot detect many highly mixed entangled states.

A sharper two-copy criterion is given by the purity criterion~\cite{horodecki1996QuantumAentropyInequalities,nielsen2001SeparableStatesAre}
\begin{equation}\label{eq:purity_criterion_entanglement}
\Tr(\rho^2)>
\min\left\{\Tr(\rho_A^2),\Tr(\rho_B^2)\right\}
\quad\Longrightarrow\quad
\rho\notin\Sep.
\end{equation}
This criterion is optimal among all criteria that use only the three quantities $\left\{{\rm Tr}(\rho^2),{\rm Tr}(\rho_A^2),{\rm Tr}(\rho_B^2)\right\}$, in the sense that any other criterion can only detect a subset of states detectable by the purity criterion. Equivalently, within the general two-copy nonlinear witness form in Eq.~\eqref{eq:nonlinear_alpha_criterion}, the combined use of $W_\alpha=\mathbb{S}_{A}\mathbb{S}_B-\mathbb{S}_A$ and $\mathbb{S}_{A}\mathbb{S}_B-\mathbb{S}_B$ give the strongest criterion based on these purity data, where $\mathbb{S}_A$ ($\mathbb{S}_B$)
denotes the SWAP operator between two copies of subsystem $A$ ($B$).

This comparison also clarifies the limitation of optimizing only within the WE ansatz.
In Appendix~\ref{app:optimal_2_criterion_entanglement}, we prove that, if the nonlinear witness is restricted to the twirled form in Eq.~\eqref{eq:nonlinear_witness_twirling}, then choosing the SWAP operator as the seed witness in Eq.~\eqref{eq:entanglement_witness} is already optimal for $\alpha=2$, i.e., the states detectable by WE 2-criterion constructed by any seed operator can be detected by the criterion constructed by $\widetilde{\mathbb{S}}$.
Therefore, improving the two-copy entanglement criterion beyond Eq.~\eqref{eq:WE_2_criterion_entanglement} requires leaving the linear seed WE construction.
One must instead use nonlinear witnesses tailored more specifically to the structure of separable states, such as the purity criterion above.

The WE $3$-criterion contains several higher-order state functions beyond the global and local purities.
These include the third partial-transpose moment ${\rm Tr}[(\rho^{\mathrm T_B})^3]$, which has been widely used in entanglement detection~\cite{elben2020mix,yu2021optimal,miller2026DetectingEntanglementFew}.
The WE $3$-norm also relates to third-order moments of reduced states such as ${\rm Tr}(\rho_A^3)$ and ${\rm Tr}(\rho_B^3)$, which appear naturally in 
moment-based criteria for detecting mixed-state and 
bound entanglement~\cite{imai2021bound}.
In addition, the criterion involves the correlation functional ${\rm Tr}[\rho(\rho_A\otimes\rho_B)]$, which has been used to characterize correlations in multipartite quantum systems~\cite{liu2022correlation}.
Therefore, the WE $3$-criterion provides a new entanglement 
test built from quantities that are 
already known to be informative for entanglement detection.
The important point is that these quantities arise here from a single, systematic construction.
Previous criteria based on partial-transpose moments, moments of 
reduced states, or correlation functions usually exploit 
specific mathematical structures of entanglement.
By contrast, WE generates these higher-order quantities from the local unitary expansion of the SWAP witness.
This provides a more universal route to such moment-based entanglement criteria and suggests that analogous structures can emerge in other resource theories by changing the seed witness and the free unitary group.

A more interesting observation is the connection between WE and the PPT criterion~\cite{peres1996separability,horodecki1996SeparabilityMixedStates}.
The PPT criterion is one of the most important tools in entanglement detection:
It combines strong detection capability~\cite{aubrun2012phase}, a simple and efficiently computable mathematical structure, and deep connections to operational tasks such as entanglement 
distillation~\cite{vidal2002computable,PureBipartiteBennett}.
We show that the $\alpha\to\infty$ limit of the WE criterion gives a PPT-like condition.
This condition is weaker than the original PPT criterion, but it has a very similar structure: it tests the partial transpose $\rho^{\mathrm T_B}$ against all maximally entangled states.
If the search over maximally entangled states in the following theorem is replaced by a search over all normalized pure states, then one recovers exactly the PPT condition.

\begin{theorem}[PPT-like WE $\infty$-criterion for entanglement]
\label{thm:WE_infinity_criterion_entanglement}
Define the maximally entangled state vector as $|\Phi_d\rangle:= \sum_{j=0}^{d-1}|j,j\rangle/{\sqrt d}$, and the maximally entangled orbit
as
\begin{equation}
\ME_d:=\left\{(U_A\otimes U_B)|\Phi_d\rangle\middle\mid U_A,U_B\in\mathrm{U}(d)\right\}.
\label{eq-max-ent-orbit}
\end{equation}
For every bipartite state $\rho\in\mathcal{D}(\mathcal{H}_A\otimes\mathcal{H}_B)$,
\begin{equation}
\lim_{\alpha\to\infty}
\norm{\rho}_{\widetilde{\mathbb{S}},\mathcal{U}_{\textup{loc}}(d,d),\alpha}
=
1-d\min_{\ket{\Phi}\in\ME_d}
\bra{\Phi}\rho^{\mathrm{T}_B}\ket{\Phi}.
\end{equation}
Moreover,
\begin{equation}
\lim_{\alpha\to\infty}
C_{\widetilde{\mathbb{S}},\mathcal{U}_{\textup{loc}}(d,d),\alpha}
=
1.
\end{equation}
Consequently,
\begin{equation}
\exists\ket{\Phi}\in\ME_d
\textup{ s.t. }
\bra{\Phi}\rho^{\mathrm{T}_B}\ket{\Phi}<0
\quad\Longrightarrow\quad
\rho\notin\Sep.
\label{eq-swaplu}
\end{equation}
\end{theorem}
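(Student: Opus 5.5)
The limit is the $L^\infty(\mu_{\mathcal G})$-norm of the continuous function
\[
f_\rho(U)=\Tr\bigl(U^\dagger\widetilde{\mathbb{S}}U\rho\bigr),\qquad U=U_A\otimes U_B\in\mathcal{U}_{\textup{loc}}(d,d).
\]
The proof therefore has three steps: identify this essential supremum, rewrite the optimized quantity through the partial transpose, and take the limit of the thresholds.

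\textbf{Step 1: the limit is the maximum of $f_\rho$.} Since $\widetilde{\mathbb{S}}\geq0$, we have $f_\rho\geq 0$. The Haar measure on the compact group $\mathrm{U}(d)\times\mathrm{U}(d)$ has full support, and $f_\rho$ is continuous. Hence the essential supremum equals the maximum. The standard fact that $\|f\|_{L^\alpha(\mu)}\to\|f\|_{L^\infty(\mu)}$ for a probability measure (Appendix~\ref{app:framework_nonlinear}) then gives
\[
\lim_{\alpha\to\infty}\norm{\rho}_{\widetilde{\mathbb{S}},\mathcal{U}_{\textup{loc}}(d,d),\alpha}=\max_{U_A,U_B}\Bigl[1-\Tr\bigl((U_A\otimes U_B)^\dagger\,\mathbb{S}\,(U_A\otimes U_B)\rho\bigr)\Bigr].
\]

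\textbf{Step 2: rewrite the maximum via the partial transpose.} Use the identity $\mathbb{S}^{\mathrm{T}_B}=d\,\ketbra{\Phi_d}{\Phi_d}$ together with $\Tr(X\rho)=\Tr(X^{\mathrm{T}_B}\rho^{\mathrm{T}_B})$. This gives
\[
\Tr\bigl((U_A\otimes U_B)^\dagger\,\mathbb{S}\,(U_A\otimes U_B)\rho\bigr)=d\,\bra{\Phi}\rho^{\mathrm{T}_B}\ket{\Phi},\qquad \ket{\Phi}=(U_A^\dagger\otimes U_B^{\mathrm T})\ket{\Phi_d}.
\]
The partial transpose of $(U_A\otimes U_B)^\dagger\,\mathbb{S}\,(U_A\otimes U_B)$ has the form $(U_A^\dagger\otimes \bar U_B')\,\mathbb{S}^{\mathrm{T}_B}\,(U_A\otimes \bar U_B'^{\dagger})$ for some unitary $U_B'$ built from $U_B$; checking this is routine index bookkeeping and is where sign and conjugation conventions need care. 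As $U_A,U_B$ range over $\mathrm{U}(d)$, the vectors $(U_A^\dagger\otimes U_B^{\mathrm T})\ket{\Phi_d}$ range over exactly $\ME_d$. Maximizing $1-d\bra{\Phi}\rho^{\mathrm{T}_B}\ket{\Phi}$ is the same as minimizing $\bra{\Phi}\rho^{\mathrm{T}_B}\ket{\Phi}$ over $\ME_d$. This proves the first claimed identity.

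\textbf{Step 3: the threshold tends to $1$.} Theorem~\ref{thm:WE_criterion_entanglement} gives $C_{\widetilde{\mathbb{S}},\mathcal{U}_{\textup{loc}}(d,d),\alpha}=\bigl((d-1)/(d+\alpha-1)\bigr)^{1/\alpha}$. Its logarithm is $\alpha^{-1}\bigl[\log(d-1)-\log(d+\alpha-1)\bigr]\to0$, so the threshold tends to $1$.

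\textbf{Step 4: the consequence \eqref{eq-swaplu}.} One cannot simply pass the strict inequality $\norm{\rho}_{\alpha}>C_\alpha$ to the limit. The implication should instead be argued directly. For a separable $\rho$, $\rho^{\mathrm{T}_B}$ is again separable, hence positive semidefinite, so $\bra{\Phi}\rho^{\mathrm{T}_B}\ket{\Phi}\geq0$ for every $\ket{\Phi}$. Equivalently, each rotated witness $U^\dagger\mathbb{S}U$ is a valid entanglement witness.

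A finite-$\alpha$ version also holds. If the minimum of $\bra{\Phi}\rho^{\mathrm{T}_B}\ket{\Phi}$ over $\ME_d$ is negative, the limit in Step 2 exceeds $1$. Since the thresholds converge to $1$, the WE $\alpha$-criterion then fires for all sufficiently large $\alpha$. This matches the framework statement in Section~\ref{subsec:WE_intro}.

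The main obstacle is the partial-transpose bookkeeping in Step 2, in particular confirming that the set of transformed vectors is all of $\ME_d$ rather than a proper subset.
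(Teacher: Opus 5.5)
Your proposal is correct and follows the paper's proof essentially step for step. You use continuity and full Haar support to identify the $\alpha\to\infty$ limit with the maximum of $f_\rho$, the identity $\mathbb{S}^{\mathrm{T}_B}=d\ketbra{\Phi_d}{\Phi_d}$ with $\ket{\Phi}=(U_A^\dagger\otimes U_B^{\mathrm T})\ket{\Phi_d}$, and the explicit threshold formula. The step you flag as the main obstacle is immediate, since $U\mapsto U^\dagger$ and $U\mapsto U^{\mathrm T}$ are bijections of $\mathrm{U}(d)$, and your direct PPT argument for the final implication is a valid complement to the limit-based one.
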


The proof of Theorem~\ref{thm:WE_infinity_criterion_entanglement} is provided in Appendix~\ref{app:proof_WE_infinity_criterion_entanglement} and can be intuitively understood as follows. Due to the ``transpose trick'' $(U \otimes \bI) \ket{\Phi_d}=(\bI \otimes U^\mathrm{T}) \ket{\Phi_d}$ of the maximally entangled  state, it suffices in Eq.~\eqref{eq-max-ent-orbit} to consider unitaries on subsystem $A$ alone and one can set $U_B= \bI.$
Then the criterion in Eq.~\eqref{eq-swaplu} states that
$\Tr[\rho^{\mathrm{T}_B}  \ketbra{\Phi}{\Phi}]
= \Tr[\rho (U_A \otimes \bI) (\ketbra{\Phi_d}{\Phi_d})^{\mathrm{T}_B} 
(U_A^\dagger \otimes \bI)]
= \Tr[\rho (U_A \otimes \bI) \mathbb{S}
(U_A^\dagger \otimes \bI)]/d < 0$. This means that in the limit $\alpha \rightarrow \infty$ the criterion exactly checks all SWAP witnesses up to local unitaries.

\section{Qudit magic for an odd prime dimension}\label{sec:qudit_magic}
% Nonstabilizerness, or quantum magic, captures the states and operations beyond the stabilizer formalism, where the stabilizer states evolved by Clifford operations remain efficiently classical simulable
Nonstabilizerness, or quantum magic, captures the states and operations beyond the stabilizer formalism, where the stabilizer states evolved by Clifford operations remain efficiently classical simulable~\cite{gottesman1998HeisenbergRepresentationQuantum,aaronson2004stabilizer}.
In this sense, magic is regarded as an operational resource, formalized in the resource theory of magic~\cite{veitch2014ResourceTheoryStabilizer,wang2019ResourceTheoryMagic}, and is essential in tasks including classical simulation~\cite{howard2014ContextualitySuppliesMagic,bravyi2016ImprovedClassicalSimulation,bravyi2019SimulationQuantumCircuits}, fault-tolerant quantum computation~\cite{bravyi2005UniversalQuantumComputation,howard2017ApplicationResourceTheory,campbell2017RoadsFaulttolerantUniversal}, and quantum contextuality~\cite{howard2014ContextualitySuppliesMagic}.
The methods for detecting and quantifying magic resources have been developed in the past decade, such as magic witnesses~\cite{liu2025MagicCriterionAlmost,haug2026EfficientWitnessingTesting}, stabilizer rank~\cite{bravyi2016ImprovedClassicalSimulation,bravyi2019SimulationQuantumCircuits,huang2019ApproximateStabilizerRank,qassim2021ImprovedUpperBounds}, robustness of magic~\cite{howard2017ApplicationResourceTheory,heinrich2019RobustnessMagicSymmetries,liu2022magic,hamaguchi2024HandbookQuantifyingRobustness}, and stabilizer entropy~\cite{Leone2022sre,leone2024StabilizerEntropiesAre,bittel2026OperationalInterpretationStabilizer}.
Nevertheless, compared with the rich toolbox available for entanglement detection, methods for detecting mixed-state magic remain relatively limited. The following examples demonstrate that the WE framework offers a systematic route toward constructing new mixed-state magic criteria.

We first consider the magic of qudit system with an odd prime dimension and denote the local dimension as $q$.
Let $\omega_q=e^{i2\pi/q}$ be a primitive $q$-th root of unity, define the Weyl operators for $z,x\in\mathbb{Z}_q^n$ by their action on basis vectors $\left\{\ket{t}\right\}_{t\in\mathbb{Z}_q^n}$ of $\mathcal{H}=(\mathbb C^q)^{\otimes n}$,
\begin{equation}
\begin{gathered}
Z(z)\ket{t}=\omega_q^{z\cdot t}\ket{t},\qquad X(x)\ket{t}=\ket{t+x},\\
T_{z,x}=\omega_q^{-2^{-1}z\cdot x}Z(z)X(x),
\end{gathered}
\end{equation}
where all additions and inner products are taken modulo $q$, and $2^{-1}$ denotes the inverse of $2$ in $\mathbb Z_q$.
The $n$-qudit Heisenberg-Weyl group consists of all Weyl operators and phases,
\begin{equation}
\mathsf{P}_{n,q}=\left\{\omega_q^aT_{z,x}\middle\mid a\in\mathbb{Z}_q,z,x\in\mathbb{Z}_q^n\right\}.
\end{equation}
The set of free unitaries is the $n$-qudit Clifford group, which is the normalizer of $\mathsf{P}_{n,q}$,
\begin{equation}\label{eq:qudit_magic_free_unitaries}
\Ufree=\Cl_{n,q}=\{U\in \mathrm{U}(d)\mid U\mathsf{P}_{n,q}U^\dagger=\mathsf{P}_{n,q}\},
\end{equation}
where $d=q^n$ is the global dimension.
The pure stabilizer states are those obtainable from $\ket{0^n}$ by applying a Clifford unitary, denoted as
\begin{equation}
\Sigma_{n,q}=\left\{U\ketbra{0^n}{0^n}U^\dagger\middle| U\in \Cl_{n,q}\right\},
\end{equation}
and the set of free states ($n$-qudit stabilizer polytope) is the convex hull of $\Sigma_{n,q}$,
\begin{equation}
\Fset=\STAB_{n,q}=\textup{Conv}(\Sigma_{n,q}).
\end{equation}

For odd-prime-dimensional qudit magic, a natural choice of the seed operator for the WE framework is the phase-space point operator, or Wigner operator, defined at point $(\mu,\nu)\in V\coloneqq\mathbb Z_q^n\times\mathbb Z_q^n$ as
\begin{equation}
A_{\mu,\nu}
=
\frac{1}{d}
\sum_{z,x\in\mathbb Z_q^n}
\omega_q^{\nu\cdot z-\mu\cdot x}
T_{z,x}^{\dagger}.
\end{equation}
Each $A_{\mu,\nu}$ is a valid magic witness, since stabilizer states have nonnegative discrete Wigner functions and therefore satisfy $\Tr(A_{\mu,\nu}\sigma)\ge 0$ for all $\sigma\in\STAB_{n,q}$~\cite{gross2006Hudson}.
Moreover, the whole family $\{A_{\mu,\nu}\}_{\mu,\nu}$ can be generated from a single phase-space point operator by Clifford conjugation.
Thus, the Wigner operators provide a canonical witness orbit for applying the WE framework to odd-prime-dimensional qudit magic. We choose
\begin{equation}\label{eq:qudit_magic_witness}
W=A_{0^{2n}}=\frac{1}{d}
\sum_{z,x\in\mathbb Z_q^n}
T_{z,x}^{\dagger},\qquad\widetilde{W}=\widetilde{A}_{0^{2n}}=\bI-A_{0^{2n}}.
\end{equation}

Before introducing our results, we first note a general obstruction that applies to magic detection in both qudit and qubit stabilizer resource theories.

\begin{observation}[State moments cannot detect magic]
\label{obs:state_moments_cannot_detect_magic}
Any criterion depending only on the state moments $\{{\rm Tr}(\rho^k)\}_k$ cannot detect a magic state $\rho$, in both qudit and qubit stabilizer resource theories.
\end{observation}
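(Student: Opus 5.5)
The claim is that the spectrum of $\rho$ never certifies magic. Since $\{\Tr(\rho^k)\}_{k=1}^{d}$ determines the spectrum and vice versa, I will reduce the statement to the following: \emph{for every state $\rho$ there is a stabilizer-polytope state $\sigma\in\STAB_{n,q}$ with the same spectrum as $\rho$.} Then any criterion depending only on the moments assigns $\rho$ the same value as the free state $\sigma$. A valid criterion must never flag a free state, so it cannot flag $\rho$ either, and hence cannot detect any magic state.

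To build $\sigma$, I diagonalize $\rho=\sum_{t}\lambda_t\ket{\phi_t}\bra{\phi_t}$ and set
\begin{equation}
\sigma=\sum_{t\in\mathbb Z_q^n}\lambda_t\ket{t}\bra{t},
\end{equation}
using the computational basis $\{\ket{t}\}$ of $\mathcal H=(\mathbb C^q)^{\otimes n}$; for qubits the label $t$ runs over $\{0,1\}^n$. Each $\ket{t}=X(t)\ket{0^n}$ is obtained from $\ket{0^n}$ by a Weyl (Pauli) operator. Weyl operators lie in the Clifford group, so every $\ket{t}\bra{t}$ belongs to $\Sigma_{n,q}$ (respectively to the qubit pure stabilizer states). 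Therefore $\sigma$ is a convex combination of pure stabilizer states, i.e.\ $\sigma\in\STAB_{n,q}$. It is unitarily equivalent to $\rho$, so $\Tr(\sigma^k)=\Tr(\rho^k)$ for all $k$.

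The only point needing care is to make "criterion" precise. I will define it as a map $f$ of the moment vector together with a threshold, such that $f(\{\Tr\sigma^k\}_k)$ is on the free side for every $\sigma\in\mathcal F$. The contradiction above then holds for any $\rho$. The same reasoning also rules out any criterion invariant under the full unitary group, not just moment-based ones.

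I would add a short remark contrasting this with entanglement. There, the relevant spectral data include the moments of the marginals $\rho_A$ and $\rho_B$, not only those of $\rho$. In the magic setting the free set contains a full orthonormal basis of pure free states, which is what makes global spectral data useless. This motivates the choice of the Wigner seed in the WE construction.
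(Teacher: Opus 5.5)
Your proposal is correct and follows essentially the same route as the paper. The paper (Proposition~\ref{prop:same-spectrum}) likewise diagonalizes $\rho$ and places its eigenvalues on computational-basis stabilizer states, producing $\sigma=\sum_j\lambda_j\ket{j}\bra{j}\in\STAB_{n,q}$ with the same spectrum, and hence the same moments $\Tr(\sigma^k)=\Tr(\rho^k)$ for all $k$.
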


Indeed, the stabilizer polytope contains all states diagonal in the computational basis.
Therefore, for every density operator $\rho$, one can construct a stabilizer state with the same spectrum by mixing computational basis states with probabilities equal to the eigenvalues of $\rho$.
Such a stabilizer state has the same values of all moments ${\rm Tr}(\rho^k)$ as $\rho$, 
so spectral moments alone cannot distinguish magic states from stabilizer states.

For the WE framework, this gives a simple obstruction.
Whenever the $\alpha$-th order group twirl coincides with the corresponding Haar twirl, the resulting quantity can depend on $\rho$ only through spectral moments, and hence cannot yield a nontrivial magic criterion.
In the odd-prime-dimensional qudit case, $\Cl_{n,q}$ is a unitary 2-design~\cite{divincenzo2002QuantumDataHiding}.
Thus, the WE $1,2$-criteria with full Clifford group are trivial: the first order only retains $\Tr(\rho)$, while the second order can depend on $\rho$ only through $\Tr(\rho^2)$.
Since $\Cl_{n,q}$ is not a unitary 3-design~\cite{gross2021SchurWeylClifford,bittel2025CompleteTheoryClifford}, the first potentially nontrivial full-Clifford WE criterion starts at $\alpha=3$.
\begin{theorem}[WE 3-criterion with Clifford group for odd-prime-dimensional qudit magic]\label{thm:WE_3_qudit_magic}
For any $n$-qudit state $\rho$ with odd-prime local dimension $q$, the WE $3$-norm in criterion~\eqref{eq:WE_alpha_criterion} with qudit magic witness $A_{0^{2n}}$ in Eq.~\eqref{eq:qudit_magic_witness} and $n$-qudit Clifford group $\Cl_{n,q}$ in Eq.~\eqref{eq:qudit_magic_free_unitaries} is given by
\begin{equation}\label{eq:R_3_qudit_magic}
\norm{\rho}_{\widetilde{A}_{0^{2n}},\Cl_{n,q},3}^3
=
1-\frac{3}{d}+\frac{3}{d}\Tr(\rho^2)-d\sum_{u\in V}W_\rho(u)^3,
\end{equation}
where $d=q^n$ and the discrete Wigner function evaluated at point $u\in V$ is defined as
\begin{equation}
W_{\rho}(u)=\frac1{d}\Tr(A_u\rho).
\end{equation}
The threshold value for all stabilizer states and $\alpha\geq1$ is
\begin{equation}\label{eq:C_alpha_qudit_magic}
C_{\widetilde{A}_{0^{2n}},\Cl_{n,q},\alpha}=\left(1-\frac1d\right)^{\frac1\alpha}.
\end{equation}
The WE $3$-criterion detects all pure non-stabilizer states, as for any $n$-qudit pure state $\psi=\ketbra{\psi}{\psi}$,
\begin{equation}
\norm{\psi}_{\widetilde{A}_{0^{2n}},\Cl_{n,q},3}>C_{\widetilde{A}_{0^{2n}},\Cl_{n,q},3}\quad\Longleftrightarrow\quad\psi\notin\Sigma_{n,q}.
\end{equation}
\end{theorem}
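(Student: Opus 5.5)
The plan is to first reduce the Clifford twirl to a uniform average over phase space, which gives a closed form for every $\alpha$ at once. The Clifford group acts on the Wigner operators by $U^\dagger A_u U = A_{g(u)}$, where $g$ is an affine symplectic map of $V$. The Weyl operators alone already translate $A_{0^{2n}}$ to every $A_u$, so this action on $V$ is transitive. Pushing the Haar measure on $\Cl_{n,q}$ forward along $U\mapsto g(0^{2n})$ gives a measure on $V$ invariant under all translations, hence the uniform measure. I will verify this by noting that left-multiplying by a Weyl operator preserves Haar measure and shifts the image point.

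Together with $\Tr(\widetilde A_u\rho)=1-dW_\rho(u)$, this yields, for every integer $\alpha\ge1$,
\begin{equation*}
\norm{\rho}^\alpha_{\widetilde{A}_{0^{2n}},\Cl_{n,q},\alpha}=\frac{1}{d^2}\sum_{u\in V}\bigl(1-dW_\rho(u)\bigr)^\alpha .
\end{equation*}
For $\alpha=3$ I expand the binomial and use two standard Wigner identities. The first is $\sum_u A_u=d\,\bI$, which gives $\sum_u W_\rho(u)=1$. The second is the orthogonality $\Tr(A_uA_v)=d\,\delta_{u,v}$, with $\{A_u\}$ spanning the operator space, which gives $\sum_u W_\rho(u)^2=\Tr(\rho^2)/d$. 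Substituting these produces $1-3/d+(3/d)\Tr(\rho^2)-d\sum_u W_\rho(u)^3$, which is Eq.~\eqref{eq:R_3_qudit_magic}.

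For the threshold, the pure stabilizer states form the single Clifford orbit of $\ket{0^n}$, and $\STAB_{n,q}$ is their convex hull, so Observation~\ref{obs:threshold_single_orbit} applies. It therefore suffices to evaluate the formula above on $\ketbra{0^n}{0^n}$. Its Wigner function equals $1/d$ on $d$ points (the Lagrangian subspace $\{x=0\}$) and vanishes elsewhere. The sum then has $d$ terms equal to $0$ and $d^2-d$ terms equal to $1$. This gives $C^\alpha=1-1/d$ for every $\alpha$, which is Eq.~\eqref{eq:C_alpha_qudit_magic}.

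For faithfulness on pure states, set $\Tr(\psi^2)=1$. The criterion $\norm{\psi}^3>1-1/d$ becomes $\sum_u W_\psi(u)^3<1/d^2$. The operators $A_u$ are Hermitian unitaries with eigenvalues $\pm1$, so $|W_\psi(u)|\le 1/d$. Hence
\begin{equation*}
\sum_u W_\psi(u)^3\le \max_u W_\psi(u)\sum_u W_\psi(u)^2\le \frac1d\cdot\frac1d .
\end{equation*}
Equality forces $W_\psi(u)^2\bigl(1/d-W_\psi(u)\bigr)=0$ for every $u$, so $W_\psi$ takes values only in $\{0,1/d\}$ and in particular is nonnegative. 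The discrete Hudson theorem~\cite{gross2006Hudson} (odd $q$) then implies that $\psi$ is a stabilizer state. Conversely, every stabilizer state attains equality by the threshold computation. Therefore the strict inequality holds exactly when $\psi\notin\Sigma_{n,q}$.

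The main point needing care is the first step: showing that the Haar pushforward onto $V$ is exactly uniform, and that $A_{0^{2n}}$ has trivial phase ambiguity under conjugation, so that no signs appear. The equality case is routine once the Hudson theorem is invoked.
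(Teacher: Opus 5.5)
Your proposal is correct and follows essentially the same route as the paper: a uniform average of $(1-dW_\rho(u))^\alpha$ over $V$, the expansion via $\sum_u W_\rho(u)=1$ and $\sum_u W_\rho(u)^2=\Tr(\rho^2)/d$, the threshold from one pure stabilizer state via Observation~\ref{obs:threshold_single_orbit}, and faithfulness from $\sum_u W_\psi(u)^2\bigl(1-dW_\psi(u)\bigr)\ge 0$ together with the discrete Hudson theorem; the only difference is that the paper gets uniformity directly from the decomposition $U=T_a\mu(F)$ with $UA_uU^\dagger=A_{Fu+a}$, whereas you use a pushforward-of-Haar argument. One small correction to the step you flagged: with your convention $U^\dagger A_u U$, it is right-multiplication $U\mapsto UT_b$ that shifts the image point by the $U$-independent vector $-b$, while left-multiplication shifts it by the linear part of $g$ applied to $-b$, which depends on $U$; bi-invariance of the Haar measure then gives exactly the translation invariance you need.
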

The detailed proof is left in Appendix~\ref{app:qudit_magic}.
Although $\alpha=3$ leads to a nontrivial magic detection criterion, it is still desirable to lower the copy number and ease the experimental complexity.
The triviality of the full-Clifford WE $2$-criterion can be understood as a consequence of excessive averaging.
This suggests breaking the full Clifford symmetry and expanding the seed witness only over a smaller subgroup.

A simple choice is the subgroup generated by phase-space translations in one direction,
\begin{equation}\label{eq:qudit_magic_subgroup_free_unitaries}
    \mathcal G_X
    =
    \left\{X(x)\middle|x\in\mathbb Z_q^n\right\}.
\end{equation}
Under this subgroup, the seed phase-space point operator is rotated only within a single column of phase space,
\begin{equation}
X(x)^\dagger A_{0^{2n}}X(x)=A_{0^n,-x}.
\end{equation}
Thus the averaging is restricted to a proper subset of Wigner witnesses rather than the full set.
As shown below, this construction already yields a nontrivial second-order magic criterion (see the proof in Appendix~\ref{app:qudit_magic}).
\begin{theorem}[WE 2-criterion with subgroup for odd-prime-dimensional qudit magic]\label{thm:WE_2_subgroup_qudit_magic}
For any $n$-qudit state $\rho$ with odd-prime local dimension $q$, the WE $2$-norm in criterion~\eqref{eq:WE_alpha_criterion} with qudit magic witness $A_{0^{2n}}$ in Eq.~\eqref{eq:qudit_magic_witness} and a subgroup of free unitaries $\mathcal{G}_X$ in Eq.~\eqref{eq:qudit_magic_subgroup_free_unitaries} is given by
\begin{equation}
\begin{aligned}
\norm{\rho}_{\widetilde{A}_{0^{2n}},\mathcal{G}_X,2}^2&=1-2\sum_{x\in \mathbb{Z}_q^n}W_\rho((0,x))+d\sum_{x\in \mathbb{Z}_q^n}W_\rho((0,x))^2,
\end{aligned}
\end{equation}
where $d=q^n$.
The threshold value for all stabilizer states is
$C_{\widetilde{A}_{0^{2n}},\mathcal{G}_X,2}=1$. Especially, for any $n$ and $q$, there exists an $n$-qudit state $\rho_2$ that is detectable by this second-order criterion, i.e., $\norm{\rho_2}_{\widetilde{A}_{0^{2n}},\mathcal{G}_X,2}>C_{\widetilde{A}_{0^{2n}},\mathcal{G}_X,2}$, but cannot be detected by the third-order criterion in Theorem~\ref{thm:WE_3_qudit_magic}, i.e., $\norm{\rho_2}_{\widetilde{A}_{0^{2n}},\Cl_{n,q},3}\leq C_{\widetilde{A}_{0^{2n}},\Cl_{n,q},3}$. On the contrary, there exists an $n$-qudit pure non-stabilizer state vector 
$\ket{\psi_3}$ with 
\begin{equation}
    \norm{\ketbra{\psi_3}{\psi_3}}_{\widetilde{A}_{0^{2n}},\mathcal{G}_X,2}\leq C_{\widetilde{A}_{0^{2n}},\mathcal{G}_X,2},  
\end{equation}
thus the third-order criterion can detect it while the second-order criterion cannot.
\end{theorem}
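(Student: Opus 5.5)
The plan is to compute the WE $2$-norm directly from the definition, using $X(x)^\dagger A_{0^{2n}}X(x)=A_{0^n,-x}$. For each $x$,
\begin{equation}
\Tr\big(X(x)^\dagger\widetilde{A}_{0^{2n}}X(x)\rho\big)=1-\Tr(A_{0^n,-x}\rho)=1-d\,W_\rho((0,-x)).
\end{equation}
Squaring, averaging uniformly over the $d$ elements of $\mathcal G_X$, and reindexing $x\mapsto -x$ gives $1-2\sum_x W_\rho((0,x))+d\sum_x W_\rho((0,x))^2$. This is the claimed formula.

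\textbf{Threshold.} The general argument of Section~\ref{sec:framework} already gives $C\le 1$, so it remains to show $C\ge 1$ and to get a usable inequality for stabilizer states. I will use two facts.
\begin{enumerate}[label=(\roman*)]
\item The column marginal $p_\rho:=\sum_x W_\rho((0,x))=\frac1d\Tr\big(\sum_x A_{0^n,x}\,\rho\big)$ is a basis-state probability. Summing the definition of $A_{\mu,\nu}$ over $\nu$ kills every term with $z\neq0$, leaving $\sum_x A_{0^n,x}=\sum_x T_{0,x}^\dagger=d\,|\tilde0\rangle\langle\tilde0|$ for the Fourier ($X$-eigen) basis state $|\tilde 0\rangle$. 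Hence $p_\rho=\langle\tilde0|\rho|\tilde0\rangle\in[0,1]$.
\item Each $A_u$ is a Hermitian unitary, so $W_\sigma(u)\le 1/d$. For stabilizer $\sigma$ we also have $W_\sigma\ge0$.
\end{enumerate}
For stabilizer $\sigma$ these give $d\sum_x W_\sigma^2\le\sum_x W_\sigma=p_\sigma\le 2p_\sigma$, so the norm squared is at most $1$. Equality holds for any stabilizer state with $p_\sigma=0$, for instance a Fourier basis state other than $|\tilde0\rangle$. Hence $C_{\widetilde{A}_{0^{2n}},\mathcal G_X,2}=1$. In the same way, the criterion is equivalent to $d\sum_x W_\rho((0,x))^2>2p_\rho$.

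\textbf{State $\rho_2$ (detected by the $2$-criterion, missed by the $3$-criterion).} I will construct it for $n=1$ and then tensor with stabilizer states on the remaining qudits.
\begin{enumerate}[label=(\alph*)]
\item Take $\tau$ with $p_\tau=0$ but some nonzero column Wigner value. This only requires $\tau$ supported on $|\tilde0\rangle^\perp$ but not diagonal in the Fourier basis, e.g.\ a superposition of $|\tilde1\rangle$ and $|\tilde2\rangle$.
\item Let $\sigma_0$ be the maximally mixed state on $|\tilde0\rangle^\perp$, which is a stabilizer mixture.
\item Set $\rho_2=(1-\lambda)\sigma_0+\lambda\tau$. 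Since $W_{\sigma_0}$ vanishes on the column, $p_{\rho_2}=0$ and $W_{\rho_2}=\lambda W_\tau$ there. The $2$-norm squared is therefore $1+d\lambda^2\sum_x W_\tau^2>1$ for every $\lambda>0$.
\item Using Eq.~\eqref{eq:R_3_qudit_magic}, check that $\norm{\sigma_0}_{\widetilde{A}_{0^{2n}},\Cl_{n,q},3}^3<1-1/d$ strictly. Mixedness lowers the $\Tr(\rho^2)$ term while the Wigner cube term stays nonnegative. Continuity in $\lambda$ then shows that small $\lambda$ fails the $3$-criterion.
\end{enumerate}

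\textbf{State $\ket{\psi_3}$ (detected by the $3$-criterion, missed by the $2$-criterion).} Start from the stabilizer state $|\tilde0\rangle$. There $p=1$ and $W=1/d$ on the whole column, so the $2$-norm squared is $1-2+1=0$, far below the threshold. A small perturbation to a pure non-stabilizer state keeps this value below $1$ by continuity. Theorem~\ref{thm:WE_3_qudit_magic} then detects the perturbed state, since it detects every pure non-stabilizer state.

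\textbf{Main obstacle.} I expect step (d) to be the hardest: verifying the strict $3$-norm gap for $\sigma_0$, uniformly in $n$ and $q$ through the tensor-product embedding. The Wigner function factorizes over qudits, so the cube sum factorizes as well, and I will evaluate both terms explicitly. If the gap fails, the fallback is to mix in the maximally mixed state instead, while keeping $p$ small relative to $\lambda^2$.
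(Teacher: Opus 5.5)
Your derivation of the norm and of $C_{\widetilde{A}_{0^{2n}},\mathcal{G}_X,2}=1$ is essentially the paper's. Both use the bound $d\sum_xW_\sigma^2\le\sum_xW_\sigma$ for stabilizer $\sigma$, with equality at a stabilizer state whose Wigner support avoids the column; the paper's choice, supported on $(\iota,0^n)+L_X$, is just the Fourier state $|\tilde\iota\rangle$.

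For the two separation claims you take a genuinely different route. The paper uses explicit states. First, $\rho_2=(4\bI-3A_{0^{2n}})/(4d-3)$, whose Wigner function follows from $\Tr(A_uA_v)=d\delta_{u,v}$; it gives closed-form margins $(d-2)/(4d-3)^2>0$ and $-(d-1)(128d^2-487d+414)/\bigl(d(4d-3)^3\bigr)<0$, uniformly in $n$ and $q$. Second, $\ket{\psi_3}=(\ket{0^n}+\ket{\iota})/\sqrt2$, with an explicitly tabulated Wigner function. You instead identify the column marginal as $\langle\tilde0|\rho|\tilde0\rangle$ and argue by continuity. Your $\psi_3$ argument is correct and lighter than the paper's: perturb $|\tilde0\rangle$, where the second-order norm is $0$, and invoke the pure-state faithfulness of Theorem~\ref{thm:WE_3_qudit_magic}. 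The price is that you obtain only existence, not explicit margins.

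Step (a), however, fails as written. With $|\tilde k\rangle=d^{-1/2}\sum_t\omega_q^{k\cdot t}|t\rangle$ one has $A_{0^n,x}|t\rangle=|2x-t\rangle$, hence $A_{0^n,x}|\tilde k\rangle=\omega_q^{2k\cdot x}|\widetilde{-k}\rangle$ and
\begin{equation*}
\Tr(A_{0^n,x}\rho)=\sum_{k\in\mathbb{Z}_q^n}\omega_q^{2k\cdot x}\langle\tilde k|\rho|\widetilde{-k}\rangle .
\end{equation*}
So the column Wigner values see only the entries $\langle\tilde k|\rho|\widetilde{-k}\rangle$, not arbitrary Fourier coherences. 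Your sufficient condition ``not diagonal in the Fourier basis'' is false. For $q\ge5$, a superposition of $|\tilde1\rangle$ and $|\tilde2\rangle$ has all these entries equal to zero; it works only for $q=3$, where $-1\equiv2$. Its column Wigner function therefore vanishes identically, and your $\rho_2$ has second-order norm exactly $1$, so it is not detected. The fix, for $n=1$, is to take $\tau=|\phi\rangle\langle\phi|$ with $|\phi\rangle=(|\tilde1\rangle+|\widetilde{-1}\rangle)/\sqrt2$. Then $\Tr(A_{0,x}\tau)=\cos(4\pi x/q)$ and $q\sum_xW_\tau((0,x))^2=1/2$, so $\norm{\rho_2}_{\widetilde{A}_{0^{2n}},\mathcal{G}_X,2}^2=1+\lambda^2/2$.

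Two further points should be made explicit.
\begin{enumerate}
\item Both $p_\rho$ and $d\sum_xW_\rho((0,x))^2$ are multiplicative under tensor products. The padding on the other $n-1$ qudits must therefore have nonzero column Wigner values, e.g.\ $\ket{0}^{\otimes(n-1)}$; padding with $|\tilde k\rangle$, $k\neq0$, would kill the detection. With a pure stabilizer padding, the third-order margin $\norm{\cdot}_{\widetilde{A}_{0^{2n}},\Cl_{n,q},3}^3-C_{\widetilde{A}_{0^{2n}},\Cl_{n,q},3}^3$ is simply multiplied by $q/d$, so its sign is preserved.
\item In step (d), mixedness alone is not sufficient. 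With the cube term nonnegative, Eq.~\eqref{eq:R_3_qudit_magic} only gives $\norm{\sigma}_{\widetilde{A}_{0^{2n}},\Cl_{n,q},3}^3\le 1-3/d+3\Tr(\sigma^2)/d$. This bound lies below $1-1/d$ if and only if $\Tr(\sigma^2)<2/3$. Here $\Tr(\sigma_0^2)=1/(q-1)\le1/2$, so the strict gap holds, and your anticipated main obstacle is a two-line check.
\end{enumerate}
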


This example illustrates the tradeoff involved in choosing the expansion set $\mathcal G$.
For a fixed copy number $\alpha$, averaging over a larger free unitary set can improve the coverage of pure resource states, as in Theorem~\ref{thm:WE_3_qudit_magic}, where the full Clifford expansion detects all pure magic states.
However, the same averaging can also wash out useful structure and weaken the detection of mixed states; in the extreme case, the full-Clifford second-order criterion becomes completely trivial.
Restricting $\mathcal G$ to a smaller subgroup sacrifices the symmetry, and the resulting criterion is not complete for pure states, but it can retain witness directions that are more sensitive to particular mixed-state resources.
Theorem~\ref{thm:WE_2_subgroup_qudit_magic} provides such an example: by averaging only over a column of Wigner witnesses, the second-order WE criterion becomes nontrivial and detects states that are missed by the third-order WE criterion.

It is worth emphasizing again that the price of using a restricted expansion set is that the threshold generally becomes more resource-specific.
For the full free unitary group, symmetry often makes $C_{\widetilde{W},\mathcal G,\alpha}$ easy to evaluate, when it is sufficient to evaluate on a single representative of pure free states.
For smaller subgroups, this simplification may fail, and the threshold has to be analyzed case by case.

\section{Qubit magic}\label{sec:qubit_magic}

Now we consider the $n$-qubit magic state resource theory. Denote the set of $n$-qubit phase-free Pauli operators as
\begin{equation}
\mathcal{P}_n=\left\{\bI,X,Y,Z\right\}^{\otimes n},
\end{equation}
and the $n$-qubit Pauli group as $\mathsf{P}_n$
\begin{equation}
\mathsf{P}_n=\left\{i^aP\middle\mid a\in\{0,1,2,3\},P\in\mathcal{P}_n\right\}.
\end{equation}
Recall the $n$-qubit Clifford group $\mathrm{Cl}_n$ (also the group of free unitaries) is the normalizer of the  $\mathsf{P}_n$, 
\begin{equation}\label{eq:qubit_magic_free_unitaries}
\Ufree=\Cl_n=\{U\in \mathrm{U}(d)\mid U\mathsf{P}_nU^\dagger=\mathsf{P}_n\},
\end{equation}
where $d=2^n$ is the global dimension.
The pure stabilizer states are those that can be generated from $\ket{0^n}$ using a Clifford unitary, denoted as
\begin{equation}
\Sigma_{n}=\left\{U\ketbra{0^n}{0^n}U^\dagger\middle| U\in \Cl_{n}\right\}.
\end{equation}
The set of free states is the $n$-qubit stabilizer polytope, i.e., the convex hull of $\Sigma_n$,
\begin{equation}
\Fset=\STAB_n=\textup{Conv}(\Sigma_n).
\end{equation}

\subsection{WE 4-norm for qubit magic}
Since the multi-qubit Clifford group is a unitary 3-design~\cite{zhu2017multiqubit} but not a unitary 4-design~\cite{zhu2016CliffordGroupFails}, the first potentially nontrivial full-Clifford WE criterion starts at $\alpha=4$ by Observation~\ref{obs:state_moments_cannot_detect_magic}.
We therefore apply the fourth-order Clifford twirling formula~\cite{roth2018RecoveringQuantumGates} and obtain the following general result, with the proof given in Appendix~\ref{app:qubit_magic_WE_4-norm}.
Instead of restricting to a standard witness $\widetilde{W}$, we derive the formula with a positive semi-definite operator $O$ and leave the specific discussion of the choice of $O$ to a later part of the manuscript.

\begin{theorem}[WE $4$-norm for qubit magic]
\label{thm:R4_magic}
For every $n$-qubit state $\rho$, the WE $4$-norm in criterion~\eqref{eq:WE_alpha_criterion} with a positive semi-definite operator $O$ and $n$-qubit Clifford group $\Cl_n$ in Eq.~\eqref{eq:qubit_magic_free_unitaries} is given by
\begin{equation}
\label{eq:R4-general-inline}
\norm{\rho}_{O,\Cl_n,4}^4
=
\sum_{\lambda\vdash4:\ell(\lambda)\le d}
d_\lambda
\left[
\frac{\widehat q_\lambda(\rho)a_\lambda}{D_\lambda^+}
+
\frac{
\left(\widehat p_\lambda(\rho)-\widehat q_\lambda(\rho)\right)b_\lambda
}{D_\lambda^-}
\right].
\end{equation}
Here $d_\lambda$ and $D_\lambda^\pm$ are the dimension factors of the fourth-order Clifford commutant, and the terms with zero denominator should be omitted in the above summation. Constants $a_\lambda,b_\lambda$ are determined by the operator $O$.
The state-dependent terms $\widehat p_\lambda(\rho)$ and $\widehat q_\lambda(\rho)$ are, respectively, linear combinations of the state moments
\begin{equation}
\label{eq:p-moments-main}
    1,\quad
    r_2,\quad
    r_2^2,\quad
    r_3,\quad
    r_4,
    \qquad
    r_k=\Tr(\rho^k),
\end{equation}
and of the Pauli moments
\begin{equation}
\label{eq:q-moments-main}
    q_\mu(\rho)
    =
    \Tr\left(\mathbb{P}_\pi^{(4)}Q\rho^{\otimes4}\right),
    \qquad
    \mu\vdash4,
\end{equation}
where $\mathbb{P}_\pi^{(4)}$ is the permutation operator associated to $\pi\in S_4$ with cycle type $\mu$ and
\begin{equation}
    Q=\frac{1}{d^2}\sum_{P\in\mathcal P_n}P^{\otimes4},
\end{equation}
with $d=2^n$.
The explicit definition of functions
$\widehat p_\lambda(\rho)$ and $\widehat q_\lambda(\rho)$ are given as Eq.~\eqref{eq:pqhat-def-inline} in Appendix~\ref{app:qubit_magic_WE_4-norm}.
\end{theorem}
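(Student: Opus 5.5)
The plan is to evaluate the fourth-moment Clifford twirl of $O^{\otimes4}$ using the structure theory of the fourth-order Clifford commutant (Gross--Nezami--Walter, Roth et al.~\cite{roth2018RecoveringQuantumGates}). By definition,
\begin{equation}
\norm{\rho}_{O,\Cl_n,4}^4=\Tr\left(\Phi_{\Cl}^{(4)}(O^{\otimes4})\,\rho^{\otimes4}\right),\qquad \Phi_{\Cl}^{(4)}(X)=\mathbb E_{U\sim\Cl_n}\left[{U^\dagger}^{\otimes4}XU^{\otimes4}\right].
\end{equation}
The key structural fact is that $Q=d^{-2}\sum_{P}P^{\otimes4}$ is a projector commuting with every $U^{\otimes4}$, $U\in\Cl_n$, and with all permutations $\mathbb P_\pi^{(4)}$. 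The commutant of $\Cl_n^{\otimes4}$ is spanned by $\{\mathbb P^{(4)}_\pi\}\cup\{\mathbb P^{(4)}_\pi Q\}$. Consequently $(\C^d)^{\otimes4}=Q(\C^d)^{\otimes4}\oplus(\bI-Q)(\C^d)^{\otimes4}$, and on each summand Schur--Weyl duality still applies. Concretely, with Young projectors $P_\lambda$ for $\lambda\vdash4$, $\ell(\lambda)\le d$, the operators $P_\lambda Q$ and $P_\lambda(\bI-Q)$ project onto $\Cl_n$-invariant subspaces on which the Clifford group acts, by \cite{roth2018RecoveringQuantumGates}, as $d_\lambda$ copies of an irreducible representation of dimension $D_\lambda^+$ and $D_\lambda^-$ respectively.

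First, by Schur's lemma applied to each isotypic block, the twirl decomposes as
\begin{equation}
\Phi_{\Cl}^{(4)}(X)=\sum_\lambda\left[\frac{\Tr(XP_\lambda Q)}{d_\lambda D_\lambda^+}P_\lambda Q+\frac{\Tr(XP_\lambda(\bI-Q))}{d_\lambda D_\lambda^-}P_\lambda(\bI-Q)\right],
\end{equation}
with the appropriate $S_4$-multiplicity handling. The cleanest version uses the fact that $\Tr(X\,\cdot)$ restricted to a block $\mathrm{End}(\C^{d_\lambda})\otimes\bI_{D}$ only sees the partial trace, and $X=O^{\otimes4}$ is permutation symmetric, so only the symmetric component contributes. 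The constants then emerge as $a_\lambda\propto\Tr(O^{\otimes4}P_\lambda Q)$ and $b_\lambda\propto\Tr(O^{\otimes4}P_\lambda(\bI-Q))$, which depend only on $O$. Any block with $D^\pm_\lambda=0$ is an empty subspace and is dropped, which accounts for the omission rule in the statement.

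Second, pair the twirled operator with $\rho^{\otimes4}$. Expand $P_\lambda=\frac{d_\lambda}{24}\sum_{\pi}\chi_\lambda(\pi)\mathbb P_\pi^{(4)}$. Then
\begin{equation}
\widehat q_\lambda(\rho):=\Tr(P_\lambda Q\rho^{\otimes4})
\end{equation}
is a class-function combination of $q_\mu(\rho)$, since $Q$ commutes with permutations and $\Tr(\mathbb P_\pi Q\rho^{\otimes 4})$ depends only on the cycle type of $\pi$. Similarly,
\begin{equation}
\widehat p_\lambda(\rho):=\Tr(P_\lambda\rho^{\otimes4})
\end{equation}
is a combination of $\Tr(\mathbb P_\pi\rho^{\otimes4})$, which for the cycle types $1^4,21^2,2^2,31,4$ gives $1,r_2,r_2^2,r_3,r_4$. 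The complement block contributes $\widehat p_\lambda-\widehat q_\lambda$, yielding exactly the form of Eq.~\eqref{eq:R4-general-inline} after absorbing the $d_\lambda$ factors.

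The main obstacle will be the first step: justifying that $P_\lambda Q$ and $P_\lambda(\bI-Q)$ carry isotypic $\Cl_n$-representations with exactly the multiplicity structure needed for the Schur-lemma formula. This requires that $\Cl_n$ acts irreducibly on each such block, up to the $S_4$ multiplicity, a fact I would import from \cite{roth2018RecoveringQuantumGates}. A further care point is computing $D_\lambda^\pm=\Tr(P_\lambda Q)/d_\lambda$ and $\Tr(P_\lambda(\bI-Q))/d_\lambda$. This uses $\Tr(\mathbb P_\pi Q)$, which equals $d^{-2}\sum_P\prod_{\text{cycles}}\Tr(P^{|c|})$ and is elementary to evaluate. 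I would also verify that the entries with $\ell(\lambda)>d$ vanish in small $n$.
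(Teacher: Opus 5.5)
Your proposal is correct and follows essentially the paper's argument. The paper starts from the explicit fourth-order Clifford twirl of Roth et al.\ and proves a Schur-lemma reduction on the Specht modules (its lemma on permutation-invariant operators), which collapses the $S_4$-sum to $\frac{1}{d_\lambda}\Tr(BP_\lambda)P_\lambda$; this is your observation that only the scalar part on $S_\lambda$ survives for permutation-symmetric inputs, and the paper's choice to twirl $\rho^{\otimes4}$ rather than $O^{\otimes4}$ is immaterial. Two small points: the paper defines $\widehat p_\lambda,\widehat q_\lambda$ with an extra factor $1/d_\lambda$ (so $a_\lambda=\widehat q_\lambda(O)$ and $b_\lambda=\widehat p_\lambda(O)-\widehat q_\lambda(O)$), and your block-diagonal twirl formula also needs the representations $W_\lambda^\pm$ to be pairwise inequivalent, which follows from your commutant-spanning claim but not from irreducibility of each block alone.
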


Although qubit and odd-prime-dimensional qudit magic are defined in a formally similar way, the qubit case has a substantially more intricate mathematical structure.
A major difference is that the discrete Wigner representation no longer provides a nonnegative characterization of stabilizer states for qubits.
Therefore, instead of using Wigner operators as canonical magic witnesses, we need to choose a suitable magic witness as the seed in the WE framework.

\subsection{WE 4-criterion with triangle witness}
As emphasized in Section~\ref{subsec:WE_intro}, the seed witness used in the WE framework should preferably form a scalable family with an analytical expression, so that the resulting WE criterion can be defined uniformly for arbitrary system sizes.
This requirement is particularly restrictive for mixed-state magic.
Compared with entanglement, where many systematic witness construction methods are available~\cite{GUHNE2009detection}, the toolbox for mixed-state magic witnesses is still limited, and many existing constructions rely on numerical optimization or are tailored to small numbers of qubits~\cite{warmuz2025magic,Alberto2026polytope,varela2026predicting}.
We therefore use the witness introduced with the  magic Triangle Criterion~\cite{liu2025MagicCriterionAlmost} as a scalable seed witness.
A triangle witness has the form $\ketbra{\psi_1}{\psi_1}+\ketbra{\psi_2}{\psi_2}-\ketbra{\psi_3}{\psi_3}$, where $\ketbra{\psi_1}{\psi_1}$, $\ketbra{\psi_2}{\psi_2}$, and $\ketbra{\psi_3}{\psi_3}$ are pure stabilizer states with pairwise fidelity $1/2$.
For concreteness, we choose three nearest-neighbor single-qubit stabilizer states tensored with $\ket{0^{n-1}}$, which gives after normalization
\begin{equation}\label{eq:qubit_magic_witness}
    W=T=
    \frac{\bI+X-Y+Z}{1+\sqrt3}
    \otimes
    \ket{0^{n-1}}\bra{0^{n-1}},
    \qquad
    \widetilde W=\widetilde T=\bI-T.
\end{equation}
All triangle witnesses of this type can be generated from Eq.~\eqref{eq:qubit_magic_witness} by Clifford conjugation. The coefficients related to $\widetilde{T}$ appearing as $a_\lambda$ and $b_\lambda$ in Theorem~\ref{thm:R4_magic} are provided in Appendix~\ref{app:qubit_magic_T_coefficients}.

Although the explicit form of the WE $4$-criterion with standard triangle witness $\widetilde{T}$ for qubit magic is rather involved, it has a simple and suggestive pure-state reduction.
For a pure $n$-qubit state $\psi=\ketbra{\psi}{\psi}$, its WE $4$-norm $\norm{\psi}_{\widetilde{T},\Cl_n,4}$ depends only on the Pauli fourth moment of $\psi$, or equivalently on its stabilizer $2$-R\'enyi entropy~\cite{Leone2022sre}.
Thus, in the qubit magic setting, the WE framework establishes a direct connection between two seemingly different magic criteria: the Triangle Criterion and the stabilizer R\'enyi entropy characterization of pure-state magic.
This provides another example of how the WE framework can unify resource indicators that arise from quite different mathematical constructions.
\begin{corollary}[Pure-state reduction of WE $4$-criterion with triangle witness for qubit magic]\label{cor:pure-C4_magic}
Let $\psi=|\psi\rangle\langle\psi|$ be a pure $n$-qubit state and $d=2^n$. Denote its Pauli fourth moment as 
\begin{equation}\label{eq:def_P2}
\mathfrak{P}_2(\psi)=d\Tr(Q\psi^{\otimes4})=\frac{1}{d}\sum_{P\in\mathcal P_n}|\langle\psi|P|\psi\rangle|^4.
\end{equation}
Then its WE $4$-norm in criterion~\eqref{eq:WE_alpha_criterion} with standard triangle witness $\widetilde{T}$ in Eq.~\eqref{eq:qubit_magic_witness} and $n$-qubit Clifford group $\Cl_n$ in Eq.~\eqref{eq:qubit_magic_free_unitaries} is given by
\begin{equation}\label{eq:qubit_magic_norm}
\norm{\psi}_{\widetilde{T},\mathrm{Cl}_n,4}^4=\frac{24(7-4\sqrt{3})(d-8)}{d(d-1)(d+1)(d+2)(d+4)}\mathfrak{P}_2(\psi)+C_4,
\end{equation}
where $C_4$ is a constant related to $d$, given in Eq.~\eqref{eq:c_4} in Appendix~\ref{app:qubit_magic_T_coefficients}.
In particular, every pure stabilizer state $\phi$ satisfies $\mathfrak{P}_2(\phi)=1$, and
the stabilizer threshold is given by
\begin{equation}\label{eq:magic_threshold}
C_{\tilde{T},\mathrm{Cl}_n,4}^4=\frac{24(7-4\sqrt{3})(d-8)}{d(d-1)(d+1)(d+2)(d+4)}+C_4.
\end{equation}
\end{corollary}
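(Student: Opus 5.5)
We will specialize Theorem~\ref{thm:R4_magic} to pure states and to the seed $O=\widetilde T$. For $\rho=\psi$ pure, every spectral moment collapses: $r_2=r_3=r_4=r_2^2=1$. Hence each $\widehat p_\lambda(\psi)$ is a $d$-dependent constant. The only state dependence left sits in the Pauli moments $q_\mu(\psi)=\Tr(\mathbb P_\pi^{(4)}Q\psi^{\otimes4})$.

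The first key step is to show that all $q_\mu(\psi)$ equal the same quantity. For a pure product state, $\psi^{\otimes4}$ is invariant under every permutation of the copies. Thus $\mathbb P_\pi^{(4)}\psi^{\otimes4}=\psi^{\otimes4}$, and because $Q$ commutes with all $\mathbb P_\pi^{(4)}$,
\begin{equation}
q_\mu(\psi)=\Tr(Q\psi^{\otimes4})=\frac1{d^2}\sum_{P\in\mathcal P_n}\langle\psi|P|\psi\rangle^4=\frac{\mathfrak P_2(\psi)}{d}
\end{equation}
for every $\mu\vdash4$. Consequently each $\widehat q_\lambda(\psi)$ has the form $c_\lambda\,\mathfrak P_2(\psi)/d$, where $c_\lambda$ is the sum of the coefficients of the $q_\mu$ in the definition \eqref{eq:pqhat-def-inline}. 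Substituting into \eqref{eq:R4-general-inline} makes $\norm{\psi}_{\widetilde T,\Cl_n,4}^4$ affine in $\mathfrak P_2(\psi)$:
\begin{equation}
\norm{\psi}_{\widetilde T,\Cl_n,4}^4=\frac{\mathfrak P_2(\psi)}{d}\sum_\lambda d_\lambda c_\lambda\Big(\frac{a_\lambda}{D^+_\lambda}-\frac{b_\lambda}{D^-_\lambda}\Big)+\sum_\lambda d_\lambda\frac{\widehat p_\lambda b_\lambda}{D^-_\lambda}.
\end{equation}
The constant term defines $C_4$.

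The second step, which we expect to be the main obstacle, is evaluating the slope. This requires the explicit dimension factors $d_\lambda, D_\lambda^\pm$ for $\lambda\in\{(4),(3,1),(2,2),(2,1,1),(1^4)\}$, together with the coefficients $a_\lambda,b_\lambda$ of $\widetilde T$ from Appendix~\ref{app:qubit_magic_T_coefficients}. Those coefficients come from the projections of $\widetilde T^{\otimes4}$ onto the stabilizer-code subspace $Q$ and its complement within each isotypic block.

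We would compute the $a_\lambda,b_\lambda$ by expanding $\widetilde T=\bI-T$ binomially into traces $\Tr(\mathbb P_\pi T^{\otimes k}\ldots)$. Because $T$ is a single-qubit factor tensored with $\ketbra{0^{n-1}}{0^{n-1}}$, the relevant Pauli sums factorize. The $\ket{0^{n-1}}$ part contributes stabilizer-state values, and the single-qubit part $(\bI+X-Y+Z)/(1+\sqrt3)$ produces the $\sqrt3$ irrationalities, leading to the factor $7-4\sqrt3=(2-\sqrt3)^2$.

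After combining rational functions of $d$, we expect cancellation down to $24(7-4\sqrt3)(d-8)/[d(d-1)(d+1)(d+2)(d+4)]$. We would sanity-check this by evaluating small $n$ numerically, for example by direct Clifford twirl at $n=1,2$.

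Finally, for the threshold: every pure stabilizer state $\phi$ has exactly $d$ Paulis with $\langle\phi|P|\phi\rangle=\pm1$ and all others zero, so $\mathfrak P_2(\phi)=1$. The pure stabilizer states form a single $\Cl_n$-orbit and $\STAB_n=\mathrm{Conv}(\Sigma_n)$, so Observation~\ref{obs:threshold_single_orbit} gives $C_{\widetilde T,\Cl_n,4}=\norm{\phi}_{\widetilde T,\Cl_n,4}$. Substituting $\mathfrak P_2=1$ into \eqref{eq:qubit_magic_norm} yields \eqref{eq:magic_threshold}.
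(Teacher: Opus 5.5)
Your reduction is the same as the paper's. You set $r_2=r_3=r_4=1$, collapse every $q_\mu(\psi)$ to $\Tr(Q\psi^{\otimes4})=\mathfrak P_2(\psi)/d$, read off an affine function of $\mathfrak P_2$, and get the threshold from Observation~\ref{obs:threshold_single_orbit} with $\mathfrak P_2(\phi)=1$. This is exactly how the appendix argues, and the affine structure and threshold parts are fine.

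What is missing is the actual content of the corollary: the $(d-8)$ slope and the explicit $C_4$ of Eq.~\eqref{eq:c_4}. Saying ``we expect cancellation'' and checking $n=1,2$ numerically does not prove a formula for all $d$.

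The step you call the main obstacle is also much smaller than you think. Your $c_\lambda=\frac1{24}\sum_\mu m_\mu\chi_\lambda(\mu)$ is the inner product of $\chi_\lambda$ with the trivial character, so $c_\lambda=\delta_{\lambda,[4]}$. The same computation gives $\widehat p_\lambda(\psi)=\delta_{\lambda,[4]}$; these are $d$-independent, not ``$d$-dependent constants''. Hence only $a_{[4]}$ and $b_{[4]}=\nu_{[4]}-a_{[4]}$ enter, where $\nu_{[4]}=\widehat p_{[4]}(\widetilde T)$ is the complete homogeneous symmetric polynomial $h_4$ evaluated on $\operatorname{spec}(\widetilde T)=\{0,3-\sqrt3,1^{\times(d-2)}\}$. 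The norm becomes
\begin{equation*}
\norm{\psi}_{\widetilde T,\Cl_n,4}^4=\frac{a_{[4]}}{D^+_{[4]}}\frac{\mathfrak P_2(\psi)}{d}+\frac{b_{[4]}}{D^-_{[4]}}\Bigl(1-\frac{\mathfrak P_2(\psi)}{d}\Bigr).
\end{equation*}

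With $D^\pm_{[4]}$ from Table~\ref{tab:module_dimension}, the slope is $6\bigl[a_{[4]}d(d+3)-4\nu_{[4]}\bigr]/[d(d-1)(d+1)(d+2)(d+4)]$. So the claim reduces to the single polynomial identity $6a_{[4]}d(d+3)-24\nu_{[4]}=24(7-4\sqrt3)(d-8)$, together with $C_4=24b_{[4]}/[(d-1)(d+1)(d+2)(d+4)]$. You need to carry this out with an explicit $a_{[4]}$. It does check out with $6a_{[4]}=d^2+(7-4\sqrt3)d+86-48\sqrt3+(800-464\sqrt3)/d$ from Lemma~\ref{lem:coefficients-inline}, and the resulting constant term reproduces Eq.~\eqref{eq:c_4}.

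For $a_{[4]}$ itself, your binomial expansion of $(\bI-T)^{\otimes4}$ would work, since $Q$ and $\mathbb P_\pi^{(4)}$ factor over the first qubit and the $(n-1)$-qubit tail. The paper instead writes $q_\mu(\widetilde T)=d^{-2}\sum_P\prod_{c}\Tr\bigl((\widetilde TP)^{|c|}\bigr)$ over the cycles of $\pi$. It then splits the Pauli tails into three classes: identity, nontrivial $Z$-type, and those containing an $X$ or $Y$. Either route is legitimate, but one of them has to be completed.
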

See the proof of Corollary~\ref{cor:pure-C4_magic} in Appendix~\ref{app:qubit_magic_T_coefficients}.
This pure-state reduction exposes a limitation of the WE construction based on the triangle witness.
The fourth-order criterion is nontrivial only for small system sizes, mainly because of the factor $(d-8)$ appearing in the numerator on the right-hand side of Eq.~\eqref{eq:qubit_magic_norm}.
This factor of the coefficient in front of $\mathfrak{P}_2$ vanishes at $d=8$ and changes sign when $d>8$, so the resulting criterion only works for $n=1,2$ qubits.

There are two related reasons for this limitation.
First, it reflects the averaging effect intrinsic to the WE framework: twirling over a large free unitary orbit can suppress the resource sensitive part of the seed witness and thereby weaken mixed-state detection.
A similar phenomenon already appeared in the entanglement example, Eq.~\eqref{eq:WE_2_criterion_entanglement}, where the WE $2$-criterion for entanglement contains dimension suppressed coefficients in front of the subsystem purities, whereas the optimal purity-based criterion has constant coefficients.
Second, this result also suggests that the triangle witness may not be the best seed for WE in the qubit-magic setting.
In particular, $\widetilde T$ may not be sufficiently sensitive to magic, especially to multi-qubit magic.
It is therefore natural to ask whether other choices of seed operators can overcome these limitations and lead to stronger mixed-state magic criteria.

\subsection{WE 4-criterion with projector of a magic state}\label{subsec:magic_projector}
From the discussions in Section~\ref{subsec:generalization}, the role of standard witness $\widetilde{W}$ can be replaced by any positive semi-definite operator that is sufficiently sensitive to the target resource, in the sense that its expectation value for free states is bounded while resource states can exceed this threshold.
A particular candidate is a rank-$1$ projector onto a highly magic pure state vector $\ket{\Phi}$,
say
\begin{equation}
    \Phi=\ketbra{\Phi}{\Phi}.
\end{equation}
If $\Phi$ is highly resourceful (magic), then its maximal fidelity with free (stabilizer) states decreases as the system size grows, while the maximal expectation value over all quantum states equals one.
Thus, $\Phi$ provides a natural resource-sensitive positive seed operator.
The resulting WE $4$-criterion is given as follows.

\begin{corollary}[WE $4$-criterion with rank-$1$ projector for qubit magic]
\label{cor:projector_seed_magic}
For any $n\in\mathbb{N}_+$ and let $d=2^n$, there exists a pure $n$-qubit state $\Phi=\ketbra{\Phi}{\Phi}$ such that $\mathfrak{P}_2(\Phi)<4/(d+3)$ in Eq.~\eqref{eq:def_P2}. For every $n$-qubit state $\rho$, its WE $4$-norm in criterion~\eqref{eq:WE_alpha_criterion} with rank-$1$ projector $\Phi$ and $n$-qubit Clifford group $\Cl_n$ in Eq.~\eqref{eq:qubit_magic_free_unitaries} is given by
\begin{equation}\label{eq:projector_seed_magic_norm}
\begin{aligned}
&\norm{\rho}_{\Phi,\Cl_n,4}^4
=
\frac{6\mathfrak{P}_2(\Phi)}{d(d+1)(d+2)}
\widehat q_{[4]}(\rho)
\\
&\quad+
\frac{24(d-\mathfrak{P}_2(\Phi))}{d(d-1)(d+1)(d+2)(d+4)}
\left(
\widehat p_{[4]}(\rho)-\widehat q_{[4]}(\rho)
\right),
\end{aligned}
\end{equation}
where $\widehat p_{[4]}(\rho)$ and $\widehat q_{[4]}(\rho)$ are functions defined in Eq.~\eqref{eq:pqhat-def-inline} in Appendix~\ref{app:qubit_magic_WE_4-norm}.
The stabilizer threshold is
\begin{equation}\label{eq:qubit_magic_projector_threshold}
    C_{\Phi,\Cl_n,4}^4
    =
    \frac{6(\mathfrak{P}_2(\Phi)+4)}{d(d+1)(d+2)(d+4)}.
\end{equation}
This WE $4$-criterion detects all $n$-qubit pure magic states, as for any pure $n$-qubit state $\psi=\ketbra{\psi}{\psi}$, we have
\begin{equation}
\begin{aligned}
\norm{\psi}_{\Phi,\Cl_n,4}^4
-
C_{\Phi,\Cl_n,4}^4
&=
\frac{
6[4-\mathfrak{P}_2(\Phi)(d+3)]
}{
d(d-1)(d+1)(d+2)(d+4)
}
\\
&\quad\times
\left(
1-\mathfrak{P}_2(\psi)
\right),
\end{aligned}
\label{eq:projector_seed_pure_magic}
\end{equation}
so
\begin{equation}\label{eq:qubit_magic_WE_4_projector_pure}
\norm{\psi}_{\Phi,\Cl_n,4}>C_{\Phi,\Cl_n,4}\quad\Longleftrightarrow\quad\psi\notin\Sigma_{n}.
\end{equation}

\end{corollary}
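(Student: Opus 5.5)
The plan is to specialize Theorem~\ref{thm:R4_magic} to $O=\Phi$, and to prove the existence of a suitable $\Phi$ separately by an averaging argument.

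\emph{Existence of $\Phi$.} I would use the known Haar average of the Pauli fourth moment over pure states, $\mathbb{E}_{\psi\sim\mathrm{Haar}}\mathfrak{P}_2(\psi)=4/(d+3)$. It follows from $\mathbb{E}\,\psi^{\otimes4}=\Pi_{\mathrm{sym}}^{(4)}/\binom{d+3}{4}$ together with $\Tr(Q\Pi_{\mathrm{sym}}^{(4)})$, where $Q$ is the projector onto the stabilizer-code subspace of four copies. Stabilizer states have $\mathfrak{P}_2=1>4/(d+3)$ for all $d\ge2$, so $\mathfrak{P}_2$ is not almost surely equal to its mean. Since $\mathfrak{P}_2$ is continuous, a positive-measure set of pure states must lie strictly below $4/(d+3)$. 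As a sanity check for $n=1$, the $T$-type state with $|\langle X\rangle|=|\langle Y\rangle|=|\langle Z\rangle|=1/\sqrt3$ gives $\mathfrak{P}_2=2/3<4/5$.

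\emph{The norm formula.} In Theorem~\ref{thm:R4_magic} the constants $a_\lambda,b_\lambda$ are the overlaps of $O^{\otimes4}$ with the isotypic projectors of the fourth-order Clifford commutant, split by $Q$ and $\bI-Q$. For the rank-one projector $\Phi$, the operator $\Phi^{\otimes4}$ is supported on the symmetric subspace. Hence every $\lambda\neq[4]$ has vanishing constants, and only $\lambda=[4]$ survives. There I expect
\[
a_{[4]}=\Tr(Q\Phi^{\otimes4})=\mathfrak{P}_2(\Phi)/d,\qquad b_{[4]}=1-\mathfrak{P}_2(\Phi)/d.
\]
Dividing by the dimensions $D^+_{[4]}=\Tr(Q\Pi_{\mathrm{sym}})$ and $D^-_{[4]}=\binom{d+3}{4}-D^+_{[4]}$, with $d_{[4]}=1$, should produce exactly the two prefactors in Eq.~\eqref{eq:projector_seed_magic_norm}. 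Checking that these dimension ratios simplify to $6/(d(d+1)(d+2))$ and $24/(d(d-1)(d+1)(d+2)(d+4))$, with the right powers of $d$ absorbed into $a,b$, is routine bookkeeping.

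\emph{Threshold and pure states.} By Observation~\ref{obs:threshold_single_orbit}, pure stabilizer states form one Clifford orbit, so $C_{\Phi,\Cl_n,4}$ is attained at $|0^n\rangle$. For any pure $\psi$, $\psi^{\otimes4}$ is symmetric, so $\widehat p_{[4]}(\psi)=1$ and $\widehat q_{[4]}(\psi)=\Tr(Q\psi^{\otimes4})=\mathfrak{P}_2(\psi)/d$, up to the normalization fixed in Eq.~\eqref{eq:pqhat-def-inline}. The pure-state norm is then affine in $\mathfrak{P}_2(\psi)$. Substituting $\mathfrak{P}_2=1$ gives Eq.~\eqref{eq:qubit_magic_projector_threshold}, and subtracting gives Eq.~\eqref{eq:projector_seed_pure_magic}, whose slope has the sign of $4-\mathfrak{P}_2(\Phi)(d+3)$, which is positive by the choice of $\Phi$. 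The equivalence \eqref{eq:qubit_magic_WE_4_projector_pure} then follows from the known fact that $\mathfrak{P}_2(\psi)\le1$, with equality iff $\psi$ is a stabilizer state.

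\emph{Main obstacle.} I expect the hardest part to be matching the conventions of Eq.~\eqref{eq:pqhat-def-inline} and the explicit value of $D^+_{[4]}$ so that the constants collapse cleanly. If that proves awkward, I would fall back on directly computing $\mathbb{E}_{U}\Tr(U\Phi U^\dagger\rho)^4$ using only the two-operator commutant $\{\Pi_{\mathrm{sym}}Q,\ \Pi_{\mathrm{sym}}(\bI-Q)\}$ restricted to the symmetric subspace.
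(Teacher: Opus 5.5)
Your proposal is correct. The norm formula, the threshold via Observation~\ref{obs:threshold_single_orbit}, and the pure-state reduction follow the paper's argument essentially verbatim. For a rank-one seed only the $\lambda=[4]$ block survives, with $\widehat p_{[4]}(\Phi)=1$ and $\widehat q_{[4]}(\Phi)=\mathfrak{P}_2(\Phi)/d$. Your bookkeeping $D^-_{[4]}=\binom{d+3}{4}-D^+_{[4]}$ also reproduces the table entry.

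The genuine difference is how you prove that $\Phi$ exists. The paper builds an explicit finite ensemble $\ket{\Phi_{n,\mathbf m}}$ with fixed amplitudes and independent fifth-root-of-unity phases. It computes the ensemble average of $\mathfrak{P}_2$ through pair-moment cancellations and shows this average is itself strictly below $4/(d+3)$, by a margin $\Theta(d^{-2})$. You instead note that $4/(d+3)$ is exactly the Haar average of $\mathfrak{P}_2$, which follows from $D^+_{[4]}=(d+1)(d+2)/6$, and obtain strictness from non-constancy.

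\emph{Comparison.} Your route is much shorter and reuses data already in Table~\ref{tab:module_dimension}. It also explains where the threshold $4/(d+3)$ comes from: the criterion is nontrivial precisely when the seed has smaller Pauli fourth moment than a Haar-typical state. The paper's route buys a concrete finite search space of $5^{d-1}$ candidates. It also gives a quantitative lower bound on the prefactor $4-\mathfrak{P}_2(\Phi)(d+3)$, which controls how strongly the criterion detects. Your argument yields only strict positivity of that prefactor.

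\emph{One step to tighten.} Passing from ``stabilizer states have $\mathfrak{P}_2=1$'' to ``$\mathfrak{P}_2$ is not almost surely equal to its mean'' needs two facts. First, by continuity, $\mathfrak{P}_2>4/(d+3)$ on an open neighbourhood of a stabilizer state. Second, the unitarily invariant measure on pure states has full support, so that neighbourhood has positive measure (compare Lemma~\ref{lem:full_support}). Only then does the mean constraint force a positive-measure set strictly below $4/(d+3)$.

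\emph{A wording slip.} It is the coefficient of $(1-\mathfrak{P}_2(\psi))$ that has the sign of $4-\mathfrak{P}_2(\Phi)(d+3)$. The slope in $\mathfrak{P}_2(\psi)$ has the opposite sign.
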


The proof is given in Appendix~\ref{app:qubit_magic_Phi_coefficients}.
This corollary highlights the advantage of using a resource-sensitive positive semi-definite operator as the seed of WE.
First, it overcomes the pure-state limitation of dimension for the triangle witness construction in Corollary~\ref{cor:pure-C4_magic}.
Indeed, the criterion in Eq.~\eqref{eq:projector_seed_pure_magic} is nontrivial as long as $\mathfrak{P}_2(\Phi)<4/(d+3)$. 
Such pure states with high magic exist for each $n\in\mathbb{N}_+$~\cite{Leone2022sre}, and we provide a construction in Appendix~\ref{app:qubit_magic_Phi_coefficients} with 
\begin{equation}
4/(d+3)-\mathfrak{P}_2(\Phi)=\Theta(d^{-2})
\end{equation}
More importantly, Eq.~\eqref{eq:projector_seed_magic_norm} shows that the dependence of the criterion on the seed state $\Phi$ enters only through the single scalar quantity $\mathfrak{P}_2(\Phi)$.
Therefore, to construct a nontrivial WE criterion, it is sufficient to identify an admissible value of $\mathfrak{P}_2(\Phi)$ satisfying the above bound, without the explicit wavefunction of $\Phi$.
This substantially reduces the complexity of searching for a useful seed operator.
In this sense, this construction provides a concrete example of the generalizing strategy discussed in Section~\ref{subsec:generalization}: replacing a standardized linear witness with a more resource-sensitive positive semi-definite operator can lead to stronger magic criteria in the WE framework.

\section{Fermionic non-Gaussianity}\label{sec:fermion}

Mixed-state fermionic non-Gaussianity is important from the perspective of fermionic quantum computation.
This viewpoint is especially natural for quantum computing architectures and simulation tasks built from fermionic degrees of freedom, such as electronic systems~\cite{bravyi2002FermionicQuantumComputation}.
At the same time, circuits composed only of Gaussian (free-fermionic) states and operations, much like stabilizer circuits, are efficiently classically simulable~\cite{jozsa2008MatchgatesClassicalSimulation}.
Thus, detecting fermionic non-Gaussianity, especially relative to the convex hull of pure Gaussian states, is closely related to identifying the resources responsible for quantum computational advantage~\cite{hebenstreit2019AllPureFermionic}.
However, compared with stabilizer magic, the available tools for characterizing mixed-state fermionic non-Gaussianity remain limited.
Most existing criteria either depend on numerical optimization~\cite{vershynina2014CompleteCriterionConvexGaussianstate}, focus on pure states~\cite{lyu2024FermionicGaussianTesting,coffman2025MeasuringNonGaussianMagic,sierant2026FermionicMagicResources}, apply only to special low-dimensional cases~\cite{oszmaniec2014ClassicalSimulationFermionic}, or characterize mixed states with respect to notions other than the convex hull of pure Gaussian states~\cite{bittel2025OptimalTraceDistanceBoundsa,haug2026practical}.
As a result, the general problem of detecting membership in the convex hull, as an analogy to the stabilizer polytope in magic resource theory, remains much less understood.
Therefore, constructing new mixed-state fermionic non-Gaussianity tests through WE is of direct practical value for identifying non-Gaussian resources in fermionic quantum computation.

We encode an $n$-mode fermionic system into $n$ qubits through the Jordan--Wigner transformation~\cite{jordan1928BerPaulischeQuivalenzverbot} and introduce the $2n$ Majorana operators
\begin{equation}
    \gamma_{2k-1}
    =
    \left(\prod_{j=1}^{k-1} Z_j\right)X_k,
    \quad
    \gamma_{2k}
    =
    \left(\prod_{j=1}^{k-1} Z_j\right)Y_k,
    \quad k\in[n].
\end{equation}
They satisfy $\gamma_j=\gamma_j^\dagger$, $\gamma_j^2=\mathbb I$, and $\{\gamma_j,\gamma_k\}=2\delta_{j,k}\mathbb I$.
For every subset $S=\{s_1<\cdots<s_{\abs{S}}\}\subseteq[2n]$, we denote the Majorana product as $\gamma_S$ and its Hermitian version as $\widehat{\gamma}_S$, they are defined by
\begin{equation}\label{eq:def_Majorana_product}
    \gamma_S=\prod_{j=1}^{\abs{S}}\gamma_{s_j},
    \qquad
    \widehat{\gamma}_S=\widehat{\gamma}_S^\dagger=(-i)^{\abs{S}(\abs{S}-1)/2}\gamma_S.
\end{equation}
Especially $\gamma_\emptyset=\widehat{\gamma}_\emptyset=\bI$. The operators $\{\widehat{\gamma}_S\}_{S\subseteq[2n]}$ form an orthogonal operator basis of $\mathcal{L}((\mathbb{C}^2)^{\otimes n})$ w.r.t.\  the Hilbert--Schmidt inner product. 

Due to the superselection rule, a physical fermionic state is parity-preserving, i.e., its density matrix commutes with the parity operator $Z^{\otimes n}$.
Equivalently, the parity-preserving fermionic states are those whose Majorana expansion contains only even-degree terms, denoted as
\begin{equation}
\begin{aligned}
\mathcal D_F
&=\left\{\rho\in\mathcal D((\mathbb C^2)^{\otimes n})
\middle\mid
[\rho,Z^{\otimes n}]=0
\right\}
\\&=
\left\{\rho\in\mathcal D((\mathbb C^2)^{\otimes n})
\middle\mid
\Tr(\gamma_S\rho)=0,\forall\abs{S}\textup{ is odd}
\right\}.
\end{aligned}
\end{equation} 
For a physical fermionic state $\rho\in\mathcal{D}_F$, we define the \emph{even Majorana sector purities} for $\ell\in\{0,1,\cdots,n\}$ as
\begin{equation}\label{eq:Majorana_sector_purity_main}
B_\ell(\rho)
:=
\sum_{\abs{S}=2\ell}
\Tr(\widehat{\gamma}_S\rho)^2.
\end{equation}
This quantity measures the total squared, unnormalized Hilbert--Schmidt overlap of $\rho$ with Hermitian Majorana products of degree $2\ell$.
Equivalently, if $\Pi_{2\ell}$ denotes the Hilbert--Schmidt projection onto the degree-$2\ell$ Majorana sector, then (see the proof in Appendix~\ref{app:FNG_WE_projector_criterion})
\begin{equation}
\begin{aligned}
B_\ell(\rho)
=
2^n\Tr\left(\Pi_{2\ell}(\rho)^2\right).
\end{aligned}
\end{equation}

The free unitaries are the Gaussian (free-fermionic) unitaries, or equivalently matchgate unitaries~\cite{Valiant2002QuantumCircuits}, which act linearly on Majorana operators.
For every $R\in \mathrm{O}(2n)$ in the orthogonal group of degree $2n$, a Gaussian unitary $U_R$ is defined as to satisfy
\begin{equation}
    U_R^\dagger \gamma_j U_R
    =
    \sum_{k=1}^{2n} R_{j,k}\gamma_k.
\end{equation}
We denote the set of $n$-qubit Gaussian unitaries by $\mathrm M_n$
\begin{equation}\label{eq:FNG_free_unitaries}
\Ufree=\mathrm{M}_n=\left\{U_R\mid R\in\mathrm{O}(2n)
\right\} .
\end{equation}
The pure Gaussian (free-fermionic) states are
\begin{equation}
    \mathcal G_n
    =
    \left\{
    U\ketbra{0^n}{0^n}U^\dagger
    \middle\mid
    U\in\mathrm{M}_n
    \right\},
\end{equation}
and the free states are the convex hull of 
all pure Gaussian states (convex Gaussian states), 
\begin{equation}
\mathcal{F}=\mathrm{Conv}(\mathcal G_n).
\end{equation}
A state outside this convex hull is fermionic non-Gaussian. It is worth noting that in some literature the definition of mixed Gaussian states is narrower, i.e., the thermal state of a quadratic Hamiltonian in the Majorana operators~\cite{bittel2025OptimalTraceDistanceBoundsa,haug2026practical}, as they are contained in $\mathrm{Conv}(\mathcal G_n)$.

\subsection{Wick's theorem}

Wick's theorem certifies all pure Gaussian states~\cite{wick1950EvaluationCollisionMatrix}.
For a pure Gaussian state $\phi=\ketbra{\phi}{\phi}$, define its two-point correlation matrix $\Gamma(\phi)$ as
\begin{equation}\label{eq:covariant_matrix_main}
\Gamma_{j,k}(\phi)
=
-\frac{i}{2}\Tr\left([\gamma_j,\gamma_k]\phi\right),\qquad j,k\in[2n].
\end{equation}
Wick's theorem relates all higher-degree Majorana correlation functions of a Gaussian state to the Pfaffian of the correlation matrix. For $S\subseteq[2n]$ with even $\abs{S}$, the expectation value of Hermitian Majorana product $\widehat{\gamma}_S$ is given by
\begin{equation}
    \Tr(\widehat{\gamma}_S\phi)
    =
    \operatorname{Pf}\left(\Gamma_S(\phi)\right),
\end{equation}
where $\operatorname{Pf}$ is short for Pfaffian and $\Gamma_S(\phi)$ is the restriction of the correlation matrix to the indices in $S$.
Especially, all odd-degree Majorana correlators vanish,
\begin{equation}
    \Tr(\widehat{\gamma}_S\phi)=0,
    \qquad \textup{for odd }\abs{S}.
\end{equation}
This consequence of Wick's theorem already gives a family of linear witnesses with odd $\abs{S}$
\begin{equation}\label{eq:FNG_odd_majorana_witness}
W=\widehat{\gamma}_S,\qquad\widetilde{W}=\widetilde{\widehat{\gamma}}_S=\bI-\widehat{\gamma}_S .
\end{equation}
Indeed, if $\Tr(\widehat{\gamma}_S\psi)\neq 0$ for some odd $S$ and pure state $\psi$, then $\psi$ cannot be a Gaussian state.
By linearity, the same statement holds for mixed states and the convex hull of pure Gaussian states.

However, this simplest odd Majorana witness turns out to be trivial for detecting fermionic non-Gaussianity in the physical fermionic state space. As shown in the following theorem, the WE $2$-norm with an odd-degree Majorana product can only test the absence of odd Majorana components of the chosen degree, and collecting all odd degrees gives a test of whether a density matrix is a physical fermionic state (see the proof in Appendix~\ref{app:FNG_Wick_theorem}).

\begin{proposition}[WE $2$-criterion with odd-degree Majorana product for fermionic non-Gaussianity]
\label{thm:wick_odd_majorana_second_order}
Let $S\subseteq[2n]$ with $\abs{S}$ odd.
For any $n$-qubit state $\rho$, the WE $2$-criterion in criterion~\eqref{eq:WE_alpha_criterion} with $\widetilde{\widehat{\gamma}}_S$ in Eq.~\eqref{eq:FNG_odd_majorana_witness} and Gaussian unitaries $\mathrm{M}_n$ in Eq.~\eqref{eq:FNG_free_unitaries} is given by
\begin{equation}
\norm{\rho}_{\widetilde{\widehat{\gamma}}_S,\mathrm{M}_n,2}^2
=
1+\binom{2n}{\abs{S}}^{-1}
\sum_{\substack{T\subseteq[2n]\\ |T|=\abs{S}}}
\Tr(\widehat{\gamma}_T\rho)^2.
\end{equation}
The threshold value for all convex Gaussian states is $C_{\widetilde{\widehat{\gamma}}_S,\mathrm{M}_n,2}=1$. In particular, $\norm{\rho}_{\widetilde{\widehat{\gamma}}_S,\mathrm{M}_n,2}=1$ for every parity-preserving fermionic state $\rho\in\mathcal D_F$.
\end{proposition}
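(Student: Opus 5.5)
Write $k=\abs{S}$ (odd) and expand the square of $\Tr(\widetilde{\widehat\gamma}_S U\rho U^\dagger)=1-\Tr(U^\dagger\widehat\gamma_S U\rho)$:
\begin{equation*}
\norm{\rho}_{\widetilde{\widehat{\gamma}}_S,\mathrm{M}_n,2}^2
=1-2\,\E_U\Tr(U^\dagger\widehat\gamma_S U\rho)+\E_U\Tr(U^\dagger\widehat\gamma_S U\rho)^2 .
\end{equation*}
Both averages reduce to averages over $R\in\mathrm{O}(2n)$ with Haar measure, because $U_R^\dagger\gamma_j U_R=\sum_k R_{j,k}\gamma_k$. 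The overall phase of $U_R$ is irrelevant under conjugation, so the Haar measure on $\mathrm{M}_n$ pushes forward to the Haar measure on $\mathrm{O}(2n)$.

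Conjugation by $U_R$ maps $\gamma_S$ to the antisymmetrized product $\sum_{T,\abs{T}=k}\det(R_{S,T})\gamma_T$, where $R_{S,T}$ is the $k\times k$ minor. The product of the linear combinations, reordered using the anticommutation relations, is exactly the $k$-th exterior power $\wedge^k R$ acting on the degree-$k$ sector. The phase $(-i)^{k(k-1)/2}$ depends only on $k$, so the same holds for $\widehat\gamma_S$:
\begin{equation*}
U_R^\dagger\widehat\gamma_S U_R=\sum_{\abs{T}=k}\det(R_{S,T})\,\widehat\gamma_T .
\end{equation*}

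For the first moment, I will use $R\mapsto -R$, which lies in $\mathrm{O}(2n)$ and preserves Haar measure. Since $k$ is odd, $\det(-R_{S,T})=-\det(R_{S,T})$, so $\E_R\det(R_{S,T})=0$ and the linear term vanishes. For the second moment,
\begin{equation*}
\E_R\det(R_{S,T})\det(R_{S,T'})=\E_R\,(\wedge^kR)_{S,T}(\wedge^kR)_{S,T'} .
\end{equation*}
The representation $\wedge^k$ of $\mathrm{O}(2n)$ on $\wedge^k\mathbb R^{2n}$ is irreducible and real orthogonal. This is the main point to check: $\wedge^k$ is irreducible for the full orthogonal group for all $0\le k\le 2n$, although it may split under $\mathrm{SO}(2n)$ when $k=n$, which is why the full $\mathrm{O}(2n)$ matters. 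Schur orthogonality then gives $\E_R(\wedge^kR)_{S,T}(\wedge^kR)_{S,T'}=\delta_{T,T'}\binom{2n}{k}^{-1}$. Alternatively, one can argue directly: sign-flip diagonal matrices kill the off-diagonal terms, permutation matrices make the diagonal terms equal, and Cauchy--Binet gives $\sum_T\det(R_{S,T})^2=1$.

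Inserting these moments yields the stated formula
\begin{equation*}
\norm{\rho}_{\widetilde{\widehat{\gamma}}_S,\mathrm{M}_n,2}^2
=1+\binom{2n}{k}^{-1}\sum_{\abs{T}=k}\Tr(\widehat\gamma_T\rho)^2 .
\end{equation*}

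For the threshold, every pure Gaussian state has vanishing odd correlators by Wick's theorem, so it gives the value $1$. Observation~\ref{obs:threshold_single_orbit} applies, since pure Gaussian states form a single $\mathrm{M}_n$-orbit of $\ket{0^n}$ and the seed $\bI-\widehat\gamma_S$ is positive semidefinite. Hence $C_{\widetilde{\widehat{\gamma}}_S,\mathrm{M}_n,2}=1$. Finally, for $\rho\in\mathcal D_F$ every odd correlator vanishes by the characterization of $\mathcal D_F$, so the norm equals $1$. The only nonroutine step is the second-moment orthogonality, and I would handle it by the elementary sign-flip/permutation/Cauchy--Binet argument rather than by representation theory.
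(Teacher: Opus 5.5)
Your argument is correct, but it obtains the two Haar moments by a different route than the paper.

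The paper never works at the level of $\mathrm{O}(2n)$. It quotes the first- and second-order matchgate twirling formulas of Wan et al.\ (their Theorem~1). The linear term is then $\Tr(\widehat\gamma_S)/d=0$. The orthogonality $\Tr(\widehat\gamma_R\widehat\gamma_S)=d\,\delta_{R,S}$ collapses the second-order twirl to $\binom{2n}{\abs{S}}^{-1}\sum_{\abs{T}=\abs{S}}\widehat\gamma_T^{\otimes2}$. The threshold is read off directly: odd correlators vanish on every convex Gaussian state (indeed on all of $\mathcal D_F$), so Observation~\ref{obs:threshold_single_orbit} is not needed.

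You instead derive exactly these two facts from scratch:
\begin{itemize}
\item the exterior-power action $U_R^\dagger\widehat\gamma_SU_R=\sum_T\det(R_{S,T})\widehat\gamma_T$;
\item the symmetry $R\mapsto -R$, which kills the first moment;
\item sign flips, column permutations and Cauchy--Binet, which give $\E_R\det(R_{S,T})\det(R_{S,T'})=\delta_{T,T'}\binom{2n}{k}^{-1}$. This is precisely the degree-$k$ block of the cited twirl.
\end{itemize}
Your route is self-contained and elementary. It also makes visible why odd degree matters, and why the full $\mathrm{O}(2n)$, including $\det R=-1$, is used. The paper's route is shorter given the citation. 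Its general second- and third-order twirl formulas are also reused for the projector seed $P_n$, where a moment computation for a single Majorana monomial would not suffice.

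One step you should tighten is the identity for $U_R^\dagger\gamma_SU_R$, which is not merely a reordering. Expanding $\prod_{j\in S}(\sum_kR_{jk}\gamma_k)$ also produces lower-degree terms from repeated indices, and you need to show they cancel. One way is to write $\gamma_S$, whose factors pairwise anticommute, as the antisymmetrized product of those factors. Conjugation then factors through $\wedge^k$, and every term with a repeated index vanishes.
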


Therefore, once we restrict to physical fermionic states, $\norm{\rho}_{\widetilde{\widehat{\gamma}}_S,\mathrm{M}_n,2}$ are all identical.
In this sense, the odd-degree Majorana products result in only a parity test, not a genuine test of fermionic non-Gaussianity.
To obtain a nontrivial mixed-state fermionic non-Gaussianity criterion, we need a stronger seed witness.

\subsection{WE criteria with projector of a non-Gaussian state}\label{subsec:fermionic_projector}
We therefore turn to a seed operator based on a fidelity bound. Since for $n\leq3$, all physical fermionic states are convex Gaussian~\cite{bravyi2005ClassicalCapacityFermionic}, we assume $n\geq4$ in the following text. Let
\begin{equation}
n=4m+r,\quad\textup{with } r\in\{0,1,2,3\}\textup{ and } m\geq1.
\end{equation}
From the $n$-qubit GHZ state vector $\ket{\textup{GHZ}_n}=(\ket{0^n}+\ket{1^n})/\sqrt{2}$, we define the following pure fermionic non-Gaussian states 
\begin{equation}
\begin{aligned}
\ket{\eta_0}&=\ket{\textup{GHZ}_4},\qquad&
\ket{\eta_1}&=\ket{\textup{GHZ}_4}\otimes\ket{0},\\
\ket{\eta_2}&=\ket{\textup{GHZ}_6},&
\ket{\eta_3}&=\ket{\textup{GHZ}_6}\otimes\ket{0},
\end{aligned}
\end{equation}
and the rank-$1$ projector to a generic $n$-qubit non-Gaussian state
\begin{equation}\label{eq:FNG_projector}
P_n=\ket{\eta_r}\bra{\eta_r}\otimes\ket{\mathrm{GHZ}_4}^{\otimes(m-1)}\bra{\mathrm{GHZ}_4}^{\otimes(m-1)}.
\end{equation}
Using Lemma~1 in Ref.~\cite{reardon-smith2024FermionicLinearOptical}, we show that every $n$-qubit convex Gaussian state $\sigma\in\mathrm{Conv}(\G_n)$ satisfies (see Appendix~\ref{app:FNG_WE_projector_criterion})
\begin{equation}\label{eq:overlap_convex_Gaussian_GHZ_r}
    \Tr(\sigma P_n)
    \leq
    2^{-\lfloor n/4\rfloor}.
\end{equation}
This immediately implies $P_n$ is a scalable, resource-sensitive positive semi-definite operator for fermionic non-Gaussianity. The WE $1$-criterion is trivial, and the second- and third-order criteria are summarized in the following theorem (see the proof in Appendix~\ref{app:FNG_WE_projector_criterion}).

\begin{theorem}[WE $2,3$-criteria with rank-$1$ projector for fermionic non-Gaussianity]
\label{thm:R1-3_FNG_main}
For an $n$-qubit fermionic state $\rho\in\mathcal D_F$, the WE $2,3$-norms in criterion~\eqref{eq:WE_alpha_criterion} with rank-$1$ projector $P_n$ in Eq.~\eqref{eq:FNG_projector} and Gaussian unitaries $\mathrm{M}_n$ in Eq.~\eqref{eq:FNG_free_unitaries} are given by
\begin{equation}\label{eq:R1-3_FNG}
\begin{aligned}
\norm{\rho}_{P_n,\mathrm{M}_n,2}^2
&=\frac1{d^2}
\sum_{\ell=0}^{n}
\frac{B_\ell(P_n)B_\ell(\rho)}{\binom{2n}{2\ell}},
\\
\norm{\rho}_{P_n,\mathrm{M}_n,3}^3
&=
\frac{1}{d^3}
\sum_{\substack{\ell_1,\ell_2,\ell_3\ge 0\\ \ell_1+\ell_2+\ell_3\le n}}
\frac{\cT_{\ell_1,\ell_2,\ell_3}(P_n)\cT_{\ell_1,\ell_2,\ell_3}(\rho)}{\binom{2n}{2\ell_1,2\ell_2,2\ell_3,2n-2(\ell_1+\ell_2+\ell_3)}},
\end{aligned}
\end{equation}
where $d=2^n$, $B_\ell(\rho)$ is defined in Eq.~\eqref{eq:Majorana_sector_purity_main} and $\cT_{\ell_1,\ell_2,\ell_3}(\rho)$ is defined in Eq.~\eqref{eq:def_Pi_T} in Appendix~\ref{app:FNG_WE_projector_criterion}. The coefficients $B_\ell(P_n)$ and $\cT_{\ell_1,\ell_2,\ell_3}(P_n)$ are determined in Lemma~\ref{lem:Pi_T_P_n} in Appendix~\ref{app:FNG_WE_projector_criterion}.
The threshold values for convex Gaussian states are given by
\begin{equation}\label{eq:FNG_thresholds_main}
\begin{aligned}
C_{P_n,\mathrm{M}_n,2}^2
&=
\frac1{d^2}\sum_{\ell=0}^{n}
\frac{\binom{n}{\ell}B_\ell(P_n)}
{\binom{2n}{2\ell}},
\\
C_{P_n,\mathrm{M}_n,3}^3
&=
\frac{1}{d^3}
\sum_{\substack{\ell_1,\ell_2,\ell_3\ge 0\\ \ell_1+\ell_2+\ell_3\le n}}
\frac{(-1)^{\ell_1+\ell_2+\ell_3}
\binom{n}{\ell_1,\ell_2,\ell_3,n-(\ell_1+\ell_2+\ell_3)}}
{
\binom{2n}{2\ell_1,2\ell_2,2\ell_3,2n-2(\ell_1+\ell_2+\ell_3)}
}
\\&\hspace{6.4em}\times\cT_{\ell_1,\ell_2,\ell_3}(P_n).
\end{aligned}
\end{equation}
\end{theorem}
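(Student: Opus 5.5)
We will compute the matchgate twirl $\mathbb{E}_{U\sim\mu_{\mathrm M_n}}[(U^\dagger)^{\otimes\alpha}P_n^{\otimes\alpha}U^{\otimes\alpha}]$ for $\alpha=2,3$ by expanding everything in the Hermitian Majorana basis. Write $P_n=\frac1d\sum_{S}c_S\widehat\gamma_S$ with $c_S=\Tr(\widehat\gamma_SP_n)$; since $P_n$ is parity preserving, only even $\abs{S}$ appear. A Gaussian unitary $U_R$ acts on the degree-$k$ sector as the $k$-th exterior power $\wedge^kR$. Hence the $\alpha$-fold twirl reduces to the orthogonal-group integral $\int_{\mathrm O(2n)}\bigotimes_{i=1}^{\alpha}\wedge^{k_i}R\,\mathrm dR$. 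Moreover, the WE $\alpha$-norm is $\frac1{d^\alpha}\sum c_{S_1}\cdots c_{S_\alpha}\,\mathbb E_R\prod_i\Tr(U_R^\dagger\widehat\gamma_{S_i}U_R\rho)$, which is a pairing of $\rho$'s Majorana coefficients with those of $P_n$ through invariant tensors of $\mathrm O(2n)$.

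For $\alpha=2$, I will use that $\wedge^{2\ell}\mathbb R^{2n}$ are pairwise inequivalent irreducible $\mathrm O(2n)$-representations. By Schur orthogonality, only $\abs{S_1}=\abs{S_2}=2\ell$ survive, and the twirl acts as $\binom{2n}{2\ell}^{-1}\sum_{|T|=2\ell}\widehat\gamma_T\otimes\widehat\gamma_T$ times the sector norm $\sum_{\abs{S}=2\ell}c_S^2=B_\ell(P_n)$. Contracting with $\rho^{\otimes2}$ gives $\frac1{d^2}\sum_\ell B_\ell(P_n)B_\ell(\rho)/\binom{2n}{2\ell}$. For the threshold, the map $\rho\mapsto\norm{\rho}$ is convex and pure Gaussian states form a single $\mathrm M_n$-orbit, so Observation~\ref{obs:threshold_single_orbit} lets us evaluate it at $\ket{0^n}$. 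There $\Tr(\widehat\gamma_S\,\ketbra{0^n}{0^n})=\pm1$ exactly when $S$ is a union of pairs $\{2k-1,2k\}$. This gives $B_\ell(\ketbra{0^n}{0^n})=\binom n\ell$ and reproduces the stated $C^2$.

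For $\alpha=3$, the invariant in $\wedge^{k_1}\otimes\wedge^{k_2}\otimes\wedge^{k_3}$ must be analyzed. I expect to show that, for even $k_i$, the invariant space is spanned by tensors obtained by partitioning the indices into three pairwise overlaps $\ell_{12},\ell_{13},\ell_{23}$ (sets $S_i$ with $S_1\triangle S_2=S_3$ type structure). Since $\widehat\gamma_{S_1}\widehat\gamma_{S_2}\propto\widehat\gamma_{S_1\triangle S_2}$, the surviving terms should be parametrized by disjoint blocks of sizes $2\ell_1,2\ell_2,2\ell_3$ with $S_1=A_2\cup A_3$, and so on. The quantity $\cT_{\ell_1,\ell_2,\ell_3}$ (Eq.~\eqref{eq:def_Pi_T}) is presumably the corresponding cubic contraction $\sum\Tr(\widehat\gamma_{A_2\cup A_3}\rho)\Tr(\widehat\gamma_{A_1\cup A_3}\rho)\Tr(\widehat\gamma_{A_1\cup A_2}\rho)$ with appropriate signs. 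The normalization by the multinomial $\binom{2n}{2\ell_1,2\ell_2,2\ell_3,\cdot}$ is then the orbit size of such a block configuration under permutations, which are contained in $\mathrm O(2n)$. To justify this, I will first average over the signed-permutation subgroup of $\mathrm O(2n)$, which is realized by Majorana permutations and sign flips; this already forces the index-set structure above. I will then argue that the remaining continuous rotations act within each configuration type as a projection with that normalization, for example via a Weingarten/Brauer-algebra computation or by matching a multiplicity-one decomposition.

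The threshold at $\ket{0^n}$ then follows by evaluating $\cT_{\ell_1,\ell_2,\ell_3}(\ketbra{0^n}{0^n})$. Only pair-unions contribute, each block choosing pairs gives the multinomial in $n$, and the Pfaffian signs $(-1)^{\ell}$ multiply to $(-1)^{\ell_1+\ell_2+\ell_3}$. The main obstacle is the third-order step: establishing that the $\mathrm O(2n)$ triple invariants are exactly these block contractions, with the right normalization and with consistent phases of $\widehat\gamma$ products. Computing the explicit $B_\ell(P_n)$ and $\cT(P_n)$ for the tensor-product GHZ structure is deferred to Lemma~\ref{lem:Pi_T_P_n}.
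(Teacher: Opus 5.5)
Your second-order computation is correct: Schur orthogonality for the pairwise inequivalent irreducible $\mathrm{O}(2n)$-modules $\wedge^{k}\mathbb{R}^{2n}$ re-derives the matchgate two-moment formula the paper cites. Your threshold argument also matches the paper.

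The gap is in the third-order step, and it is not the normalization. You assert that for even $k_i$ the invariants of $\wedge^{k_1}\otimes\wedge^{k_2}\otimes\wedge^{k_3}$ come from disjoint blocks of sizes $2\ell_1,2\ell_2,2\ell_3$. That is false. Write $S_1=A_1\cup A_2$, $S_2=A_2\cup A_3$, $S_3=A_3\cup A_1$. Then the block sizes are forced: $\abs{A_1}=(k_1+k_3-k_2)/2$, $\abs{A_2}=(k_1+k_2-k_3)/2$, $\abs{A_3}=(k_2+k_3-k_1)/2$. Even $k_i$ only force these three numbers to share a parity, so they can all be odd:
\begin{itemize}
\item $k=(2,2,2)$ gives blocks $(1,1,1)$; this is the invariant $\Tr(XYZ)$ of three antisymmetric matrices.
\item $k=(4,8,6)$ gives blocks $(1,3,5)$ once $n\ge 5$.
\end{itemize}
Averaging over signed permutations only enforces $S_1\triangle S_2\triangle S_3=\emptyset$, not even block sizes.

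These odd-block sectors do not vanish in general. Each admissible configuration appears in the contraction tensor with a nonzero coefficient. For distinct $k_i$, the three degree components of a seed, or of a state near $\bI/d$, can be chosen independently. So for a generic even seed the WE $3$-norm contains terms that are absent from the stated formula. The formula therefore cannot follow from $\mathrm{O}(2n)$ invariant theory plus evenness of $\rho$ alone, and your proposal never uses any property of $P_n$ beyond its being even.

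The missing idea is the one the paper uses after inserting the third-order matchgate twirl of Wan et al.
\begin{itemize}
\item $P_n$ is a stabilizer state. So every $S$ with $\Tr(P_n\gamma_S)\neq 0$ has even size, and $\gamma_S$ is, up to phase, an element of an abelian stabilizer group.
\item For even $\abs{S},\abs{T}$ one has $\gamma_S\gamma_T=(-1)^{\abs{S\cap T}}\gamma_T\gamma_S$, so any two support sets meet in an even number of indices.
\item A nonzero contraction with $P_n^{\otimes 3}$ requires $A_1\cup A_2$, $A_2\cup A_3$ and $A_3\cup A_1$ all to lie in the support. Their pairwise intersections are exactly $A_1$, $A_2$, $A_3$, so every block is even.
\end{itemize}
Hence the odd-block terms drop out for every $\rho$. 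The threshold then follows by evaluating the even-block formula at $\ket{0^n}$, as you describe.

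With this added, your route is a legitimate alternative to citing the twirl formula. The step you flagged as the main obstacle is routine. By the first fundamental theorem for $\mathrm{O}(2n)$, each sector's invariant space is at most one-dimensional. The representation is orthogonal, so the twirl on that sector is $vv^{\mathrm{T}}/\norm{v}^2$. Here $\norm{v}^2$ counts ordered disjoint block triples, which is exactly the multinomial coefficient.
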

Although the expressions in Theorem~\ref{thm:R1-3_FNG_main} are more involved than the criterion induced by Wick's theorem, they detect mixed-state fermionic non-Gaussianity, rather than merely testing whether a state is parity preserving.
There is also a useful pure-state reduction, analogous to the relation between the WE criterion for qubit magic and stabilizer R\'enyi entropy.
For a fixed-parity pure fermionic state $\psi=\ketbra{\psi}{\psi}$, denote its correlation matrix in Eq.~\eqref{eq:covariant_matrix_main} as $\Gamma(\psi)$.
The \emph{first-order fermionic antiflatness}  (FAF) is defined as~\cite{sierant2026FermionicMagicResources}
\begin{equation}\label{eq:def_FAF1}
    \mathfrak F_1(\psi)
    =
    n-\frac12\Tr\left(\Gamma(\psi)^{\mathrm T}\Gamma(\psi)\right).
\end{equation}
This quantity is a faithful pure-state fermionic non-Gaussianity measure: $\mathfrak F_1(\psi)\geq0$, with equality if and only if $\ket{\psi}$ is Gaussian~\cite{sierant2026FermionicMagicResources}.
In the pure-state regime analyzed below, the WE $2$-norm directly relates to FAF for small system sizes.
Thus, the same mixed-state WE construction recovers a standard pure-state fermionic non-Gaussianity test as a special case (see the proof in Appendix~\ref{app:FNG_WE_projector_criterion}).
\begin{corollary}[WE $2$-criterion with rank-$1$ projector is faithful for pure states of small size]\label{cor:fermion_WE2}
Let $\psi=\ketbra{\psi}{\psi}$ be a physical pure fermionic state of $4\leq n\leq7$ qubits, $\norm{\psi}_{P_n,\mathrm{M}_n,2}^2$ in Eq.~\eqref{eq:R1-3_FNG} is a linear function of the first-order FAF $\mathfrak F_1(\psi)$ in Eq.~\eqref{eq:def_FAF1}, explicitly,
\begin{equation}
\begin{aligned}
\norm{\psi}_{P_4,\mathrm{M}_4,2}^2=&\dfrac{8+\mathfrak F_1(\psi)}{640},\qquad&\norm{\psi}_{P_5,\mathrm{M}_5,2}^2&=\dfrac{40+\mathfrak F_1(\psi)}{11520},\\
\norm{\psi}_{P_6,\mathrm{M}_6,2}^2=&\dfrac{52+\mathfrak F_1(\psi)}{59136},\qquad&\norm{\psi}_{P_7,\mathrm{M}_7,2}^2&=\dfrac{364+\mathfrak F_1(\psi)}{1537536}.
\end{aligned}
\end{equation}   
Thus for any $n$-qubit pure physical fermionic state $\psi$ with $4\leq n\leq7$, $\norm{\psi}_{P_n,\mathrm{M}_n,2}>C_{P_n,\mathrm{M}_n,2}\Longleftrightarrow\psi\notin\mathcal{G}_n$.
\end{corollary}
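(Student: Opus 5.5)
We start from the second-order formula in Theorem~\ref{thm:R1-3_FNG_main}, $\norm{\rho}_{P_n,\mathrm{M}_n,2}^2=d^{-2}\sum_{\ell=0}^n B_\ell(P_n)B_\ell(\rho)/\binom{2n}{2\ell}$, and specialize to a pure physical state $\psi$. Since $\psi$ has definite parity, it is an eigenvector of $\widehat\gamma_{[2n]}\propto Z^{\otimes n}$. This gives the symmetry $\Tr(\widehat\gamma_S\psi)^2=\Tr(\widehat\gamma_{[2n]\setminus S}\psi)^2$, hence $B_\ell(\psi)=B_{n-\ell}(\psi)$. Purity gives the sum rule $\sum_\ell B_\ell(\psi)=2^n\Tr(\psi^2)=d$, and we have $B_0(\psi)=1$.

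The quadratic sector is tied to FAF: $B_1(\psi)=\sum_{j<k}\Gamma_{jk}^2=\tfrac12\Tr(\Gamma^{\mathrm T}\Gamma)=n-\mathfrak F_1(\psi)$. For $n\le 7$ the symmetric list $(B_0,\dots,B_n)$ has unknowns $B_0=B_n=1$, $B_1=B_{n-1}$, and then:
\begin{itemize}
\item for $n=4,5$: only $B_2$ (and $B_3=B_2$ when $n=5$) remains, and it is fixed by the sum rule in terms of $B_1$;
\item for $n=6,7$: the middle sectors $B_2,B_3$ (and $B_4,B_5$ by symmetry) are not both fixed by the sum rule alone.
\end{itemize}

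For $n=6,7$ I would look for one more pure-state identity linking $B_2$ to $B_1$. The natural candidate is the quadratic Majorana Casimir identity for pure states: the operator $\sum_{j<k}\widehat\gamma_{jk}\otimes\widehat\gamma_{jk}$ acting on $\psi^{\otimes2}$ (equivalently, the $\mathrm{O}(2n)$ quadratic Casimir on the two-copy space). Expanding $\Tr(\psi^{\otimes 2}\,\cdot)$ of its square, or of its action on the swap, should give a linear relation among the $B_\ell$ with weights depending on $\ell$. The key property of $\mathrm{Comm}(\mathrm{M}_n^{\otimes2})$ is that it is spanned by the sector projectors $\sum_{|S|=2\ell}\widehat\gamma_S^{\otimes2}$ and that swap lies in this span. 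Together with $\psi^{\otimes2}$ lying in the symmetric subspace, this yields
\[
\sum_\ell c_\ell B_\ell(\psi)=0
\]
for Krawtchouk-type coefficients $c_\ell$.

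An alternative, which I would check first, is that the specific weights $B_\ell(P_n)/\binom{2n}{2\ell}$ happen to lie in the span of $\{1,\ell\text{-linear},\text{symmetry}\}$ constraints, so that only $B_1$ survives. To set this up, compute $B_\ell(P_n)$ for the GHZ-based projectors from Lemma~\ref{lem:Pi_T_P_n}. The GHZ states have stabilizers of the form $Z_iZ_j$ and $X^{\otimes}$-type strings, so their Majorana supports sit in only a few degrees, which makes $B_\ell(P_n)$ explicit. Then substitute and verify that the combination reduces to $(a_n+\mathfrak F_1)/b_n$ with the stated constants, e.g.\ $8/640$ for $n=4$.

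Next, for the threshold I would use Observation~\ref{obs:threshold_single_orbit}. Pure Gaussian states form a single $\mathrm{M}_n$-orbit, so $C^2$ equals the value at $\ket{0^n}$, which has $\mathfrak F_1=0$. This matches the first line of Eq.~\eqref{eq:FNG_thresholds_main} with $B_\ell(\ket{0^n})=\binom n\ell$. Faithfulness then follows because the coefficient of $\mathfrak F_1$ is positive and, by \cite{sierant2026FermionicMagicResources}, $\mathfrak F_1(\psi)=0$ if and only if $\psi\in\mathcal G_n$.

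The main obstacle is the case $n=6,7$: showing that the middle sectors enter only through $B_1$. This needs either the extra two-copy commutant relation or a fortunate vanishing in the weights $B_\ell(P_n)/\binom{2n}{2\ell}$. I would resolve it by explicit computation of these weights before attempting the general identity.
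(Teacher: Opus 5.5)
Your treatment of $n=4,5$ is correct and matches the paper. That covers the boundary values, the symmetry $B_\ell=B_{n-\ell}$, the sum rule $\sum_\ell B_\ell(\psi)=2^n$, and $B_1=\tfrac12\Tr(\Gamma^{\mathrm T}\Gamma)=n-\mathfrak F_1$. The threshold via $\ket{0^n}$ and the faithfulness step are also correct.

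The gap is $n=6,7$, which is half the statement and which you leave open. The route you say you would check first does not work. After eliminating $B_3$ with the sum rule, $\sum_\ell B_\ell(P_n)B_\ell(\psi)/\binom{2n}{2\ell}$ still depends on $B_2(\psi)$. Its coefficient is $-2/231$ for $n=6$ and $-4/3003$ for $n=7$, so the weights give no cancellation. The ``square of the Casimir'' idea also yields no equation. With $X=\sum_{j<k}\widehat\gamma_{jk}^{\otimes 2}$ one gets $\Tr(\psi^{\otimes2}X^2)=\binom{2n}{2}-4(n-1)B_1+6B_2$. This equals the unknown $\|X\ket{\psi}^{\otimes 2}\|^2$, so it only gives an inequality.

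What you need is a genuinely new linear identity among the $B_\ell(\psi)$. The swap mechanism you mention is the right one, and it is exactly what the paper uses. For pure $\psi$, $B_\ell(\psi)=\sum_{|S|=2\ell}\bra{\psi}\widehat\gamma_S\psi\widehat\gamma_S\ket{\psi}$. Expand the middle $\psi$ in the Majorana basis; equivalently, insert $\mathbb S=2^{-n}\sum_{T}\widehat\gamma_T^{\otimes2}$ and use $\psi^{\otimes 2}\mathbb S=\psi^{\otimes2}$. Then apply $\widehat\gamma_S\widehat\gamma_T\widehat\gamma_S=(-1)^{|S\cap T|}\widehat\gamma_T$ for even $|S|$. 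This gives the MacWilliams-type relation
\[
B_\ell(\psi)=2^{-n}\sum_{k=0}^{n}K_{2\ell}(2k;2n)\,B_k(\psi),\qquad K_r(t;2n)=\sum_{j=0}^{r}(-1)^j\binom{t}{j}\binom{2n-t}{r-j}.
\]
Take $\ell=1$ and use symmetry.
\begin{itemize}
\item For $n=6$ the relation reads $6B_1=66+2B_2-3B_3$. Together with $B_3=62-2B_1-2B_2$ it forces $B_2(\psi)=15$ for every pure fixed-parity state, and hence $B_3=20+2\mathfrak F_1$.
\item For $n=7$ it reads $21B_1=91+11B_2-5B_3$. Together with $B_3=63-B_1-B_2$ it gives $B_2=21-\mathfrak F_1$ and $B_3=35+2\mathfrak F_1$.
\end{itemize}
Inserting $B_\ell(P_6)=(1,0,15,32,15,0,1)$ and $B_\ell(P_7)=(1,1,15,47,47,15,1,1)$ then produces $(52+\mathfrak F_1)/59136$ and $(364+\mathfrak F_1)/1537536$. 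These pure-state constraints leave $\lfloor n/4\rfloor$ free parameters (cf.\ Lemma~\ref{lem:Gleason_form}). That is why the single FAF parameter suffices exactly up to $n=7$, and why you cannot avoid this identity.
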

Although both constructions in Corollary~\ref{cor:projector_seed_magic} and in Theorem~\ref{thm:R1-3_FNG_main} use a pure resource state as the seed for WE, they are conceptually different.
The difference comes from the distinct mathematical structures of the Clifford group and the matchgate group.
In the qubit-magic case, the fourth-order Clifford-twirled response of a pure projector seed depends on the seed only through the single scalar quantity $\mathfrak{P}_2(\Phi)$.
As a result, one can specify the criterion by choosing an admissible value of $\mathfrak{P}_2(\Phi)$, without needing to write down the seed state explicitly.
In the fermionic case, however, the second-order matchgate twirl generally involves several Majorana sectors with different degrees once $n\geq 8$.
These additional terms are not related to a single scalar quantity analogous to $\mathfrak{P}_2(\Phi)$, but depend on several correlated properties of the seed state.
Therefore, unlike the qubit magic criterion in Corollary~\ref{cor:projector_seed_magic}, the WE criterion for fermionic non-Gaussianity generally requires an explicit construction of the seed state $P_n$.
This is why we use the GHZ-based family in Eq.~\eqref{eq:FNG_projector}, whose structure allows the required WE norms and convex Gaussian thresholds to be evaluated analytically. 

It is worth noting that the criteria in Theorem~\ref{thm:R1-3_FNG_main} do not guarantee to remain nontrivial for any $n$. Numerical calculation validates that WE $2,3$-criteria successfully detect non-Gaussian states $P_n$ for $n\leq10$, but both fail when $n=11$ (see Appendix~\ref{app:FNG_WE_projector_criterion}).

\subsection{WE 2-criterion with an even-degree Majorana product}
\label{subsec:fermionic_pauli}

The discussion above shows that the projector seed already gives mixed-state criteria, but these criteria are not guaranteed to remain nontrivial for arbitrary system sizes.
We now show that a much simpler family of seed operators, namely Pauli strings of even Majorana degree, can already lead to nontrivial WE criteria for fermionic non-Gaussianity. Remarkably, the WE $2$-criterion with a degree-$(2\floor{n/2})$ Majorana product remains nontrivial for any $n$ and detects all $n$-qubit pure fermionic non-Gaussian states.

Following the discussion in Section~\ref{subsec:generalization}, the seed operator need not be positive semi-definite when $\alpha$ is an even integer.
In particular, for the WE $2$-norm, the relevant functional is an $L^2$-norm of linear expectation values and is therefore convex.
Hence, its maximum over $\mathrm{Conv}(\mathcal G_n)$ is attained on pure Gaussian states.
For $1\leq \ell\leq n$, define the Hermitian Majorana product
\begin{equation}
\label{eq:FNG_pauli_general}
    \widehat{\gamma}_{T_\ell}
    =
    \prod_{j=1}^{\ell} Z_j,
    \qquad
    T_\ell=[2\ell].
\end{equation}
Equivalently, $\widehat{\gamma}_{T_\ell}$ is a Majorana product of degree $2\ell$.

The resulting WE $2$-criterion is as follows; see Appendix~\ref{app:FNG_Pauli} for the proof.

\begin{theorem}[WE $2$-criterion with an even-degree Majorana product for fermionic non-Gaussianity]
\label{thm:FNG_Pauli}
For an $n$-qubit fermionic state $\rho\in\mathcal D_F$ and any $1\leq \ell\leq n$, the WE $2$-norm in criterion~\eqref{eq:WE_alpha_criterion} with an even-degree Majorana product $\widehat{\gamma}_{T_\ell}$ in Eq.~\eqref{eq:FNG_pauli_general} and Gaussian unitaries $\mathrm{M}_n$ in Eq.~\eqref{eq:FNG_free_unitaries} is given by
\begin{equation}
\label{eq:FNG_WE_2_norm_Pauli_general}
    \norm{\rho}_{\widehat{\gamma}_{T_\ell},\mathrm{M}_n,2}^2
    =
    \binom{2n}{2\ell}^{-1} B_\ell(\rho),
\end{equation}
where $B_\ell(\rho)$, defined in Eq.~\eqref{eq:Majorana_sector_purity_main}, is the degree-$2\ell$ Majorana sector purity of $\rho$.
The exact convex Gaussian threshold is
\begin{equation}
\label{eq:FNG_WE_2_threshold_Pauli_general}
    C_{\widehat{\gamma}_{T_\ell},\mathrm{M}_n,2}^2
    =
    \frac{\binom{n}{\ell}}{\binom{2n}{2\ell}} .
\end{equation}
Consequently,
\begin{equation}
\label{eq:WE_2_criterion_Pauli_FNG_general}
    B_\ell(\rho)>\binom{n}{\ell}
    \quad\Longrightarrow\quad
    \rho\notin\mathrm{Conv}(\mathcal G_n).
\end{equation}
\end{theorem}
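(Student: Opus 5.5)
The proof has three steps: compute the second moment of the matchgate twirl in the Majorana basis, reduce the convex Gaussian threshold to pure Gaussian states, and then use Wick's theorem to evaluate it on a single representative.

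\emph{Step 1: the norm.} For the seed $\widehat{\gamma}_{T_\ell}$, the WE $2$-norm is
\begin{equation*}
\mathbb E_{R\sim\mathrm{O}(2n)}\Tr\left(U_R^\dagger\widehat{\gamma}_{T_\ell}U_R\,\rho\right)^2 .
\end{equation*}
Conjugation by $U_R$ sends $\gamma_S$ to $\sum_{|S'|=|S|}\det(R_{S,S'})\,\gamma_{S'}$. Each degree-$k$ Majorana sector therefore carries the representation $\wedge^k R$, and this representation is irreducible for $\mathrm{O}(2n)$. Expanding $\rho=d^{-1}\sum_S\Tr(\widehat{\gamma}_S\rho)\widehat{\gamma}_S$ gives
\begin{equation*}
\Tr(U_R^\dagger\widehat{\gamma}_{T_\ell}U_R\rho)=\sum_{|S|=2\ell}(\pm)\det(R_{T_\ell,S})\Tr(\widehat{\gamma}_S\rho),
\end{equation*}
with signs fixed by the Hermitian normalization. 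By Schur orthogonality for the irreducible representation $\wedge^{2\ell}$ of the orthogonal group,
\begin{equation*}
\mathbb E_R\left[\det(R_{T,S})\det(R_{T,S'})\right]=\delta_{S,S'}\binom{2n}{2\ell}^{-1}.
\end{equation*}
This yields Eq.~\eqref{eq:FNG_WE_2_norm_Pauli_general} directly. Equivalently, the twirl of $\widehat{\gamma}_{T_\ell}^{\otimes 2}$ is $\binom{2n}{2\ell}^{-1}\sum_{|S|=2\ell}\widehat{\gamma}_S^{\otimes2}$. The parity restriction $\rho\in\mathcal D_F$ is not needed for this step.

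\emph{Step 2: reduction to pure states.} The functional $\rho\mapsto B_\ell(\rho)$ is a sum of squares of linear functionals, so it is convex. Its supremum over $\mathrm{Conv}(\mathcal G_n)$ is therefore attained on pure Gaussian states. It is also $\mathrm{M}_n$-invariant, and all pure Gaussian states form a single $\mathrm{M}_n$-orbit (up to the parity/$\mathrm{O}(2n)$ versus $\mathrm{SO}(2n)$ issue, which the full orthogonal group absorbs). So, exactly as in Observation~\ref{obs:threshold_single_orbit}, it suffices to evaluate $B_\ell$ on $\ket{0^n}$.

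\emph{Step 3: evaluating at $\ket{0^n}$.} For this state, $\Gamma=\bigoplus_{k=1}^n\begin{pmatrix}0&1\\-1&0\end{pmatrix}$ up to sign. By Wick's theorem, $\Tr(\widehat{\gamma}_S\phi)=\operatorname{Pf}(\Gamma_S)$. This Pfaffian is nonzero, and equal to $\pm1$, exactly when $S$ is a union of pairs $\{2k-1,2k\}$. Equivalently, $\widehat{\gamma}_S$ is then $\pm$ a product of $Z_k$'s, whose expectation on $\ket{0^n}$ is $\pm1$; every other Majorana string has an $X$ or $Y$ factor and expectation zero. The number of such $S$ with $|S|=2\ell$ is $\binom{n}{\ell}$, so $B_\ell(\ket{0^n})=\binom{n}{\ell}$. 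This gives Eq.~\eqref{eq:FNG_WE_2_threshold_Pauli_general}, and the criterion Eq.~\eqref{eq:WE_2_criterion_Pauli_FNG_general} follows immediately.

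\emph{Main obstacle.} The delicate point is Step 1. One must confirm that the Hermitian phases $(-i)^{|S|(|S|-1)/2}$ turn the conjugation action into the real orthogonal matrix $\wedge^{2\ell}R$ on the basis $\{\widehat{\gamma}_S\}$, and that the Haar average yields exactly $\delta_{S,S'}/\dim\wedge^{2\ell}$ with no cross-sector terms. Orthogonality of $\wedge^{2\ell}R$ together with irreducibility gives this: the twirl is $\dim^{-1}$ times the identity on the sector. For $2\ell=n$ the representation $\wedge^n$ splits under $\mathrm{SO}(2n)$, but it remains irreducible under $\mathrm{O}(2n)$, which is why the full orthogonal group is used.
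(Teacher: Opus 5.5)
Your proof is correct and follows the paper's structure. It computes the second-moment matchgate twirl of $\widehat{\gamma}_{T_\ell}^{\otimes 2}$, then uses convexity together with the single $\mathrm{M}_n$-orbit of pure Gaussian states to reduce the threshold to $\ket{0^n}$, and finally counts the $\binom{n}{\ell}$ pair-unions $S$ there. The only difference is that the paper cites the second-order twirling formula of Wan et al.\ (Theorem~1 there), whereas you derive it from Schur orthogonality for the exterior powers $\wedge^{2\ell}$ of $\mathrm{O}(2n)$; this makes the argument self-contained and explains why the full orthogonal group, rather than $\mathrm{SO}(2n)$, matters in the central case $2\ell=n$.
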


The contrast with Proposition~\ref{thm:wick_odd_majorana_second_order} highlights the importance of choosing an appropriate seed operator in the WE framework.
There, the seed is a standard witness constructed from an odd-degree Majorana product, and the resulting second-order criterion only tests whether a state is parity preserving.
Thus, once restricted to physical fermionic states, it does not give a genuine test of fermionic non-Gaussianity.
By contrast, the even-degree Pauli string in Theorem~\ref{thm:FNG_Pauli} yields a nontrivial mixed-state criterion, whose value is simply the Majorana sector purity $B_\ell(\rho)$ normalized by the number of degree-$2\ell$ Majorana products.

The detection capability of this criterion depends strongly on the choice of $\ell$.
This dependence is already visible on pure states.
For $\ell=1$, Eq.~\eqref{eq:FNG_WE_2_norm_Pauli_general} gives
\begin{equation}
\label{eq:FNG_Pauli_l1_FAF}
    \norm{\psi}_{\widehat{\gamma}_{T_1},\mathrm{M}_n,2}^2
    =
    \frac{B_1(\psi)}{\binom{2n}{2}}
    =
    \frac{n-\mathfrak F_1(\psi)}{\binom{2n}{2}},
\end{equation}
where $\mathfrak F_1$ is the first-order FAF in Eq.~\eqref{eq:def_FAF1}.
Since $\mathfrak F_1(\psi)\geq0$ for all pure physical fermionic states and vanishes exactly on pure Gaussian states, the $\ell=1$ criterion is negatively correlated with fermionic non-Gaussianity and cannot detect any pure non-Gaussian state.
Increasing $\ell$ makes the criterion sensitive to higher-degree Majorana correlations.
The most useful choice turns out to be the central sector,
\begin{equation}
    \ell_*=\floor{n/2}.
\end{equation}
For this choice, the corresponding Hermitian Majorana product is
\begin{equation}
\label{eq:FNG_pauli_central}
    \widehat{\gamma}_{T_{\ell_*}}
    =
    \prod_{j=1}^{\floor{n/2}} Z_j,
    \qquad
    T_{\ell_*}
    =
    \begin{cases}
    [n], & n\ \text{is even},\\
    [n-1], & n\ \text{is odd}.
    \end{cases}
\end{equation}
Remarkably, the WE criterion associated with this central Majorana sector is faithful on all fixed-parity pure states.
Therefore, it remains nontrivial for an arbitrary number of qubits.

\begin{corollary}[WE $2$-criterion with central sector Majorana product is faithful for all pure fermionic states]
\label{cor:FNG_Pauli_pure}
Let $\ell_*=\floor{n/2}$.
The WE $2$-criterion with  $\widehat{\gamma}_{T_{\ell_*}}$ in Eq.~\eqref{eq:FNG_pauli_central} and Gaussian unitaries $\mathrm{M}_n$ in Eq.~\eqref{eq:FNG_free_unitaries} detects all pure fermionic non-Gaussian states, i.e., for every pure physical fermionic state $\psi=\ketbra{\psi}{\psi}\in\mathcal D_F$,
\begin{equation}
    \norm{\psi}_{\widehat{\gamma}_{T_{\ell_*}},\mathrm{M}_n,2}
    >
    C_{\widehat{\gamma}_{T_{\ell_*}},\mathrm{M}_n,2}
    \quad\Longleftrightarrow\quad
    \psi\notin\mathcal G_n .
\end{equation}
Equivalently, the central Majorana sector purity satisfies
\begin{equation}
    B_{\floor{n/2}}(\psi)
    \geq
    \binom{n}{\floor{n/2}},
\end{equation}
with equality if and only if $\psi\in\mathcal G_n$.
\end{corollary}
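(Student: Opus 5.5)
The plan is to prove the equivalent form $B_{\floor{n/2}}(\psi)\geq\binom{n}{\floor{n/2}}$, with equality iff $\psi\in\mathcal G_n$. Combined with Theorem~\ref{thm:FNG_Pauli}, which gives $\norm{\psi}^2_{\widehat\gamma_{T_{\ell_*}},\mathrm M_n,2}=\binom{2n}{2\ell_*}^{-1}B_{\ell_*}(\psi)$ and $C^2=\binom{n}{\ell_*}/\binom{2n}{2\ell_*}$, this yields the stated equivalence.

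\textbf{Gaussian case.} Since $B_\ell(\rho)=2^n\Tr(\Pi_{2\ell}(\rho)^2)$ and Gaussian unitaries act orthogonally within each Majorana degree, $\Pi_{2\ell}$ commutes with $U_R(\cdot)U_R^\dagger$. Hence $B_\ell$ is $\mathrm M_n$-invariant, and it suffices to evaluate it on $\ket{0^n}$. There $\Tr(\widehat\gamma_S\ketbra{0^n}{0^n})=\pm1$ exactly when $S$ is a union of pairs $\{2k-1,2k\}$ (products of $Z_k$), and vanishes otherwise. This gives $B_\ell=\binom{n}{\ell}$ for every $\ell$, and in particular equality at $\ell_*$.

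\textbf{Non-Gaussian case.} I would use the two-copy characterization of Gaussianity: with $\Lambda=\sum_{j=1}^{2n}\gamma_j\otimes\gamma_j$, a fixed-parity pure state is Gaussian iff $\Lambda\ket{\psi}\ket{\psi}=0$ (Bravyi). The operator $\Lambda$ (equivalently $\Lambda^2$, a Casimir-type element of the $\mathrm{SO}(2n)$ action on two copies) commutes with $U_R^{\otimes2}$. Its eigenspaces $E_k$ are labelled by an integer $k$, with $k=0$ the Gaussian sector.

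Decompose $\ket{\psi}^{\otimes2}$ into weights $w_k=\norm{P_{E_k}\ket\psi^{\otimes2}}^2\ge0$, with $\sum_k w_k=1$. The key step is to write the two-copy operator $\sum_{|S|=2\ell}\widehat\gamma_S\otimes\widehat\gamma_S$, whose expectation on $\psi^{\otimes2}$ is exactly $B_\ell(\psi)$, in terms of the commutant of $\mathrm M_n^{\otimes 2}$. Since this operator is $\mathrm M_n^{\otimes 2}$-invariant and diagonal in the Majorana-product basis, it is a polynomial in $\Lambda$ (a Krawtchouk-type polynomial in the eigenvalue label). This yields
\begin{equation}
B_\ell(\psi)=\sum_k f_\ell(k)\,w_k ,
\end{equation}
where $f_\ell(k)$ is an explicit Krawtchouk polynomial value. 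The Gaussian computation forces $f_\ell(0)=\binom{n}{\ell}$. I would derive $f_\ell$ concretely by evaluating the invariant operator on simple product vectors $\ket{a}\ket{b}$ with $a,b$ computational basis states, whose Majorana correlations are explicit, reading off the eigenvalue of each $E_k$.

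\textbf{Main obstacle.} It then remains to show $f_{\ell_*}(k)>f_{\ell_*}(0)$ for every $k\neq0$ that actually occurs for fixed-parity $\psi^{\otimes2}$; this gives strict inequality whenever some $w_k>0$ with $k\ne0$, i.e.\ whenever $\psi\notin\mathcal G_n$. I expect this Krawtchouk inequality at the central degree to be the hardest step. I would attack it through the generating function $\sum_\ell f_\ell(k)x^{2\ell}$, expected to be of the form $(1+x^2)^{n-k}(1-x^2)^{k}$ up to symmetrization. The central coefficient can then be isolated by a contour integral or a sign-pattern argument, and compared with $\binom{n}{\ell_*}$. The $\ell=1$ case, where Eq.~\eqref{eq:FNG_Pauli_l1_FAF} shows the opposite sign, serves as a consistency check on the signs of $f_\ell$.
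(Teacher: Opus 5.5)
Your reduction is sound, and it is essentially the paper's argument in spectral language. Take the commuting involutions $X_j=\gamma_j\otimes\gamma_j$. The operator $\sum_{|S|=2\ell}\widehat\gamma_S\otimes\widehat\gamma_S$ is $(-1)^\ell$ times the degree-$2\ell$ elementary symmetric polynomial of $X_1,\dots,X_{2n}$. So on joint eigenvectors with $t$ eigenvalues $-1$ (where $\Lambda=2(n-t)$) it equals $f_\ell(t)=(-1)^\ell K_{2\ell}(t;2n)$. The paper's $S_r(\psi)$ in Eq.~\eqref{eq:def-Sr} is exactly $2^{2n-1}(w_r+w_{2n-r})$.

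The gap is the step you leave open: which sectors ``actually occur''. Fixed parity only gives $t\equiv n\pmod 2$, and that is not enough. One computes $f_\ell(n\pm2)-\binom{n}{\ell}=-8\binom{n-2}{\ell-1}<0$ for every $\ell\ge1$; for $n=4$, $f_2(2)=K_4(2;8)=-10<6$. So the inequality $f_{\ell_*}(k)>f_{\ell_*}(0)$ you intend to prove is false on the parity-allowed sectors next to the Gaussian one, and no Krawtchouk estimate can rescue it.

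The missing idea is exchange symmetry. $\psi^{\otimes2}$ is fixed by the copy swap $2^{-n}\sum_{S\subseteq[2n]}\widehat\gamma_S\otimes\widehat\gamma_S$, which acts on the $t$-sector by $\cos\frac{\pi(n-t)}{2}+\sin\frac{\pi(n-t)}{2}$. Together with parity, this forces $w_t=0$ unless $t\equiv n\pmod 4$. This is the counterpart of the fact that the symmetric square of a half-spin representation contains only $\Lambda^{n-4j}$. In the paper, this information enters through the purity identity Eq.~\eqref{eq:Majorana_sector_purity_linear} and the Gleason-type form of Lemma~\ref{lem:Gleason_form}. That is why the generating function $S_\psi(t)$ in the proof of Lemma~\ref{lem:S_r_c_j_relation} contains only powers $t^r$ with $r\equiv n\pmod 4$.

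Once you add this selection rule, the remaining step is short, though not through the generating function you guessed. The sector generating function is the even part of $(1-ix)^t(1+ix)^{2n-t}$, not $(1+x^2)^{n-k}(1-x^2)^k$. The latter would give $\bigl|[z^\ell](1+z)^{n-k}(1-z)^k\bigr|\le\binom{n}{\ell}$, i.e.\ the comparison in the wrong direction.

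Cauchy's coefficient formula, as in the proof of Lemma~\ref{lem:beta_n_p_positive}, gives for $t\equiv n\pmod 4$:
for even $n$, $f_{\ell_*}(t)=2^{2n}\Gamma(\tfrac t2+\tfrac12)\Gamma(n-\tfrac t2+\tfrac12)/(\pi\,n!)$;
for odd $n$, $f_{\ell_*}(t)=2^{2n+1}\Gamma(\tfrac t2+1)\Gamma(n-\tfrac t2+1)/(\pi\,(n+1)!)$.
Each is strictly minimized at $t=n$ by log-convexity of $\Gamma$. Hence
$B_{\ell_*}(\psi)-\binom{n}{\ell_*}=\sum_{p\ge1}\bigl(f_{\ell_*}(n-4p)-\binom{n}{\ell_*}\bigr)(w_{n-4p}+w_{n+4p})\ge0$.
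Equality holds iff $w_t=0$ for all $t\neq n$. That gives $B_1(\psi)=f_1(n)=n$, i.e.\ vanishing FAF, hence Gaussianity.

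These coefficients are exactly the paper's $2^{2n-1}\beta_{n,p}$, which the paper reaches indirectly through the $c_j$ and the inversion $MN=\bI$. So the completed spectral route would be a shorter version of the same proof. As written, however, your proposal lacks the mod-$4$ selection rule and does not go through.
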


This corollary shows that the WE $2$-criterion with a degree-$(2\floor{n/2})$ Majorana product overcomes the main limitations of the three criteria discussed above while retaining their useful features.
Compared with Proposition~\ref{thm:wick_odd_majorana_second_order}, it gives a nontrivial test of fermionic non-Gaussianity rather than merely checking the parity-preserving condition.
Compared with the projector-based criteria in Theorem~\ref{thm:R1-3_FNG_main}, it has a much simpler mathematical form and remains nontrivial for arbitrary numbers of qubits.
Compared with FAF, it is not restricted to pure states, but provides a genuine mixed-state criterion for detecting states outside the convex hull of pure fermionic Gaussian states.
To the best of our knowledge, this is the first fermionic non-Gaussianity detection criterion for arbitrary system size with all these properties, further demonstrating the power of the WE framework for constructing resource detection protocols.

\section{Summary and outlook}\label{sec:outlook}

In this work, we introduced the witness expansion framework as a general method for turning a linear resource witness into a family of nonlinear, multi-copy resource detection criteria.
Starting from only a linear seed witness $W$ and a set of free unitaries $\mathcal G$, the framework generates an $\alpha$-copy criterion whose detection capability can be tuned by $\alpha$ and whose value can be analytically computed from the corresponding twirling formula.
The construction is largely independent of the detailed geometry of the free state set: the free states enter mainly through the threshold $C_{\widetilde{W},\mathcal G,\alpha}$, which in many symmetric cases can be evaluated on a representative pure free state.
This minimal input makes the framework broadly applicable to different resource theories.
We demonstrated this universality through coherence, entanglement, qudit and qubit magic, and fermionic non-Gaussianity, recovering several known quantities and obtaining new mixed-state criteria within the same formalism.
In all these examples, a single WE criterion can detect all pure resource states, showing that the nonlinear expansion can go beyond the detection capability of the original linear witness while retaining analytical tractability.
We also showed that the detection capability of WE can be further improved by optimizing the seed witness and by choosing suitable free unitary subgroups.

A first open direction concerns the experimental accessibility of the WE criteria.
In this work, we have mainly focused on constructing functions for resource detection, rather than on designing concrete measurement protocols.
Although the polynomial quantities produced by WE can in principle be represented as observables on multiple copies of the state, this does not automatically imply low measurement complexity.
The actual cost depends on the specific resource theory and on the structure of the resulting polynomial.
For instance, some entanglement-related quantities can be measured efficiently, such as the purity and partial-transpose moment, while nonlinear magic quantities such as stabilizer R\'enyi entropies generally require more sophisticated protocols~\cite{Haug2024efficient}.
It would therefore be important to analyze the measurement complexity of the magic and fermionic non-Gaussianity criteria derived using the WE framework.

Another direction is about how to further enhance the detection capability of WE.
Although our discussion of entanglement mainly served to show that WE can reproduce several known entanglement detection structures, it also reveals two important directions for improvement.
First, WE need not be restricted to unitary orbits.
The essential requirement is that the expansion maps a valid witness to another valid witness.
For entanglement, this points to possible extensions based on local filters $K_A\otimes K_B$ instead of local unitaries.
Such generalized expansion families could produce richer nonlinear criteria and may strengthen the large-$\alpha$ limit toward the full PPT criterion, rather than the restricted version obtained in Theorem~\ref{thm:WE_infinity_criterion_entanglement}.
Second, the same example shows that symmetry averaging, while analytically useful, can also weaken detection capability by producing criteria with small coefficients in front of the relevant nonlinear invariants.
The comparison between the WE $2$-criterion and the optimal purity-based entanglement criterion indicates that these coefficients may have substantial room for improvement.
This suggests a complementary strategy: use WE as a principled starting point to identify useful nonlinear invariants, and then optimize their coefficients for a given resource theory, such as magic and fermionic non-Gaussianity.
Such an optimized criterion could provide stronger mixed-state detection and better robustness.

Finally, it would be interesting to extend WE beyond the finite-dimensional, information-theoretic setting considered in this work.
One possible direction is to develop WE for continuous-variable systems, where the relevant twirling operations may be ill-defined or analytically intractable~\cite{Iosue2024continuous}.
This may require either an appropriate continuous-variable formulation of the framework or a controlled finite-dimensional truncation before applying WE.
Another related direction is to formulate resource detection within the emerging framework of computational quantum resource theories.
Recent findings show that information-theoretic resource measures, such as certain entanglement monotones, may lose their operational meaning once the allowed computational resources are required to be efficient~\cite{leone2025EntanglementTheoryLimited}.
It is therefore important to design detection protocols and identify resource monotones that remain efficiently computable, and to distinguish them from quantities that are formally well defined but computationally intractable.

\begin{acknowledgments}
We thank Satoya Imai, Ryuji Takagi, Salvatore Francesco Emanuele Oliviero, Lorenzo Leone, Lennart Bittel, Antonio Anna Mele, Chenfeng Cao, Jonas Kitzinger, and Tobias Haug for valuable discussions.
Y.~T.~is supported by the Quantum Flagship MILLENION.
Z.~-W.~L.~is supported in part by NSFC under Grant No.~12475023, Dushi Program, and startup funding from YMSC.
Y.~Z.~acknowledges support from the U.S. National Science Foundation under Grant No.\  DMR-2441671.
O.~G.~is supported by the Deutsche Forschungsgemeinschaft (DFG, German Research Foundation, project number 563437167), the Sino-German Center for Research Promotion (Project M-0294), and the German Federal Ministry of Research, Technology and Space (Project QuKuK, Grant 
No.\  16KIS1618K and Project BeRyQC, Grant No.\ 13N17292).
C.~Z.~and X.~W.~are supported by the National Natural Science Foundation of China (Grant No.~92576114, 12447107), the Guangdong Provincial Quantum Science Strategic Initiative (Grant No.~GDZX2403008, GDZX2503001), and the Guangdong Provincial Key Lab of Integrated Communication, Sensing and Computation for Ubiquitous Internet of Things (Grant No.~2023B1212010007). 
J.~E.~is supported by the BMFTR (DAQC, MuniQC-Atoms, Hybrid++, QuSol), the Munich Quantum Valley, Berlin Quantum,
the Clusters of Excellence (ML4Q, MATH+),
the Quantum Flagship programmes MILLENION 
and PASQUANS2, the DFG (CRC 183, SPP 2514), 
and the European Research Council (DebuQC).
\end{acknowledgments}

\newpage
\bibliography{ref_WE_ISO4_arxiv}

\clearpage

\appendix
\onecolumngrid
\appendixtableofcontents

\section{Preliminaries of the Witness expansion framework}\label{app:preliminaries}
In this appendix, we prove some basic properties of the framework for constructing nonlinear criteria and relate the violation of the WE criterion to the free robustness of a state. 

\subsection{Framework for constructing nonlinear criteria}\label{app:framework_nonlinear}
We first review the notations and definitions.
For a finite-dimensional Hilbert space $\mathcal{H}$ and density operators $\mathcal{D}(\mathcal{H})$, denote the set of free states as $\mathcal{F}\subseteq\mathcal{D}(\mathcal{H})$ and the set of free unitaries as $\mathcal{U}_\mathcal{F}$. Assume we have a probabilistic witness ensemble consisting of a measurable space $\Omega$ equipped with a probability measure $\mu$ and a family of witnesses $\left\{W_m\right\}_{m\in\Omega}$, with $\Tr(W_m\sigma)\geq0,\forall\sigma\in\mathcal{F}$ and $\Tr(W_m\rho)<0$ for some $\rho\in\mathcal{D}(\mathcal{H})\setminus\mathcal{F}$. For each witness $W_m$, denote its standardized version as $\widetilde{W}_m=\bI-W_m/\norm{W_m}_\infty$, such that $\Tr(\widetilde{W}_m\rho)\geq0$ for all states $\rho$ and $\Tr(\widetilde{W}_m\sigma)\leq1$ when $\sigma$ is a free state. Denote $\widetilde{W}_\Omega=\{\widetilde{W}_m\}_{m\in\Omega}$ and define the function $f_{\widetilde{W}_\Omega,\rho}:\Omega\rightarrow\mathbb{R}_{\geq0}$,
\begin{equation}
f_{\widetilde{W}_\Omega,\rho}(m)=\Tr(\widetilde{W}_m\rho).
\end{equation}
For any $1\leq\alpha<\infty$, noticing that each $\widetilde{W}_m\geq0$, we denote the $L^\alpha(\mu)$-norm of $f_{\widetilde{W}_\Omega,\rho}$ as
\begin{equation}\label{eq:def_R_alpha_mu_app}
\norm{f_{\widetilde{W}_\Omega,\rho}}_{L^\alpha(\mu)}
    =\left(\int_\Omega \Tr(\widetilde{W}_m\rho)^\alpha\mathrm{d}\mu(m)\right)^{1/\alpha}.
\end{equation}
When $\alpha\rightarrow\infty$, the limiting value is defined as the essential supremum
\begin{equation}
\begin{aligned}
\norm{f_{\widetilde{W}_\Omega,\rho}}_{L^\infty(\mu)}&=\esssup_{m\in\Omega}\Tr(\widetilde{W}_m\rho)
\\
&=\inf\left\{t\in\mathbb{R}\middle\mid\mu\left(\left\{m\mid\Tr(\widetilde{W}_m\rho)>t\right\}\right)=0\right\}.
\end{aligned}
\end{equation}
For each $\alpha\in[1,\infty]$, we denote the threshold value of $\norm{f_{\widetilde{W}_\Omega,\sigma}}_{L^\alpha(\mu)}$ achievable by any free state $\sigma$ as
\begin{equation}\label{def_C_W_Omega_app}
C_{\widetilde{W}_\Omega,\mu,\alpha}=\sup_{\sigma\in\mathcal{F}}\norm{f_{\widetilde{W}_\Omega,\sigma}}_{L^\alpha(\mu)}.
\end{equation}
Denote the set of states with values beyond this threshold as
\begin{equation}
D_{\widetilde{W}_\Omega,\mu,\alpha}\coloneqq\left\{\rho\middle\mid\norm{f_{\widetilde{W}_\Omega,\rho}}_{L^\alpha(\mu)}>C_{\widetilde{W}_\Omega,\mu,\alpha}\right\}.
\end{equation}

\begin{proposition}[Basic properties of the probabilistic witness ensemble]\label{prop:property_witness_ensemble}
For a probabilistic witness ensemble $\left\{\widetilde{W}_m\right\}$ equipped with $\mu$ introduced above, the following statements hold:
\begin{enumerate}
\item For each $\alpha\in[1,\infty]$, 
\begin{equation}
D_{\widetilde{W}_\Omega,\mu,\alpha}\cap\mathcal{F}=\emptyset,
\qquad
0\leq C_{\widetilde{W}_\Omega,\mu,\alpha}\leq 1.
\label{eq:Calpha-bound-general}
\end{equation}
\item For each fixed state $\rho$, the map $\alpha\mapsto \norm{f_{\widetilde{W}_\Omega,\rho}}_{L^\alpha(\mu)}$ is nondecreasing on $[1,\infty)$ and
\begin{equation}
\norm{f_{\widetilde{W}_\Omega,\rho}}_{L^\alpha(\mu)}\uparrow \norm{f_{\widetilde{W}_\Omega,\rho}}_{L^\infty(\mu)}
\qquad (\alpha\to\infty).
\label{eq:Ralpha-to-Rinfty-general}
\end{equation}
\item The thresholds satisfy
\begin{equation}
C_{\widetilde{W}_\Omega,\mu,\alpha}\uparrow C_{\widetilde{W}_\Omega,\mu,\infty}
\qquad (\alpha\to\infty).
\label{eq:Calpha-to-Cinfty-general}
\end{equation}
\end{enumerate}
\end{proposition}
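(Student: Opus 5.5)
The plan is to treat the three items in turn, using only elementary measure theory and the two pointwise facts already established for standardized witnesses: $0\le \Tr(\widetilde W_m\rho)\le 1$ for every state $\rho$ (since $\widetilde W_m\ge 0$ and $\widetilde W_m\le 2\bI$, but for free states $\Tr(\widetilde W_m\sigma)\le 1$), and $\Tr(\widetilde W_m\sigma)\le 1$ for every $\sigma\in\mathcal F$ and every $m$.

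For item 1, I first note that $f_{\widetilde W_\Omega,\rho}\ge 0$, so every $L^\alpha(\mu)$-norm is nonnegative and hence $C_{\widetilde W_\Omega,\mu,\alpha}\ge 0$. For a free state $\sigma$ I have $0\le f_{\widetilde W_\Omega,\sigma}\le 1$ pointwise. Because $\mu$ is a probability measure, this gives $\|f_{\widetilde W_\Omega,\sigma}\|_{L^\alpha(\mu)}\le 1$ for finite $\alpha$, and the essential supremum is likewise at most $1$; taking the supremum over $\sigma$ gives $C\le 1$. Disjointness $D\cap\mathcal F=\emptyset$ then follows directly from the definitions: any $\sigma\in\mathcal F$ satisfies $\|f_\sigma\|\le \sup_{\sigma'}\|f_{\sigma'}\|=C$, so it cannot satisfy the strict inequality.

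For item 2, I fix $\rho$. Monotonicity in $\alpha$ comes from Jensen's (or H\"older's) inequality on a probability space: for $1\le\alpha<\beta$, $\int f^\alpha\,\mathrm d\mu\le (\int f^\beta\,\mathrm d\mu)^{\alpha/\beta}$. The upper bound $\|f\|_{L^\alpha}\le\|f\|_{L^\infty}$ is immediate. For the limit, I take any $t<\|f\|_{L^\infty}$; the set $A_t=\{f>t\}$ has $\mu(A_t)>0$, so $\|f\|_{L^\alpha}\ge t\,\mu(A_t)^{1/\alpha}\to t$. Since $f\le 2$ is bounded, all quantities are finite, and letting $t\uparrow\|f\|_\infty$ gives the convergence.

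Item 3 is the main subtle step, because it requires interchanging $\sup_\sigma$ and $\lim_\alpha$. The plan is as follows. Monotonicity of $C_\alpha$ in $\alpha$ follows from item 2 applied to each $\sigma$ before taking the supremum, and the same argument gives $C_\alpha\le C_\infty$. Hence $\lim_\alpha C_\alpha$ exists and is at most $C_\infty$. For the reverse inequality, I will use the fact that two nondecreasing families always allow suprema to be swapped: $\lim_\alpha C_\alpha=\sup_\alpha\sup_\sigma\|f_\sigma\|_\alpha=\sup_\sigma\sup_\alpha\|f_\sigma\|_\alpha=\sup_\sigma\|f_\sigma\|_\infty=C_\infty$, where the third equality uses item 2. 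So no compactness or uniformity argument is needed. I expect the only care point to be checking measurability of $m\mapsto\Tr(\widetilde W_m\rho)$, which is assumed as part of the definition of a measurable family.
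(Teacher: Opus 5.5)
Your proposal is correct and follows the paper's proof essentially step by step: nonnegativity and the pointwise bound $f_{\widetilde{W}_\Omega,\sigma}\le 1$ on free states for item 1, monotonicity of $L^\alpha(\mu)$-norms on a probability space together with the positive-measure superlevel set $\{f>t\}$ for item 2, and the interchange $\sup_\alpha\sup_\sigma=\sup_\sigma\sup_\alpha$ for item 3. The only blemish is your opening claim that $0\le\Tr(\widetilde W_m\rho)\le 1$ for every state. That claim is false: the general bound is $\le 2$, and $\le 1$ holds only on $\mathcal F$. You never actually use it, though, so the argument is unaffected.
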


\begin{proof}[Proof of Proposition~\ref{prop:property_witness_ensemble}]
From the definition Eq.~\eqref{eq:def_R_alpha_mu_app}, we have $\norm{f_{\widetilde{W}_\Omega,\rho}}_{L^\alpha(\mu)}\geq0$ for all states and all $\alpha<\infty$, while $\norm{f_{\widetilde{W}_\Omega,\sigma}}_{L^\alpha(\mu)}\leq 1$ for every free state $\sigma\in\mathcal{F}$. So $0\leq C_{\widetilde{W}_\Omega,\mu,\alpha}\leq 1$ and by definition of $C_{\widetilde{W}_\Omega,\mu,\alpha}$ in Eq.~\eqref{def_C_W_Omega_app}, the validity of the criterion is proven.
For $1\leq \alpha<\beta<\infty$, the monotonicity of $L^\alpha(\mu)$ norms on a probability space yields
\begin{equation}
\norm{f_{\widetilde{W}_\Omega,\rho}}_{L^\alpha(\mu)}\leq \norm{f_{\widetilde{W}_\Omega,\rho}}_{L^\beta(\mu)}\leq \norm{f_{\widetilde{W}_\Omega,\rho}}_{L^\infty(\mu)}.
\end{equation}
Hence $\norm{f_{\widetilde{W}_\Omega,\rho}}_{L^\alpha(\mu)}$ is 
nondecreasing and bounded above by $\norm{f_{\widetilde{W}_\Omega,\rho}}_{L^\infty(\mu)}$. Let $s=\norm{f_{\widetilde{W}_\Omega,\rho}}_{L^\infty(\mu)}$ and fix $\varepsilon>0$. By the definition of essential supremum,
\begin{equation}
A_{\varepsilon}=\left\{m\in \Omega\mid\Tr(\widetilde{W}_m\rho)>s-\varepsilon\right\}
\end{equation}
has positive measure. Therefore
\begin{equation}
\norm{f_{\widetilde{W}_\Omega,\rho}}_{L^\alpha(\mu)}
\geq
\left(\int_{A_{\varepsilon}} \Tr(\widetilde{W}_m\rho)^\alpha\mathrm{d}\mu(m)\right)^{1/\alpha}
\geq
\mu(A_{\varepsilon})^{1/\alpha}(s-\varepsilon).
\end{equation}
Letting $\alpha\to\infty$ yields
\begin{equation}
\liminf_{\alpha\to\infty}\norm{f_{\widetilde{W}_\Omega,\rho}}_{L^\alpha(\mu)}\geq s-\varepsilon.
\end{equation}
Since $\varepsilon>0$ is arbitrary, Eq.~\eqref{eq:Ralpha-to-Rinfty-general} follows.
Finally, since $\norm{f_{\widetilde{W}_\Omega,\sigma}}_{L^\alpha(\mu)}$ is nondecreasing in $\alpha$ for each fixed $\sigma$, the sequence $C_{\widetilde{W}_\Omega,\mu,\alpha}$ is nondecreasing and bounded above by $C_{\widetilde{W}_\Omega,\mu,\infty}$. Moreover,
\begin{equation}
\sup_{\alpha<\infty}C_{\widetilde{W}_\Omega,\mu,\alpha}
=
\sup_{\alpha<\infty}\sup_{\sigma\in\mathcal{F}}\norm{f_{\widetilde{W}_\Omega,\sigma}}_{L^\alpha(\mu)}
=
\sup_{\sigma\in\mathcal{F}}\sup_{\alpha<\infty}\norm{f_{\widetilde{W}_\Omega,\sigma}}_{L^\alpha(\mu)}
=
\sup_{\sigma\in\mathcal{F}}\norm{f_{\widetilde{W}_\Omega,\sigma}}_{L^\infty(\mu)}
=
C_{\widetilde{W}_\Omega,\mu,\infty},
\end{equation}
which proves Eq.~\eqref{eq:Calpha-to-Cinfty-general}.
\end{proof}

Although the detection ability of the $\norm{f_{\widetilde{W}_\Omega,\rho}}_{L^\alpha(\mu)}>C_{\widetilde{W}_\Omega,\mu,\alpha}$ criterion is not monotonic with $\alpha$ in general, we relate the states that are detectable by each single witness $W_m$ (denoted as $D_m=\{\rho\mid\Tr(W_m\rho)<0\}$) to those detectable in the infinite-$\alpha$ case. For a general probabilistic witness ensemble, the literal union of the single-witness detectable sets is not suitable to compare with the $L^\infty(\mu)$ criterion, since the latter is governed by the essential supremum and is therefore insensitive to $\mu$-null subsets of witnesses. We instead consider
\begin{equation}
D_{\widetilde{W}_\Omega,\mu,\mathrm{ess}}
=
\left\{
\rho\in\mathcal{D}(\mathcal{H})\mid
\mu\left(\{m\in \Omega\mid\rho\in D_m\}\right)>0
\right\},
\end{equation}
that is, the set of states detected by a positive-measure subset of the ensemble. The following proposition identifies its relation to $D_{\widetilde{W}_\Omega,\mu,\infty}$.
\begin{proposition}[Essential single-witness detection]\label{prop:single_witness_essential}
For a general probabilistic witness ensemble one has
\begin{equation}
D_{\widetilde{W}_\Omega,\mu,\mathrm{ess}}\subseteq D_{\widetilde{W}_\Omega,\mu,\infty}.
\label{eq:Dess-in-Dinfty-general}
\end{equation}
If, in addition, $C_{\widetilde{W}_\Omega,\mu,\infty}=1$, then
\begin{equation}
D_{\widetilde{W}_\Omega,\mu,\mathrm{ess}}=D_{\widetilde{W}_\Omega,\mu,\infty}.
\label{eq:Dess-equals-Dinfty-general}
\end{equation}
\end{proposition}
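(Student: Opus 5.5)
The plan is to translate membership in $D_m$ into a statement about the standardized expectation $f_{\widetilde{W}_\Omega,\rho}(m)=\Tr(\widetilde{W}_m\rho)$. Since $\widetilde{W}_m=\bI-W_m/\norm{W_m}_\infty$, for any state we have $\Tr(\widetilde{W}_m\rho)=1-\Tr(W_m\rho)/\norm{W_m}_\infty$. Hence $\rho\in D_m$ if and only if $f_{\widetilde{W}_\Omega,\rho}(m)>1$. With this equivalence, both inclusions become statements about the essential supremum of a nonnegative measurable function, compared against the thresholds $1$ and $C_{\widetilde{W}_\Omega,\mu,\infty}\le 1$ from Proposition~\ref{prop:property_witness_ensemble}.

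For Eq.~\eqref{eq:Dess-in-Dinfty-general}, take $\rho\in D_{\widetilde{W}_\Omega,\mu,\mathrm{ess}}$. Then $A=\{m\mid f_{\widetilde{W}_\Omega,\rho}(m)>1\}$ has positive measure. Write $A=\bigcup_k A_k$ with $A_k=\{m\mid f_{\widetilde{W}_\Omega,\rho}(m)>1+1/k\}$. By countable subadditivity, some $A_k$ has positive measure. The definition of the essential supremum then gives $\norm{f_{\widetilde{W}_\Omega,\rho}}_{L^\infty(\mu)}\ge 1+1/k>1\ge C_{\widetilde{W}_\Omega,\mu,\infty}$, where the last inequality is item 1 of Proposition~\ref{prop:property_witness_ensemble}. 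So $\rho\in D_{\widetilde{W}_\Omega,\mu,\infty}$.

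For the reverse inclusion under $C_{\widetilde{W}_\Omega,\mu,\infty}=1$, take $\rho$ with $s:=\norm{f_{\widetilde{W}_\Omega,\rho}}_{L^\infty(\mu)}>1$. Choose $t\in(1,s)$. By the definition of the essential supremum as an infimum, $\mu(\{m\mid f_{\widetilde{W}_\Omega,\rho}(m)>t\})>0$, since otherwise $s\le t$. This set is contained in $\{m\mid\rho\in D_m\}$, so $\rho\in D_{\widetilde{W}_\Omega,\mu,\mathrm{ess}}$.

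The only delicate point is measure-theoretic: we need $m\mapsto\Tr(\widetilde{W}_m\rho)$ to be measurable, so that the sets above are measurable. This is ensured by the assumption that $\{\widetilde{W}_m\}$ is a measurable family. We also need the strict inequality $s>1$ to produce a positive-measure superlevel set strictly above $1$, which is why an intermediate $t$ is used. Otherwise the argument is routine, and no obstacle beyond these bookkeeping checks is expected.
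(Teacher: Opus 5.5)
Your proposal is correct and follows the paper's proof essentially step by step. The inclusion uses the countable decomposition $\{m\mid \Tr(\widetilde{W}_m\rho)>1\}=\bigcup_k\{m\mid \Tr(\widetilde{W}_m\rho)>1+1/k\}$, and the reverse direction under $C_{\widetilde{W}_\Omega,\mu,\infty}=1$ picks an intermediate level $t\in(1,s)$ together with the infimum definition of the essential supremum. Your explicit check that $\rho\in D_m$ if and only if $\Tr(\widetilde{W}_m\rho)>1$, and your measurability remark, spell out what the paper uses implicitly.
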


\begin{proof}[Proof of Proposition~\ref{prop:single_witness_essential}]
Let $\rho\in D_{\widetilde{W}_\Omega,\mu,\mathrm{ess}}$. Then
\begin{equation}
\mu\left(\left\{m\mid\Tr(\widetilde{W}_m\rho)>1\right\}\right)>0.
\end{equation}
Since
\begin{equation}
\left\{m\mid\Tr(\widetilde{W}_m\rho)>1\right\}
=
\bigcup_{n=1}^{\infty}\left\{m\mid\Tr(\widetilde{W}_m\rho)>1+1/n\right\},
\end{equation}
there exists $n\in\mathbb{N}_+$ such that
\begin{equation}
\mu\left(\left\{m\mid\Tr(\widetilde{W}_m\rho)>1+1/n\right\}\right)>0.
\end{equation}
Hence
\begin{equation}
\norm{f_{\widetilde{W}_\Omega,\rho}}_{L^\infty(\mu)}=\esssup_m \Tr(\widetilde{W}_m\rho)\geq 1+1/n>1\geq C_{\widetilde{W}_\Omega,\mu,\infty},
\end{equation}
so we have $\rho\in D_{\widetilde{W}_\Omega,\mu,\infty}$.

Assume now that $C_{\widetilde{W}_\Omega,\mu,\infty}=1$ and let $\rho\in D_{\widetilde{W}_\Omega,\mu,\infty}$. Then $\norm{f_{\widetilde{W}_\Omega,\rho}}_{L^\infty(\mu)}>1$. For any $a$ such that $1<a<\norm{f_{\widetilde{W}_\Omega,\rho}}_{L^\infty(\mu)}$, if $\mu\left(\left\{m\mid\Tr(\widetilde{W}_m\rho)>a\right\}\right)=0$, then $a$ would be an essential upper bound for $\Tr(\widetilde{W}_m\rho)$, contradicting the definition of $\norm{f_{\widetilde{W}_\Omega,\rho}}_{L^\infty(\mu)}$. Thus
\begin{equation}
0<\mu\left(\left\{m\mid\Tr(\widetilde{W}_m\rho)>a\right\}\right)\leq\mu\left(\left\{m\mid\Tr(\widetilde{W}_m\rho)>1\right\}\right),
\end{equation}
which implies $\rho\in D_{\widetilde{W}_\Omega,\mu,\mathrm{ess}}$.
\end{proof}

Furthermore, if a resource state $\rho$ is detectable by the infinite-$\alpha$ criterion, it must also be detected by the criterion of some finite $\alpha$ and all sufficiently large $\alpha$.
\begin{corollary}[Eventual detection of every state in $D_{\widetilde{W}_\Omega,\mu,\infty}$]\label{cor:eventual_detection}
If $\rho\in D_{\widetilde{W}_\Omega,\mu,\infty}$, then there exists $\alpha_0(\rho)<\infty$ such that
\begin{equation}
\rho\in D_{\widetilde{W}_\Omega,\mu,\alpha},\qquad \forall\alpha\ge \alpha_0(\rho).
\label{eq:eventual-detection-general}
\end{equation}
\end{corollary}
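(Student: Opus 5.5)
Since $\rho\in D_{\widetilde{W}_\Omega,\mu,\infty}$, the number $s:=\norm{f_{\widetilde{W}_\Omega,\rho}}_{L^\infty(\mu)}$ satisfies $s>C_{\widetilde{W}_\Omega,\mu,\infty}$. Set the gap $\delta:=s-C_{\widetilde{W}_\Omega,\mu,\infty}>0$. The plan is to combine the two convergence statements of Proposition~\ref{prop:property_witness_ensemble}: the WE-type norm of $\rho$ increases to $s$ (item 2), and the free-state thresholds increase to $C_{\widetilde{W}_\Omega,\mu,\infty}$ (item 3).

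The key observation is that the thresholds approach their limit \emph{from below}. By item 3, $C_{\widetilde{W}_\Omega,\mu,\alpha}\le C_{\widetilde{W}_\Omega,\mu,\infty}$ for every finite $\alpha$. So we never need any rate of convergence for the thresholds. It suffices to show that $\norm{f_{\widetilde{W}_\Omega,\rho}}_{L^\alpha(\mu)}$ eventually exceeds $C_{\widetilde{W}_\Omega,\mu,\infty}$.

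By item 2, $\norm{f_{\widetilde{W}_\Omega,\rho}}_{L^\alpha(\mu)}\uparrow s$. Hence there is $\alpha_0(\rho)<\infty$ with
\begin{equation*}
\norm{f_{\widetilde{W}_\Omega,\rho}}_{L^{\alpha_0}(\mu)}>s-\delta/2=C_{\widetilde{W}_\Omega,\mu,\infty}+\delta/2 .
\end{equation*}
Because the map $\alpha\mapsto\norm{f_{\widetilde{W}_\Omega,\rho}}_{L^\alpha(\mu)}$ is nondecreasing, the same strict inequality holds for all $\alpha\ge\alpha_0(\rho)$. Chaining the two facts gives
\begin{equation*}
\norm{f_{\widetilde{W}_\Omega,\rho}}_{L^\alpha(\mu)}>C_{\widetilde{W}_\Omega,\mu,\infty}\ge C_{\widetilde{W}_\Omega,\mu,\alpha},
\end{equation*}
which is exactly $\rho\in D_{\widetilde{W}_\Omega,\mu,\alpha}$ for all $\alpha\ge\alpha_0(\rho)$.

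There is no real obstacle here. The only subtle point is the degenerate case $s=\infty$, which cannot occur because each $\widetilde{W}_m\le 2\bI$ makes $f_{\widetilde{W}_\Omega,\rho}$ bounded. The monotonicity in item 2 is what upgrades "for some $\alpha$" to "for all sufficiently large $\alpha$".
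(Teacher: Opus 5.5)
Your proof is correct and follows essentially the same route as the paper, which also combines items 2 and 3 of Proposition~\ref{prop:property_witness_ensemble}, using a $3\delta$ margin and the convergence $C_{\widetilde{W}_\Omega,\mu,\alpha}\to C_{\widetilde{W}_\Omega,\mu,\infty}$. You observe that only the one-sided bound $C_{\widetilde{W}_\Omega,\mu,\alpha}\le C_{\widetilde{W}_\Omega,\mu,\infty}$ is needed, which makes the argument slightly leaner but does not change the idea.
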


\begin{proof}[Proof of Corollary~\ref{cor:eventual_detection}]
Choose $\delta>0$ such that $3\delta<\norm{f_{\widetilde{W}_\Omega,\rho}}_{L^\infty(\mu)}-C_{\widetilde{W}_\Omega,\mu,\infty}$.
By Proposition~\ref{prop:property_witness_ensemble}, for all sufficiently large $\alpha$, we have
$\norm{f_{\widetilde{W}_\Omega,\rho}}_{L^\alpha(\mu)}>\norm{f_{\widetilde{W}_\Omega,\rho}}_{L^\infty(\mu)}-\delta$ and
$C_{\widetilde{W}_\Omega,\mu,\alpha}<C_{\widetilde{W}_\Omega,\mu,\infty}+\delta$,
hence $\norm{f_{\widetilde{W}_\Omega,\rho}}_{L^\alpha(\mu)}>C_{\widetilde{W}_\Omega,\mu,\alpha}$.
\end{proof}

As a special case, the WE criterion uses the unitary twirling over a compact subgroup $\mathcal{G}$ of all free unitaries. We show that the $\alpha\to\infty$ criterion covers the ordinary union of single-witness detection regions, not merely the essential union. We first prove two useful lemmas.

\begin{lemma}[Full support of Haar measure]\label{lem:full_support}
Let $\mathcal{G}$ be a compact subgroup of the free unitaries $\Ufree$. Every nonempty open subset of $\G$ has strictly positive $\mu$-measures.
\end{lemma}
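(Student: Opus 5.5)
The plan is to argue by contradiction, using translation invariance of the Haar measure together with compactness of $\mathcal G$. Let $O\subseteq\mathcal G$ be a nonempty open set, and suppose $\mu_{\mathcal G}(O)=0$. Pick any $U_0\in O$. Then $V:=U_0^{-1}O$ is an open neighbourhood of the identity, because left multiplication by $U_0^{-1}$ is a homeomorphism of the topological group $\mathcal G$. By left invariance of the Haar measure, $\mu_{\mathcal G}(gV)=\mu_{\mathcal G}(V)=\mu_{\mathcal G}(O)=0$ for every $g\in\mathcal G$.

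The second step uses compactness. The family $\{gV\}_{g\in\mathcal G}$ is an open cover of $\mathcal G$, since $g\in gV$ for every $g$. Since $\mathcal G$ is compact, being a closed subgroup of the compact group $\mathrm U(d)$ with the subspace topology, there exist finitely many $g_1,\dots,g_N$ with $\mathcal G=\bigcup_{i=1}^N g_iV$. Countable (here finite) subadditivity then gives
\begin{equation}
1=\mu_{\mathcal G}(\mathcal G)\leq\sum_{i=1}^N\mu_{\mathcal G}(g_iV)=0,
\end{equation}
a contradiction. Hence $\mu_{\mathcal G}(O)>0$.

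The main point to check is that the hypotheses are in place rather than any computation. First, $\mathcal G$ must be compact in the topology inherited from $\mathrm U(d)$. If the paper's ``compact subgroup'' is taken literally this is immediate; if only closedness is given, it follows because closed subsets of the compact $\mathrm U(d)$ are compact. Second, the normalized Haar measure must be a left-invariant Borel probability measure, so that open sets are measurable and invariance applies. Both are standard facts, and no further properties of the free unitaries $\Ufree$ are used. In particular, the argument works equally well for finite groups such as $\Cl_n$ with the discrete topology, where every singleton has measure $1/|\mathcal G|>0$.
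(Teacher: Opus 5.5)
Your proposal is correct and follows essentially the same argument as the paper. Both translate the null open set to cover the compact group $\mathcal G$, take a finite subcover, and use left invariance and subadditivity to get the contradiction $1\le 0$; the only difference is that you recentre at the identity via $U_0^{-1}O$ first, whereas the paper uses the translates $VU_0^{-1}\mathcal O$ directly.
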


\begin{proof}[Proof of Lemma~\ref{lem:full_support}]
Let $\mathcal{O}\subseteq\G$ be nonempty and open, and choose $U_0\in\mathcal{O}$. For every $V\in\G$, the left translate $VU_0^{-1}\mathcal{O}$ is an open neighborhood of $V$. Thus, the family of translates of $\mathcal{O}$ covers $\G$. Since $\G$ is compact, finitely many such translates suffice:
\begin{equation}
\G=\bigcup_{j=1}^N V_jU_0^{-1}\mathcal{O}.
\end{equation}
If $\mu(\mathcal{O})=0$, then left-invariance of the Haar measure gives $\mu(V_jU_0^{-1}\mathcal{O})=0$ for all $j$, and renders a contradiction
\begin{equation}
1=\mu(\G)\leq \sum_{j=1}^N\mu(V_jU_0^{-1}\mathcal{O})=0.
\end{equation}
So we must have $\mu(\mathcal{O})>0$.
\end{proof}

\begin{lemma}[Essential supremum equals supremum]\label{lem:esssup_equals_sup}
Let $\mathcal{G}$ be a compact subgroup of the free unitaries $\Ufree$. For every fixed state $\rho$, the map $U\mapsto \Tr(U^\dagger\widetilde{W}U\rho)$ is continuous on $\G$, and therefore
\begin{equation}
\norm{f_{\widetilde{W}_\Omega,\rho}}_{L^\infty(\mu)}=\esssup_{U\in\G}\Tr(U^\dagger\widetilde{W}U\rho)=\sup_{U\in\G}\Tr(U^\dagger\widetilde{W}U\rho).
\label{eq:esssup-equals-sup-twirling}
\end{equation}
\end{lemma}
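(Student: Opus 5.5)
The plan has two parts. First we show that $U\mapsto \Tr(U^\dagger\widetilde{W}U\rho)$ is continuous on $\G$. Then we combine this with Lemma~\ref{lem:full_support} to identify the essential supremum with the ordinary supremum.

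\emph{Continuity.} $\G$ is a closed subgroup of the finite-dimensional unitary group, with the subspace topology induced by any matrix norm. The map $U\mapsto U^\dagger\widetilde{W}U$ is a polynomial in the entries of $U$ and $\overline{U}$, so it is continuous. Composing with the continuous linear functional $X\mapsto\Tr(X\rho)$ gives continuity of $g(U)\coloneqq\Tr(U^\dagger\widetilde{W}U\rho)$. Quantitatively, one can use
\begin{equation}
\abs{g(U)-g(V)}\leq 2\norm{\widetilde{W}}_\infty\norm{U-V}_\infty,
\end{equation}
obtained by writing $U^\dagger\widetilde{W}U-V^\dagger\widetilde{W}V=(U-V)^\dagger\widetilde{W}U+V^\dagger\widetilde{W}(U-V)$ and using $\norm{\rho}_1=1$.

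\emph{Essential supremum equals supremum.} Set $s=\sup_{U\in\G}g(U)$; it is finite and attained because $\G$ is compact. The inequality $\esssup\leq\sup$ holds for any measurable function. For the reverse inequality, fix $\varepsilon>0$. The set
\begin{equation}
\mathcal{O}_\varepsilon=\{U\in\G\mid g(U)>s-\varepsilon\}
\end{equation}
is open by continuity and nonempty by the definition of the supremum. Lemma~\ref{lem:full_support} then gives $\mu(\mathcal{O}_\varepsilon)>0$. Therefore no $t<s$ can be an essential upper bound, since taking $\varepsilon=s-t$ produces a positive-measure set on which $g>t$. This yields $\esssup g\geq s$, and letting $\varepsilon\to0$ completes the argument.

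There is no genuine obstacle here. The only point needing care is the interplay between the topology and the measure: the Haar measure must charge every nonempty open set, which is exactly what Lemma~\ref{lem:full_support} supplies. We also note that Borel measurability of $g$ follows from its continuity.
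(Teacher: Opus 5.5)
Your proposal is correct and follows essentially the same route as the paper: continuity of $U\mapsto\Tr(U^\dagger\widetilde{W}U\rho)$, then Lemma~\ref{lem:full_support} applied to the nonempty open superlevel set $\{U\in\G\mid \Tr(U^\dagger\widetilde{W}U\rho)>s-\varepsilon\}$, which forces $\esssup\geq s-\varepsilon$ for every $\varepsilon>0$. Your explicit Lipschitz bound is a harmless (and correct) strengthening of the paper's one-line continuity remark.
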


\begin{proof}[Proof of Lemma~\ref{lem:esssup_equals_sup}]
Continuity of $U\mapsto\Tr(U^\dagger\widetilde{W}U\rho)$ follows from continuity of multiplication and adjoint in a finite-dimensional system. Assume this function attains its maximum at $U_\rho$. Fix $\varepsilon>0$ and define
\begin{equation}
\mathcal{O}_{\varepsilon}=\left\{U\in\G\middle\mid\Tr(U^\dagger\widetilde{W}U\rho)>\Tr(U_\rho^\dagger\widetilde{W}U_\rho\rho)-\varepsilon\right\},
\end{equation}
it is a nonempty open neighborhood of $U_\rho$, hence $\mu(\mathcal{O}_{\varepsilon})>0$ by Lemma~\ref{lem:full_support}. Therefore
\begin{equation}
\esssup_{U\in\G}\Tr(U^\dagger\widetilde{W}U\rho)\geq \Tr(U_\rho^\dagger\widetilde{W}U_\rho\rho)-\varepsilon.
\end{equation}
Since $\varepsilon>0$ is arbitrary and $\esssup\leq\sup$ always holds, Eq.~\eqref{eq:esssup-equals-sup-twirling} follows.
\end{proof}

Now we prove that the detectable states by the WE infinite-$\alpha$ criterion actually cover all the states that could be detected by any witness constructed by $W$ under free unitary rotation.
\begin{proposition}[Recovery of the ordinary single-witness union under twirling]\label{prop:inf_single_witness_unitary}
For each $U\in\G$, denote the set of states that could be detected by witness $U^\dagger WU$ as $D_U$, then
\begin{equation}
\bigcup_{U\in\G}D_U\subseteq D_{\widetilde{W}_\Omega,\mu,\infty}.
\label{eq:union-in-Dinfty-twirling}
\end{equation}
If, in addition, $C_{\widetilde{W}_\Omega,\mu,\infty}=1$, then
\begin{equation}
\bigcup_{U\in\G}D_U=D_{\widetilde{W}_\Omega,\mu,\infty}.
\label{eq:equality-in-Dinfty-twirling}
\end{equation}
\end{proposition}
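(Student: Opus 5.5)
The plan is to reduce Proposition~\ref{prop:inf_single_witness_unitary} to Proposition~\ref{prop:single_witness_essential} by combining Lemma~\ref{lem:full_support} with Lemma~\ref{lem:esssup_equals_sup}. Throughout, the ensemble is $\Omega=\G$ with Haar measure $\mu=\mu_\G$, and the witnesses are $W_U=U^\dagger W U$. Their standardized versions are $\widetilde W_U=U^\dagger\widetilde W U$, because $\norm{U^\dagger WU}_\infty=\norm{W}_\infty$. Hence $\rho\in D_U$ if and only if $\Tr(U^\dagger\widetilde W U\rho)>1$.

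\emph{Inclusion.} Fix $U_0\in\G$ and $\rho\in D_{U_0}$, so that $\Tr(U_0^\dagger\widetilde WU_0\rho)>1$. By Lemma~\ref{lem:esssup_equals_sup},
\begin{equation}
\norm{f_{\widetilde W_\Omega,\rho}}_{L^\infty(\mu)}=\sup_{U\in\G}\Tr(U^\dagger\widetilde WU\rho)\ge\Tr(U_0^\dagger\widetilde WU_0\rho)>1\ge C_{\widetilde W_\Omega,\mu,\infty}.
\end{equation}
The last inequality is item~1 of Proposition~\ref{prop:property_witness_ensemble}. Therefore $\rho\in D_{\widetilde W_\Omega,\mu,\infty}$.

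Equivalently, I could argue directly from continuity. The set $\{U\mid \Tr(U^\dagger\widetilde WU\rho)>1\}$ is open and nonempty, so by Lemma~\ref{lem:full_support} it has positive Haar measure. Thus $\rho\in D_{\widetilde W_\Omega,\mu,\mathrm{ess}}$, and Eq.~\eqref{eq:Dess-in-Dinfty-general} applies.

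\emph{Equality when $C_{\widetilde W_\Omega,\mu,\infty}=1$.} Let $\rho\in D_{\widetilde W_\Omega,\mu,\infty}$, so that $\sup_{U}\Tr(U^\dagger\widetilde WU\rho)>1$ by Lemma~\ref{lem:esssup_equals_sup}. The map $U\mapsto \Tr(U^\dagger\widetilde WU\rho)$ is continuous on the compact group $\G$, so the supremum is attained at some $U_\rho$. Then $\Tr(U_\rho^\dagger\widetilde WU_\rho\rho)>1$, which means $\rho\in D_{U_\rho}$. An alternative route is to use Eq.~\eqref{eq:Dess-equals-Dinfty-general}: a positive-measure set of $U$ with $\rho\in D_U$ is in particular nonempty.

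The only delicate point I expect is the bookkeeping at the interface between the two notions. First, one must confirm that standardization commutes with conjugation, so that $D_U$ is characterized exactly by $\Tr(\widetilde W_U\rho)>1$. Second, one must pass from an essential supremum to an attained pointwise maximum, which requires both continuity and compactness of $\G$. Both are already supplied by the two lemmas, so the proof should be short.
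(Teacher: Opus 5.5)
Your proof is correct and follows essentially the same route as the paper. The inclusion uses Lemma~\ref{lem:esssup_equals_sup} together with $C_{\widetilde W_\Omega,\mu,\infty}\le 1$, and the equality case uses that the supremum is attained on the compact group $\G$. You additionally make explicit that $\widetilde W_U=U^\dagger\widetilde W U$, which the paper leaves implicit, and your alternative routes through Lemma~\ref{lem:full_support} and Proposition~\ref{prop:single_witness_essential} are also valid.
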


\begin{proof}[Proof of Proposition~\ref{prop:inf_single_witness_unitary}]
Let $\rho\in\bigcup_{U\in\G}D_U$, it is detected by some witness $U_0^\dagger WU_0$, by Lemma~\ref{lem:esssup_equals_sup},
\begin{equation}
\norm{f_{\widetilde{W}_\Omega,\rho}}_{L^\infty(\mu)}=\sup_{U\in\G}\Tr(U^\dagger\widetilde{W}U\rho)\geq \Tr(U_0^\dagger\widetilde{W}U_0\rho)>1.
\end{equation}
On the other hand, $\norm{f_{\widetilde{W}_\Omega,\sigma}}_{L^\infty(\mu)}\leq 1$ for every $\sigma\in\Fset$, hence $C_{\widetilde{W}_\Omega,\mu,\infty}\leq 1$ and we have $\rho\in D_{\widetilde{W}_\Omega,\mu,\infty}$.

If $C_{\widetilde{W}_\Omega,\mu,\infty}=1$ and $\rho\in D_{\widetilde{W}_\Omega,\mu,\infty}$, then $\norm{f_{\widetilde{W}_\Omega,\rho}}_{L^\infty(\mu)}>1$. By Lemma~\ref{lem:esssup_equals_sup} there exists $U_0\in\G$ such that $\Tr(U_0^\dagger\widetilde{W}U_0\rho)=\norm{f_{\widetilde{W}_\Omega,\rho}}_{L^\infty(\mu)}>1$, which means $\rho\in D_{U_0}\subseteq\bigcup_{U\in\G}D_U$.
\end{proof}

\subsection{Proof of Proposition~\ref{prop:lower_bound_free_robustness}}
\label{app:free_robustness_WE}

In this appendix, we prove that the amount by which the WE $\alpha$-criterion is violated gives a lower bound on the free robustness of the state.
Recall that the free robustness of $\rho$ with respect to the free set $\mathcal F$ is defined as
\begin{equation}
\mathcal{R}(\rho)
=
\inf\left\{
t\ge0\middle\mid
\exists \tau\in\mathcal F
\text{ s.t. }
\frac{\rho+t\tau}{1+t}\in\mathcal F
\right\}.
\end{equation}
This quantity measures the minimum amount of any free state that must be mixed with $\rho$ so that the resulting state becomes free. We then prove Proposition~\ref{prop:lower_bound_free_robustness} in the main text, which provides a lower bound of $\mathcal{R}(\rho)$ using the WE $\alpha$-norm $\norm{\rho}_{\widetilde{W},\mathcal{G},\alpha}$ and the threshold value $C_{\widetilde{W},\mathcal{G},\alpha}$ for free states.

\begin{proof}[Proof of Proposition~\ref{prop:lower_bound_free_robustness}]
Let $t\ge0$ be any feasible value in the definition of $\mathcal{R}(\rho)$.
Then there exists a free state $\tau\in\mathcal F$ such that
\begin{equation}
\sigma
=
\frac{\rho+t\tau}{1+t}
\in\mathcal F.
\end{equation}
Equivalently,
\begin{equation}
\rho
=
(1+t)\sigma-t\tau.
\end{equation}
For any positive semi-definite operator $\widetilde{W}$ and any compact subgroup $\mathcal{G}$ of free unitaries, we define the corresponding expectation function for state $\rho$ w.r.t. the family of positive semi-definite operators $\mathcal{G}^\dagger\widetilde{W}\mathcal{G}=\left\{U^\dagger\widetilde{W}U\middle\mid U\in\mathcal{G}\right\}$ as (cf. Eq.~\eqref{eq:witness_expectation_function} in the main text)
\begin{equation}
f_{\mathcal{G}^\dagger\widetilde{W}\mathcal{G},\rho}(U)
=
\Tr(U^\dagger \widetilde W U\rho).
\end{equation}
By linearity in the state, we have
\begin{equation}
f_{\mathcal{G}^\dagger\widetilde{W}\mathcal{G},\rho}
=
(1+t)f_{\mathcal{G}^\dagger\widetilde{W}\mathcal{G},\sigma}-tf_{\mathcal{G}^\dagger\widetilde{W}\mathcal{G},\tau}.
\end{equation}
For any $\alpha\in\mathbb{N}_+$, the WE $\alpha$-norm of $\rho$ is defined as the $L^\alpha(\mu_{\mathcal G})$-norm of $f_{\mathcal{G}^\dagger\widetilde{W}\mathcal{G},\rho}$ (cf. Eq.~\eqref{eq:def_WE_alpha_norm} in the main text), Minkowski's inequality gives
\begin{equation}
\norm{\rho}_{\widetilde W,\mathcal G,\alpha}
=
\norm{f_{\mathcal{G}^\dagger\widetilde{W}\mathcal{G},\rho}}_{L^\alpha(\mu_{\mathcal G})}
\leq
(1+t)\norm{f_{\mathcal{G}^\dagger\widetilde{W}\mathcal{G},\sigma}}_{L^\alpha(\mu_{\mathcal G})}
+
t\norm{f_{\mathcal{G}^\dagger\widetilde{W}\mathcal{G},\tau}}_{L^\alpha(\mu_{\mathcal G})}.
\end{equation}
We have,
\begin{equation}
\norm{\rho}_{\widetilde W,\mathcal G,\alpha}
\leq
(1+t)\norm{\sigma}_{\widetilde W,\mathcal G,\alpha}
+
t\norm{\tau}_{\widetilde W,\mathcal G,\alpha}\leq
(1+2t)C_{\widetilde W,\mathcal G,\alpha},
\end{equation}
since both $\sigma$ and $\tau$ are free states, their WE $\alpha$-norms are bounded by the free state threshold.
This simplifies to
\begin{equation}
t
\ge
\frac{\norm{\rho}_{\widetilde{W},\mathcal{G},\alpha}
-
C_{\widetilde{W},\mathcal{G},\alpha}}{2C_{\widetilde W,\mathcal G,\alpha}},
\end{equation}
for every feasible $t$.
Taking the infimum over all feasible $t$ in the definition of $\mathcal{R}(\rho)$ gives
\begin{equation}
\mathcal{R}(\rho)
\ge
\frac{\norm{\rho}_{\widetilde{W},\mathcal{G},\alpha}
-
C_{\widetilde{W},\mathcal{G},\alpha}}{2C_{\widetilde W,\mathcal G,\alpha}}.
\end{equation}
This finishes the proof.
\end{proof}

\section{Coherence}\label{app:coherence}
Fix the reference basis $\{\ket{j}\}_{j=0}^{d-1}$ of a $d$-dimensional Hilbert space $\mathcal H$ with $d\geq2$. The set of free states is
\begin{equation}
\Fset=\mathcal{I}=\left\{\sigma\in\mathcal{D}(\mathcal H)\middle|\bra{j}\sigma\ket{k}=0\ \textup{for all }j\neq k\right\},
\end{equation}
and we choose free unitaries as the diagonal unitaries
\begin{equation}
\mathcal{G}=\mathcal{U}_{\textup{diag}}(d)=\left\{\sum_{k=0}^{d-1} e^{i\theta_k}\ket{k}\bra{k}\middle| \theta_k\in[0,2\pi)\right\}.
\end{equation}
We parametrize a generic free unitary by
\begin{equation}
U_{\boldsymbol\theta}=\sum_{k=0}^{d-1} e^{i\theta_k}\ket{k}\bra{k},
\qquad
\boldsymbol\theta=(\theta_0,\cdots,\theta_{d-1})\in[0,2\pi)^d,
\end{equation}
and we write $\mu_{\mathcal{U}_{\textup{diag}}}$ for the normalized Haar measure on $\mathcal{U}_{\textup{diag}}$, namely
\begin{equation}
\E_{U\sim\mu_{\mathcal{U}_{\textup{diag}}}}[f(U)]
=\int_{[0,2\pi)^d} f(U_{\boldsymbol\theta})\frac{\mathrm{d}\boldsymbol\theta}{(2\pi)^d}.
\end{equation}

We choose the witness and its standardized version as
\begin{equation}\label{eq:witness-pair}
W=\Xi=\frac1{d-1}\sum_{j\neq k}\ket{k}\bra{j},
\qquad
\widetilde W=\widetilde{\Xi}=\bI-\Xi.
\end{equation}
For every $\sigma\in\Fset$ we have $\Tr(\Xi\sigma)=0$, while for the coherent state
\begin{equation}
\rho_-=\frac{(\ket{0}-\ket{1})(\bra{0}-\bra{1})}{2}
\end{equation}
we get
\begin{equation}
\Tr(\Xi\rho_-)= -\frac{1}{d-1}<0,
\end{equation}
so $\Xi$ is a valid coherence witness.

Denote the random off-diagonal witness expectation as
\begin{equation}\label{eq:omega-def}
\xi(U_{\boldsymbol\theta},\rho)
=(d-1)\Tr(U_{\boldsymbol\theta}^\dagger \Xi U_{\boldsymbol\theta}\rho)
=\sum_{\substack{j,k=0\\ j\neq k}}^{d-1}\bra{j}\rho\ket{k}e^{i(\theta_j-\theta_k)},
\end{equation}
and the corresponding standardized witness expectation is
\begin{equation}\label{eq:Xi-def}
\Tr(U_{\boldsymbol\theta}^\dagger\widetilde{\Xi}U_{\boldsymbol\theta}\rho)
=1-\frac{\xi(U_{\boldsymbol\theta},\rho)}{d-1}.
\end{equation}
In this way, we have
\begin{equation}\label{eq:cR-def_coherence}
\norm{\rho}_{\widetilde{\Xi},\mathcal{U}_{\textup{diag}}(d),\alpha}^\alpha
=\E_{U\sim\mu_{\mathcal{U}_{\textup{diag}}}}\left[\left(1-\frac{\xi(U_{\boldsymbol\theta},\rho)}{d-1}\right)^\alpha\right].
\end{equation}

We now derive the closed form of $\norm{\rho}_{\widetilde{\Xi},\mathcal{U}_{\textup{diag}}(d),\alpha}$ in terms of $\rho$ for an arbitrary $\alpha\in\mathbb{N}_+$. We first introduce a useful definition.

\begin{definition}[Eulerian adjacency matrices]
For $\ell\in\mathbb{N}$, let $\mathcal E_\ell$ be the set of all matrices $M=(M_{j,k})_{j,k=0}^{d-1}$ in $\mathcal M_d(\mathbb{N})$ such that
\begin{equation}\label{eq:En-def}
M_{j,j}=0, \forall j,
\qquad
\sum_{k=0}^{d-1} M_{j,k}=\sum_{k=0}^{d-1} M_{k,j}, \forall j,
\qquad
\sum_{j,k=0}^{d-1} M_{j,k}=\ell.
\end{equation}
Equivalently, $\mathcal E_\ell$ is the set of adjacency matrices of directed Eulerian multigraphs with $\ell$ edges and no self-loops.
\end{definition}
The expectation value of $\ell$-th moment of $\xi(U,\rho)$ over the Haar measure on $\mathcal{U}_{\textup{diag}}(d)$ is given as follows.
\begin{proposition}[Eulerian moment formula]\label{prop:eulerian}
For any state $\rho$, let $\xi(U,\rho)$ be defined as in Eq.~\eqref{eq:omega-def}, then for integer $\ell\in\mathbb{N}$,
\begin{equation}\label{eq:eulerian-formula}
\E_{U\sim\mu_{\mathcal{U}_{\textup{diag}}}}\left[\xi(U,\rho)^\ell\right]
=\ell!\sum_{M\in\mathcal E_\ell}
\prod_{\substack{j,k=0\\ j\neq k}}^{d-1}
\frac{\bra{j}\rho\ket{k}^{M_{j,k}}}{M_{j,k}!}.
\end{equation}
\end{proposition}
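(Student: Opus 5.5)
We expand the $\ell$-th power of $\xi(U_{\boldsymbol\theta},\rho)$ with the multinomial theorem and then use the orthogonality of characters on the torus $[0,2\pi)^d$. Write $\xi=\sum_{(j,k):j\neq k} c_{j,k}e^{i(\theta_j-\theta_k)}$ with $c_{j,k}=\bra{j}\rho\ket{k}$. The multinomial theorem over the $d(d-1)$ ordered off-diagonal index pairs gives
\begin{equation}
\xi^\ell=\sum_{M}\frac{\ell!}{\prod_{j\neq k}M_{j,k}!}\prod_{j\neq k}c_{j,k}^{M_{j,k}}\exp\Big(i\sum_{j\neq k}M_{j,k}(\theta_j-\theta_k)\Big).
\end{equation}
Here the sum runs over all nonnegative integer matrices $M$ with zero diagonal and total sum $\ell$. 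This handles conditions (i) and (iii) of $\mathcal E_\ell$ directly.

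Next we collect the phase. The exponent equals $i\sum_{j}\theta_j\big(\sum_k M_{j,k}-\sum_k M_{k,j}\big)$, since each $\theta_j$ appears with coefficient equal to the out-degree minus the in-degree of vertex $j$. Each of these coefficients is an integer.

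We then average over the normalized Haar measure, which is the product of uniform measures $\mathrm{d}\theta_j/2\pi$. For an integer $n$, $\int_0^{2\pi}e^{in\theta}\,\mathrm{d}\theta/2\pi=\delta_{n,0}$, and the integral factorizes over $j$. The average of each term is therefore $1$ exactly when out-degree equals in-degree at every vertex, which is condition (ii), and $0$ otherwise. The surviving sum runs over $\mathcal E_\ell$, which gives Eq.~\eqref{eq:eulerian-formula}. Interchanging the expectation with the finite sum is trivially justified.

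The main point needing care is the bookkeeping of the phase exponent: the signs must come out so that the constraint is exactly the balance condition, and nothing else. There is no analytic obstacle. As a consistency check, the case $\ell=0$ gives $1$ (only the zero matrix), and $\ell=1$ gives $0$ because no single edge without a self-loop is balanced.
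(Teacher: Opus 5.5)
Your proof is correct and follows the paper's argument: a multinomial expansion over the off-diagonal pairs, regrouping the phase as $\sum_r \theta_r\bigl(\sum_k M_{r,k}-\sum_j M_{j,r}\bigr)$, and character orthogonality on the torus, which leaves only the balanced (Eulerian) matrices in $\mathcal E_\ell$. Your $\ell=0,1$ sanity checks are a small addition the paper does not include.
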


\begin{proof}[Proof of Proposition~\ref{prop:eulerian}]
We use the multinomial theorem and introduce running index $M_{j,k}$ to denote how many times $\bra{j}\rho\ket{k}e^{i(\theta_j-\theta_k)}$ appears in each term, 
\begin{equation}\label{eq:omega_l}
\xi(U_{\boldsymbol\theta},\rho)^\ell
=\sum_{\sum_{j\neq k} M_{j,k}=\ell}
\frac{\ell!}{\prod_{j\neq k}M_{j,k}!}
\prod_{j\neq k}\left(\bra{j}\rho\ket{k}e^{i(\theta_j-\theta_k)}\right)^{M_{j,k}}.
\end{equation}
Taking the Haar average over the random diagonal unitary group $\mathcal{U}_{\textup{diag}}(d)$ gives
\begin{equation}
\E_{U\sim\mu_{\mathcal{U}_{\textup{diag}}}}[\xi^\ell]
=\sum_{\sum_{j\neq k} M_{j,k}=\ell}
\frac{\ell!\prod_{j\neq k}\bra{j}\rho\ket{k}^{M_{j,k}}}{\prod_{j\neq k}M_{j,k}!}
\E_{\boldsymbol\theta}\left[
\exp\left(i\sum_{j\neq k} M_{j,k}(\theta_j-\theta_k)\right)
\right].
\end{equation}
The exponent can be regrouped as
\begin{equation}
\sum_{j\neq k} M_{j,k}(\theta_j-\theta_k)
=\sum_{r=0}^{d-1} \theta_r\left(\sum_{k=0}^{d-1}M_{r,k}-\sum_{j=0}^{d-1}M_{j,r}\right).
\end{equation}
Because the phases are independent and uniformly distributed, the average is nonzero if and only if each coefficient of $\theta_r$ vanishes, namely
\begin{equation}
\sum_{j=0}^{d-1}M_{j,r}=\sum_{k=0}^{d-1}M_{r,k},
\qquad\forall r\in\{0,\cdots,d-1\}.
\end{equation}
Together with $M_{j,j}=0$ and $\sum_{j,k}M_{j,k}=\ell$, this is exactly the defining condition for $M\in\mathcal E_\ell$. Hence, the sum collapses to Eq.~\eqref{eq:eulerian-formula}.
\end{proof}
Together with Proposition~\ref{prop:eulerian}, we obtain a closed form evaluation for $\norm{\rho}_{\widetilde{\Xi},\mathcal{U}_{\textup{diag}}(d),\alpha}$ for any state $\rho$ and positive integer $\alpha$.
We take the binomial expansion in Eq.~\eqref{eq:Xi-def} inside Eq.~\eqref{eq:cR-def_coherence}:
\begin{equation}\label{eq:cR-expansion}
\left(1-\frac{\xi(U,\rho)}{d-1}\right)^\alpha
=\sum_{\ell=0}^\alpha \binom{\alpha}{\ell}\left(-\frac{1}{d-1}\right)^\ell\xi(U,\rho)^\ell.
\end{equation}
Averaging over $U\sim\mu_{\mathcal{U}_{\textup{diag}}}$ and inserting Eq.~\eqref{eq:omega_l} yields 
\begin{equation}\label{eq:R_evaluation_coherence_app}
\norm{\rho}_{\widetilde{\Xi},\mathcal{U}_{\textup{diag}}(d),\alpha}^\alpha=\sum_{\ell=0}^\alpha\binom{\alpha}\ell\left(-\frac1{d-1}\right)^\ell\ell!\sum_{M\in\mathcal{E}_\ell}\prod_{\substack{j,k=0\\j\neq k}}^{d-1}\frac{\bra{j}\rho\ket{k}^{M_{j,k}}}{M_{j,k}!},
\end{equation}
which is Eq.~\eqref{eq:R_evaluation_coherence} in the main text.

We show that the threshold value of $\norm{\sigma}_{\widetilde{\Xi},\mathcal{U}_{\textup{diag}}(d),\alpha}$ achievable by any incoherent state $\sigma$ is simply 1, which completes the statement of WE $\alpha$-criterion.

\begin{theorem}[Exact incoherent threshold]\label{thm:free-threshold_coherence}
For any $\alpha\geq 1$ and every free state $\sigma\in\mathcal{I}$, we have
\begin{equation}
\norm{\sigma}_{\widetilde{\Xi},\mathcal{U}_{\textup{diag}}(d),\alpha}=1.
\end{equation}
Consequently, the exact threshold for incoherent states is
\begin{equation}
C_{\widetilde{\Xi},\mathcal{U}_{\textup{diag}}(d),\alpha}=1,\qquad\forall\alpha\geq1.
\end{equation}    
\end{theorem}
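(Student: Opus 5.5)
The argument is direct: for a free state the twirled witness expectation does not depend on $U$. Take any $\sigma\in\mathcal I$, so $\bra{j}\sigma\ket{k}=0$ for all $j\neq k$. Then by Eq.~\eqref{eq:omega-def} we have $\xi(U_{\boldsymbol\theta},\sigma)=\sum_{j\neq k}\bra{j}\sigma\ket{k}e^{i(\theta_j-\theta_k)}=0$ for every $\boldsymbol\theta$. Equivalently, $U_{\boldsymbol\theta}\sigma U_{\boldsymbol\theta}^\dagger=\sigma$ because diagonal unitaries commute with diagonal matrices, and $\Tr(\Xi\sigma)=0$.

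Inserting this into Eq.~\eqref{eq:Xi-def} gives $\Tr(U_{\boldsymbol\theta}^\dagger\widetilde\Xi U_{\boldsymbol\theta}\sigma)=1$ identically in $\boldsymbol\theta$. The expectation function $f$ is therefore the constant $1$ on the probability space $(\mathcal U_{\textup{diag}}(d),\mu)$. Its $L^\alpha$-norm is $1$ for every $\alpha\in[1,\infty)$, and its essential supremum is $1$ for $\alpha=\infty$. For integer $\alpha$ this is also visible in Eq.~\eqref{eq:R_evaluation_coherence_app}: every term with $\ell\geq1$ contains at least one off-diagonal entry $\bra{j}\sigma\ket{k}$ with $M_{j,k}\geq1$ and so vanishes, while the $\ell=0$ term ($M=0$) equals $1$.

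Hence $\norm{\sigma}_{\widetilde\Xi,\mathcal U_{\textup{diag}}(d),\alpha}=1$ for all $\sigma\in\mathcal I$. Taking the supremum over $\mathcal I$ gives $C_{\widetilde\Xi,\mathcal U_{\textup{diag}}(d),\alpha}=1$, which matches the general upper bound $C\leq1$ of Proposition~\ref{prop:property_witness_ensemble}.

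No step presents a real obstacle. The only points needing care are that $\mathcal I$ is nonempty, so the supremum is attained, and that the statement covers real $\alpha\geq1$ and not just integers. The constant-function argument handles both without using the Eulerian formula.
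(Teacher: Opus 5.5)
Your proposal is correct and follows essentially the same route as the paper: for $\sigma\in\mathcal I$ one has $\xi(U,\sigma)=0$ for every diagonal unitary, so the twirled standard witness expectation is identically $1$ and its $L^\alpha$-norm equals $1$. Your extra remarks, namely the consistency check with the Eulerian formula and the fact that the constant-function argument also covers real $\alpha\geq1$ and $\alpha=\infty$, are valid but not needed for the paper's argument.
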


\begin{proof}[Proof of Theorem~\ref{thm:free-threshold_coherence}]
If $\sigma\in\mathcal{I}$, Eq.~\eqref{eq:omega-def} gives $\xi(U,\sigma)=0,
\forall U\in\mathcal{U}_{\textup{diag}}(d)$, hence 
\begin{equation}
\norm{\sigma}_{\widetilde{\Xi},\mathcal{U}_{\textup{diag}}(d),\alpha}=\left(\E_{U\sim\mu_{\mathcal{U}_{\textup{diag}}}}[1^\alpha]\right)^{1/\alpha}=1.
\end{equation}
\end{proof}

\begin{proof}[Proof of Theorem~\ref{thm:WE_criterion_coherence}]
Combining the expression for WE $\alpha$-norm in Eq.~\eqref{eq:R_evaluation_coherence_app} and Theorem~\ref{thm:free-threshold_coherence}, we finish the proof of Theorem~\ref{thm:WE_criterion_coherence}.
\end{proof}

Explicitly, the WE $1$-criterion is trivial and the explicit WE $2$-criterion is provided as below.
\begin{corollary}[WE $2$-criterion for coherence]\label{cor:WE_2_criterion_coherence}
For any state $\rho$, we have
\begin{equation}\label{eq:omega2}
\E_{U\sim\mu_{\mathcal{U}_{\textup{diag}}}}\left[\xi(U,\rho)^2\right]
=\sum_{\substack{j,k=0\\ j\neq k}}^{d-1}\abs{\bra{j}\rho\ket{k}}^2.
\end{equation}
Consequently, the WE 2-criterion for coherence is given by
\begin{equation}\label{eq:cR2}
\norm{\rho}_{\widetilde{\Xi},\mathcal{U}_{\textup{diag}}(d),2}^2
=1+\frac{1}{(d-1)^2}\sum_{\substack{j,k=0\\ j\neq k}}^{d-1}\abs{\bra{j}\rho\ket{k}}^2,
\qquad
C_{\widetilde{\Xi},\mathcal{U}_{\textup{diag}}(d),2}=1.
\end{equation}
\end{corollary}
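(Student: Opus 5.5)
The plan is to specialize Proposition~\ref{prop:eulerian} to $\ell=2$ and then substitute into the binomial expansion Eq.~\eqref{eq:cR-expansion} with $\alpha=2$. For $\ell=2$ we have to list all $M\in\mathcal E_2$. These are adjacency matrices of directed Eulerian multigraphs with two edges and no self-loops. Balance of in- and out-degree at every vertex rules out two edges that do not close up: a single edge $j\to k$ with $j\neq k$ is unbalanced, and two edges not forming a cycle leave some vertex unbalanced. So the only elements are $M=E_{j,k}+E_{k,j}$ for unordered pairs $\{j,k\}$, $j\neq k$, each with $M_{j,k}=M_{k,j}=1$. Each such $M$ contributes $2!\cdot\bra{j}\rho\ket{k}\bra{k}\rho\ket{j}=2\abs{\bra{j}\rho\ket{k}}^2$, using Hermiticity of $\rho$. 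Summing over unordered pairs equals summing over ordered pairs $j\neq k$ of $\abs{\bra{j}\rho\ket{k}}^2$, which gives Eq.~\eqref{eq:omega2}.

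As a cross-check, the same result follows directly from $\E_{\boldsymbol\theta}[e^{i(\theta_j-\theta_k+\theta_{j'}-\theta_{k'})}]=\delta_{j,k'}\delta_{k,j'}$ when $j\neq k$ and $j'\neq k'$. Expanding $\xi^2$ and averaging picks out exactly the terms $\bra{j}\rho\ket{k}\bra{k}\rho\ket{j}$.

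Next, the $\ell=1$ term must vanish. Indeed, $\mathcal E_1=\emptyset$, since a single edge is never balanced; equivalently, $\E[e^{i(\theta_j-\theta_k)}]=0$ for $j\neq k$. The $\ell=0$ term is $1$. Plugging into Eq.~\eqref{eq:cR-expansion} with $\alpha=2$ gives
\[
1-\tfrac{2}{d-1}\cdot 0+\tfrac{1}{(d-1)^2}\sum_{j\neq k}\abs{\bra{j}\rho\ket{k}}^2,
\]
which is Eq.~\eqref{eq:cR2}. The threshold $C=1$ is the case $\alpha=2$ of Theorem~\ref{thm:free-threshold_coherence}.

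There is no real obstacle here. The only point needing care is the enumeration of $\mathcal E_2$ and the factor-of-two bookkeeping between unordered pairs and the $\ell!/\prod M_{j,k}!$ multinomial weight. The direct phase-averaging computation serves as an independent check of that bookkeeping.
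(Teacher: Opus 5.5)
Your proposal is correct and follows the paper's proof: you specialize the Eulerian moment formula (Proposition~\ref{prop:eulerian}) to $\ell=2$, where $\mathcal E_2$ consists only of the directed 2-cycles $j\leftrightarrow k$, use Hermiticity, substitute into the binomial expansion with $\alpha=2$, and take $C=1$ from Theorem~\ref{thm:free-threshold_coherence}. Your explicit handling of the $2!$ factor against the sum over unordered pairs, and of the vanishing $\ell=1$ term (since $\mathcal E_1=\emptyset$), fills in bookkeeping that the paper leaves implicit.
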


\begin{proof}[Proof of Corollary~\ref{cor:WE_2_criterion_coherence}]
For $\ell=2$, the only matrices in $\mathcal E_2$ are the directed 2-cycles $j\leftrightarrow k$ with $j\neq k$. Therefore Eq.~\eqref{eq:eulerian-formula} becomes
\begin{equation}
\E_{U\sim\mu_{\mathcal{U}_{\textup{diag}}}}[\xi(U,\rho)^2]
=\sum_{\substack{j,k=0\\ j\neq k}}^{d-1}\bra{j}\rho\ket{k}\bra{k}\rho\ket{j}
=\sum_{\substack{j,k=0\\ j\neq k}}^{d-1}\abs{\bra{j}\rho\ket{k}}^2,
\end{equation}
using Hermiticity of $\rho$. Substituting this into Eq.~\eqref{eq:cR-expansion} and Eq.~\eqref{eq:cR-def_coherence} with $\alpha=2$ gives Eq.~\eqref{eq:cR2}.
\end{proof}
For the resource theory of coherence, the $\alpha=2$ criterion already characterizes all coherent states, showing the power of the WE framework.

\section{Entanglement}\label{app:entanglement}

\subsection{Proof of Theorem~\ref{thm:WE_criterion_entanglement} and Corollary~\ref{cor:WE_2_3_norm_entanglement}}\label{app:proof_WE_criterion_entanglement}
Let $\mathcal H_A\cong\mathcal H_B\cong\mathbb C^d$, and let $\rho$ be a density operator on $\mathcal H_A\otimes\mathcal H_B$. The free states are those separable under the bipartition $A$ and $B$
\begin{equation}
\Fset=\Sep=\left\{\sum_kp_k\sigma_A^k\otimes\sigma_B^k\middle\mid \sum_kp_k=1,p_k\geq0,\sigma_\bullet^k\in\mathcal{D}(\mathcal{H}_\bullet),\forall k\right\},   
\end{equation}
and the group of free unitaries are chosen as the tensor products of local unitaries on each subsystem
\begin{equation}
\mathcal{G}=\mathcal{U}_{\textup{loc}}(d,d)=\left\{U_A\otimes U_B\middle\mid U_\bullet\in\mathcal{U}(\mathcal{H}_\bullet)\right\}.
\end{equation}
We choose the SWAP operator $W=\mathbb{S}=\sum_{j,k=0}^{d-1}\ket{j,k}\bra{k,j}$ between the corresponding basis of subsystems $A$ and $B$ as the entanglement witness and set the standardized witness $\widetilde W=\widetilde{\mathbb{S}}=\bI-\mathbb{S}$, since $\mathbb{S}$ has eigenvalues $\pm1$. We denote
\begin{equation}\label{eq:y_rho_U}
y_\rho(U_A,U_B)=\Tr\left[\mathbb{S}(U_A\otimes U_B)\rho(U_A\otimes U_B)^\dagger\right],
\end{equation}
then we have
\begin{equation}\label{eq:M_alpha_rho}
\norm{\rho}_{\widetilde{\mathbb{S}},\mathcal{U}_{\textup{loc}}(d,d),\alpha}^\alpha=\E_{U_A,U_B}\left[\left(1-y_\rho(U_A,U_B)\right)^\alpha\right].
\end{equation}

We first notice that the Haar average for $U_A\otimes U_B$ over $\mathcal{U}_{\textup{loc}}(d,d)$ is equal to that of a relative unitary $\bI\otimes U$.
\begin{lemma}[Reduction to a relative unitary]
\label{lem:relativeU}
For every $\alpha>0$,
\begin{equation}
\norm{\rho}_{\widetilde{\mathbb{S}},\mathcal{U}_{\textup{loc}}(d,d),\alpha}^\alpha=\E_{U\sim\mu_{\mathcal{U}_{\textup{loc}}(d,d)}}\left[\Tr\left[(\bI-\mathbb{S})(\bI\otimes U)\rho(\bI\otimes U^\dagger)\right]^\alpha\right],
\end{equation}
where $U$ is Haar distributed on $\mathrm{U}(d)$.
\end{lemma}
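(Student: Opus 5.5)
The plan is to show that the SWAP operator is invariant under simultaneous local conjugation, $(V\otimes V)\mathbb{S}(V\otimes V)^\dagger=\mathbb{S}$ for every $V\in\mathrm{U}(d)$, and then use invariance of the Haar measure to absorb one local unitary into the other. The starting point is the representation in Eq.~\eqref{eq:M_alpha_rho}, namely $\norm{\rho}^\alpha_{\widetilde{\mathbb{S}},\mathcal{U}_{\textup{loc}}(d,d),\alpha}=\E_{U_A,U_B}[(1-y_\rho(U_A,U_B))^\alpha]$. Here the Haar measure on the compact group $\mathcal{U}_{\textup{loc}}(d,d)$ is the pushforward of the product Haar measure on $\mathrm{U}(d)\times\mathrm{U}(d)$, so $U_A,U_B$ are independent Haar unitaries. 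This identification is harmless because the integrand depends only on $U_A\otimes U_B$, and global phases cancel in the conjugation.

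First I would factor $U_A\otimes U_B=(U_A\otimes U_A)(\bI\otimes U_A^\dagger U_B)$. By cyclicity of the trace and the SWAP invariance,
\begin{equation}
y_\rho(U_A,U_B)=\Tr\left[(U_A\otimes U_A)^\dagger\mathbb{S}(U_A\otimes U_A)(\bI\otimes U)\rho(\bI\otimes U^\dagger)\right]=\Tr\left[\mathbb{S}(\bI\otimes U)\rho(\bI\otimes U^\dagger)\right],
\end{equation}
where $U=U_A^\dagger U_B$. Similarly, since $\Tr\rho=1$, the expression $1-y_\rho$ equals $\Tr[(\bI-\mathbb{S})(\bI\otimes U)\rho(\bI\otimes U^\dagger)]$.

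Second, I would condition on $U_A$. For fixed $U_A$, left invariance of the Haar measure on $\mathrm{U}(d)$ implies that $U_A^\dagger U_B$ is Haar distributed. The inner expectation therefore does not depend on $U_A$, and the outer average over $U_A$ is trivial. By Fubini, which applies because the integrand is continuous and bounded on a compact set, this gives the stated formula for every $\alpha>0$. Positivity of $\bI-\mathbb{S}$ ensures that the base of the power is nonnegative, so $\alpha$ need not be an integer. The step that needs the most care is the SWAP invariance, which I would verify directly on product vectors: $(V\otimes V)\mathbb{S}\ket{a,b}=V\ket{b}\otimes V\ket{a}=\mathbb{S}(V\otimes V)\ket{a,b}$. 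The rest is bookkeeping. The statement's notation $U\sim\mu_{\mathcal{U}_{\textup{loc}}(d,d)}$ should be read as $U$ being Haar distributed on $\mathrm{U}(d)$, as the lemma itself clarifies.
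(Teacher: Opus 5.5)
Your proposal is correct and follows the same route as the paper. Use $\mathbb{S}(V\otimes V)=(V\otimes V)\mathbb{S}$ together with cyclicity of the trace to show that $y_\rho(U_A,U_B)$ depends only on the relative unitary $U_A^\dagger U_B$, then invoke Haar invariance to conclude that this relative unitary is Haar distributed. Your extra details are the conditioning/Fubini step, the pushforward description of the Haar measure on $\mathcal{U}_{\textup{loc}}(d,d)$, and the positivity of $\bI-\mathbb{S}$ needed for non-integer $\alpha$; they make explicit what the paper leaves implicit.
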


\begin{proof}[Proof of Lemma~\ref{lem:relativeU}]
Using $\mathbb{S}(V\otimes V)=(V\otimes V)\mathbb{S}$ for all $V\in\mathrm{U}(d)$ and cyclicity of the trace,
\begin{equation}\label{eq:y_rho_U_relative}
\begin{aligned}
y_\rho(U_A,U_B)
&=\Tr\left[\mathbb{S}(U_A\otimes U_B)\rho(U_A\otimes U_B)^\dagger\right]\\
&=\Tr\left[\mathbb{S}(\bI\otimes U_A^\dagger U_B)\rho(\bI\otimes U_B^\dagger U_A)\right].
\end{aligned}
\end{equation}
Thus $y_\rho(U_A,U_B)$ depends only on the relative unitary $U=U_A^\dagger U_B$. Since $U_A$ and $U_B$ are independent Haar-distributed, $U$ is Haar-distributed as well. Inserting into Eq.~\eqref{eq:M_alpha_rho} finishes the proof.
\end{proof}

From now on, we abbreviate $y_\rho(U)=\Tr\left[\mathbb{S}(\bI\otimes U)\rho(\bI\otimes U^\dagger)\right]$.
When $\alpha\in\mathbb{N}_+$, by the binomial theorem
\begin{equation}\label{eq:M_alpha_binomial_entanglement}
\norm{\rho}_{\widetilde{\mathbb{S}},\mathcal{U}_{\textup{loc}}(d,d),\alpha}^\alpha=\sum_{\ell=0}^\alpha\binom{\alpha}{\ell}(-1)^\ell m_\ell(\rho),
\end{equation}
where $m_\ell(\rho)=\E_{U\sim\mu_{\mathrm{U}(d)}}[y_\rho(U)^\ell]$ and especially $m_0(\rho)=1$.
Write matrix elements of $\rho$ in the computational basis as $\rho_{i\mu,j\nu}$, where Latin indices refer to subsystem $A$ and Greek indices to subsystem $B$. Since
\begin{equation}
\langle i,\mu|\mathbb{S}|j,\nu\rangle=\delta_{i,\nu}\delta_{\mu,j},\qquad\langle i,\mu|(\bI\otimes U)|j,\nu\rangle=\delta_{i,j}U_{\mu\nu},
\end{equation}
a direct expansion gives
\begin{equation}
\label{eq:y-index}
y_\rho(U)=\sum_{i,j,\mu,\nu}\rho_{i\mu,j\nu}U_{j\mu}\overline{U_{i\nu}}.
\end{equation}
Hence, for every $\ell\in\mathbb N_+$,
\begin{equation}
\begin{aligned}
m_\ell(\rho)
&=\E_{U\sim\mu_{\mathrm{U}(d)}}\left[y_\rho(U)^\ell\right]\\
&=
\sum_{\substack{i_1,\cdots,i_\ell\\ j_1,\cdots,j_\ell\\ \mu_1,\cdots,\mu_\ell\\ \nu_1,\cdots,\nu_\ell}}
\left(\prod_{r=1}^\ell \rho_{i_r\mu_r,j_r\nu_r}\right)
\int_{\mathrm{U}(d)} \prod_{r=1}^\ell U_{j_r\mu_r}\overline{U_{i_r\nu_r}}\mathrm{d}U.
\label{eq:mk-before-Wg}
\end{aligned}
\end{equation}
We now use the standard Weingarten formula
\begin{equation}
\begin{aligned}
\int_{\mathrm{U}(d)} \prod_{r=1}^\ell U_{a_r b_r}\overline{U_{a'_r b'_r}}\mathrm{d}U
=
\sum_{\pi,\omega\in S_\ell}
\left(\prod_{r=1}^\ell \delta_{a_r,a'_{\pi(r)}}\right)
\left(\prod_{r=1}^\ell \delta_{b_r,b'_{\omega(r)}}\right)
\Wg_d(\pi^{-1}\omega),
\end{aligned}
\end{equation}
where $\Wg_d$ denotes the unitary Weingarten function~\cite{collins2006IntegrationRespectHaar}. Applying this to Eq.~\eqref{eq:mk-before-Wg}, relabeling $\omega\mapsto \omega^{-1}$, and using that $\Wg_d(\omega)=\Wg_d(\omega^{-1})$ for $\omega\in S_\ell$
and that $\Wg_d$ is a class function on $S_\ell$, we obtain
\begin{equation}
\label{eq:mk-master}
m_\ell(\rho)=\sum_{\pi,\omega\in S_\ell}\Wg_d(\pi\omega)T_{\pi,\omega}(\rho),
\end{equation}
where
\begin{equation}
\label{eq:Tpi-sigma-index}
T_{\pi,\omega}(\rho)
=
\sum_{i_1,\cdots,i_\ell}\sum_{\mu_1,\cdots,\mu_\ell}
\prod_{r=1}^\ell \rho_{i_r\mu_r,i_{\pi(r)}\mu_{\omega(r)}}.
\end{equation}
Equivalently, if $\mathbb{P}_\pi^{(\ell)}$ denotes the permutation operator for $\pi\in S_\ell$ on $\ell$ copies,
\begin{equation}\label{eq:P_permutation_def}
\mathbb{P}_\pi^{(\ell)}\left(v_1\otimes\cdots\otimes v_\ell\right)=v_{\pi^{-1}(1)}\otimes\cdots\otimes v_{\pi^{-1}(\ell)},
\end{equation}
then
\begin{equation}
\label{eq:Tpi-sigma-operator}
T_{\pi,\omega}(\rho)=\Tr\left[(\mathbb{P}_\pi^{(\ell),A}\otimes \mathbb{P}_\omega^{(\ell),B})\rho^{\otimes \ell}\right].
\end{equation}

Inserting Eq.~\eqref{eq:mk-master} and Eq.~\eqref{eq:Tpi-sigma-operator} into Eq.~\eqref{eq:M_alpha_binomial_entanglement}, we obtain
\begin{equation}\label{eq:R_evaluation_entanglement_app}
\norm{\rho}_{\widetilde{\mathbb{S}},\mathcal{U}_{\textup{loc}}(d,d),\alpha}^\alpha=1+\sum_{\ell=1}^\alpha\binom{\alpha}\ell(-1)^\ell\sum_{\pi,\omega\in S_\ell}\Wg_d(\pi\omega)
\Tr\left[(\mathbb{P}_\pi^{(\ell),A}\otimes \mathbb{P}_\omega^{(\ell),B})\rho^{\otimes\ell}\right],
\end{equation}
which is Eq.~\eqref{eq:R_evaluation_entanglement} in the main text.

We then show the explicit calculation for $\ell=1,2,3$ and prove Corollary~\ref{cor:WE_2_3_norm_entanglement}. 
\begin{proof}[Proof of Corollary~\ref{cor:WE_2_3_norm_entanglement}]
Denote
\begin{equation}
\rho_A=\Tr_B(\rho),
\qquad
\rho_B=\Tr_A(\rho),
\end{equation}
and define the abbreviations
\begin{equation}\label{eq:trace_abbr_app}
\begin{aligned}
r_2&=\Tr(\rho^2),
& r_3&=\Tr(\rho^3),\\
r_{2,A}&=\Tr(\rho_A^2),
& r_{2,B}&=\Tr(\rho_B^2),\\
r_{3,A}&=\Tr(\rho_A^3),
& r_{3,B}&=\Tr(\rho_B^3),\\
u_A&=\Tr\left[\rho^2(\rho_A\otimes \bI)\right],&
u_B&=\Tr\left[\rho^2(\bI\otimes \rho_B)\right],\\
\tau_3&=\Tr\left[(\rho^{\mathrm{T}_B})^3\right],&
v&=\Tr\left[\rho(\rho_A\otimes \rho_B)\right].
\end{aligned}
\end{equation}
\begin{itemize}
\item For $\ell=1$, the Weingarten value
\begin{equation}
\Wg_d(e)=\frac{1}{d},
\end{equation}
so $m_1(\rho)=1/d$. 

\item For $\ell=2$, the symmetric group $S_2=\{e,s\}$, with $s=(12)$. The Weingarten values are
\begin{equation}
\Wg_d(e)=\frac{1}{d^2-1},
\qquad
\Wg_d(s)=-\frac{1}{d(d^2-1)}.
\end{equation}
By Eq.~\eqref{eq:mk-master},
\begin{equation}
\label{eq:m2-Wg}
m_2(\rho)=\Wg_d(e)\left(T_{e,e}+T_{s,s}\right)+\Wg_d(s)\left(T_{e,s}+T_{s,e}\right).
\end{equation}
The four contractions are elementary:
\begin{equation}
\begin{aligned}
T_{e,e}&=\Tr(\rho)^2=1,
&T_{s,s}&=\Tr(\rho^2)=r_2,\\
T_{s,e}&=\Tr(\rho_A^2)=r_{2,A},
&T_{e,s}&=\Tr(\rho_B^2)=r_{2,B}.
\end{aligned}
\end{equation}
Substituting into Eq.~\eqref{eq:m2-Wg} gives
\begin{equation}
\label{eq:m2-raw}
m_2(\rho)=
\frac{1+r_2-(r_{2,A}+r_{2,B})/d}{d^2-1}.
\end{equation}
By Eq.~\eqref{eq:M_alpha_binomial_entanglement}, we have
\begin{equation}\label{eq:WE_2_norm_entanglement_app}
\norm{\rho}_{\widetilde{\mathbb{S}},\mathcal{U}_{\textup{loc}}(d,d),2}^2=1-2m_1+m_2=1-\frac{2}{d}+\frac{1+r_2-(r_{2,A}+r_{2,B})/d}{d^2-1}.
\end{equation}

\item For $\ell=3$, write
\begin{equation}
s_1=(12),\qquad s_2=(13),\qquad s_3=(23),\qquad c_+=(123),\qquad c_-=(132).
\end{equation}
The Weingarten values on the three conjugacy classes of $S_3$ are, for $d\geq3$,
\begin{equation}
\begin{aligned}
\Wg_d(e)&=\frac{d^2-2}{d(d^2-1)(d^2-4)},\\
\Wg_d(s_i)&=-\frac{1}{(d^2-1)(d^2-4)},\\
\Wg_d(c_\pm)&=\frac{2}{d(d^2-1)(d^2-4)}.
\end{aligned}
\end{equation}
We list the contractions $T_{\pi,\omega}(\rho)$ needed to evaluate Eq.~\eqref{eq:mk-master} 
(cf.\ Eq~\eqref{eq:trace_abbr_app}):
\begin{equation}
\begin{aligned}
T_{e,e}&=1,& T_{s_i,e}&=r_{2,A},& T_{e,s_i}&=r_{2,B},& T_{s_i,s_i}&=r_2,\\
T_{c_\pm,e}&=r_{3,A},& T_{e,c_\pm}&=r_{3,B},& T_{c_\pm,c_\pm}&=r_3,\\
T_{c_+,c_-}&=T_{c_-,c_+}=\tau_3,\\
T_{s_i,s_j}&=v, (i\neq j),\\
T_{c_\pm,s_i}&=u_A,& T_{s_i,c_\pm}&=u_B.
\end{aligned}
\end{equation}

All identities follow from direct contractions of tensor indices. We provide as example one representative computation for each orbit under relabeling of the three copies.

First,
\begin{equation}
T_{s_1,e}=\Tr\left[\left(\mathbb{P}_{(12)}^{(3),A}\otimes \bI\right)\rho^{\otimes3}\right]
=\Tr(\rho_A^2)\Tr(\rho)=r_{2,A}.
\end{equation}
Similarly, $T_{e,s_1}=r_{2,B}$, $T_{s_1,s_1}=r_2$, $T_{c_+,e}=r_{3,A}$, $T_{e,c_+}=r_{3,B}$, and $T_{c_+,c_+}=r_3$.

Next,
\begin{equation}
\begin{aligned}
T_{c_+,c_-}
&=
\sum_{\substack{i_1,i_2,i_3\\ \mu_1,\mu_2,\mu_3}}
\rho_{i_1\mu_1,i_2\mu_3}
\rho_{i_2\mu_2,i_3\mu_1}
\rho_{i_3\mu_3,i_1\mu_2}\\
&=
\sum_{\substack{i_1,i_2,i_3\\ \mu_1,\mu_2,\mu_3}}
(\rho^{\mathrm{T}_B})_{i_1\mu_3,i_2\mu_1}
(\rho^{\mathrm{T}_B})_{i_2\mu_1,i_3\mu_2}
(\rho^{\mathrm{T}_B})_{i_3\mu_2,i_1\mu_3}\\
&=\Tr\left[(\rho^{\mathrm{T}_B})^3\right]=\tau_3.
\end{aligned}
\end{equation}
By symmetry, $T_{c_-,c_+}=\tau_3$ as well.

For a pair of distinct transpositions, say $(s_1,s_2)$,
\begin{equation}
\begin{aligned}
T_{s_1,s_2}
&=
\sum_{\substack{i_1,i_2,i_3\\ \mu_1,\mu_2,\mu_3}}
\rho_{i_1\mu_1,i_2\mu_3}
\rho_{i_2\mu_2,i_1\mu_2}
\rho_{i_3\mu_3,i_3\mu_1}\\
&=
\sum_{i_1,i_2,\mu_1,\mu_3}
\rho_{i_1\mu_1,i_2\mu_3}
(\rho_A)_{i_2,i_1}
(\rho_B)_{\mu_3,\mu_1}\\
&=\Tr\left[\rho(\rho_A\otimes \rho_B)\right]=v.
\end{aligned}
\end{equation}
All other pairs $(s_i,s_j)$ with $i\neq j$ are equivalent by relabeling tensor copies.
For a transposition and a $3$-cycle, choose $(c_+,s_1)$, to get
\begin{equation}
\begin{aligned}
T_{c_+,s_1}
&=
\sum_{\substack{i_1,i_2,i_3\\ \mu_1,\mu_2,\mu_3}}
\rho_{i_1\mu_1,i_2\mu_2}
\rho_{i_2\mu_2,i_3\mu_1}
\rho_{i_3\mu_3,i_1\mu_3}\\
&=
\sum_{i_1,i_3,\mu_1}
(\rho^2)_{i_1\mu_1,i_3\mu_1}
(\rho_A)_{i_3,i_1}\\
&=\Tr\left[\rho^2(\rho_A\otimes \bI)\right]=u_A.
\end{aligned}
\end{equation}
Likewise,
\begin{equation}
\begin{aligned}
T_{s_1,c_+}
&=
\sum_{\substack{i_1,i_2,i_3\\ \mu_1,\mu_2,\mu_3}}
\rho_{i_1\mu_1,i_2\mu_2}
\rho_{i_2\mu_2,i_1\mu_3}
\rho_{i_3\mu_3,i_3\mu_1}\\
&=
\sum_{i_1,\mu_1,\mu_3}
(\rho^2)_{i_1\mu_1,i_1\mu_3}
(\rho_B)_{\mu_3,\mu_1}\\
&=\Tr\left[\rho^2(\bI\otimes \rho_B)\right]=u_B.
\end{aligned}
\end{equation}
All remaining identities are obtained from these representatives by permuting the three tensor copies.

We now evaluate Eq.~\eqref{eq:mk-master}. Grouping the pairs $(\pi,\omega)\in S_3\times S_3$ 
according to the cycle type of $\pi\omega$, we find
\begin{equation}
\begin{aligned}
\sum_{\pi\omega=e} T_{\pi,\omega}
&=1+3r_2+2\tau_3,\\
\sum_{\pi\omega\textup{ transposition}} T_{\pi,\omega}
&=3r_{2,A}+3r_{2,B}+6u_A+6u_B,\\
\sum_{\pi\omega\textup{ 3-cycle}} T_{\pi,\omega}
&=2r_{3,A}+2r_{3,B}+6v+2r_3.
\end{aligned}
\end{equation}
Therefore
\begin{equation}
\begin{aligned}
\label{eq:m3-raw}
m_3(\rho)
={}&
\frac{d^2-2}{d(d^2-1)(d^2-4)}\left(1+3r_2+2\tau_3\right)\\
&
-\frac{1}{(d^2-1)(d^2-4)}\left(3r_{2,A}+3r_{2,B}+6u_A+6u_B\right)\\
&
+\frac{2}{d(d^2-1)(d^2-4)}\left(2r_{3,A}+2r_{3,B}+6v+2r_3\right).
\end{aligned}
\end{equation}
Finally,
\begin{equation}
\norm{\rho}_{\widetilde{\mathbb{S}},\mathcal{U}_{\textup{loc}}(d,d),3}^3=1-3m_1+3m_2-m_3,
\end{equation}
where $m_1=1/d$ and $m_2$ is given by Eq.~\eqref{eq:m2-raw}. Substituting these values 
yields Eq.~\eqref{eq:R3_entanglement}.
\end{itemize}
\end{proof}

We then calculate the threshold value 
$C_{\widetilde{\mathbb{S}},\mathcal{U}_{\textup{loc}}(d,d),\alpha}$ for all separable states.

\begin{theorem}[Exact separable threshold]\label{thm:exact_separable_threshold_app}
For every $\alpha\geq1$,
\begin{equation}
C_{\widetilde{\mathbb{S}},\mathcal{U}_{\textup{loc}}(d,d),\alpha}=
\left(\frac{d-1}{d+\alpha-1}\right)^{1/\alpha}.
\end{equation}
In particular,
\begin{equation}
C_{\widetilde{\mathbb{S}},\mathcal{U}_{\textup{loc}}(d,d),1}=\frac{d-1}{d},\qquad
C_{\widetilde{\mathbb{S}},\mathcal{U}_{\textup{loc}}(d,d),2}=\sqrt{\frac{d-1}{d+1}},\qquad
C_{\widetilde{\mathbb{S}},\mathcal{U}_{\textup{loc}}(d,d),3}=\sqrt[3]{\frac{d-1}{d+2}}.
\end{equation}
\end{theorem}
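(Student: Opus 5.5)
The plan is to reduce the threshold to a single pure product state and then compute a Haar moment explicitly.

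\textbf{Reduction to one pure product state.} The separable set is the convex hull of pure product states $\ket{a}\ket{b}$. Any such state is mapped to $\ket{0}\ket{0}$ by a local unitary, so all pure free states form a single orbit under $\mathcal{U}_{\textup{loc}}(d,d)$. Since $\widetilde{\mathbb{S}}\geq0$, Observation~\ref{obs:threshold_single_orbit} applies, and it suffices to evaluate the WE $\alpha$-norm on $\rho=\ketbra{0,0}{0,0}$.

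\textbf{Computing the moment.} By Lemma~\ref{lem:relativeU}, only the relative unitary matters. With $\ket{v}=U\ket{0}$ Haar-random on the unit sphere of $\mathbb{C}^d$, we have
\begin{equation}
y_\rho(U)=\Tr[\mathbb{S}(\ketbra{0}{0}\otimes\ketbra{v}{v})]=\abs{\braket{0}{v}}^2 .
\end{equation}
Hence
\begin{equation}
\norm{\rho}^\alpha=\E\left[(1-t)^\alpha\right],\qquad t=\abs{\braket{0}{v}}^2 .
\end{equation}
For a Haar-random vector, $t\sim\mathrm{Beta}(1,d-1)$, with density $(d-1)(1-t)^{d-2}$ on $[0,1]$. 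Therefore
\begin{equation}
\E\left[(1-t)^\alpha\right]=(d-1)\int_0^1(1-t)^{\alpha+d-2}\,\mathrm{d}t=\frac{d-1}{d+\alpha-1}.
\end{equation}
This holds for all real $\alpha\geq1$, not only integers. Taking the $\alpha$-th root gives the claimed threshold, and the special cases $\alpha=1,2,3$ follow by substitution.

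\textbf{Main obstacle.} The only point needing care is justifying the convexity and orbit reduction, namely that the maximum over $\Sep$ is attained at pure product states and that all of these are equivalent. Minkowski's inequality (positivity of $\widetilde{\mathbb{S}}$) gives convexity, and the local unitary transitivity on product vectors gives the orbit property. Beyond that, one only needs the standard fact about the Beta distribution of $\abs{\braket{0}{v}}^2$. As a consistency check, the $\alpha=2$ case can be verified against Eq.~\eqref{eq:WE_2_norm_entanglement_app} with $r_2=r_{2,A}=r_{2,B}=1$:
\begin{equation}
1-\frac{2}{d}+\frac{2-2/d}{d^2-1}=\frac{d-1}{d+1}.
\end{equation}
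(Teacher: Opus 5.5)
Your proof is correct and follows essentially the same route as the paper: convexity reduces the supremum over $\Sep$ to pure product states, and the $\mathrm{Beta}(1,d-1)$ law of $\abs{\langle a|U|b\rangle}^2$ for Haar $U$ gives $(d-1)/(d+\alpha-1)$. The only difference is cosmetic. You use local-unitary transitivity to fix $\ket{0,0}$, whereas the paper evaluates an arbitrary pure product state $\ket{a}\ket{b}$ directly, since that distribution does not depend on $a,b$.
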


\begin{proof}[Proof of Theorem~\ref{thm:exact_separable_threshold_app}]
Notice that $\norm{\rho}_{\widetilde{\mathbb{S}},\mathcal{U}_{\textup{loc}}(d,d),\alpha}^\alpha$ in Eq.~\eqref{eq:M_alpha_rho} is convex in $\rho$, because for every fixed $U_A,U_B$ the map $\rho\mapsto 1-y_\rho(U_A,U_B)$ is linear, while $x\mapsto x^\alpha$ is convex on $[0,\infty)$ and averaging preserves convexity. Therefore, its maximum is attained at an extreme point of $\Sep$, namely a pure product state.
Let $\sigma=|a\rangle\langle a|\otimes |b\rangle\langle b|$. Then (cf.\ Eq.~\eqref{eq:y_rho_U_relative})
\begin{equation}
y_\sigma(U_A,U_B)=\abs{\langle a|U_A^\dagger U_B|b\rangle}^2.
\end{equation}
If $U$ is Haar distributed on $\mathrm{U}(d)$, the random variable
\begin{equation}
\zeta=|\langle a|U|b\rangle|^2
\end{equation}
has the Beta$(1,d-1)$ distribution, with density $(d-1)(1-z)^{d-2}$ for $z\in[0,1]$. Hence
\begin{equation}
\begin{aligned}
\norm{\sigma}_{\widetilde{\mathbb{S}},\mathcal{U}_{\textup{loc}}(d,d),\alpha}^\alpha
&=\E_{U\sim\mu_{\mathrm{U}(d)}}[(1-\zeta)^\alpha]\\
&=(d-1)\int_0^1 (1-z)^{\alpha+d-2}dz
\\&=\frac{d-1}{d+\alpha-1},
\end{aligned}
\end{equation}
and
\begin{equation}
C_{\widetilde{\mathbb{S}},\mathcal{U}_{\textup{loc}}(d,d),\alpha}=\left(\frac{d-1}{d+\alpha-1}\right)^{1/\alpha}.
\end{equation}
\end{proof}

\begin{proof}[Proof of Theorem~\ref{thm:WE_criterion_entanglement}]
Combining the expression for WE $\alpha$-norm in Eq.~\eqref{eq:R_evaluation_entanglement_app} and Theorem~\ref{thm:exact_separable_threshold_app}, we finish the proof of Theorem~\ref{thm:WE_criterion_entanglement}.
\end{proof}

\subsection{Optimal choice for second-order criterion}\label{app:optimal_2_criterion_entanglement}
We prove that when $\alpha=2$, among all $W_2$ restricted to the form of Eq.~\eqref{eq:nonlinear_witness_twirling}, taking $\widetilde{W}=\widetilde{\mathbb{S}}$ yields the best detection capability.

\begin{theorem}[Optimal seed operator of the WE $2$-criterion for entanglement]\label{thm:optimal_swap_entanglement}
Let $O$ be any Hermitian operator on $\mathcal H_A\otimes\mathcal H_B$. If a state $\rho$ can be detected by the WE 2-criterion constructed by $O$, then the WE 2-criterion constructed by $\widetilde{\mathbb{S}}=\bI-\mathbb{S}$ also detects $\rho$. Equivalently,
\begin{equation}
\label{eq:O_detect_implies_swap_detect}
\norm{\rho}_{O,\mathcal U_{\textup{loc}}(d,d),2}
>
C_{O,\mathcal U_{\textup{loc}}(d,d),2}
\quad\Longrightarrow\quad
\norm{\rho}_{\widetilde{\mathbb{S}},\mathcal U_{\textup{loc}}(d,d),2}
>
C_{\widetilde{\mathbb{S}},\mathcal U_{\textup{loc}}(d,d),2}.
\end{equation}
\end{theorem}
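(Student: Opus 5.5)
The plan is to reduce every WE $2$-criterion under $\mathcal U_{\textup{loc}}(d,d)$ to a linear inequality in three purity variables, and then to compare such inequalities using separable states. By the Schur--Weyl duality used in the proof of Theorem~\ref{thm:WE_criterion_entanglement}, the two-copy twirl $\mathbb E_{U_A,U_B}[(U_A\otimes U_B)^{\dagger\otimes2}O^{\otimes2}(U_A\otimes U_B)^{\otimes2}]$ lies in the span of $\bI$, $\mathbb S_A$, $\mathbb S_B$, $\mathbb S_A\mathbb S_B$ on the two copies. Hence for any Hermitian $O$ there are real constants $c_0,c_A,c_B,c_{AB}$, determined by $O$ through the Weingarten function, such that
\begin{equation}
\norm{\rho}_{O,\mathcal U_{\textup{loc}}(d,d),2}^2=c_0+c_A r_{2,A}+c_B r_{2,B}+c_{AB} r_2 .
\end{equation}

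Because $\alpha=2$ is even, this functional is convex in $\rho$ for any Hermitian seed (Section~\ref{subsec:generalization}). Its maximum over $\Sep$ is therefore attained on pure product states. These form a single $\mathcal U_{\textup{loc}}$-orbit, so by Observation~\ref{obs:threshold_single_orbit} we get $C_{O}^2=c_0+c_A+c_B+c_{AB}$. Introduce the nonnegative deficits $x=1-r_2$, $a=1-r_{2,A}$, $b=1-r_{2,B}$. The $O$-criterion then reads $c_{AB}x+c_Aa+c_Bb<0$. By Corollary~\ref{cor:WE_2_3_norm_entanglement} and Theorem~\ref{thm:exact_separable_threshold_app}, the SWAP criterion reads $dx<a+b$, equivalently $dx-a-b<0$.

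It therefore suffices to show the following. Let $L(x,a,b)=c_{AB}x+c_Aa+c_Bb$ satisfy $L\ge0$ at every point $(x,a,b)$ realized by a separable state; this holds automatically because $C_O$ is the separable supremum. Then every state with $L<0$ also has $dx<a+b$.

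First I extract sign constraints on the coefficients from explicit separable families. Product states $\ket{\phi}\bra{\phi}\otimes\rho_B$ give the points $(b,0,b)$, hence $c_{AB}+c_B\ge0$. Symmetrically, $\rho_A\otimes\ket{\phi}\bra{\phi}$ gives $c_{AB}+c_A\ge0$.

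Next I use separable states lying on the SWAP boundary plane $dx=a+b$. Two kinds are available: the Werner states at the separability boundary $\Tr(\mathbb S\rho)=0$, and the separable states obtained by mixing these with product states or locally depolarizing them. Applying these as test points is meant to show that $L\ge0$ on the whole plane $dx=a+b$ within the physically attainable region, which should force $c_A+c_{AB}/d\ge0$ and $c_B+c_{AB}/d\ge0$.

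Then I split on the sign of $c_{AB}$. If $c_{AB}>0$, then $L<0$ gives $x<-(c_Aa+c_Bb)/c_{AB}\le(a+b)/d$, using the two constraints just derived. If $c_{AB}\le0$, I will argue that $L<0$ is impossible on the attainable region: it would be violated by the maximally mixed state or by a product state. For this I use that $x$ is bounded by $a,b$ through the relations $r_2\ge r_{2,A}/d$ and $r_2\ge r_{2,B}/d$, together with the extreme product and classically correlated separable points.

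The main obstacle is the second step. One must show that the separable states realize enough of the boundary plane $dx=a+b$, with independently tunable local purities $a,b$, to force the coefficient inequalities. Equivalently, one must show that the linear functional $L$ cannot dip below zero on entangled points lying outside the SWAP region without also doing so on some separable point. I would first try a concrete two-parameter separable family: $(U\otimes\bar U)$-twirled isotropic-type states, mixed with product states of tunable local purity, and check that it traces the plane $dx=a+b$ near the needed extreme directions. If that family is insufficient, I would fall back on a direct check of the finitely many extreme rays of the attainable $(x,a,b)$ cone.
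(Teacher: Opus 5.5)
Your reduction to $c_0+c_Ar_{2,A}+c_Br_{2,B}+c_{AB}r_2$ with threshold $c_0+c_A+c_B+c_{AB}$ agrees with the paper. However, the sentence ``It therefore suffices to show \dots'' replaces the theorem by a statement that is false.

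Nonnegativity of $L$ on separable points does not force $c_A+c_{AB}/d\ge0$. The purity criterion corresponds to $L=x-a$ (so $c_{AB}=1$, $c_A=-1$, $c_B=0$). This $L$ is $\ge0$ on $\Sep$ because separable states obey $\Tr(\rho^2)\le\Tr(\rho_A^2)$, yet $c_A+c_{AB}/d=-1+1/d<0$.

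Concretely, take $d=3$ and the isotropic state $0.6\,\ketbra{\Phi_3}{\Phi_3}+0.4\,\bI/9$. It has $r_2=3.88/9>1/3=r_{2,A}=r_{2,B}$, so $L<0$. But $dx\approx1.71>4/3=a+b$, so the SWAP WE $2$-criterion misses it.

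Hence your second step cannot be completed by any choice of separable test points. Indeed, since $x\ge a$ and $x\ge b$ on $\Sep$, a separable point on the plane $dx=a+b$ must satisfy $b\ge(d-1)a$ and $a\ge(d-1)b$. For $d\ge3$ this forces $(x,a,b)=(0,0,0)$, so separable states touch that plane only at pure product states. In particular, the Werner states with $\Tr(\mathbb S\rho)=0$ are not on it: they have $a=b=1-1/d$ and $x=1-1/(d^2-1)$, hence $dx>a+b$.

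The missing idea is that the sign constraints come from the fact that the coefficients arise by twirling $O^{\otimes2}$, not from validity on $\Sep$. The paper expands $O=\sum_{j,k}o_{j,k}P_j\otimes P_k$ in a Hilbert--Schmidt orthonormal basis with $P_0=\bI/\sqrt d$. The explicit Weingarten coefficients then become sums of squares:
\begin{itemize}
\item $\alpha_{A,B}=\sum_{j,k\ge1}|o_{j,k}|^2/(d^2-1)^2\ge0$;
\item $\alpha_A+\alpha_{A,B}/d=\sum_{j\ge1}|o_{j,0}|^2/(d(d^2-1))\ge0$, and similarly for $B$.
\end{itemize}

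An equivalent shortcut is available. The map $\sigma\mapsto\mathbb E_U\Tr(U^\dagger OU\sigma)^2$ is nonnegative for every Hermitian $\sigma$, not only for states. Evaluating it at traceless $\sigma=X\otimes Y$ gives $\alpha_{A,B}\Tr(X^2)\Tr(Y^2)\ge0$. Evaluating it at $\sigma=X\otimes\bI$ gives $d^2\Tr(X^2)(\alpha_A+\alpha_{A,B}/d)\ge0$. These non-physical test operators are exactly what separable states cannot supply.

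With these constraints, your $c_{AB}>0$ argument goes through verbatim. The case $c_{AB}=0$ gives $L\ge0$ trivially, and $c_{AB}<0$ never occurs.
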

\begin{proof}[Proof of Theorem~\ref{thm:optimal_swap_entanglement}]
Consider the second-order local unitary twirling of $O^{\otimes2}$ (see e.g. Corollary~13 in Ref.~\cite{Mele2024introductiontohaar})
\begin{equation}\label{eq:tilde_O_2-twirling}
\E_{U_A,U_B\sim\mu_{\mathrm{U}(d)}}
\left[
{(U_A\otimes U_B)^\dagger}^{\otimes2}O^{\otimes2}(U_A\otimes U_B)^{\otimes2}
\right]=
\alpha_0\bI+\alpha_A\mathbb S_A+\alpha_B\mathbb S_B+\alpha_{A,B}\mathbb S_A\mathbb S_B,
\end{equation}
where $\mathbb{S}_A$ ($\mathbb{S}_B$) denotes the SWAP operator between two copies of subsystem $A$ ($B$).
The coefficients are
\begin{equation}\label{eq:coefficients_2-twirling}
\begin{aligned}
\alpha_0&=
\frac{
(\Tr O)^2
-\Tr[(\Tr_B O)^2]/d
-\Tr[(\Tr_AO)^2]/d
+\Tr(O^2)/d^2
}{(d^2-1)^2},\\
\alpha_A&=
\frac{
\Tr[(\Tr_B O)^2]
-(\Tr O)^2/d
-\Tr(O^2)/d
+\Tr[(\Tr_AO)^2]/d^2
}{(d^2-1)^2},\\
\alpha_B&=
\frac{
\Tr[(\Tr_AO)^2]
-(\Tr O)^2/d
-\Tr(O^2)/d
+\Tr[(\Tr_B O)^2]/d^2
}{(d^2-1)^2},\\
\alpha_{A,B}&=
\frac{
\Tr(O^2)
-\Tr[(\Tr_B O)^2]/d
-\Tr[(\Tr_AO)^2]/d
+(\Tr O)^2/d^2
}{(d^2-1)^2}.
\end{aligned}
\end{equation}
Choose an orthonormal (w.r.t.\ Hilbert--Schmidt inner product) basis $\{P_j\}_{j=0}^{d^2-1}$ with $P_0=\bI/\sqrt d$ of $\mathcal{L}\left(\mathbb{C}^d\right)$ and expand
\begin{equation}
O=\sum_{j,k=0}^{d^2-1} o_{j,k}P_j\otimes P_k.
\end{equation}
Through a direct calculation, the coefficients in Eq.~\eqref{eq:coefficients_2-twirling} satisfy
\begin{equation}\label{eq:tilde_O_coefficients_inequality}
\alpha_{A,B}
=
\frac{\sum_{j,k\ge1}\abs{o_{j,k}}^2}{(d^2-1)^2}\ge0,\quad
\alpha_A+\frac{\alpha_{A,B}}{d}
=
\frac{\sum_{j\ge1}\abs{o_{j,0}}^2}{d(d^2-1)}\ge0,
\quad
\alpha_B+\frac{\alpha_{A,B}}{d}
=
\frac{\sum_{j\ge1}\abs{o_{0,j}}^2}{d(d^2-1)}\ge0.
\end{equation}
Using Eq.~\eqref{eq:tilde_O_2-twirling}, the WE 2-norm is given by (cf. Eq.~\eqref{eq:trace_abbr_app})
\begin{equation}
\label{eq:WE_2_norm_O_entanglement}
\norm{\rho}_{O,\mathcal U_{\textup{loc}}(d,d),2}^2
=
\alpha_0+\alpha_A r_{2,A}+\alpha_B r_{2,B}+\alpha_{A,B}r_2,
\end{equation}
and the corresponding threshold value
\begin{equation}
\label{eq:C_2_O_entanglement}
C_{O,\mathcal U_{\textup{loc}}(d,d),2}^2
=
\alpha_0+\alpha_A+\alpha_B+\alpha_{A,B}.
\end{equation}
We have
\begin{equation}
\label{eq:O_minus_CO_insert_new}
\norm{\rho}_{O,\mathcal U_{\textup{loc}}(d,d),2}^2
-
C_{O,\mathcal U_{\textup{loc}}(d,d),2}^2
=
\alpha_{A,B}(r_2-1)+\alpha_A(r_{2,A}-1)+\alpha_B(r_{2,B}-1).
\end{equation}
Since $r_{2,A},r_{2,B},r_2\le1$, and notice that Eq.~\eqref{eq:tilde_O_coefficients_inequality} implies
\begin{equation}
\alpha_A(r_{2,A}-1)\le -\frac{\alpha_{A,B}}{d}(r_{2,A}-1),
\qquad
\alpha_B(r_{2,B}-1)\le -\frac{\alpha_{A,B}}{d}(r_{2,B}-1).
\end{equation}
If $\rho$ can be detected by the WE 2-criterion constructed by $O$, then
\begin{equation}
\begin{aligned}
0<\norm{\rho}_{O,\mathcal U_{\textup{loc}}(d,d),2}^2
-
C_{O,\mathcal U_{\textup{loc}}(d,d),2}^2&
\le
\alpha_{A,B}(r_2-1)
-\frac{\alpha_{A,B}}{d}(r_{2,A}-1)
-\frac{\alpha_{A,B}}{d}(r_{2,B}-1)\\
&=
\frac{\alpha_{A,B}}{d}\left(d r_2-r_{2,A}-r_{2,B}-d+2\right).
\end{aligned}
\end{equation}
Since $\alpha_{A,B}\ge0$, we have in this case $\alpha_{A,B}>0$, so
\begin{equation}
 d r_2>r_{2,A}+r_{2,B}+d-2,
\end{equation}
which is exactly the condition of WE 2-criterion constructed by $\widetilde{\mathbb{S}}$, see Eq.~\eqref{eq:WE_2_criterion_entanglement}.
\end{proof}

\subsection{Proof of Theorem~\ref{thm:WE_infinity_criterion_entanglement}}\label{app:proof_WE_infinity_criterion_entanglement}
We then take the $\alpha\rightarrow\infty$ limit, the WE criterion reduces to asking whether the partial transposition $\rho^{\mathrm{T}_B}$ has a negative value when projected to the basis of the family of maximally entangled states. We reformulate Theorem~\ref{thm:WE_infinity_criterion_entanglement} as the theorem below.

\begin{theorem}[WE $\infty$-criterion for entanglement]
\label{thm:WE_infinity_criterion_entanglement_app}
Let $|\Phi_d\rangle:=\sum_{j=0}^{d-1}|j,j\rangle/{\sqrt d}$ and define the maximally entangled orbit
as
\begin{equation}
\ME_d:=\left\{(U_A\otimes U_B)|\Phi_d\rangle\middle\mid U_A,U_B\in\mathrm{U}(d)\right\}.
\end{equation}
For every state $\rho$,
\begin{equation}
\lim_{\alpha\to\infty} \norm{\rho}_{\widetilde{\mathbb{S}},\mathcal{U}_{\textup{loc}}(d,d),\alpha}
=
1-d\min_{\ket{\Phi}\in\ME_d}\bra{\Phi}\rho^{\mathrm{T}_B}\ket{\Phi}.
\end{equation}
Moreover,
\begin{equation}
\lim_{\alpha\to\infty} C_{\widetilde{\mathbb{S}},\mathcal{U}_{\textup{loc}}(d,d),\alpha}=1,
\end{equation}
and more precisely,
\begin{equation}
\log C_{\widetilde{\mathbb{S}},\mathcal{U}_{\textup{loc}}(d,d),\alpha}=
-\frac{\log\left(\alpha/(d-1)\right)}{\alpha}
+O\left(\frac{1}{\alpha^2}\right),
\qquad \alpha\to\infty.
\end{equation}
Consequently,
\begin{equation}
C_{\widetilde{\mathbb{S}},\mathcal{U}_{\textup{loc}}(d,d),\alpha}
=
1-\frac{\log\left(\alpha/(d-1)\right)}{\alpha}
+O\left(\frac{\log^2\alpha}{\alpha^2}\right).
\end{equation}
\end{theorem}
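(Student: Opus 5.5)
The proof has three parts: the limit of the norm, the asymptotics of the threshold, and the stated consequence.

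\emph{Limit of the norm.} By Lemma~\ref{lem:relativeU}, the WE $\alpha$-norm is the $L^\alpha$-norm, over Haar-random $U\in\mathrm{U}(d)$, of the continuous nonnegative function
\begin{equation}
g_\rho(U)=1-y_\rho(U),\qquad y_\rho(U)=\Tr\left[\mathbb{S}(\bI\otimes U)\rho(\bI\otimes U^\dagger)\right].
\end{equation}
Proposition~\ref{prop:property_witness_ensemble} gives convergence of the norm to the essential supremum as $\alpha\to\infty$. Lemma~\ref{lem:esssup_equals_sup} (Haar measure has full support) identifies this with the true supremum. Hence
\begin{equation}
\lim_{\alpha\to\infty}\norm{\rho}_{\widetilde{\mathbb{S}},\mathcal{U}_{\textup{loc}}(d,d),\alpha}=1-\min_{U\in\mathrm{U}(d)}y_\rho(U).
\end{equation}
It remains to rewrite $y_\rho(U)$. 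I would use $\mathbb{S}^{\mathrm{T}_B}=d\ketbra{\Phi_d}{\Phi_d}$ together with invariance of the trace pairing under partial transpose, $\Tr(XY)=\Tr(X^{\mathrm{T}_B}Y^{\mathrm{T}_B})$. This gives
\begin{equation}
y_\rho(U)=\Tr\left[(\bI\otimes U^\dagger)\mathbb{S}(\bI\otimes U)\rho\right]=d\,\Tr\left[(\bI\otimes U^{\mathrm{T}})\ketbra{\Phi_d}{\Phi_d}(\bI\otimes \overline{U})\,\rho^{\mathrm{T}_B}\right].
\end{equation}
This equals $d\bra{\Phi}\rho^{\mathrm{T}_B}\ket{\Phi}$ with $\ket{\Phi}=(\bI\otimes U^{\mathrm{T}})\ket{\Phi_d}$. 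As $U$ ranges over $\mathrm{U}(d)$, so does $U^{\mathrm{T}}$. By the transpose trick, $\{(\bI\otimes V)\ket{\Phi_d}\}$ exhausts $\ME_d$ up to global phase. The minimum exists by compactness, and the first claim follows. The only care needed is keeping track of transposes versus conjugates, which does not affect the orbit.

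\emph{Asymptotics of the threshold.} Theorem~\ref{thm:exact_separable_threshold_app} gives
\begin{equation}
\log C_{\widetilde{\mathbb{S}},\mathcal{U}_{\textup{loc}}(d,d),\alpha}=\frac{1}{\alpha}\left[\log(d-1)-\log(\alpha+d-1)\right].
\end{equation}
Writing $\log(\alpha+d-1)=\log\alpha+\log(1+(d-1)/\alpha)=\log\alpha+O(1/\alpha)$ yields
\begin{equation}
\log C_{\widetilde{\mathbb{S}},\mathcal{U}_{\textup{loc}}(d,d),\alpha}=-\frac{\log(\alpha/(d-1))}{\alpha}+O(\alpha^{-2}),
\end{equation}
so $C_{\widetilde{\mathbb{S}},\mathcal{U}_{\textup{loc}}(d,d),\alpha}\to1$. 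Exponentiating with $e^{-x}=1-x+O(x^2)$ and $x=\Theta(\log\alpha/\alpha)$ gives the stated expansion with error $O(\log^2\alpha/\alpha^2)$. This coincides with Proposition~\ref{prop:property_witness_ensemble}(3), since $C_{\widetilde{\mathbb{S}},\mathcal{U}_{\textup{loc}}(d,d),\infty}=1$: a pure product state attains $y=0$ for $U$ mapping $\ket{b}$ orthogonally to $\ket{a}$.

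\emph{Consequence.} If some $\ket{\Phi}\in\ME_d$ has $\bra{\Phi}\rho^{\mathrm{T}_B}\ket{\Phi}<0$, the limit of the norm exceeds $1=\lim_{\alpha\to\infty}C_{\widetilde{\mathbb{S}},\mathcal{U}_{\textup{loc}}(d,d),\alpha}$. Corollary~\ref{cor:eventual_detection}, or Proposition~\ref{prop:inf_single_witness_unitary} directly, then shows $\rho$ is detected for large $\alpha$ and is therefore entangled.

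The main obstacle is the partial-transpose bookkeeping in the first part. The asymptotics are routine.
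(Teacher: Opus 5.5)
Your proof is correct and follows essentially the same route as the paper. The common steps are convergence of the $L^\alpha$-norm to the maximum of the continuous witness-expectation function, the rewriting via $\mathbb{S}^{\mathrm{T}_B}=d\ketbra{\Phi_d}{\Phi_d}$ together with $\Tr(XY)=\Tr(X^{\mathrm{T}_B}Y^{\mathrm{T}_B})$, and the elementary expansion of $\log C$ from the exact separable threshold. The only differences are cosmetic: you invoke the general Appendix~A results and the relative-unitary reduction, whereas the paper redoes the $\varepsilon$-argument directly on $\mathrm{U}(d)\times\mathrm{U}(d)$. Your transpose bookkeeping is right, and $\{(\bI\otimes V)\ket{\Phi_d}\}_{V\in\mathrm{U}(d)}$ equals $\ME_d$ exactly, with no global phase needed.
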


\begin{proof}[Proof of Theorem~\ref{thm:WE_infinity_criterion_entanglement_app}]
We first prove the $L^\alpha\to L^\infty$ limit. Set
\begin{equation}
f(U_A,U_B)=1-y_\rho(U_A,U_B).
\end{equation}
The domain $\mathrm{U}(d)\times\mathrm{U}(d)$ is compact, and $f$ is continuous; hence $f$ attains its maximum $M=\max f$. Since the Haar measure is normalized, $\norm{\rho}_{\widetilde{\mathbb{S}},\mathcal{U}_{\textup{loc}}(d,d),\alpha}=\norm{f}_{L^\alpha}$. Clearly $\norm{f}_{L^\alpha}\le M$ for all $\alpha$. Fix $\varepsilon>0$, and let $\left(U_A^{(0)},U_B^{(0)}\right)$ be a point where $f=M$. By continuity, the open set
\begin{equation}
\Omega_\varepsilon=\left\{(U_A,U_B): f(U_A,U_B)>M-\varepsilon\right\}
\end{equation}
has strictly positive Haar measure. Therefore
\begin{equation}
\norm{\rho}_{\widetilde{\mathbb{S}},\mathcal{U}_{\textup{loc}}(d,d),\alpha}
\geq
\left[(M-\varepsilon)^\alpha\mu(\Omega_\varepsilon)\right]^{1/\alpha}
=(M-\varepsilon)\mu(\Omega_\varepsilon)^{1/\alpha}.
\end{equation}
Letting $\alpha\to\infty$ gives
\begin{equation}
\liminf_{\alpha\to\infty} \norm{\rho}_{\widetilde{\mathbb{S}},\mathcal{U}_{\textup{loc}}(d,d),\alpha}\geq M-\varepsilon.
\end{equation}
Since $\varepsilon>0$ is arbitrary, one obtains
\begin{equation}
\lim_{\alpha\to\infty}\norm{\rho}_{\widetilde{\mathbb{S}},\mathcal{U}_{\textup{loc}}(d,d),\alpha}=M=1-\min_{U_A,U_B}y_\rho(U_A,U_B).
\end{equation}

Next, we rewrite the maximum in terms of $\rho^{\mathrm{T}_B}$. Notice that
\begin{equation}
\mathbb{S}^{\mathrm{T}_B}=d|\Phi_d\rangle\langle\Phi_d|.
\end{equation}
Using $\Tr(O_1O_2)=\Tr(O_1^{\mathrm{T}_B}O_2^{\mathrm{T}_B})$ for $O_1,O_2\in\mathcal{L}(\mathcal{H}_A\otimes\mathcal{H}_B)$, we obtain
\begin{equation}
\begin{aligned}
&\Tr\left[\mathbb{S}(U_A\otimes U_B)\rho(U_A\otimes U_B)^\dagger\right]\\
=&\Tr\left[\mathbb{S}^{\mathrm{T}_B}\left((U_A\otimes U_B)\rho(U_A\otimes U_B)^\dagger\right)^{\mathrm{T}_B}\right]\\
=&d\langle\Phi_d|(U_A\otimes U_B^*)\rho^{\mathrm{T}_B}(U_A^\dagger\otimes U_B^\mathrm{T})|\Phi_d\rangle.
\end{aligned}
\end{equation}
Define
\begin{equation}
\ket{\Phi}=(U_A^\dagger\otimes U_B^\mathrm{T})|\Phi_d\rangle\in\ME_d.
\end{equation}
Then
\begin{equation}
y_\rho(U_A,U_B)=\Tr\left[\mathbb{S}(U_A\otimes U_B)\rho(U_A\otimes U_B)^\dagger\right]=d\bra{\Phi}\rho^{\mathrm{T}_B}\ket{\Phi}.
\end{equation}
Taking the maximum over local unitaries is therefore equivalent to taking the minimum over $\ME_d$, and we conclude that
\begin{equation}
\lim_{\alpha\to\infty}\norm{\rho}_{\widetilde{\mathbb{S}},\mathcal{U}_{\textup{loc}}(d,d),\alpha}=1-d\min_{\ket{\Phi}\in\ME_d}\bra{\Phi}\rho^{\mathrm{T}_B}\ket{\Phi}.
\end{equation}

Finally, for $\alpha\geq1$,
\begin{equation}
C_{\widetilde{\mathbb{S}},\mathcal{U}_{\textup{loc}}(d,d),\alpha}=\left(\frac{d-1}{d+\alpha-1}\right)^{1/\alpha},
\end{equation}
so
\begin{equation}
\begin{aligned}
\log C_{\widetilde{\mathbb{S}},\mathcal{U}_{\textup{loc}}(d,d),\alpha}
&=\frac{\log(d-1)-\log(d+\alpha-1)}{\alpha}\\
&=-\frac{\log(\alpha/(d-1))}{\alpha}
-\frac{1}{\alpha}\log\left(1+\frac{d-1}{\alpha}\right)\\
&=-\frac{\log(\alpha/(d-1))}{\alpha}+O\left(\frac{1}{\alpha^2}\right).
\end{aligned}
\end{equation}
\end{proof}

\section{Qudit magic for odd prime local dimension}\label{app:qudit_magic}
We denote the dimension $d=q^n$. In the field $\mathbb{Z}_q$, addition and multiplication are well-defined in the sense of modulo $q$. Denote the $q$-th root of unity as $\omega_q=e^{i2\pi/q}$ and define the Weyl operators for $z,x\in\mathbb{Z}_q^n$ as
\begin{equation}
\begin{gathered}
Z(z)\ket{t}=\omega_q^{z\cdot t}\ket{t},\qquad X(x)\ket{t}=\ket{t+x},\quad\textup{for basis vector }\ket{t}\in\mathcal{H}\left((\mathbb{C}_q)^{\otimes n}\right),\\
T_{z,x}=\omega_q^{-2^{-1}z\cdot x}Z(z)X(x).
\end{gathered}
\end{equation}
Introducing the symplectic form for $u=(z_u,x_u)$ and $v=(z_v,x_v)$ in $V\coloneqq\mathbb{Z}_q^n\times\mathbb{Z}_q^n$ as
\begin{equation}
[u,v]=z_u\cdot x_v-x_u\cdot z_v,
\end{equation}
the multiplication of two Weyl operators is
\begin{equation}
T_uT_v=\omega_q^{2^{-1}[u,v]}T_{u+v}.
\end{equation}
Using this as the group action, the $n$-qudit Heisenberg-Weyl group with local dimension $q$ is defined as
\begin{equation}
\mathsf{P}_{n,q}=\left\{\omega_q^aT_{z,x}\middle\mid a\in\mathbb{Z}_q,z,x\in\mathbb{Z}_q^n\right\}.
\end{equation}
The set of free unitaries is the $n$-qudit Clifford group, which is the normalizer of $\mathsf{P}_{n,q}$
\begin{equation}
\Ufree=\Cl_{n,q}=\{U\in \mathrm{U}(d)\mid U\mathsf{P}_{n,q}U^\dagger=\mathsf{P}_{n,q}\}.
\end{equation}
The set of free states ($n$-qudit stabilizer polytope) and the set of stabilizer states are
\begin{equation}
\Fset=\STAB_{n,q}=\textup{Conv}(\Sigma_{n,q}),\qquad\Sigma_{n,q}=\left\{U\ket{0^n}\middle| U\in \Cl_{n,q}\right\}.
\end{equation}

The set of pure stabilizer states is characterized by the discrete Hudson's theorem~\cite{gross2006Hudson}, by evaluating the discrete Wigener function. 
\begin{lemma}[Discrete Hudson's theorem~\cite{gross2006Hudson}]
For an odd prime integer $q$ and any $\rho\in\mathcal{D}\left(\mathcal{H}\left((\mathbb{C}_q)^{\otimes n}\right)\right)$, define its discrete Wigner function evaluated at $u\in V$ as
\begin{equation}\label{eq:def_Wigner}
W_{\rho}(u)=\frac1{d}\Tr(A_u\rho),
\end{equation}
where $A_u$ is the phase space point operator
\begin{equation}\label{eq:A_u}
A_u=\frac{1}{d}\sum_{v\in V}\omega_q^{-[u,v]}T_v^\dagger.
\end{equation}
Then an $n$-qudit pure state vector $\ket{\psi}\in\mathcal{H}\left((\mathbb{C}_q)^{\otimes n}\right)$ is a stabilizer state if and only if its discrete Wigner function takes nonnegative values at all points,
\begin{equation}
\ket{\psi}\in\Sigma_{n,q}\quad\Longleftrightarrow\quad W_{\ketbra{\psi}{\psi}}(u)\geq0, \forall u\in V.
\end{equation}
\end{lemma}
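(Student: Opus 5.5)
The plan is to prove the easy direction by Clifford covariance and the hard direction by pinning down the shape of a nonnegative pure-state Wigner function, finishing with injectivity of the Wigner map.

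\emph{Basic properties of the phase space point operators.} From $T_uT_v=\omega_q^{2^{-1}[u,v]}T_{u+v}$ and $\Tr(T_v)=d\,\delta_{v,0}$ I would derive:
\begin{itemize}
\item each $A_u$ is Hermitian with $A_u^2=\bI$, so its eigenvalues are $\pm1$;
\item $\Tr(A_u)=1$ and $\Tr(A_uA_v)=d\,\delta_{u,v}$;
\item $\{A_u\}_{u\in V}$ is an operator basis, and $\sum_u A_u=d\,\bI$.
\end{itemize}
Consequences for the Wigner function:
\begin{itemize}
\item $W_\rho$ determines $\rho$, via $\rho=\sum_u W_\rho(u)A_u$;
\item $\sum_u W_\rho(u)=1$ and $\sum_u W_\rho(u)^2=\Tr(\rho^2)/d$;
\item $|W_\rho(u)|\le 1/d$.
\end{itemize}

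\emph{Stabilizer $\Rightarrow$ nonnegative.} For odd $q$, Clifford conjugation acts covariantly on phase space: $UA_uU^\dagger=A_{Su+a}$ for some symplectic $S$ and translation $a$. It therefore suffices to treat $\ket{0^n}$. A direct computation gives $W_{\ketbra{0^n}{0^n}}(z,x)=\delta_{x,0}/d$, which is nonnegative. More generally, every stabilizer state has $W=\mathbf 1_{L}/d$ with $L$ an affine Lagrangian subspace.

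\emph{Nonnegative $\Rightarrow$ stabilizer, step 1.} Let $\psi$ be pure with $W\ge0$. We have $\sum_u W=1$, $\sum_u W^2=1/d$ and $0\le W\le 1/d$. Then $W^2\le W/d$ pointwise, and summing gives equality, so pointwise $W^2=W/d$. Hence $W=\mathbf 1_S/d$ for a set $S\subseteq V$ with $|S|=d$.

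\emph{Nonnegative $\Rightarrow$ stabilizer, step 2 (the main obstacle).} I must show that $S$ is an affine Lagrangian subspace.
\begin{itemize}
\item I would use purity, $\psi^2=\psi$. Expanding in the $A_u$ basis via the product rule for $A_vA_w$, this becomes a twisted-convolution identity of the form
\[
W(u)=d\sum_{v,w}W(v)W(w)\,\omega_q^{2[\cdot,\cdot]},
\]
with an explicit phase depending on $u,v,w$.
\item Substituting $W=\mathbf 1_S/d$, the value $1/d$ at each $u\in S$ is an average of roots of unity over pairs in $S\times S$ with fixed weights.
\item By the triangle inequality, the only way to reach the maximal value is for all those phases to equal $1$. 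This should force closure of $S$ under $v-w+w'$, so $S$ is an affine subspace, and vanishing of $[v-w,v'-w']$, so $S$ is isotropic.
\item Since $|S|=d=q^n$, $S$ is then an affine Lagrangian.
\end{itemize}
If the direct star-product bookkeeping gets messy, I would instead pass to the characteristic function $\chi(v)=\Tr(T_v\psi)$. This is the symplectic Fourier transform of $W$, and we know $|\chi(v)|\le1$ with equality iff $\psi$ is an eigenvector of $T_v$. The aim would be to show that $|\chi|=1$ on the $d$ points of $S^\perp$. That would make $\psi$ a common eigenvector of a maximal commuting Weyl subgroup, i.e.\ a stabilizer state.

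\emph{Conclusion.} Once $S$ is an affine Lagrangian, take the stabilizer state $\phi$ associated with $S$. It has $W_\phi=\mathbf 1_S/d=W_\psi$, and injectivity of the Wigner map gives $\psi=\phi\in\Sigma_{n,q}$.
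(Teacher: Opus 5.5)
The paper gives no proof of this lemma; it simply cites Gross~\cite{gross2006Hudson}. The only in-paper overlap is your Step~1, which is the same flatness argument the paper uses in cubic form, $\sum_u W(u)^2\bigl(1-dW(u)\bigr)\ge 0$, in the proof of Theorem~\ref{thm:WE_3_qudit_magic}. Your outline is a valid self-contained proof, and the step you call the main obstacle does close along the star-product route you propose. As written, however, it needs three repairs.

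\emph{First repair: the normalization.} The constant in your displayed identity is wrong. With the paper's conventions one checks $A_u=T_uA_0T_u^\dagger$ and $A_0T_vA_0=T_{-v}$. Hence $A_uA_w=\omega_q^{-2[u,w]}T_{2(u-w)}$ and $\Tr(T_xA_w)=\omega_q^{[x,w]}$, so $\Tr(A_uA_vA_w)=\omega_q^{-2[v-u,\,w-u]}$. Purity, $\Tr(A_u\psi^2)=\Tr(A_u\psi)$, then reads
\[
dW(u)=\sum_{v,w\in V}W(v)W(w)\,\omega_q^{-2[v-u,\,w-u]}.
\]
That is, $W(u)=\frac1d\sum\cdots$, not $d\sum\cdots$. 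Only with this normalization does $W=\mathbf 1_S/d$ turn the equation at $u\in S$ into the statement that an average of $d^2$ unimodular numbers equals $1$.

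\emph{Second repair: closure is not what you get, and not needed.} What that equation forces is $[v-u,w-u]=0$ for all $u,v,w\in S$. Dividing the exponent by $2$ is where $q$ odd enters. It says nothing directly about closure under $v-w+w'$. Instead, fix $u_0\in S$ and put $L=S-u_0$. The symplectic form vanishes on $L\times L$, hence by bilinearity on $\operatorname{span}L$. Isotropic subspaces have dimension at most $n$, so $|\operatorname{span}L|\le q^n=|L|$. Therefore $L$ is itself a Lagrangian subspace.

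\emph{Third repair: the final identification.} The phrase ``the stabilizer state associated with $S$'' presupposes that every affine Lagrangian $u_0+L$ is the Wigner support of some element of $\Sigma_{n,q}$. This requires two facts, which you should state and cite: $\mathrm{Sp}(2n,\mathbb Z_q)$ acts transitively on Lagrangian subspaces, and for odd $q$ every affine symplectic map is implemented by a Clifford.

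An equivalent ending: for $v\in L$, $\Tr(T_v\psi)=\sum_u W(u)\,\omega_q^{[v,u]}=\omega_q^{[v,u_0]}$, so $\psi$ is a joint eigenvector of $\{T_v\}_{v\in L}$. This bypasses Wigner injectivity. It still needs the same transitivity to place $\psi$ in the Clifford orbit of $\ket{0^n}$, which is how the paper defines $\Sigma_{n,q}$.

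With these repairs the argument is complete.
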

We can simply choose $W=A_{0^{2n}}$ and $\widetilde{W}=\widetilde{A}_{0^{2n}}=\bI-A_{0^{2n}}$ to be the witness to be expanded, since
\begin{equation}
\sigma\in\STAB_{n,q}\quad\Longrightarrow\quad \Tr(A_{0^{2n}}\sigma)=dW_\sigma(0^{2n})\geq0,
\end{equation}
by convexity, and $A_{0^{2n}}$ can detect the non-stabilizer state $\rho=(\ket{1}-\ket{-1})(\bra{1}-\bra{-1})/2$. 

We first show that, given solely state moments $\left\{\Tr(\rho^k)\right\}_k$, regardless of how many orders there are, we cannot distinguish stabilizer mixtures from genuinely magic states.

\begin{proposition}[Universality of state spectrum in stabilizer polytope]\label{prop:same-spectrum}
For every $\rho\in \mathcal{D}((\C^q)^{\otimes n})$, there exists $\sigma\in \STAB_{n,q}$ such that
\begin{equation}
\spec(\sigma)=\spec(\rho).
\label{eq:same-spectrum}
\end{equation}
Equivalently,
\begin{equation}
\begin{aligned}
\Tr(\sigma^k) &= \Tr(\rho^k),
&\qquad \forall k\in \mathbb{N}.
\end{aligned}
\label{eq:all-moments}
\end{equation}
\end{proposition}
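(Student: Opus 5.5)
The plan is to build $\sigma$ directly, using the observation after Observation~\ref{obs:state_moments_cannot_detect_magic}: every computational-basis state lies in $\STAB_{n,q}$, so any diagonal state in that basis does too.

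First, diagonalize $\rho$ by the spectral theorem. Since $\rho$ is a density operator on $\mathcal H=(\C^q)^{\otimes n}$ with $d=q^n$, we can write $\rho=\sum_{j=1}^{d}\lambda_j\ketbra{e_j}{e_j}$ with $\lambda_j\ge0$, $\sum_j\lambda_j=1$, counted with multiplicity.

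Next, fix any bijection $j\mapsto t_j\in\mathbb Z_q^n$ onto the computational basis and define
\begin{equation}
\sigma=\sum_{j=1}^{d}\lambda_j\ketbra{t_j}{t_j}.
\end{equation}
Each $\ket{t}=X(t)\ket{0^n}$ with $X(t)\in\mathsf P_{n,q}\subseteq\Cl_{n,q}$ (Weyl operators normalize the Heisenberg--Weyl group). Hence $\ketbra{t}{t}\in\Sigma_{n,q}$. The weights $\lambda_j$ form a probability vector, so $\sigma\in\mathrm{Conv}(\Sigma_{n,q})=\STAB_{n,q}$.

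Then read off the spectrum. $\sigma$ is diagonal in an orthonormal basis with diagonal entries $\lambda_j$, so $\spec(\sigma)=\spec(\rho)$ as multisets, which is Eq.~\eqref{eq:same-spectrum}. The moment identity Eq.~\eqref{eq:all-moments} follows because $\Tr(\sigma^k)=\sum_j\lambda_j^k=\Tr(\rho^k)$ for every $k$. For the stated equivalence, the converse direction holds because the power sums $\sum_j\lambda_j^k$, $k=1,\dots,d$, determine the multiset $\{\lambda_j\}$ via Newton's identities.

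There is no real obstacle here. The only point that needs checking is that the computational basis states are stabilizer states, i.e.\ that $X(t)$ is Clifford, and this is immediate from $T_uT_vT_u^\dagger=\omega_q^{[u,v]}T_v$. The same argument applies verbatim to qubits with $X$-strings.
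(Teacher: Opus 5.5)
Your proof is correct and follows the paper's argument exactly: you diagonalize $\rho$ and place its eigenvalues on the computational basis states, which are pure stabilizer states. The resulting convex combination lies in $\STAB_{n,q}$ and has the same spectrum, hence the same moments. Your added justifications fill in details the paper leaves implicit: that $X(t)=T_{0,t}$ is Clifford because $T_uT_vT_u^\dagger=\omega_q^{[u,v]}T_v$, and that the power sums up to $k=d$ recover the spectrum via Newton's identities.
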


\begin{proof}[Proof of Proposition~\ref{prop:same-spectrum}]
Let
$\rho=\sum_{j=0}^{d-1} \lambda_j \proj{\psi_j}$
be a spectral decomposition of $\rho$, where $\lambda_j\ge 0$ and $\sum_{j=0}^{d-1}\lambda_j=1$. Consider the computational basis
$\{\ket{j}\}_{j=0}^{d-1}
=\{\ket{0,\cdots,0},\cdots,\ket{q-1,\cdots,q-1}\}$
of $(\C^q)^{\otimes n}$. Each computational basis vector is a pure stabilizer state. Define
$\sigma=\sum_{j=0}^{d-1} \lambda_j \proj{j}$.
Since $\sigma$ is a convex combination of pure stabilizer states, we have $\sigma\in \STAB_{n,q}$.

By construction, the eigenvalues of $\sigma$ are precisely $\{\lambda_j\}_{j=0}^{d-1}$, counted with multiplicity. Hence Eq.~\eqref{eq:same-spectrum} holds. Since the spectral moments are symmetric functions of the eigenvalues, for every positive integer $k$ one obtains
\begin{equation}
\Tr(\sigma^k)
=\sum_{j=0}^{d-1}\lambda_j^k
=\Tr(\rho^k),
\end{equation}
which proves Eq.~\eqref{eq:all-moments}.
\end{proof}

The proposition implies that any magic detection criterion must depend on more than the eigenvalues of the state. Equivalently, no quantity that depends only on the collection $\left\{\Tr(\rho^k)\right\}_{k\geq 1}$ can detect whether $\rho$ lies outside $\STAB_{n,q}$. Especially if the group of free unitaries $\mathcal{G}$ forms a unitary $t$-design, the first $t$-th order twirlings are the same as those of the Haar random unitaries, which, by Schul-Weyl duality, only relate to state moments. Indeed, when evaluating
\begin{equation}\label{eq:R_alpha_Clifford_magic}
\norm{\rho}_{\widetilde{W},\mathcal{G},\alpha}=\left(\E_{U\sim\mu_{\mathcal{G}}}\left[\Tr(\widetilde W U\rho U^\dagger)^\alpha\right]\right)^{\frac1\alpha},
\end{equation}
for $\alpha\leq t$, all information encoded by $\rho$ is in the form of the $k\leq \alpha$ twirling formula
\begin{equation}
\Phi_{\mathcal{G}}^{(k)}(\rho)=\E_{U\sim\mu_\mathcal{G}}\left[U^{\otimes k}\rho^{\otimes k}{U^\dagger}^{\otimes k}\right]=\E_{U\sim\mu_{\textup{Haar}}}\left[U^{\otimes k}\rho^{\otimes k}{U^\dagger}^{\otimes k}\right],
\end{equation}
which consists of terms related to permutation operators~\cite{collins2006IntegrationRespectHaar} and leads to only state moments of $\rho$
\begin{equation}
\Tr(\mathbb{P}_\pi^{(k)}\rho^{\otimes k})=\prod_{c\in\textup{cycle}(\pi)}\Tr(\rho^{\abs{c}}).
\end{equation}

Notice that in our setting, when $q$ is an odd prime, the $n$-qudit Clifford group $\Cl_{n,q}$ is a unitary 2-design~\cite{divincenzo2002QuantumDataHiding}, but not a unitary 3-design~\cite{gross2021SchurWeylClifford,bittel2025CompleteTheoryClifford}. Proposition~\ref{prop:same-spectrum} implies the WE $1,2$-criteria are both trivial, we start with $\alpha=3$ and prove Theorem~\ref{thm:WE_3_qudit_magic}.

\begin{proof}[Proof of Theorem~\ref{thm:WE_3_qudit_magic}]
We need to calculate
\begin{equation}\label{eq:R_3_Clifford_qudit_magic_1}
\norm{\rho}_{\widetilde{A}_{0^{2n}},\Cl_{n,q},3}^3 =\E_{U\sim\mu_{\Cl_{n,q}}}\left[\Tr(\widetilde{A}_{0^{2n}} U\rho U^\dagger)^3\right] =\E_{U\sim\mu_{\Cl_{n,q}}}
\left[
\Tr\left((\bI-U^\dagger A_{0^{2n}}U)\rho\right)^3
\right].
\end{equation}
Since $q$ is an odd prime, the Clifford group has the semidirect-product form~\cite{gross2021SchurWeylClifford}
\begin{equation}
\Cl_{n,q}
\simeq
V\rtimes \mathrm{Sp}(2n,\mathbb{Z}_q),
\end{equation}
so each Clifford $U\in\Cl_{n,q}$ can be written as $U=T_a\mu(F)$, where $\mu$ is a symplectic representation of $\mathrm{Sp}(2n,\mathbb{Z}_q)$. The action on the Weyl operator $T_u$ is given by
\begin{equation}
\mu(F)T_u\mu(F)^\dagger=T_{Fu},\forall u\in V.
\end{equation}
From Eq.~\eqref{eq:A_u}, we have
\begin{equation}
\mu(F)A_u\mu(F)^\dagger=\frac{1}{d}\sum_{v\in V}\omega_q^{-[u,v]}\mu(F)T_v^\dagger \mu(F)^\dagger=\frac{1}{d}\sum_{w\in V}\omega_q^{-[u,F^{-1}w]}T_w^\dagger =\frac{1}{d}\sum_{w\in V}\omega_q^{-[Fu,w]}T_w^\dagger=A_{Fu},
\end{equation}
and $UA_u U^\dagger=A_{Fu+a}$. So that $\{U^\dagger A_{0^{2n}}U\}_{U\in\Cl_{n,q}}$ is uniformly distributed over $\{A_u\}_{u\in V}$. Then Eq.~\eqref{eq:R_3_Clifford_qudit_magic_1} is reduced to
\begin{equation}
\begin{aligned}
\norm{\rho}_{\widetilde{A}_{0^{2n}},\Cl_{n,q},3}^3&=\frac1{d^2}\sum_{u\in V}
\left(
1-\Tr(A_u\rho)
\right)^3
=\frac1{d^2}\sum_{u\in V}
\left(
1-dW_\rho(u)
\right)^3
\\&=1-\frac3d\sum_{u\in V}W_\rho(u)+3\sum_{u\in V}W_\rho(u)^2-d\sum_{u\in V}W_\rho(u)^3.
\end{aligned}
\end{equation}
A direct calculation yields for all $\rho$,
\begin{equation}
\sum_{u\in V}W_\rho(u)=1,\qquad\sum_{u\in V}W_\rho(u)^2=\frac1d\Tr(\rho^2),
\end{equation}
and thus Eq.~\eqref{eq:R_3_qudit_magic} follows:
\begin{equation}
\norm{\rho}_{\widetilde{A}_{0^{2n}},\Cl_{n,q},3}^3
=
1-\frac{3}{d}+\frac{3}{d}\Tr(\rho^2)-d\sum_{u\in V}W_\rho(u)^3.
\end{equation}

For any pure stabilizer state vector $\ket{\psi}$, its discrete Wigner function $W_{\ketbra{\psi}{\psi}}$ takes value $1/d$ on $d$ phase space points, and value $0$ on the remaining $d^2-d$ points~\cite{gross2006Hudson}, so for $\alpha\geq1$,
\begin{equation}
\norm{\ketbra{\psi}{\psi}}_{\widetilde{A}_{0^{2n}},\Cl_{n,q},\alpha}^\alpha=\frac1{d^2}\sum_{u\in V}
\left(
1-dW_{\ketbra{\psi}{\psi}}(u)
\right)^\alpha=\frac1{d^2}(d^2-d)=1-\frac1d=C_{\widetilde{A}_{0^{2n}},\Cl_{n,q},\alpha}^\alpha.
\end{equation}
By Observation~\ref{obs:threshold_single_orbit}, it is also the maximum value obtainable for all states in $\STAB_{n,q}$, which proves Eq.~\eqref{eq:C_alpha_qudit_magic}. Furthermore, when $\ket{\psi}$ is a pure state, notice that $\norm{A_u}_\infty=1$ and $W_{\ketbra{\psi}{\psi}}(u)\leq1/d$ by Eq.~\eqref{eq:def_Wigner}, we have
\begin{equation}\label{eq:WE_3_criterion_pure_state_qudit_magic}
\norm{\ketbra{\psi}{\psi}}_{\widetilde{A}_{0^{2n}},\Cl_{n,q},3}^3 - C_{\widetilde{A}_{0^{2n}},\Cl_{n,q},3}^3=-d\sum_{u\in V}W_{\ketbra{\psi}{\psi}}(u)^3+\frac1d=\sum_{u\in V}W_{\ketbra{\psi}{\psi}}(u)^2\left(1-dW_{\ketbra{\psi}{\psi}}(u)\right)\geq0.
\end{equation}
So $\norm{\ketbra{\psi}{\psi}}_{\widetilde{A}_{0^{2n}},\Cl_{n,q},3}=C_{\widetilde{A}_{0^{2n}},\Cl_{n,q},3}$ if and only if $W_{\ketbra{\psi}{\psi}}(u)$ only take nonnegative values $1/d$ or $0$, which implies $\ket{\psi}$ is a stabilizer state by the discrete Hudson's theorem.
\end{proof}

We then prove Theorem~\ref{thm:WE_2_subgroup_qudit_magic}, which instead uses a subgroup $\mathcal{G}_X=\left\{X(x)\middle|x\in\mathbb Z_q^n\right\}$ and takes the average over a column of phase-space point operators.

\begin{proof}[Proof of Theorem~\ref{thm:WE_2_subgroup_qudit_magic}]
We need to calculate
\begin{equation}
\norm{\rho}_{\widetilde{A}_{0^{2n}},\mathcal{G}_X,2}^2 =\E_{U\sim\mu_{\mathcal{G}_X}}
\left[
\Tr\left((\bI-U^\dagger A_{0^{2n}}U)\rho\right)^2
\right].
\end{equation}
Similar to the proof of Theorem~\ref{thm:WE_3_qudit_magic}, we recognize $U^\dagger A_0U$ for $U=X(x)\in\mathcal{G}_X$ as a phase-space point operator
\begin{equation}
X(x)^\dagger A_{0^{2n}}X(x)=A_{0^n,-x}.
\end{equation}
Then we have
\begin{equation}
\begin{aligned}
\norm{\rho}_{\widetilde{A}_{0^{2n}},\mathcal{G}_X,2}^2&=\frac1{d}\sum_{x\in \mathbb{Z}_q^n}
\left(
1-\Tr(A_{0,-x}\rho)
\right)^2
=\frac1{d}\sum_{x\in \mathbb{Z}_q^n}
\left(
1-dW_\rho(0^n,x)
\right)^2
\\&=1-2\sum_{x\in \mathbb{Z}_q^n}W_\rho(0^n,x)+d\sum_{x\in \mathbb{Z}_q^n}W_\rho(0^n,x)^2.
\end{aligned}
\end{equation}
Notice that $A_u$ has largest eigenvalue $1$, so for $\sigma\in\STAB_{n,q}$, $0\leq W_\sigma(u)\leq1/d$, we have
\begin{equation}
\norm{\sigma}_{\widetilde{A}_{0^{2n}},\mathcal{G}_X,2}^2\leq1-\sum_{x\in \mathbb{Z}_q^n}W_\sigma(0^n,x)\leq1.
\end{equation}
The upper bound is attainable. Take a pure stabilizer state whose Wigner support is the affine Lagrangian subspace
\begin{equation}
(\iota,0^n)+L_X,
\qquad
L_X=\left\{(0^n,x)\mid x\in\mathbb Z_q^n\right\},
\end{equation}
where $\iota=(1,0,\cdots,0)\in\mathbb Z_q^n$. This support is disjoint from the slice $\left\{(0^n,x)\mid x\in\mathbb Z_q^n\right\}$, so all the slice Wigner values vanish and the value of the functional is exactly $1$.
The threshold value for all stabilizer states is
\begin{equation}
C_{\widetilde{A}_{0^{2n}},\mathcal{G}_X,2}^2=\sup_{\sigma\in\STAB_{n,q}}\norm{\sigma}_{\widetilde{A}_{0^{2n}},\mathcal{G}_X,2}^2=1,
\end{equation}

For any $n$ and odd prime $q$, we now provide a state $\rho_2$ that can be detected by the second-order criterion in Theorem~\ref{thm:WE_2_subgroup_qudit_magic} but cannot by the third-order criterion in Theorem~\ref{thm:WE_3_qudit_magic}, and another pure state vector $\ket{\psi_3}$ on the contrary.
\begin{itemize}
\item
Consider 
\begin{equation}
\label{eq:rho2-def-standalone}
\rho_2
=
\frac{4\bI-3A_{0^{2n}}}{4d-3}.
\end{equation}
It is a valid density matrix with eigenvalues $\{1/(4d-3),7/(4d-3)\}$ and has trace $1$. Using $\Tr(A_uA_v)=d\delta_{u,v}$, the discrete Wigner function of $\rho_2$ at $u\in V$ takes the value
\begin{equation}
W_{\rho_2}(u)
=
\frac{4-3d\delta_{u,0^{2n}}}{d(4d-3)}=
\begin{dcases}
\frac{4-3d}{d(4d-3)},\qquad &u=0^{2n},\\
\frac{4}{d(4d-3)},&u\neq0^{2n}.
\end{dcases}
\end{equation}
Then a direct calculation yields for $d\geq3$,
\begin{equation}
\begin{aligned}
\norm{\rho_2}_{\widetilde{A}_{0^{2n}},\mathcal{G}_X,2}^2-C_{\widetilde{A}_{0^{2n}},\mathcal{G}_X,2}^2&=
\frac{d-2}{(4d-3)^2}
>0,\\
\norm{\rho_2}_{\widetilde{A}_{0^{2n}},\Cl_{n,q},3}^3-C_{\widetilde{A}_{0^{2n}},\Cl_{n,q},3}^3&=-
\frac{(d-1)(128d^2-487d+414)}{d(4d-3)^3}<0.
\end{aligned}
\end{equation}
\item Consider
\begin{equation}
\ket{\psi_3}=\frac1{\sqrt{2}}\left(\ket{0^n}+\ket{\iota}\right),
\end{equation}
where $\iota$ is the short notation for the string $(1,0,\cdots,0)\in\mathbb{Z}_q^n$. Using $A_{(z,x)}\ket{y}=\omega_q^{2z\cdot(x-y)}\ket{-y+2x}$, its discrete Wigner function evaluates as
\begin{equation}
W_{\ketbra{\psi_3}{\psi_3}}(z,x)
=
\begin{dcases}
\dfrac{1}{2d}, & x=0^n,\\
\dfrac{1}{d}\cos(\frac{2\pi z_1}q), & x=\frac \iota2, \\
0, & \text{otherwise},
\end{dcases}
\end{equation}
where $z_1$ is the first digit of $z$.
By Eq.~\eqref{eq:WE_3_criterion_pure_state_qudit_magic}, $\ket{\psi_3}$ is a non-stabilizer state vector
since
\begin{equation}
\norm{\ketbra{\psi_3}{\psi_3}}_{\widetilde{A}_{0^{2n}},\Cl_{n,q},3}^3-C_{\widetilde{A}_{0^{2n}},\Cl_{n,q},3}^3=
\begin{dcases}
\dfrac{1}{2d}, & q=3,\\
\dfrac{3}{4d}, & q>3.
\end{dcases}
\end{equation}
And it is direct to verify
\begin{equation}
\norm{\ketbra{\psi_3}{\psi_3}}_{\widetilde{A}_{0^{2n}},\mathcal{G}_X,2}^2-C_{\widetilde{A}_{0^{2n}},\mathcal{G}_X,2}^2=
-\frac5{2d}
<0.
\end{equation}

\end{itemize}
\end{proof}

\section{Qubit magic}\label{app:qubit_magic}
\subsection{Proof of Theorem~\ref{thm:R4_magic}}\label{app:qubit_magic_WE_4-norm}
In $n$-qubit magic state resource theory, the set of free states (denoted as the stabilizer polytope $\STAB_n$) is the convex hull of all $n$-qubit pure stabilizer states, i.e., those that can be generated from $\ket{0^n}$ using a Clifford unitary,
\begin{equation}
\begin{aligned}
\Fset=\STAB_n=\textup{Conv}(\Sigma_n),\quad
\textup{with }\Sigma_n=\left\{U\ket{0^n}\middle\mid U\in\Cl_n\right\}.
\end{aligned}
\end{equation}
Here, the $n$-qubit Clifford group $\mathrm{Cl}_n$ is the normalizer of $n$-qubit Pauli group $\mathsf{P}_n$, which is also the set of free unitaries,
\begin{equation}\label{eq:qubit_magic_free_unitaries_app}
\Ufree=\Cl_n=\{U\in \mathrm{U}(d)\mid U\mathsf{P}_nU^\dagger=\mathsf{P}_n\}.
\end{equation}

Let the conjugacy classes of $S_4$ be indexed by cycle type
\begin{equation}
\mu\in\{[1,1,1,1],[2,1,1],[2,2],[3,1],[4]\},
\end{equation}
with class sizes
\begin{equation}\label{eq:m_mu}
\begin{aligned}
m_{[1,1,1,1]}&=1, & m_{[2,1,1]}&=6, & m_{[2,2]}&=3,&
m_{[3,1]}&=8, & m_{[4]}&=6.
\end{aligned}
\end{equation}
For an operator $A\in\mathcal{L}(\mathbb{C}^d)$, define 
\begin{equation}\label{eq:p_lambda_A}
p_\lambda(A)=\Tr(\mathbb{P}_\pi^{(4)}A^{\otimes4}),
\end{equation}
where $\pi\in S_4$ has the cycle type $\lambda$ and the permutation operator $\mathbb{P}_\pi^{(4)}$ is defined in Eq.~\eqref{eq:P_permutation_def}.
For an arbitrary state $\rho$, write the state moments
\begin{equation}
r_k=\Tr(\rho^k),\qquad k=2,3,4,
\end{equation}
then
\begin{equation}\label{eq:p_values}
\begin{aligned}
p_{[1,1,1,1]}(\rho)&=1, & p_{[2,1,1]}(\rho)&=r_2,&
p_{[2,2]}(\rho)&=r_2^2, & p_{[3,1]}(\rho)&=r_3,&
p_{[4]}(\rho)&=r_4.
\end{aligned}
\end{equation}
Denote the set of $n$-qubit phase-free Pauli operators as
\begin{equation}
\mathcal{P}_n=\left\{\bI,X,Y,Z\right\}^{\otimes n}.
\end{equation}
We introduce the projector that appears as a commutant of the fourfold tensor of Cliffords~\cite{zhu2016CliffordGroupFails,bittel2025CompleteTheoryClifford}
\begin{equation}\label{eq:Q_def}
Q=\frac{1}{d^2}\sum_{P\in\mathcal P_n}P^{\ot4},
\end{equation}
and its orthogonal complement projector $Q^\perp=\bI-Q$.
Define
\begin{equation}\label{eq:q_lambda_A}
q_\lambda(A)=\Tr(\mathbb{P}_\pi^{(4)}QA^{\otimes4}),
\end{equation}
for any $\pi\in S_4$ with the cycle type $\lambda$, and in particular
\begin{equation}\label{eq:q_values}
\begin{aligned}
q_{[1,1,1,1]}(A)&=\frac{1}{d^2}\sum_{P\in\mathcal P_n}[\Tr(A P)]^4,\\
q_{[2,1,1]}(A)&=\frac{1}{d^2}\sum_{P\in\mathcal P_n}\Tr((A P)^2)[\Tr(A P)]^2,\\
q_{[2,2]}(A)&=\frac{1}{d^2}\sum_{P\in\mathcal P_n}[\Tr((A P)^2)]^2,\\
q_{[3,1]}(A)&=\frac{1}{d^2}\sum_{P\in\mathcal P_n}\Tr((A P)^3)\Tr(A P),\\
q_{[4]}(A)&=\frac{1}{d^2}\sum_{P\in\mathcal P_n}\Tr((A P)^4).
\end{aligned}
\end{equation}

We recall the Schur-Weyl decomposition (see, e.g., Theorem~S.1 in Ref.~\cite{roth2018RecoveringQuantumGates})
\begin{equation}\label{eq:Schur-Weyl_decomposition}
(\mathbb C^d)^{\otimes 4}
\cong
\bigoplus_{\lambda\vdash 4:\ell(\lambda)\leq d} W_\lambda\otimes S_\lambda,
\end{equation}
where $W_\lambda$ and $S_\lambda$ are the Weyl module and the Specht module, respectively. Using the same notation as in Ref.~\cite{zhu2016CliffordGroupFails}, we denote for each partition $\lambda\vdash4$, $QP_\lambda$ projects onto $W_\lambda^+\otimes S_\lambda$ and $Q^\perp P_\lambda$ projects onto $W_\lambda^-\otimes S_\lambda$. Define
\begin{equation}
d_\lambda=\dim S_\lambda,\qquad D_\lambda^\pm=\dim W_\lambda^\pm.
\end{equation}
Their values are listed in Table~\ref{tab:module_dimension} (copied from Table~1 in Ref.~\cite{zhu2016CliffordGroupFails}).
\begin{table}[h!]
\centering
\begin{tabular}{c|ccc}
\hline
$\lambda$ & $d_\lambda$ & $D_\lambda^+$ & $D_\lambda^-$  \\ \hline
$[4]$       & 1 & $\frac{(d+1)(d+2)}{6}$  & $\frac{(d-1)(d+1)(d+2)(d+4)}{24}$\\
$[3,1]$     & 3 & 0  & $\frac{d(d+2)(d^2-1)}{8}$ \\
$[2,2]$     & 2 & $\frac{d^2-1}{3}$  & $\frac{(d^2-4)(d^2-1)}{12}$ \\
$[2,1,1]$   & 3 & 0 & $\frac{d(d-2)(d^2-1)}{8}$ \\
$[1,1,1,1]$ & 1 & $\frac{(d-1)(d-2)}{6}$ & $\frac{(d+1)(d-1)(d-2)(d-4)}{24}$ \\ \hline
\end{tabular}
\caption{Dimensions of Specht and irreducible components, which appear in the fourth-order Clifford twirling formula.}
\label{tab:module_dimension}
\end{table}
Denote the orthogonal projector to the submodule $W_\lambda\otimes S_\lambda$ as (cf.\ Eq.~\eqref{eq:P_permutation_def})
\begin{equation}\label{eq:P_lambda}
P_\lambda=\frac{d_\lambda}{24}\sum_{\omega\in S_4}\chi_\lambda(\omega)\mathbb{P}_\omega^{(4)},
\end{equation}
we have 
\begin{equation}\label{eq:P_mu_normalization}
\sum_{\lambda\vdash 4:\ell(\lambda)\leq d} P_\lambda=\bI.
\end{equation}
Here, the character value $\chi_\lambda(\omega)$ is determined by the cycle type $\mu$ of $\omega$, rewritten as $\chi_\lambda(\mu)$, and can be calculated using the Murnaghan-Nakayama rule, listed in Table~\ref{tab:character_S_4}.
\begin{table}[h!]
\centering
\begin{tabular}{c|ccccc}
\hline
$\chi_\lambda(\mu)$ & $[1,1,1,1]$ & $[2,1,1]$ & $[2,2]$ & $[3,1]$ & $[4]$ \\ \hline
$[4]$       & 1 & 1  & 1  & 1  & 1  \\
$[3,1]$     & 3 & 1  & -1 & 0  & -1 \\
$[2,2]$     & 2 & 0  & 2  & -1 & 0  \\
$[2,1,1]$   & 3 & -1 & -1 & 0  & 1  \\
$[1,1,1,1]$ & 1 & -1 & 1  & 1  & -1 \\ \hline
\end{tabular}
\caption{Character table of the symmetric group $S_4$. 
Rows correspond to irreducible representations (partitions $\lambda \vdash 4$), 
columns to conjugacy classes (cycle types) $\mu$.}
\label{tab:character_S_4}
\end{table}

We further define the character transforms (cf.\ Eq.~\eqref{eq:m_mu}, Table~\ref{tab:character_S_4}, Eq.~\eqref{eq:p_values} and Eq.~\eqref{eq:q_values})
\begin{equation}\label{eq:pqhat-def-inline}
\begin{aligned}
\widehat p_\lambda(A)&=\frac1{d_\lambda}\Tr(P_\lambda A^{\otimes4})=\frac{1}{24}\sum_\mu m_\mu\chi_\lambda(\mu)p_\mu(A),\\
\widehat q_\lambda(A)&=\frac1{d_\lambda}\Tr(P_\lambda QA^{\otimes4})=\frac{1}{24}\sum_\mu m_\mu\chi_\lambda(\mu)q_\mu(A),
\end{aligned}
\end{equation}
Explicitly,
\begin{equation}\label{eq:phat_values}
\begin{aligned}
\widehat p_{[4]}(\rho)&=\frac{1+6r_2+3r_2^2+8r_3+6r_4}{24},\\
\widehat p_{[3,1]}(\rho)&=\frac{1+2r_2-r_2^2-2r_4}{8},\\
\widehat p_{[2,2]}(\rho)&=\frac{1+3r_2^2-4r_3}{12},\\
\widehat p_{[2,1,1]}(\rho)&=\frac{1-2r_2-r_2^2+2r_4}{8},\\
\widehat p_{[1,1,1,1]}(\rho)&=\frac{1-6r_2+3r_2^2+8r_3-6r_4}{24}.
\end{aligned}
\end{equation}
Similarly, omitting the $\rho$ in $q_\lambda(\rho)$ for simplicity,
\begin{equation}\label{eq:qhat_values}
\begin{aligned}
\widehat q_{[4]}(\rho)&=\frac{q_{[1,1,1,1]}+6q_{[2,1,1]}+3q_{[2,2]}+8q_{[3,1]}+6q_{[4]}}{24},\\
\widehat q_{[3,1]}(\rho)&=\frac{q_{[1,1,1,1]}+2q_{[2,1,1]}-q_{[2,2]}-2q_{[4]}}{8},\\
\widehat q_{[2,2]}(\rho)&=\frac{q_{[1,1,1,1]}+3q_{[2,2]}-4q_{[3,1]}}{12},\\
\widehat q_{[2,1,1]}(\rho)&=\frac{q_{[1,1,1,1]}-2q_{[2,1,1]}-q_{[2,2]}+2q_{[4]}}{8},\\
\widehat q_{[1,1,1,1]}(\rho)&=\frac{q_{[1,1,1,1]}-6q_{[2,1,1]}+3q_{[2,2]}+8q_{[3,1]}-6q_{[4]}}{24}.
\end{aligned}
\end{equation}

We now calculate the WE $4$-norm with a positive semi-definite operator $O$ as seed and reformulate Theorem~\ref{thm:R4_magic} in the main text as the theroem below.
\begin{theorem}[WE $4$-norm for qubit magic]\label{thm:R4-general-inline}
For every $n$-qubit state $\rho$, the WE $4$-norm in criterion~\eqref{eq:WE_alpha_criterion} with a positive semi-definite operator $O$ and $n$-qubit Clifford group $\Cl_n$ in Eq.~\eqref{eq:qubit_magic_free_unitaries_app} is given by
\begin{equation}
\begin{aligned}
\norm{\rho}_{O,\Cl_n,4}^4
=\sum_{\lambda\vdash4:\ell(\lambda)\leq d} d_\lambda\left[
\frac{\widehat q_\lambda(\rho)\widehat q_\lambda(O)}{D_\lambda^+}+
\frac{\left(\widehat p_\lambda(\rho)-\widehat q_\lambda(\rho)\right)\left(\widehat p_\lambda(O)-\widehat q_\lambda(O)\right)}{D_\lambda^-}
\right],
\end{aligned}
\label{eq:R4-general-inline_app}
\end{equation}
where the values of $d_\lambda$, $D_\lambda^\pm$ are given in Tabel~\ref{tab:module_dimension} and $\widehat{p}_\lambda$, $\widehat{q}_\lambda$ are defined in Eq.~\eqref{eq:pqhat-def-inline}. The coefficients in Theorem~\ref{thm:R4_magic} are
\begin{equation}
a_\lambda=\widehat q_\lambda(O),\qquad b_\lambda=\widehat p_\lambda(O)-\widehat q_\lambda(O).
\end{equation}
\end{theorem}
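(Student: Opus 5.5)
The plan is to write the WE $4$-norm as a Hilbert--Schmidt pairing of two fourth-order Clifford twirls and then decompose that twirl block by block. By definition,
\begin{equation}
\norm{\rho}_{O,\Cl_n,4}^4=\Tr\left(\Phi^{(4)}_{\Cl_n}(O^{\otimes4})\,\rho^{\otimes4}\right),
\end{equation}
where $\Phi^{(4)}_{\Cl_n}(X)=\E_{U}[{U^\dagger}^{\otimes4}XU^{\otimes4}]$ is the fourth-order Clifford twirl. This channel is self-adjoint with respect to the Hilbert--Schmidt inner product, because $U\mapsto U^\dagger$ preserves the Haar measure. It is also the orthogonal projection onto the commutant of $\{U^{\otimes4}:U\in\Cl_n\}$. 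I will use the structure of that commutant due to Zhu et al.\ and Gross et al.: $Q$ commutes with every $U^{\otimes4}$ and with every permutation operator. Hence $(\mathbb C^d)^{\otimes4}=\bigoplus_\lambda(W_\lambda^+\oplus W_\lambda^-)\otimes S_\lambda$, and each $W_\lambda^\pm$ is an irreducible, pairwise inequivalent Clifford representation. These are the Table~\ref{tab:module_dimension} decompositions, with $QP_\lambda$ and $Q^\perp P_\lambda$ the corresponding projectors.

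By Schur's lemma the commutant is $\bigoplus_{\lambda,\pm}\bI_{W_\lambda^\pm}\otimes\mathcal L(S_\lambda)$. The twirl of any $X$ therefore has the form
\begin{equation}
\Phi^{(4)}_{\Cl_n}(X)=\sum_{\lambda,\pm}\frac{\bI_{W^\pm_\lambda}\otimes\Tr_{W^\pm_\lambda}(\Pi^\pm_\lambda X\Pi^\pm_\lambda)}{D_\lambda^\pm},
\end{equation}
where $\Pi^+_\lambda=QP_\lambda$ and $\Pi^-_\lambda=Q^\perp P_\lambda$. Summands with $D_\lambda^\pm=0$ are simply absent. Pairing this with $\rho^{\otimes4}$ gives
\begin{equation}
\sum_{\lambda,\pm}\frac{1}{D_\lambda^\pm}\left\langle \Tr_{W}(\Pi X\Pi),\Tr_W(\Pi\rho^{\otimes4}\Pi)\right\rangle_{S_\lambda}.
\end{equation}

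The key simplification is that $O^{\otimes4}$ and $\rho^{\otimes4}$ commute with all $\mathbb P_\pi^{(4)}$, and $\Pi^\pm_\lambda$ commutes with both $Q$ and the permutations. The reduced operator $\Tr_{W}(\Pi^\pm_\lambda A^{\otimes4})$ on $S_\lambda$ then commutes with the irreducible $S_4$-action, so by Schur's lemma on $S_\lambda$ it is a scalar $c\,\bI_{S_\lambda}$. The scalar is fixed by taking the full trace: $c\,d_\lambda=\Tr(\Pi^\pm_\lambda A^{\otimes4})$. For the $+$ block this gives $c=\widehat q_\lambda(A)$. For the $-$ block it gives $c=\widehat p_\lambda(A)-\widehat q_\lambda(A)$, since $\Tr(Q^\perp P_\lambda A^{\otimes4})=\Tr(P_\lambda A^{\otimes4})-\Tr(QP_\lambda A^{\otimes4})$, and this matches Eq.~\eqref{eq:pqhat-def-inline}. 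The inner product of two scalars on $S_\lambda$ contributes a factor $d_\lambda c_O c_\rho$, which yields Eq.~\eqref{eq:R4-general-inline_app} with $a_\lambda=\widehat q_\lambda(O)$ and $b_\lambda=\widehat p_\lambda(O)-\widehat q_\lambda(O)$.

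The explicit character-sum forms follow from substituting Eq.~\eqref{eq:P_lambda} for $P_\lambda$ and grouping permutations by cycle type with the class sizes $m_\mu$. The identities $p_\mu(\rho)=$ products of $r_k$ come from the standard cycle contraction.

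The main obstacle is justifying that the $W_\lambda^\pm$ are irreducible and mutually inequivalent, so that the commutant decomposes exactly as stated. I will cite the result of Zhu et al.\ and Gross et al.\ rather than reprove it. I will also treat with care the vanishing blocks, for example $D^+_{[3,1]}=0$, and the cases with $\ell(\lambda)>d$ at small $n$, where some $D^\pm_\lambda$ vanish and those terms must be omitted.
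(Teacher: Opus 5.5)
Your proposal is correct and is essentially the paper's argument. The paper starts from Roth et al.'s character-sum form of the fourth-order Clifford twirl and applies Lemma~\ref{lem:simplification_permutation-invariant}, which is the same Schur's-lemma-on-$S_\lambda$ reduction for permutation-invariant operators, to collapse the twirl to $\sum_\lambda\bigl[\widehat q_\lambda(\rho)\,Q/D_\lambda^+ + (\widehat p_\lambda(\rho)-\widehat q_\lambda(\rho))\,Q^\perp/D_\lambda^-\bigr]P_\lambda$, exactly your block-wise scalar reduction; the only difference is that you derive the partial-trace form of the twirl directly from the irreducibility and mutual inequivalence of the $W_\lambda^\pm$ (the structure underlying Roth et al.'s formula) instead of citing the explicit formula.
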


From Eq.~\eqref{eq:R_alpha_Clifford_magic}, we have
\begin{equation}
\norm{\rho}_{O,\Cl_n,4}^4=\Tr\left[O^{\otimes4}\Phi_{\Cl_n}^{(4)}(\rho^{\otimes4})\right].
\end{equation}
We start from the fourth-order Clifford twirling formula for $A\in\mathcal{L}\left(\left(\mathbb C^d\right)^{\otimes 4}\right)$ (see Theorem S.5 in Ref.~\cite{roth2018RecoveringQuantumGates}),
\begin{equation}\label{eq:4-order_Clifford_twirling}
\begin{aligned}
\Phi_{\Cl_n}^{(4)}(A)
&=\frac{1}{24}
\sum_{\lambda\vdash 4 : \ell(\lambda)\le d} d_\lambda
\sum_{\omega\in S_4}
\left[
\frac{1}{D_\lambda^+}
\Tr\left(AQ\mathbb P_{\omega^{-1}}^{(4)}\right)Q
+
\frac{1}{D_\lambda^-}
\Tr\left(AQ^\perp\mathbb P_{\omega^{-1}}^{(4)}\right)Q^\perp
\right]\mathbb P_\omega^{(4)} P_\lambda,
\end{aligned}
\end{equation}
where $Q^\perp=\bI-Q$ is defined as in Eq.~\eqref{eq:Q_def} and $P_\lambda$ is defined in Eq.~\eqref{eq:P_lambda}. 
We first prove a useful lemma, which can simplify the calculation of Eq.~\eqref{eq:4-order_Clifford_twirling}.

\begin{lemma}[Simplification for permutation-invariant operator]\label{lem:simplification_permutation-invariant}
For any operator $B$ that commutes with all permutations, i.e., satisfies
\begin{equation}
[B,\mathbb P_\omega^{(4)}]=0,
\qquad \forall\omega\in S_4,
\label{eq:B-commutes-perm}
\end{equation}
for the term in Eq.~\eqref{eq:4-order_Clifford_twirling}, we have
\begin{equation}
\frac{d_\lambda}{24}
\sum_{\omega\in S_4}
\Tr\left(B\mathbb P_{\omega^{-1}}^{(4)}\right)
\mathbb P_\omega^{(4)} P_\lambda
=
\frac{1}{d_\lambda}\Tr(BP_\lambda)P_\lambda,
\label{eq:key-reduction}
\end{equation}
where $P_\lambda$ is defined in Eq.~\eqref{eq:P_lambda}.
\end{lemma}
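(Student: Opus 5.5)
The plan is to reduce the left-hand side of Eq.~\eqref{eq:key-reduction} to the group algebra of $S_4$ and then apply the standard idempotent properties of the isotypic projectors $P_\lambda$.

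First, define the group-algebra element
\begin{equation}
X_B=\sum_{\omega\in S_4}\Tr\left(B\mathbb P_{\omega^{-1}}^{(4)}\right)\mathbb P_\omega^{(4)}.
\end{equation}
Because $B$ commutes with every permutation operator, the coefficient $c(\omega)=\Tr(B\mathbb P_{\omega^{-1}}^{(4)})$ is a class function. Indeed,
\begin{equation}
c(\tau\omega\tau^{-1})=\Tr\left(B\mathbb P_\tau^{(4)}\mathbb P_{\omega^{-1}}^{(4)}\mathbb P_{\tau^{-1}}^{(4)}\right)=\Tr\left(\mathbb P_{\tau^{-1}}^{(4)}B\mathbb P_\tau^{(4)}\mathbb P_{\omega^{-1}}^{(4)}\right)=c(\omega).
\end{equation}
Hence $X_B$ lies in the image of the center of $\mathbb C[S_4]$ under the representation $\omega\mapsto\mathbb P_\omega^{(4)}$. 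The center is spanned by the central idempotents $e_\lambda=\frac{d_\lambda}{24}\sum_\omega\chi_\lambda(\omega)\omega$, whose images are exactly the $P_\lambda$ in Eq.~\eqref{eq:P_lambda}. I would expand $c$ in irreducible characters, $c=\sum_\nu \beta_\nu\chi_\nu$, using that $\chi_\nu$ is real and satisfies $\chi_\nu(\omega^{-1})=\chi_\nu(\omega)$. This gives
\begin{equation}
X_B=\sum_\nu \beta_\nu\frac{24}{d_\nu}P_\nu.
\end{equation}
Using the orthogonality $P_\nu P_\lambda=\delta_{\nu\lambda}P_\lambda$, the left-hand side then becomes
\begin{equation}
\frac{d_\lambda}{24}X_BP_\lambda=\beta_\lambda P_\lambda .
\end{equation}
Terms with $\ell(\nu)>d$ vanish, and this is consistent with Eq.~\eqref{eq:P_mu_normalization}.

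The remaining step is to identify the scalar $\beta_\lambda$. By the orthogonality relations of characters,
\begin{equation}
\beta_\lambda=\frac{1}{24}\sum_\omega c(\omega)\chi_\lambda(\omega)=\frac1{24}\sum_\omega\chi_\lambda(\omega)\Tr\left(B\mathbb P_{\omega^{-1}}^{(4)}\right).
\end{equation}
Replacing $\omega$ by $\omega^{-1}$ and using $\chi_\lambda(\omega^{-1})=\chi_\lambda(\omega)$, this equals $\frac{1}{d_\lambda}\Tr(BP_\lambda)$ by Eq.~\eqref{eq:P_lambda}. That is precisely the claimed coefficient.

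The only delicate point is the case of vanishing components, when $\ell(\lambda)>d$ or when $P_\lambda$ acts as zero on $(\mathbb C^d)^{\otimes4}$. In that case the representation $\omega\mapsto\mathbb P^{(4)}_\omega$ is not faithful on $\mathbb C[S_4]$. This is why I would argue entirely with operator identities ($P_\nu P_\lambda=\delta_{\nu\lambda}P_\lambda$, which holds in any representation) rather than with uniqueness of group-algebra coefficients. When $P_\lambda=0$, both sides of Eq.~\eqref{eq:key-reduction} vanish, so nothing further is needed.

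As an alternative, one can avoid the character expansion altogether. The operator $\mathbb P_\omega^{(4)}P_\lambda$ acts on $W_\lambda\otimes S_\lambda$ as $\bI\otimes\pi_\lambda(\omega)$. Schur's lemma then shows that the sum $\sum_\omega \Tr(B\mathbb P_{\omega^{-1}})\pi_\lambda(\omega)$, restricted through $P_\lambda$, is a scalar, and taking the trace against $P_\lambda$ fixes that scalar. I would present the character-based route, since it is shorter.
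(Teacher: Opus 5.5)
Your proof is correct, and it takes a different route from the paper's.

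\textbf{The paper's route.} It works inside the Schur--Weyl decomposition. It uses $[B,P_\mu]=0$ and Schur's lemma on each block to show $P_\mu B P_\mu = B_\mu\otimes\bI_{S_\mu}$. From this block form it reads off the character expansion $\Tr(B\mathbb P^{(4)}_{\omega^{-1}})=\sum_\mu \frac{1}{d_\mu}\Tr(BP_\mu)\chi_\mu(\omega^{-1})$ directly. It then collapses the sum using $P_\mu P_\lambda=\delta_{\mu\lambda}P_\lambda$.

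\textbf{Your route.} You use the commutation hypothesis only once, to show that $c(\omega)=\Tr(B\mathbb P^{(4)}_{\omega^{-1}})$ is a class function. You then get the same expansion coefficient $\beta_\lambda=\Tr(BP_\lambda)/d_\lambda$ from the orthogonality of the irreducible characters of $S_4$. The images of the central idempotents do the rest.

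\textbf{Comparison.} Your argument is shorter and purely group-algebraic. It never needs the Weyl modules $W_\lambda$ or Schur's lemma on the blocks of $B$, and it applies verbatim to any finite group acting in any representation. It also handles the components with $\ell(\lambda)>d$ cleanly, because $P_\nu P_\lambda=\delta_{\nu\lambda}P_\lambda$ is the image of an identity in $\mathbb C[S_4]$ and so holds even when the representation is not faithful. The paper's route gives more structural information, namely the block form of $B$ itself. That fits naturally with the Schur--Weyl language used for the fourth-order Clifford twirl, but this lemma does not need it.
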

\begin{proof}[Proof of Lemma~\ref{lem:simplification_permutation-invariant}]
By the Schur-Weyl duality, the natural permutation representation of the symmetric group $S_4$ on $(\mathbb{C}^d)^{\otimes4}$,
\begin{equation}
\Gamma_{S_4}^d:\omega\mapsto\mathbb{P}_\omega^{(4)}
\end{equation}
is decomposed as
\begin{equation}
\Gamma_{S_4}^d(\omega)=\bigoplus_{\mu\vdash4:\ell(\mu)\leq d}\bI_{W_\mu}\otimes\Gamma_{S_4}^\mu(\omega),\qquad\omega\in S_4,
\end{equation}
where $\Gamma_{S_4}^\mu:S_4\rightarrow\mathcal{L}(S_\mu)$ is the irreducible representation of $\Gamma_{S_4}^d$ on the Specht module $S_\mu$.

Notice that $[B,P_\mu]=0$.
Hence $B$ preserves each Schur-Weyl block and can be decomposed as
\begin{equation}
B=B\bI=B\left(\sum_{\mu\vdash 4:\ell(\mu)\leq d} P_\mu\right)=B\left(\sum_{\mu\vdash 4:\ell(\mu)\leq d} P_\mu^2\right)=
\sum_{\mu\vdash 4:\ell(\mu)\leq d} P_\mu B P_\mu.
\end{equation}
Fix the partition $\mu$, on the block $W_\mu\otimes S_\mu$, write
\begin{equation}
P_\mu B P_\mu
=
\sum_{a,b} |a\rangle\langle b|\otimes M_{ab},
\end{equation}
where $\{|a\rangle\}$ is a basis of $W_\mu$ and each $M_{ab}\in \textup{End}(S_\mu)$.
Because $P_\mu B P_\mu$ still commutes with every permutation, we have
\begin{equation}
(\bI_{W_\mu}\otimes \Gamma_{S_4}^\mu(\omega))(P_\mu B P_\mu)
=
(P_\mu B P_\mu)(\bI_{W_\mu}\otimes \Gamma_{S_4}^\mu(\omega)),
\qquad \forall\omega\in S_4.
\end{equation}
Comparing coefficients of $|a\rangle\langle b|$ yields
\begin{equation}
\Gamma_{S_4}^\mu(\omega)M_{ab}=M_{ab}\Gamma_{S_4}^\mu(\omega),
\qquad \forall\omega\in S_4.
\end{equation}
Since $S_\mu$ is an irreducible $S_4$-module, Schur's lemma implies that each $M_{ab}$ must be a scalar multiple of the identity on $S_\mu$:
\begin{equation}
M_{ab}=c_{ab}\bI_{S_\mu}.
\end{equation}
Therefore
\begin{equation}
P_\mu B P_\mu
=
\sum_{a,b} c_{ab}|a\rangle\langle b|\otimes \bI_{S_\mu}
=
B_\mu\otimes \bI_{S_\mu}
\end{equation}
for a uniquely defined operator
\begin{equation}
B_\mu=\sum_{a,b} c_{ab}|a\rangle\langle b|\in \textup{End}(W_\mu).
\end{equation}

We now evaluate terms in Eq.~\eqref{eq:key-reduction}. Using $\sum_\mu P_\mu=\bI$, $[B,P_\mu]=0$, and $[\mathbb P_\omega^{(4)},P_\mu]=0$, we obtain
\begin{equation}
\Tr\left(B\mathbb P_{\omega^{-1}}^{(4)}\right)
=
\sum_{\mu\vdash 4:\ell(\mu)\le d}
\Tr\left(P_\mu B P_\mu\mathbb P_{\omega^{-1}}^{(4)}\right).
\end{equation}
On the $\mu$-block,
\begin{equation}
P_\mu B P_\mu = B_\mu\otimes \bI_{S_\mu},
\qquad
\mathbb P_{\omega^{-1}}^{(4)}\bigg|_{W_\mu\otimes S_\mu}
=
\bI_{W_\mu}\otimes \Gamma_{S_4}^\mu(\omega^{-1}),
\end{equation}
so that
\begin{equation}
\begin{aligned}
\Tr\left(P_\mu B P_\mu\mathbb P_{\omega^{-1}}^{(4)}\right)
&=
\Tr\left((B_\mu\otimes \bI_{S_\mu})(\bI_{W_\mu}\otimes \Gamma_{S_4}^\mu(\omega^{-1}))\right)\\
&=
\Tr(B_\mu)\Tr\left(\Gamma_{S_4}^\mu(\omega^{-1})\right)\\
&=
\Tr(B_\mu)\chi_\mu(\omega^{-1}).
\end{aligned}
\end{equation}
Summing over all blocks gives
\begin{equation}
\Tr\left(B\mathbb P_{\omega^{-1}}^{(4)}\right)
=
\sum_{\mu\vdash 4 : \ell(\mu)\le d}
\Tr(B_\mu)\chi_\mu(\omega^{-1}).
\label{eq:trace-character-expansion}
\end{equation}

Finally, note that
\begin{equation}
\Tr(BP_\mu)
=\Tr(P_\mu BP_\mu)=
\Tr(B_\mu\otimes \bI_{S_\mu})
=
 d_\mu\Tr(B_\mu),
\end{equation}
so equivalently
\begin{equation}
\Tr(B_\mu)=\frac{1}{d_\mu}\Tr(BP_\mu).
\end{equation}
Then Eq.~\eqref{eq:trace-character-expansion} becomes
\begin{equation}
\Tr\left(B\mathbb P_{\omega^{-1}}^{(4)}\right)
=
\sum_{\mu\vdash 4 : \ell(\mu)\le d}
\frac{1}{d_\mu}\Tr(BP_\mu)\chi_\mu(\omega^{-1}).
\label{eq:trace-character-expansion-2}
\end{equation}
Substituting this into the left-hand side of Eq.~\eqref{eq:key-reduction}, we obtain
\begin{equation}
\begin{aligned}
\frac{d_\lambda}{24}
\sum_{\omega\in S_4}
\Tr\left(B\mathbb P_{\omega^{-1}}^{(4)}\right)
\mathbb P_\omega^{(4)} P_\lambda=
\sum_{\mu\vdash 4 : \ell(\mu)\le d}
\frac{1}{d_\mu}\Tr(BP_\mu)
\left[
\frac{d_\lambda}{24}
\sum_{\omega\in S_4}
\chi_\mu(\omega^{-1})\mathbb P_\omega^{(4)} P_\lambda
\right]
=\sum_{\mu\vdash 4 : \ell(\mu)\le d}
\frac{1}{d_\mu}\Tr(BP_\mu)
P_\mu P_\lambda.
\end{aligned}
\end{equation}
Notice that the projectors defined in Eq.~\eqref{eq:P_lambda} are  orthogonal projections on $W_\lambda\otimes S_\lambda$,
\begin{equation}
P_\mu P_\lambda
=\delta_{\lambda,\mu}P_\lambda,
\end{equation}
Eq.~\eqref{eq:key-reduction} then follows.
\end{proof}

\begin{proof}[Proof of Theorem~\ref{thm:R4-general-inline}]
Since $\rho^{\otimes 4}$ is invariant under permutations of the four tensor factors, we have
\begin{equation}
\left[\rho^{\otimes 4},\mathbb P_\omega^{(4)}\right]=0,
\qquad \forall\omega\in S_4.
\label{eq:A4-commutes-perm}
\end{equation}
Because $Q$ and $Q^\perp$ both commute with all permutation operators in $S_4$ (easily follows from Eq.~\eqref{eq:Q_def}), we have
\begin{equation}
\left[\rho^{\otimes 4}Q,\mathbb P_\omega^{(4)}\right]=0,
\qquad
\left[\rho^{\otimes 4}Q^\perp,\mathbb P_\omega^{(4)}\right]=0,
\qquad \forall\omega\in S_4.
\label{eq:commuting-Bs}
\end{equation}
Take $A=\rho^{\otimes4}$ in Eq.~\eqref{eq:4-order_Clifford_twirling}, then each of the two $\omega$-sums can be simplified separately. 
Apply Lemma~\ref{lem:simplification_permutation-invariant} first with 
$B=\rho^{\otimes 4}Q$, this gives
\begin{equation}
\begin{aligned}
\frac{1}{24}d_\lambda
\sum_{\omega\in S_4}
\frac{1}{D_\lambda^+}
\Tr\left(\rho^{\otimes 4}Q\mathbb P_{\omega^{-1}}^{(4)}\right)
Q\mathbb P_\omega^{(4)} P_\lambda
=
\frac{Q}{D_\lambda^+}
\left[
\frac{d_\lambda}{24}
\sum_{\omega\in S_4}
\Tr\left(\rho^{\otimes 4}Q\mathbb P_{\omega^{-1}}^{(4)}\right)
\mathbb P_\omega^{(4)} P_\lambda
\right]
=
\frac{1}{D_\lambda^+}
\frac{1}{d_\lambda}
\Tr\left(\rho^{\otimes 4}QP_\lambda\right)
QP_\lambda.
\end{aligned}
\label{eq:first-term-reduced}
\end{equation}
Likewise, apply Lemma~\ref{lem:simplification_permutation-invariant} with
$B=\rho^{\otimes 4}Q^\perp$,
we obtain
\begin{equation}
\frac{1}{24}d_\lambda
\sum_{\omega\in S_4}
\frac{1}{D_\lambda^-}
\Tr\left(\rho^{\otimes 4}Q^\perp\mathbb P_{\omega^{-1}}^{(4)}\right)
Q^\perp\mathbb P_\omega^{(4)} P_\lambda
=
\frac{1}{D_\lambda^-}
\frac{1}{d_\lambda}
\Tr\left(\rho^{\otimes 4}Q^\perp P_\lambda\right)
Q^\perp P_\lambda.
\label{eq:second-term-reduced}
\end{equation}

Recall the definitions in Eq.~\eqref{eq:pqhat-def-inline},
\begin{equation}
\widehat p_\lambda(\rho)=\frac{1}{d_\lambda}\Tr\left(\rho^{\otimes 4}P_\lambda\right),
\qquad
\widehat q_\lambda(\rho)=\frac{1}{d_\lambda}\Tr\left(\rho^{\otimes 4}QP_\lambda\right).
\label{eq:hatp-hatq-def}
\end{equation}
Since $Q^\perp=\bI-Q$, we also have
\begin{equation}
\frac{1}{d_\lambda}\Tr\left(\rho^{\otimes 4}Q^\perp P_\lambda\right)
=
\widehat p_\lambda(\rho)-\widehat q_\lambda(\rho).
\label{eq:qperp-scalar}
\end{equation}
Substituting Eqs.~\eqref{eq:first-term-reduced}--\eqref{eq:qperp-scalar} into Eq.~\eqref{eq:4-order_Clifford_twirling}, we arrive at
\begin{equation}
\Phi_{\Cl_n}^{(4)}(\rho^{\otimes 4})
=
\sum_{\lambda\vdash 4 : \ell(\lambda)\le d}
\left[
\frac{\widehat q_\lambda(\rho)}{D_\lambda^+}Q
+
\frac{\widehat p_\lambda(\rho)-\widehat q_\lambda(\rho)}{D_\lambda^-}Q^\perp
\right]P_\lambda.
\label{eq:Phi-A4-final}
\end{equation}
And then
\begin{equation}
\begin{aligned}
\norm{\rho}_{O,\Cl_n,4}^{4}
=\sum_{\lambda\vdash4:\ell(\lambda)\le d}
\left[
\frac{\widehat q_\lambda(\rho)}{D_\lambda^+}\Tr(O^{\ot4}QP_\lambda)
+
\frac{\widehat p_\lambda(\rho)-\widehat q_\lambda(\rho)}{D_\lambda^-}
\Tr(O^{\ot4}Q^\perp P_\lambda)
\right],
\end{aligned}
\end{equation}
where the terms with zero denominator should be omitted in the above summation.
Now by the definitions of $\widehat p_\lambda$ and $\widehat q_\lambda$ in Eq.~\eqref{eq:pqhat-def-inline},
\begin{equation}
\begin{aligned}
\Tr(O^{\ot4}QP_\lambda)&=d_\lambda\widehat q_\lambda(O),\\
\Tr(O^{\ot4}Q^\perp P_\lambda)
&=d_\lambda\left(\widehat p_\lambda(O)-\widehat q_\lambda(O)\right).
\end{aligned}
\end{equation}
Substituting these identities proves Eq.~\eqref{eq:R4-general-inline_app}.
\end{proof}

\subsection{Coefficients for standard triangle witness and proof of Corollary~\ref{cor:pure-C4_magic}}\label{app:qubit_magic_T_coefficients}
We then calculate the explicit coefficients related to $\widetilde{T}$ that appear in the formula of $\norm{\rho}_{\widetilde{T},\Cl_n,4}$.
For the witness induced by the triangle criterion~\cite{liu2025MagicCriterionAlmost}
\begin{equation}\label{eq:witness_magic}
W=T=\frac{\bI+X-Y+Z}{1+\sqrt3}\otimes \ket{0^{n-1}}\bra{0^{n-1}},\qquad
\widetilde{W}=\widetilde{T}=\bI-T,
\end{equation}
its spectrum is
\begin{equation}
\operatorname{spec}(\widetilde T)=\{0,3-\sqrt3,1,\dots,1\},
\end{equation}
where the eigenvalue $1$ has multiplicity $d-2$ with $d=2^n$, and we write its $k$th-order moment as
\begin{equation}\label{eq:def_t_k}
t_k=\Tr(\widetilde T^k)=d-2+(3-\sqrt3)^k,\qquad k\in\mathbb{N}_+.
\end{equation}
Especially, the first four moments are
\begin{equation}
\begin{aligned}
 t_1&=d+1-\sqrt3, & t_2&=d+10-6\sqrt3, &
 t_3&=d+52-30\sqrt3, & t_4&=d+250-144\sqrt3.
\end{aligned}
\label{eq:tk-values-inline}
\end{equation}
\begin{lemma}[Coefficient data for $\widetilde{T}$]\label{lem:coefficients-inline}
Write $\nu_\lambda=\widehat p_\lambda(\widetilde T)$, $a_\lambda=\widehat q_\lambda(\widetilde T)$ and $b_\lambda=\nu_\lambda-a_\lambda$, their explicit values are
\begin{equation}\label{eq:nu_tilde_W}
\begin{aligned}
\nu_{[4]}&=\frac{t_1^4+6t_1^2t_2+3t_2^2+8t_1t_3+6t_4}{24},\\
\nu_{[3,1]}&=\frac{t_1^4+2t_1^2t_2-t_2^2-2t_4}{8},\\
\nu_{[2,2]}&=\frac{t_1^4+3t_2^2-4t_1t_3}{12},\\
\nu_{[2,1,1]}&=\frac{t_1^4-2t_1^2t_2-t_2^2+2t_4}{8},\\
\nu_{[1,1,1,1]}&=\frac{t_1^4-6t_1^2t_2+3t_2^2+8t_1t_3-6t_4}{24},
\end{aligned}
\end{equation}
where values of $t_k$ are given in Eq.~\eqref{eq:tk-values-inline}.
Moreover,
\begin{equation}\label{eq:a_values}
\begin{aligned}
a_{[4]}&=
\frac{1}{6}\left(d^2+(7-4\sqrt3)d+86-48\sqrt3+\frac{800-464\sqrt3}{d}\right),\\
a_{[2,2]}&=
\frac{1}{3}\left(d^2+(4-4\sqrt3)d+23-12\sqrt3+\frac{-70+40\sqrt3}{d}\right),\\
a_{[1,1,1,1]}&=
\frac{1}{6}\left(d^2+(1-4\sqrt3)d-34+24\sqrt3+\frac{56-32\sqrt3}{d}\right),\\
a_{[3,1]}&=0,\\
a_{[2,1,1]}&=0.
\end{aligned}
\end{equation}
\end{lemma}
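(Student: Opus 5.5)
The two parts of Lemma~\ref{lem:coefficients-inline} are handled separately, since they rest on different structures.

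\textbf{The values $\nu_\lambda=\widehat p_\lambda(\widetilde T)$.} These follow directly from Eq.~\eqref{eq:pqhat-def-inline}. For any operator $A$ one has $p_\mu(A)=\Tr(\mathbb{P}_\pi^{(4)}A^{\otimes4})=\prod_{c}\Tr(A^{|c|})$, where the product runs over the cycles of $\pi$. Hence
\begin{equation}
p_{[1,1,1,1]}=t_1^4,\quad p_{[2,1,1]}=t_1^2t_2,\quad p_{[2,2]}=t_2^2,\quad p_{[3,1]}=t_1t_3,\quad p_{[4]}=t_4 .
\end{equation}
Substituting these into the character transform with the class sizes $m_\mu$ and Table~\ref{tab:character_S_4} gives Eq.~\eqref{eq:nu_tilde_W} row by row. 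The moments $t_k$ come from the spectrum of $\widetilde T$. The single-qubit factor $(\bI+X-Y+Z)/(1+\sqrt3)$ has eigenvalues $(1\pm\sqrt3)/(1+\sqrt3)$, that is, $1$ and $\sqrt3-2$. Therefore $\widetilde T$ has eigenvalues $0$ and $3-\sqrt3$ on the $\ket{0^{n-1}}$ sector and eigenvalue $1$ elsewhere, which yields Eq.~\eqref{eq:def_t_k}.

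\textbf{The values $a_\lambda=\widehat q_\lambda(\widetilde T)$.} These require computing the five Pauli moments $q_\mu(\widetilde T)=d^{-2}\sum_{P}\Tr(\mathbb{P}_\pi^{(4)}(P\widetilde T)^{\otimes 4})$ from Eq.~\eqref{eq:q_values}.

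Write $\widetilde T=\bI-T$ with $T=\tau\otimes\Pi$, where $\tau=(\bI+X-Y+Z)/(1+\sqrt3)$ and $\Pi=\ket{0^{n-1}}\bra{0^{n-1}}$. Each trace in Eq.~\eqref{eq:q_values} is a polynomial in $T$; for example $\Tr((\widetilde TP)^2)=\Tr(P^2)-2\Tr(TP^2)+\Tr(TPTP)$ after using $P^2=\bI$.

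The key simplification is that every Pauli factorizes as $P=P_1\otimes P'$ and $T$ is a product operator. Consequently each trace $\Tr(TP\cdots TP)$ factorizes into a qubit-1 factor involving $\tau$ and $P_1$, times a factor on the remaining qubits involving $\Pi$ and $P'$. Useful facts for the second factor:
\begin{itemize}
\item $\Tr(\Pi P')=\pm1$ if $P'$ is diagonal ($Z$-type) and $0$ otherwise.
\item Products such as $\Pi P'\Pi P'$ are nonzero only when $P'$ is diagonal.
\item Terms with a single $\Pi$ are handled similarly.
\end{itemize}
So the sum over $P'$ splits into diagonal strings ($2^{n-1}$ of them) and the rest, with the counts carried in powers of $d$. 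The sum over $P_1\in\{\bI,X,Y,Z\}$ is a finite computation involving $\tau$, where $\Tr(\tau P_1)\in\{2,2,-2,2\}/(1+\sqrt3)$.

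Expanding each $q_\mu$ this way gives polynomials in $d$ and $1/d$ with coefficients in $\mathbb{Q}(\sqrt3)$. Inserting them into Eq.~\eqref{eq:qhat_values} then gives Eq.~\eqref{eq:a_values}.

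\textbf{Vanishing of $a_{[3,1]}$ and $a_{[2,1,1]}$.} These need no computation. By Table~\ref{tab:module_dimension}, $D_\lambda^+=0$ for these $\lambda$, so $QP_\lambda=0$, which gives $\widehat q_\lambda(A)=\Tr(P_\lambda QA^{\otimes4})/d_\lambda=0$ for every $A$.

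\textbf{Main obstacle.} The bookkeeping for the expansion of $q_\mu(\bI-T)$ is the hard step. The binomial expansion of the fourth tensor power produces many mixed terms, and for each one the Pauli sum over $P'$ must be organized by whether $P'$ is diagonal. A cleaner route is available: first compute $\widehat q_\lambda$ for the pieces $\bI$, $T$ and cross terms through the identity $\Tr(QA^{\otimes 4}\mathbb{P}_\pi)$ with $Q$ realized as a projector onto a stabilizer code space. Alternatively, exploit $\Tr(Q\,(\cdot))$ linearity via the multilinear form $\Tr(\mathbb{P}_\pi Q\,A_1\otimes A_2\otimes A_3\otimes A_4)$, sorting by how many slots carry $T$ (0 to 4), and use the symmetry of $Q$ to collapse orderings. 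Any slips can be caught with two checks: the small-$d$ cases ($d=2,4$) can be verified numerically, and the identity $\sum_\lambda d_\lambda a_\lambda=\Tr(Q\widetilde T^{\otimes4})$ serves as a sanity check.
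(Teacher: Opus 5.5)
Your proposal is correct. The $\nu_\lambda$ part is exactly the paper's argument: cycle traces $p_\mu(\widetilde T)=\prod_c t_{|c|}$ combined with the spectrum $\{0,3-\sqrt3,1,\dots,1\}$. For the $a_\lambda$ you share the paper's skeleton, which splits the Pauli sum according to the tail $P'$ on the last $n-1$ qubits. You differ from the paper in how you evaluate the traces and in how you justify the two vanishing coefficients.

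\textbf{Evaluating the traces.} The paper computes $\Tr((\widetilde TP)^k)$ in closed form by exhibiting invariant orthogonal sectors for three tail classes: the identity tail, nontrivial $Z$-type tails, and non-diagonal tails (the last via the invariant plane $\mathrm{span}\{\ket{0^{n-1}},P'\ket{0^{n-1}}\}$). You instead expand $(P-TP)^k$ into words and factor each word. This works, because a word's tail factor is $\prod_i\langle 0^{n-1}|P'^{g_i}|0^{n-1}\rangle$, where $g_i$ are the numbers of $P$'s between consecutive $T$'s, taken mod $2$.

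\textbf{The vanishing coefficients.} The paper gets $a_{[3,1]}=a_{[2,1,1]}=0$ only a posteriori, from the computed coincidences $q_{[1,1,1,1]}(\widetilde T)=q_{[2,2]}(\widetilde T)$ and $q_{[2,1,1]}(\widetilde T)=q_{[4]}(\widetilde T)$. Your argument is cleaner: $QP_\lambda$ is an orthogonal projector of rank $d_\lambda D_\lambda^+=0$. You can also check this directly, since $\Tr(Q\mathbb{P}_\pi^{(4)})$ equals $d^2,d,d^2,1,d$ for cycle types $[1,1,1,1],[2,1,1],[2,2],[3,1],[4]$, and the character sums for $\Tr(QP_{[3,1]})$ and $\Tr(QP_{[2,1,1]})$ then vanish.

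What this buys you:
\begin{itemize}
\item Those two coincidences become identities valid for every operator, not computational outcomes.
\item The remaining values $a_{[4]},a_{[2,2]},a_{[1,1,1,1]}$ depend only on $q_{[1,1,1,1]},q_{[2,1,1]},q_{[3,1]}$. Each of these carries a factor $\Tr(\widetilde TP)$.
\item Non-diagonal tails have $\Tr(\widetilde TP)=0$ and drop out entirely, so you never need $\Tr((\widetilde TP)^4)$ or any analysis of that class.
\item For $Z$-type tails every $T$-containing word has tail factor exactly $1$. Your expansion then collapses to $\Tr((\widetilde TP)^k)=\Tr(P^k)-\Tr(P_1^k)+\Tr\bigl(((\bI-\tau)P_1)^k\bigr)$ for $k\le 3$. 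The whole computation reduces to single-qubit traces plus counting.
\end{itemize}

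\textbf{What remains to be done.} As written, the proposal asserts rather than derives the three nonzero values, and those values are the real content of the lemma. When you carry out the computation, fix two small points:
\begin{itemize}
\item $\langle 0^{n-1}|P'|0^{n-1}\rangle=+1$, not $\pm1$, for every $Z$-type tail.
\item The identity tail must be separated from the other diagonal tails. The $T$-free word of odd length contributes $\Tr(P^k)=d\,\delta_{P,\bI}$, so the identity and nontrivial diagonal tails differ for $P_1=\bI$ at $k=1,3$.
\end{itemize}
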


\begin{proof}[Proof of Lemma~\ref{lem:coefficients-inline}]
The formulas for $\nu_\lambda$ follow immediately from the cycle traces of $\widetilde{T}^{\otimes 4}$,
\begin{equation}
\begin{aligned}
p_{[1,1,1,1]}(\widetilde{T})&=t_1^4,
&
p_{[2,1,1]}(\widetilde{T})&=t_1^2t_2,
&
p_{[2,2]}(\widetilde{T})&=t_2^2,
&
p_{[3,1]}(\widetilde{T})&=t_1t_3,
&
p_{[4]}(\widetilde{T})&=t_4,
\end{aligned}
\end{equation}
similar to Eq.~\eqref{eq:phat_values}. It therefore remains to compute the $Q$-sector coefficients $a_\lambda$.

Write
\begin{equation}
\widetilde{T}=\bI-A\otimes\Pi_0,
\qquad
A=\frac{\bI+X-Y+Z}{1+\sqrt3},
\qquad
\Pi_0=|0^{n-1}\rangle\langle 0^{n-1}|.
\end{equation}
Let
\begin{equation}
P=p\otimes P_{\mathrm{tail}},
\qquad
p\in\{\bI,X,Y,Z\},
\end{equation}
where $P_{\mathrm{tail}}$ is a Hermitian Pauli string acting on the last $n-1$ qubits. Since
\begin{equation}
\Pi_0 P_{\mathrm{tail}} \Pi_0
=
\langle 0^{n-1}|P_{\mathrm{tail}}|0^{n-1}\rangle\Pi_0,
\end{equation}
we divide $P_{\mathrm{tail}}$ into three cases:
\begin{itemize}
\item \emph{Class (i):} $P_{\mathrm{tail}}=\bI^{\otimes(n-1)}$. This class contains one element.
\item \emph{Class (ii):} $P_{\mathrm{tail}}\in\{\bI,Z\}^{\otimes(n-1)}\setminus\{\bI^{\otimes(n-1)}\}$. This class contains $d/2-1$ elements.
\item \emph{Class (iii):} all remaining Pauli strings $P_{\mathrm{tail}}$. This class contains $d(d-2)/4$ elements.
\end{itemize}

For each class, we compute the terms
\begin{equation}
\tau_k(P)=\Tr\left((\widetilde{T} P)^k\right),
\qquad k=1,2,3,4,
\end{equation}
that appear in $\widehat{q}_\lambda(\widetilde{T})$.

\medskip
\noindent
\emph{Step 1: Class (i).}
If $P_{\mathrm{tail}}=\bI^{\otimes(n-1)}$, then
\begin{equation}
\widetilde{T} P
=
p\otimes(\bI-\Pi_0)+(p-Ap)\otimes\Pi_0.
\end{equation}
Because $(\bI-\Pi_0)\Pi_0=\Pi_0(\bI-\Pi_0)=0$, the two summands act on orthogonal sectors and therefore
\begin{equation}
(\widetilde{T} P)^k
=
p^k\otimes(\bI-\Pi_0)+(p-Ap)^k\otimes\Pi_0.
\end{equation}
Hence
\begin{equation}\label{eq:tau_k(P)_class_1}
\tau_k(P)
=
\Tr(p^k)\Tr(\bI-\Pi_0)+\Tr\left((p-Ap)^k\right)\Tr(\Pi_0).
\end{equation}
Now
\begin{equation}
\Tr(\bI-\Pi_0)=\frac d2-1,
\qquad
\Tr(\Pi_0)=1.
\end{equation}
Moreover, a direct one-qubit calculation (cf.\ Eq.~\eqref{eq:def_t_k}) gives
\begin{equation}\label{eq:1-qubit_trace_I}
\Tr\left((\bI-A)^k\right)=\left.t_k\right|_{d=2}=\left(3-\sqrt{3}\right)^k,
\end{equation}
\begin{equation}\label{eq:1-qubit_trace_XZ}
\begin{aligned}
\Tr(X-AX)&=\Tr(Z-AZ)=1-\sqrt3,
\\
\Tr\left((X-AX)^2\right)&=\Tr\left((Z-AZ)^2\right)=4-2\sqrt3,
\\
\Tr\left((X-AX)^3\right)&=\Tr\left((Z-AZ)^3\right)=10-6\sqrt3,
\\
\Tr\left((X-AX)^4\right)&=\Tr\left((Z-AZ)^4\right)=28-16\sqrt3,
\end{aligned}
\end{equation}
and
\begin{equation}\label{eq:1-qubit_trace_Y}
\begin{aligned}
\Tr(Y-AY)&=\sqrt3-1,
\\
\Tr\left((Y-AY)^2\right)&=4-2\sqrt3,
\\
\Tr\left((Y-AY)^3\right)&=-10+6\sqrt3,
\\
\Tr\left((Y-AY)^4\right)&=28-16\sqrt3.
\end{aligned}
\end{equation}
Inserting above equations to Eq.~\eqref{eq:tau_k(P)_class_1}, we obtain
\begin{equation}\label{eq:tau_P_all_class_1}
\begin{aligned}
(\tau_1(P),\tau_2(P),\tau_3(P),\tau_4(P))
&=(t_1,t_2,t_3,t_4),
&& p=\bI,
\\
(\tau_1(P),\tau_2(P),\tau_3(P),\tau_4(P))
&=(1-\sqrt3,\ d+2-2\sqrt3,\ 10-6\sqrt3,\ d+26-16\sqrt3),
&& p=X,Z,
\\
(\tau_1(P),\tau_2(P),\tau_3(P),\tau_4(P))
&=(\sqrt3-1,\ d+2-2\sqrt3,\ -10+6\sqrt3,\ d+26-16\sqrt3),
&& p=Y.
\end{aligned}
\end{equation}

\medskip
\noindent
\emph{Step 2: Class (ii).}
Now let
\begin{equation}
P_{\mathrm{tail}}\in\{\bI,Z\}^{\otimes(n-1)}\setminus\{\bI^{\otimes(n-1)}\}.
\end{equation}
Then $P_{\mathrm{tail}}$ is diagonal in the computational basis, it has eigenvalue $+1$ on $|0^{n-1}\rangle$, and it is traceless. Hence
\begin{equation}
P_{\mathrm{tail}}\Pi_0=\Pi_0P_{\mathrm{tail}}=\Pi_0,
\qquad
\Tr(P_{\mathrm{tail}})=0.
\end{equation}
Therefore
\begin{equation}
\widetilde{T} P
=
p\otimes(P_{\mathrm{tail}}-\Pi_0)+(p-Ap)\otimes\Pi_0.
\end{equation}
Again, the two tensor factors act on orthogonal sectors because
\begin{equation}
(P_{\mathrm{tail}}-\Pi_0)\Pi_0=\Pi_0(P_{\mathrm{tail}}-\Pi_0)=0.
\end{equation}
Thus
\begin{equation}
(\widetilde{T} P)^k
=
p^k\otimes(P_{\mathrm{tail}}-\Pi_0)^k+(p-Ap)^k\otimes\Pi_0
\end{equation}
and
\begin{equation}\label{eq:tau_k(P)_class_2}
\tau_k(P)
=
\Tr(p^k)\Tr\left((P_{\mathrm{tail}}-\Pi_0)^k\right)+\Tr\left((p-Ap)^k\right)\Tr(\Pi_0).
\end{equation}
Since $P_{\mathrm{tail}}^2=\bI$ and $P_{\mathrm{tail}}\Pi_0=\Pi_0$, we have
\begin{equation}
(P_{\mathrm{tail}}-\Pi_0)^2=\bI-\Pi_0,
\end{equation}
and since $\Tr(P_{\mathrm{tail}})=0$,
\begin{equation}
\Tr(P_{\mathrm{tail}}-\Pi_0)=-1.
\end{equation}
Consequently,
\begin{equation}
\Tr\left((P_{\mathrm{tail}}-\Pi_0)^{2m}\right)=\frac d2-1,
\qquad
\Tr\left((P_{\mathrm{tail}}-\Pi_0)^{2m+1}\right)=-1.
\end{equation}
Inserting Eqs.~\eqref{eq:1-qubit_trace_I}-\eqref{eq:1-qubit_trace_Y} to Eq.~\eqref{eq:tau_k(P)_class_2}, we have
\begin{equation}\label{eq:tau_P_all_class_2}
\begin{aligned}
(\tau_1(P),\tau_2(P),\tau_3(P),\tau_4(P))
&=(1-\sqrt3,\ t_2,\ 52-30\sqrt3,\ t_4),
&& p=\bI,
\\
(\tau_1(P),\tau_2(P),\tau_3(P),\tau_4(P))
&=(1-\sqrt3,\ d+2-2\sqrt3,\ 10-6\sqrt3,\ d+26-16\sqrt3),
&& p=X,Z,
\\
(\tau_1(P),\tau_2(P),\tau_3(P),\tau_4(P))
&=(\sqrt3-1,\ d+2-2\sqrt3,\ -10+6\sqrt3,\ d+26-16\sqrt3),
&& p=Y.
\end{aligned}
\end{equation}

\medskip
\noindent
\emph{Step 3: Class (iii).}
Suppose now that $P_{\mathrm{tail}}$ contains at least one $X$ or $Y$ factor. Then
\begin{equation}
\Pi_0 P_{\mathrm{tail}} \Pi_0 = 0.
\end{equation}
Set
\begin{equation}
|\beta\rangle=P_{\mathrm{tail}}|0^{n-1}\rangle.
\end{equation}
Since $P_{\mathrm{tail}}$ contains at least one $X$ or $Y$ factor, the computational-basis vector $|0^{n-1}\rangle$ is mapped to a basis vector orthogonal to itself, and therefore
\begin{equation}
\langle 0^{n-1}|\beta\rangle=0.
\end{equation}
Hence the $(n-1)$-qubit Hilbert space decomposes as
\begin{equation}
\mathcal H_{\mathrm{tail}}
=
\mathrm{span}\{|0^{n-1}\rangle,|\beta\rangle\}\oplus \mathcal K,
\end{equation}
with
\begin{equation}
\dim \mathcal K = \frac d2-2.
\end{equation}
Because $P_{\mathrm{tail}}$ is Hermitian and unitary, it preserves this decomposition: indeed,
\begin{equation}
P_{\mathrm{tail}}|0^{n-1}\rangle=|\beta\rangle,
\qquad
P_{\mathrm{tail}}|\beta\rangle=P_{\mathrm{tail}}^2|0^{n-1}\rangle=|0^{n-1}\rangle,
\end{equation}
so $\mathrm{span}\{|0^{n-1}\rangle,|\beta\rangle\}$ is invariant, and for any $|v\rangle\in\mathcal K$ one has
\begin{equation}
\langle 0^{n-1}|P_{\mathrm{tail}}|v\rangle
=
\langle P_{\mathrm{tail}}0^{n-1}|v\rangle
=
\langle \beta|v\rangle
=
0,
\end{equation}
and similarly
\begin{equation}
\langle \beta|P_{\mathrm{tail}}|v\rangle
=
\langle P_{\mathrm{tail}}\beta|v\rangle
=
\langle 0^{n-1}|v\rangle
=
0.
\end{equation}
Thus $P_{\mathrm{tail}}\mathcal K\subseteq \mathcal K$.

Now consider the full $n$-qubit space
\begin{equation}
\mathcal H=\mathbb C^2\otimes\mathcal H_{\mathrm{tail}}
=
\left(\mathbb C^2\otimes \mathrm{span}\{|0^{n-1}\rangle,|\beta\rangle\}\right)
\oplus
(\mathbb C^2\otimes \mathcal K).
\end{equation}
On the second summand, the projector $\Pi_0=|0^{n-1}\rangle\langle 0^{n-1}|$ vanishes identically, because
\begin{equation}
\Pi_0|v\rangle=0,
\qquad
\forall|v\rangle\in\mathcal K.
\end{equation}
Since $P_{\mathrm{tail}}\mathcal K\subseteq\mathcal K$, it follows that
\begin{equation}
\Pi_0P_{\mathrm{tail}}|v\rangle=0,
\qquad
\forall|v\rangle\in\mathcal K.
\end{equation}
Therefore, on $\mathbb C^2\otimes\mathcal K$,
\begin{equation}
\widetilde{T} P
=
(\bI-A\otimes\Pi_0)(p\otimes P_{\mathrm{tail}})
=
p\otimes P_{\mathrm{tail}},
\end{equation}
because the term $Ap\otimes \Pi_0P_{\mathrm{tail}}$ is zero there.

The dimension of this subspace is
\begin{equation}
\dim(\mathbb C^2\otimes\mathcal K)
=
2\left(\frac d2-2\right)
=
d-4.
\end{equation}
Moreover,
\begin{equation}
(p\otimes P_{\mathrm{tail}})^2
=
p^2\otimes P_{\mathrm{tail}}^2
=
\bI,
\end{equation}
so every even power acts as the identity on $\mathbb C^2\otimes\mathcal K$. Hence its contribution to the even traces is exactly
\begin{equation}\label{eq:tau_P_2_even}
\Tr_{\mathbb C^2\otimes\mathcal K}\left((p\otimes P_{\mathrm{tail}})^{2m}\right)=d-4.
\end{equation}
For odd powers, one has
\begin{equation}
(p\otimes P_{\mathrm{tail}})^{2m+1}=p\otimes P_{\mathrm{tail}},
\end{equation}
whose trace on $\mathbb C^2\otimes\mathcal K$ vanishes. Indeed, if $p\in\{X,Y,Z\}$ then $\Tr(p)=0$, while if $p=\bI$, the operator $P_{\mathrm{tail}}$ is traceless on the full tail space and also traceless on $\mathrm{span}\{|0^{n-1}\rangle,|\beta\rangle\}$, hence it is traceless on $\mathcal K$ as well. Therefore
\begin{equation}\label{eq:tau_P_2_odd}
\Tr_{\mathbb C^2\otimes\mathcal K}\left((p\otimes P_{\mathrm{tail}})^{2m+1}\right)=0.
\end{equation}
In summary, the orthogonal-complement sector contributes $d-4$ to every even trace and $0$ to every odd trace.

On the two-dimensional subspace $\mathrm{span}\{|0^{n-1}\rangle,|\beta\rangle\}$, with respect to the ordered basis $\{|0^{n-1}\rangle,|\beta\rangle\}$,
we have
\begin{equation}
p\otimes P_{\mathrm{tail}}
\sim
\begin{bmatrix}
0 & p\\
p & 0
\end{bmatrix},
\qquad
Ap\otimes \Pi_0 P_{\mathrm{tail}}
\sim
\begin{bmatrix}
0 & Ap\\
0 & 0
\end{bmatrix}.
\end{equation}
Hence
\begin{equation}
\begin{aligned}
\widetilde{T} P
&\sim
\begin{bmatrix}
0 & p-Ap\\
p & 0
\end{bmatrix},\\
(\widetilde{T} P)^2
&\sim
\begin{bmatrix}
\bI-A & 0\\
0 & \bI-pAp
\end{bmatrix},\\
(\widetilde{T} P)^3
&\sim
\begin{bmatrix}
0 & (p-Ap)(\bI-pAp)\\
p(\bI-A) & 0
\end{bmatrix},\\
(\widetilde{T} P)^4
&\sim
\begin{bmatrix}
(\bI-A)^2 & 0\\
0 & (\bI-pAp)^2
\end{bmatrix}.
\end{aligned}
\end{equation}
Since $p^2=\bI$, the matrices $A$ and $pAp$ are similar, and therefore
\begin{equation}
\Tr\left((\bI-pAp)^m\right)=\Tr\left((\bI-A)^m\right)
\qquad \textup{for all } m\ge 1.
\end{equation}
Using Eq.~\eqref{eq:tau_P_2_odd}, it follows that the odd traces vanish,
\begin{equation}
\tau_1(P)=\tau_3(P)=0,
\end{equation}
while the even traces (using Eq.~\eqref{eq:tau_P_2_even}) are
\begin{equation}
\begin{aligned}
\tau_2(P)
=(d-4)+2\Tr(\bI-A)
=d+2-2\sqrt3,
\end{aligned}
\end{equation}
and
\begin{equation}
\begin{aligned}
\tau_4(P)
=(d-4)+2\Tr\left((\bI-A)^2\right)
=d+20-12\sqrt3.
\end{aligned}
\end{equation}
Thus
\begin{equation}\label{eq:tau_P_all_class_3}
(\tau_1(P),\tau_2(P),\tau_3(P),\tau_4(P))
=
(0,\ d+2-2\sqrt3,\ 0,\ d+20-12\sqrt3)
\end{equation}
for all $p\in\{\bI,X,Y,Z\}$.

\medskip
\noindent
\emph{Step 4: Evaluation of the five $Q$-sector invariants.}
We now substitute the above traces into Eq.~\eqref{eq:q_lambda_A}.

For $q_{[1,1,1,1]}(\widetilde{T})$, only Classes~(i) (Eq.~\eqref{eq:tau_P_all_class_1}) and (ii) (Eq.~\eqref{eq:tau_P_all_class_2}) contribute:
\begin{equation}
\begin{aligned}
q_{[1,1,1,1]}(\widetilde{T})
&=
\frac{1}{d^2}
\left[
t_1^4
+
3(1-\sqrt3)^4
+
4\left(\frac d2-1\right)(1-\sqrt3)^4
\right]
\\&=
d^2+(4-4\sqrt3)d+(24-12\sqrt3)+\frac{96-56\sqrt3}{d}.
\end{aligned}
\end{equation}

Next,
\begin{equation}
\begin{aligned}
q_{[2,2]}(\widetilde{T})
&=
\frac{1}{d^2}
\left[
t_2^2
+
3(d+2-2\sqrt3)^2
+\left(\frac d2-1\right)\left(t_2^2+3(d+2-2\sqrt3)^2\right)
+d(d-2)(d+2-2\sqrt3)^2
\right]
\\
&=
d^2+(4-4\sqrt3)d+(24-12\sqrt3)+\frac{96-56\sqrt3}{d}.
\end{aligned}
\end{equation}
Hence
\begin{equation}
q_{[1,1,1,1]}(\widetilde{T})=q_{[2,2]}(\widetilde{T}).
\end{equation}

For $q_{[2,1,1]}(\widetilde{T})$, only Classes~(i) and (ii) contribute:
\begin{equation}
\begin{aligned}
q_{[2,1,1]}(\widetilde{T})
&=
\frac{1}{d^2}
\left[
t_2 t_1^2
+
3(d+2-2\sqrt3)(1-\sqrt3)^2
+\left(\frac d2-1\right)
\left(
t_2(1-\sqrt3)^2
+
3(d+2-2\sqrt3)(1-\sqrt3)^2
\right)
\right]
\\
&=
d+(20-12\sqrt3)+\frac{124-72\sqrt3}{d}.
\end{aligned}
\end{equation}

Similarly,
\begin{equation}
\begin{aligned}
q_{[4]}(\widetilde{T})
&=
\frac{1}{d^2}
\left[
t_4
+
3(d+26-16\sqrt3)
+\left(\frac d2-1\right)\left(t_4+3(d+26-16\sqrt3)\right)
+d(d-2)(d+20-12\sqrt3)
\right]
\\
&=
d+(20-12\sqrt3)+\frac{124-72\sqrt3}{d}.
\end{aligned}
\end{equation}
Hence
\begin{equation}
q_{[2,1,1]}(\widetilde{T})=q_{[4]}(\widetilde{T}).
\end{equation}

Finally,
\begin{equation}
\begin{aligned}
q_{[3,1]}(\widetilde{T})
&=
\frac{1}{d^2}
\left[
t_3 t_1
+
3(10-6\sqrt3)(1-\sqrt3)
+\left(\frac d2-1\right)
\left(
(52-30\sqrt3)(1-\sqrt3)
+
3(10-6\sqrt3)(1-\sqrt3)
\right)
\right]
\\&=
1+\frac{166-96\sqrt3}{d}.
\end{aligned}
\end{equation}

\medskip
\noindent
\emph{Step 5: Computation of $a_\lambda$.}
Substituting the above formulas into Eq.~\eqref{eq:pqhat-def-inline}, and using
\begin{equation}
q_{[1,1,1,1]}(\widetilde{T})=q_{[2,2]}(\widetilde{T}),
\qquad
q_{[2,1,1]}(\widetilde{T})=q_{[4]}(\widetilde{T}),
\end{equation}
we obtain (omitting $\widetilde{T}$ for simplicity)
\begin{equation}
\begin{aligned}
a_{[4]}
&=
\frac{q_{[1,1,1,1]}+6q_{[2,1,1]}+3q_{[2,2]}+8q_{[3,1]}+6q_{[4]}}{24}
\\
&=
\frac{q_{[1,1,1,1]}+3q_{[2,1,1]}+2q_{[3,1]}}{6},
\\
a_{[2,2]}
&=
\frac{2q_{[1,1,1,1]}+6q_{[2,2]}-8q_{[3,1]}}{24}
=
\frac{q_{[1,1,1,1]}-q_{[3,1]}}{3},
\\
a_{[1,1,1,1]}
&=
\frac{q_{[1,1,1,1]}-6q_{[2,1,1]}+3q_{[2,2]}+8q_{[3,1]}-6q_{[4]}}{24}
\\
&=
\frac{q_{[1,1,1,1]}-3q_{[2,1,1]}+2q_{[3,1]}}{6},
\\
a_{[3,1]}
&=
\frac{3q_{[1,1,1,1]}+6q_{[2,1,1]}-3q_{[2,2]}-6q_{[4]}}{24}
=0,
\\
a_{[2,1,1]}
&=
\frac{3q_{[1,1,1,1]}-6q_{[2,1,1]}-3q_{[2,2]}+6q_{[4]}}{24}
=0.
\end{aligned}
\end{equation}
Substituting the explicit expressions for $q_{[1,1,1,1]}(\widetilde{T})$, $q_{[2,1,1]}(\widetilde{T})$, and $q_{[3,1]}(\widetilde{T})$ gives exactly Eq.~\eqref{eq:a_values}.
\end{proof}

As a special case, when the input state is pure, the coefficients and the character transforms (Eq.~\eqref{eq:phat_values} and Eq.~\eqref{eq:qhat_values}) largely simplify. As shown in the following corollary (reformulated from Corollary~\ref{cor:pure-C4_magic} in the main text), $\norm{\rho}_{\widetilde{T},\Cl_n,4}^4$ becomes a function linear in the Pauli fourth moment of $\rho$, which relates directly to the stabilizer 2-R\'{e}nyi entropy~\cite{Leone2022sre}. We also provide an explicit formula for the stabilizer threshold $C_{\widetilde{T},\Cl_n,4}$.
\begin{corollary}[Pure-state reduction and the stabilizer threshold]\label{cor:pure-C4-inline}
Let $\psi=|\psi\rangle\langle\psi|$ be pure and define the Pauli fourth moment
\begin{equation}
\mathfrak{P}_2(\psi)=\frac{1}{d}\sum_{P\in\mathcal P_n}|\langle\psi|P|\psi\rangle|^4.
\end{equation}
Then
\begin{equation}
\begin{aligned}
\norm{\psi}_{\widetilde{T},\Cl_n,4}^4
&=\frac{a_{[4]}}{D_{[4]}^+}\frac{\mathfrak{P}_2(\psi)}{d}+\frac{b_{[4]}}{D_{[4]}^-}\left(1-\frac{\mathfrak{P}_2(\psi)}{d}\right)
\\&=\frac{24(7-4\sqrt{3})(d-8)}{d(d-1)(d+1)(d+2)(d+4)}\mathfrak{P}_2(\psi)+C_4,
\end{aligned}
\label{eq:R4-pure-inline}
\end{equation}
where
\begin{equation}\label{eq:c_4}
C_4=\frac{
-3200 + 1856 \sqrt{3}
+ 3400 d - 1968 \sqrt{3} d
+ 862 d^2 - 496 \sqrt{3} d^2
+ 103 d^3 - 60 \sqrt{3} d^3
+ 10 d^4 - 4 \sqrt{3} d^4
+ d^5
}{
d (d-1) (d+1) (d+2) (d+4)
}
\end{equation}
is a constant.
In particular, every pure stabilizer state $\phi$ satisfies $\mathfrak{P}_2(\phi)=1$. Setting
\begin{equation}
\Delta_4=d(d+1)(d+2)(d+4),
\end{equation}
one obtains
\begin{equation}
\begin{aligned}
C_{\widetilde{T},\Cl_n,4}^4&=
\frac{a_{[4]}}{dD_{[4]}^+}
+
\frac{(1-1/d)b_{[4]}}{D_{[4]}^-}\\
&=\frac{1}{\Delta_4}\left[
 d^4+(11-4\sqrt3)d^3+(114-64\sqrt3)d^2+(976-560\sqrt3)d+4544-2624\sqrt3
\right].
\end{aligned}
\label{eq:C4-formula-inline}
\end{equation}
\end{corollary}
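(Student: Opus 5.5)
The plan is to specialize the general formula of Theorem~\ref{thm:R4-general-inline} to a pure input $\psi$, with coefficients $a_\lambda=\widehat q_\lambda(\widetilde T)$ and $b_\lambda=\nu_\lambda-a_\lambda$ from Lemma~\ref{lem:coefficients-inline}. The first step is to show that for pure $\psi$ only the $\lambda=[4]$ block survives. Indeed, $\psi^{\otimes4}=\ketbra{\psi}{\psi}^{\otimes4}$ lives in the symmetric subspace, so $\psi^{\otimes4}P_\lambda=0$ for $\lambda\neq[4]$. Since $Q$ commutes with permutations, also $\Tr(P_\lambda Q\psi^{\otimes4})=0$ for $\lambda\neq[4]$. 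Hence $\widehat p_\lambda(\psi)=\widehat q_\lambda(\psi)=0$ for $\lambda\ne[4]$.

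For $\lambda=[4]$ we have $d_{[4]}=1$. Moreover $\widehat p_{[4]}(\psi)=\Tr(\psi^{\otimes4})=1$, which can also be checked from Eq.~\eqref{eq:phat_values} with $r_k=1$. Similarly
\[
\widehat q_{[4]}(\psi)=\Tr(Q\psi^{\otimes4})=\frac1{d^2}\sum_P\langle\psi|P|\psi\rangle^4=\frac{\mathfrak P_2(\psi)}{d}.
\]
Substituting gives the first line of Eq.~\eqref{eq:R4-pure-inline}:
\[
\frac{a_{[4]}}{D^+_{[4]}}\frac{\mathfrak P_2}{d}+\frac{b_{[4]}}{D^-_{[4]}}\Bigl(1-\frac{\mathfrak P_2}{d}\Bigr).
\]

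The second step is bookkeeping to reach the closed form. I take $a_{[4]}$ from Eq.~\eqref{eq:a_values} and $\nu_{[4]}$ from Eq.~\eqref{eq:nu_tilde_W}, evaluated with the moments $t_k$ in Eq.~\eqref{eq:tk-values-inline}. I also use $D^+_{[4]}=(d+1)(d+2)/6$ and $D^-_{[4]}=(d-1)(d+1)(d+2)(d+4)/24$. The coefficient of $\mathfrak P_2$ is
\[
\frac1d\left(\frac{a_{[4]}}{D^+_{[4]}}-\frac{b_{[4]}}{D^-_{[4]}}\right),
\]
and the constant is $C_4=b_{[4]}/D^-_{[4]}$. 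Putting everything over the common denominator $d(d-1)(d+1)(d+2)(d+4)$ should give the numerator $24(7-4\sqrt3)(d-8)$ for the slope and Eq.~\eqref{eq:c_4} for $C_4$.

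The main obstacle is this algebraic simplification. The $\sqrt3$ terms and the $1/d$ pieces of $a_{[4]}$ must combine so that the numerator factors as $(7-4\sqrt3)(d-8)$. I would carry it out by expanding $\nu_{[4]}$ as a polynomial in $d$ with coefficients in $\mathbb Z[\sqrt3]$, then checking the factorization symbolically. A useful sanity check is that $7-4\sqrt3=(2-\sqrt3)^2$.

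Finally, for a pure stabilizer state $\phi$ the Pauli expectation $\langle\phi|P|\phi\rangle$ equals $\pm1$ on $d$ stabilizer elements and $0$ otherwise, so $\mathfrak P_2(\phi)=1$. Since pure stabilizer states form a single $\Cl_n$-orbit and $\widetilde T\ge0$, Observation~\ref{obs:threshold_single_orbit} gives $C^4_{\widetilde T,\Cl_n,4}=\|\phi\|^4$. Setting $\mathfrak P_2=1$ yields Eq.~\eqref{eq:magic_threshold}. Rewriting that expression over $\Delta_4$, where the factor $(d-1)$ cancels, gives Eq.~\eqref{eq:C4-formula-inline}; this is again a routine polynomial check.
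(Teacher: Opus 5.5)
Your proposal is correct and follows the paper's proof. You specialize Theorem~\ref{thm:R4-general-inline} to a pure input so that only the $\lambda=[4]$ block survives, with $\widehat p_{[4]}(\psi)=1$ and $\widehat q_{[4]}(\psi)=\mathfrak P_2(\psi)/d$, substitute the coefficients of Lemma~\ref{lem:coefficients-inline}, and get the threshold from Observation~\ref{obs:threshold_single_orbit} using $\mathfrak P_2(\phi)=1$; your symmetric-subspace argument ($\psi^{\otimes4}P_\lambda=0$ for $\lambda\neq[4]$) is a cleaner substitute for the paper's plugging in $r_k=1$ and invoking character orthogonality. The algebra you deferred does close: the slope numerator is $6a_{[4]}d(d+3)-24\nu_{[4]}=24(7-4\sqrt3)(d-8)$, the quantity $24d(\nu_{[4]}-a_{[4]})$ reproduces the numerator of $C_4$, and at $\mathfrak P_2=1$ the total numerator factors as $(d-1)$ times the bracket in Eq.~\eqref{eq:C4-formula-inline}.
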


\begin{proof}
Inserting $r_2=r_3=r_4=1$ to Eq.~\eqref{eq:phat_values}, we have
\begin{equation}\label{eq:hat_p_pure_state}
\begin{aligned}
\widehat p_{[4]}(\psi)&=1,\\
\widehat p_\lambda(\psi)&=0,\quad \textup{for }\lambda\neq[4].
\end{aligned}
\end{equation}

Now set $m_P=\langle\psi|P|\psi\rangle$. Since $\psi P\psi=m_P\psi$, all five $Q$-sector invariants (cf.Eq.~\eqref{eq:q_values}) reduce to the same value,
\begin{equation}
\begin{aligned}
q_{[1,1,1,1]}(\psi)&=q_{[2,1,1]}(\psi)=q_{[2,2]}(\psi)\\
&=q_{[3,1]}(\psi)=q_{[4]}(\psi)\\
&=\frac{1}{d^2}\sum_{P\in\mathcal P_n}m_P^4=\frac{\mathfrak{P}_2(\psi)}{d}.
\end{aligned}
\end{equation}
Applying Eq.~\eqref{eq:qhat_values} and using the character orthogonality relations gives
\begin{equation}\label{eq:hat_q_pure_state}
\begin{aligned}
\widehat q_{[4]}(\psi)&=\frac{\mathfrak{P}_2(\psi)}{d},\\
\widehat q_\lambda(\psi)&=0\quad \textup{for }\lambda\neq[4].
\end{aligned}
\end{equation}
Substituting these identities into Theorem~\ref{thm:R4-general-inline} proves Eq.~\eqref{eq:R4-pure-inline}.

If $\phi$ is a pure stabilizer state, then exactly $d$ Hermitian Paulis have 
expectation value $1$ on $\phi$, while all others have expectation value $0$, hence 
$\mathfrak{P}_2(\phi)=1$. Since the maximum of WE $4$-norm over the stabilizer 
polytope 
is attained at an arbitrary pure stabilizer state by Observation~\ref{obs:threshold_single_orbit},
\begin{equation}
C_{\widetilde{T},\Cl_n,4}^4=\norm{\phi}_{\widetilde{T},\Cl_n,4}^4=
\frac{a_{[4]}}{dD_{[4]}^+}
+
\frac{(1-1/d)b_{[4]}}{D_{[4]}^-}.
\end{equation}
Using the results in Lemma~\ref{lem:coefficients-inline}, we obtain Eq.~\eqref{eq:C4-formula-inline}.
\end{proof}

\subsection{Proof of Corollary~\ref{cor:projector_seed_magic}}\label{app:qubit_magic_Phi_coefficients}
We then consider the WE $4$-criterion constructed by a seed operator chosen as the rank-$1$ projector onto a 
pure $n$-qubit state vector $\ket{\Phi}$ with high magic. We first prove the existence of such a state and 
then calculate the coefficients introduced in Theorem~\ref{thm:R4_magic} and 
Theorem~\ref{thm:R4-general-inline}.
\begin{proof}[Proof of Corollary~\ref{cor:projector_seed_magic}]
Denote the fifth root of unity as $\omega_5=e^{i2\pi/5}$ and for $\mathbf{m}=(m_1,\ldots,m_{d-1})\in\{0,1,2,3,4\}^{d-1}$ define the normalized $n$-qubit pure state
\begin{equation}
\ket{\Phi_{n,\mathbf{m}}}
=d^{-1/4}\ket{0^n}
+\frac{1}{\sqrt{d+\sqrt{d}}}\sum_{x=1}^{d-1}\omega_5^{m_x}\ket{x}.
\label{eq:finite-phase-state}
\end{equation}
We then show the average evaluation of $\mathfrak{P}_2$ of states in this ensemble $\left\{\Phi_{n,\mathbf{m}}\middle\mid \mathbf{m}\in\{0,1,2,3,4\}^{d-1}\right\}$ is strictly smaller the $4/(d+3)$. So that there must exist some $\mathbf{m}$, such that
\begin{equation}
\mathfrak{P}_2(\Phi_{n,\mathbf{m}})<\frac{4}{d+3}.
\label{eq:strict-seed}
\end{equation}
Choose each entry $m_x$ of $\mathbf{m}=(m_1,\ldots,m_{d-1})$ independently and uniformly from $\{0,1,2,3,4\}$. We parametrize an $n$-qubit Pauli operator ignoring phase as $X^aZ^b$ for $a,b\in\mathbb Z_2^n$, then
\begin{equation}
\mathfrak{P}_2(\Phi_{n,\mathbf{m}})
=\frac1d\sum_{a,b\in\mathbb F_2^n}\abs{\bra{\Phi_{n,\mathbf{m}}}X^aZ^b\ket{\Phi_{n,\mathbf{m}}}}^4.
\label{eq:frakP2-ab}
\end{equation}
\begin{itemize}
\item
If $a=0$, then $\bra{\Phi_{n,\mathbf{m}}}Z^0\ket{\Phi_{n,\mathbf{m}}}=1$. For $b\ne0$, using
$\sum_{x\ne0}(-1)^{b\cdot x}=-1$, we have
\begin{equation}
\bra{\Phi_{n,\mathbf{m}}}Z^b\ket{\Phi_{n,\mathbf{m}}}
=d^{-1/2}-\frac{1}{d+\sqrt{d}}
=\frac1{\sqrt d+1}.
\label{eq:diag-contribution}
\end{equation}
Hence the total $a=0$ contribution to the sum in Eq.~\eqref{eq:frakP2-ab} is
\begin{equation}
1+\frac{d-1}{(\sqrt d+1)^4}.
\label{eq:diag-total}
\end{equation}
\item
Now fix $a\ne0$ and notice that
\begin{equation}\label{eq:pair-sum-contributions}
\bra{\Phi_{n,\mathbf{m}}}X^aZ^b\ket{\Phi_{n,\mathbf{m}}}=\sum_{x\in\mathbb{Z}_2^n}(-1)^{b\cdot x}\langle\Phi_{n,\mathbf{m}}|x+a\rangle\langle x|\Phi_{n,\mathbf{m}}\rangle=\sum_{e\in\mathcal{M}_a}\Lambda_e,
\end{equation}
where we pair the computational basis strings as unordered pairs
$e=\{x,x+a\}$ and denote the collection of all such pairs as $\mathcal{M}_a$. Then the contribution of $e$ in the summation has the form
\begin{equation}
\begin{aligned}
\Lambda_e&=(-1)^{b\cdot x}\langle\Phi_{n,\mathbf{m}}|x+a\rangle\langle x|\Phi_{n,\mathbf{m}}\rangle
+(-1)^{b\cdot(x+a)}
\langle \Phi_{n,\mathbf{m}}|x\rangle\langle x+a|\Phi_{n,\mathbf{m}}\rangle
\\&=(-1)^{b\cdot x}\times\abs{\langle x|\Phi_{n,\mathbf{m}}\rangle}\times\abs{\langle x+a|\Phi_{n,\mathbf{m}}\rangle}\times
\left(\omega_5^{m_x-m_{x+a}}+(-1)^{b\cdot a}\omega_5^{m_{x+a}-m_x}\right),
\label{eq:pair-contribution}
\end{aligned}
\end{equation}
where we set $m_0=0$ and use
$\langle y|\Phi_{n,\mathbf{m}}\rangle=\abs{\langle y|\Phi_{n,\mathbf{m}}\rangle}\omega_5^{m_y}$.
For each fixed $e=\{x,x+a\}$, the phase difference $m_x-m_{x+a}$ is uniformly distributed in $\mathbb Z_5$. Using the fifth-root identity $\sum_{r=0}^4\omega_5^{\ell r}=0$ for $\ell\neq0$,
we have
\begin{equation}
\mathbb{E}_{\mathbf{m}}\left[\omega_5^{\ell(m_x-m_{x+a})}\right]=0,
\qquad
\mathbb{E}_{\mathbf{m}}\left[\omega_5^{\ell(m_{x+a}-m_x)}\right]=0,
\qquad \ell=1,2,3,4.
\label{eq:phase-difference-cancellations}
\end{equation}
We now compute the one-pair moments explicitly.  First,
\begin{equation}\label{eq:pair-zero-mean}
\mathbb{E}_{\mathbf{m}}\left[\Lambda_e\right]
=(-1)^{b\cdot x}\times\abs{\langle x|\Phi_{n,\mathbf{m}}\rangle}\times\abs{\langle x+a|\Phi_{n,\mathbf{m}}\rangle}\times\left(
\mathbb{E}_{\mathbf{m}}\left[\omega_5^{m_x-m_{x+a}}\right]
+(-1)^{b\cdot a}\mathbb{E}_{\mathbf{m}}\left[\omega_5^{m_{x+a}-m_x}\right]
\right)=0.
\end{equation}
For the second absolute moment,
\begin{equation}\label{eq:pair-second-absolute-expanded}
\begin{aligned}
\abs{\Lambda_e}^2
&=\abs{\langle x|\Phi_{n,\mathbf{m}}\rangle}^2\abs{\langle x+a|\Phi_{n,\mathbf{m}}\rangle}^2
\abs{
\omega_5^{m_x-m_{x+a}}+(-1)^{b\cdot a}\omega_5^{m_{x+a}-m_x}
}^2\\
&=\abs{\langle x|\Phi_{n,\mathbf{m}}\rangle}^2\abs{\langle x+a|\Phi_{n,\mathbf{m}}\rangle}^2
\left(2+(-1)^{b\cdot a}\omega_5^{2(m_x-m_{x+a})}
+(-1)^{b\cdot a}\omega_5^{2(m_{x+a}-m_x)}\right).
\end{aligned}
\end{equation}
Taking the expectation and using Eq.~\eqref{eq:phase-difference-cancellations} with $\ell=2$ gives
\begin{equation}
\mathbb{E}_{\mathbf{m}}\left[\abs{\Lambda_e}^2\right]
=2\abs{\langle x|\Phi_{n,\mathbf{m}}\rangle}^2\abs{\langle x+a|\Phi_{n,\mathbf{m}}\rangle}^2.
\label{eq:pair-second-absolute}
\end{equation}
For the second complex moment,
\begin{equation}\label{eq:pair-second-complex-expanded}
\begin{aligned}
\Lambda_e^2
&=\abs{\langle x|\Phi_{n,\mathbf{m}}\rangle}^2\abs{\langle x+a|\Phi_{n,\mathbf{m}}\rangle}^2
\left(
\omega_5^{m_x-m_{x+a}}+(-1)^{b\cdot a}\omega_5^{m_{x+a}-m_x}
\right)^2\\
&=\abs{\langle x|\Phi_{n,\mathbf{m}}\rangle}^2\abs{\langle x+a|\Phi_{n,\mathbf{m}}\rangle}^2
\left(
\omega_5^{2(m_x-m_{x+a})}+2(-1)^{b\cdot a}
+\omega_5^{2(m_{x+a}-m_x)}
\right).
\end{aligned}
\end{equation}
Thus
\begin{equation}\label{eq:pair-second-complex}
\mathbb{E}_{\mathbf{m}}\left[\Lambda_e^2\right]
=2(-1)^{b\cdot a}\abs{\langle x|\Phi_{n,\mathbf{m}}\rangle}^2\abs{\langle x+a|\Phi_{n,\mathbf{m}}\rangle}^2.
\end{equation}
Finally, for the fourth absolute moment, by Eq.~\eqref{eq:pair-second-absolute-expanded}:
\begin{equation}\label{eq:pair-fourth-absolute-expanded}
\begin{aligned}
\abs{\Lambda_e}^4
&=\abs{\langle x|\Phi_{n,\mathbf{m}}\rangle}^4\abs{\langle x+a|\Phi_{n,\mathbf{m}}\rangle}^4
\left(2+(-1)^{b\cdot a}\omega_5^{2(m_x-m_{x+a})}
+(-1)^{b\cdot a}\omega_5^{2(m_{x+a}-m_x)}\right)^2\\
&=\abs{\langle x|\Phi_{n,\mathbf{m}}\rangle}^4\abs{\langle x+a|\Phi_{n,\mathbf{m}}\rangle}^4
\left(6
+4(-1)^{b\cdot a}\omega_5^{2(m_x-m_{x+a})}
+4(-1)^{b\cdot a}\omega_5^{2(m_{x+a}-m_x)}
+\omega_5^{4(m_x-m_{x+a})}
+\omega_5^{4(m_{x+a}-m_x)}\right).
\end{aligned}
\end{equation}
The terms with powers $2$ and $4$ vanish after averaging, so
\begin{equation}
\mathbb{E}_{\mathbf{m}}\left[\abs{\Lambda_e}^4\right]
=6\abs{\langle x|\Phi_{n,\mathbf{m}}\rangle}^4\abs{\langle x+a|\Phi_{n,\mathbf{m}}\rangle}^4.
\label{eq:pair-fourth-absolute}
\end{equation}
Distinct unordered pairs in $\mathcal M_a$ use disjoint phase variables, hence the corresponding pair contributions are independent. We now expand the fourth absolute moment of Eq.~\eqref{eq:pair-sum-contributions}. Since
\begin{equation}
\abs{\sum_{e\in\mathcal M_a}\Lambda_e}^4
=\left(\sum_{e\in\mathcal M_a}\Lambda_e\right)^2
 \left(\sum_{e\in\mathcal M_a}\overline{\Lambda_e}\right)^2,
\end{equation}
all terms in which some pair contribution appears only once vanish by Eq.~\eqref{eq:pair-zero-mean}. The surviving terms are exactly
\begin{equation}\label{eq:fourth-moment-survivors}
\begin{aligned}
\mathbb{E}_{\mathbf{m}}\left[\abs{\sum_{e\in\mathcal M_a}\Lambda_e}^4\right]
&=
\sum_{e\in\mathcal M_a}\mathbb{E}_{\mathbf{m}}\left[\abs{\Lambda_e}^4\right]
+\sum_{\substack{e,f\in\mathcal M_a\\ e\ne f}}
\mathbb{E}_{\mathbf{m}}\left[\Lambda_e^2\right]\mathbb{E}_{\mathbf{m}}\left[\overline{\Lambda_f}^{2}\right]
+4\sum_{\substack{e,f\in\mathcal M_a\\ e<f}}
\mathbb{E}_{\mathbf{m}}\left[\abs{\Lambda_e}^2\right]\mathbb{E}_{\mathbf{m}}\left[\abs{\Lambda_f}^2\right].
\end{aligned}
\end{equation}
Here $e<f$ denotes an arbitrary fixed ordering of the finite set $\mathcal M_a$ and the final value is independent of that ordering. Substituting Eq.~\eqref{eq:pair-second-absolute}, Eq.~\eqref{eq:pair-second-complex} and Eq.~\eqref{eq:pair-fourth-absolute} into Eq.~\eqref{eq:fourth-moment-survivors} gives
\begin{equation}\label{eq:fourth-expanded-pairs}
\begin{aligned}
\mathbb{E}_{\mathbf{m}}\left[\abs{\bra{\Phi_{n,\mathbf{m}}}X^aZ^b\ket{\Phi_{n,\mathbf{m}}}}^4\right]
=&6\sum_{\{x,x+a\}}\abs{\langle x|\Phi_{n,\mathbf{m}}\rangle}^4\abs{\langle x+a|\Phi_{n,\mathbf{m}}\rangle}^4\\
&+24\sum_{\substack{\{x,x+a\}<\{y,y+a\}}}
\abs{\langle x|\Phi_{n,\mathbf{m}}\rangle}^2\abs{\langle x+a|\Phi_{n,\mathbf{m}}\rangle}^2
\abs{\langle y|\Phi_{n,\mathbf{m}}\rangle}^2\abs{\langle y+a|\Phi_{n,\mathbf{m}}\rangle}^2.
\end{aligned}
\end{equation}
The coefficient $24$ is the sum of $8$ from Eq.~\eqref{eq:pair-second-complex} and $16$ from Eq.~\eqref{eq:pair-second-absolute}. Noticing that
\begin{equation}
\begin{aligned}
&\sum_{\substack{\{x,x+a\}<\{y,y+a\}}}
\abs{\langle x|\Phi_{n,\mathbf{m}}\rangle}^2\abs{\langle x+a|\Phi_{n,\mathbf{m}}\rangle}^2
\abs{\langle y|\Phi_{n,\mathbf{m}}\rangle}^2\abs{\langle y+a|\Phi_{n,\mathbf{m}}\rangle}^2
\\=&\frac12\left(\sum_{\{x,x+a\}}\abs{\langle x|\Phi_{n,\mathbf{m}}\rangle}^2\abs{\langle x+a|\Phi_{n,\mathbf{m}}\rangle}^2\right)^2-\frac12\sum_{\{x,x+a\}}\abs{\langle x|\Phi_{n,\mathbf{m}}\rangle}^4\abs{\langle x+a|\Phi_{n,\mathbf{m}}\rangle}^4
\end{aligned}
\end{equation}
we obtain
\begin{equation}
\begin{aligned}
&\mathbb{E}_{\mathbf{m}}\left[\abs{\bra{\Phi_{n,\mathbf{m}}}X^aZ^b\ket{\Phi_{n,\mathbf{m}}}}^4\right] \\
=&12\left(\sum_{\{x,x+a\}}\abs{\langle x|\Phi_{n,\mathbf{m}}\rangle}^2\abs{\langle x+a|\Phi_{n,\mathbf{m}}\rangle}^2\right)^2
-6\sum_{\{x,x+a\}}\abs{\langle x|\Phi_{n,\mathbf{m}}\rangle}^4\abs{\langle x+a|\Phi_{n,\mathbf{m}}\rangle}^4.
\label{eq:fourth-pair-formula}
\end{aligned}
\end{equation}
For the state vector $\ket{\Phi_{n,\mathbf{m}}}$ in Eq.~\eqref{eq:finite-phase-state}, exactly one unordered pair contains $0$ and $a$, while the other $d/2-1$ pairs contain two nonzero strings. Therefore,
\begin{equation}
\begin{aligned}
\sum_{\{x,x+a\}}\abs{\langle x|\Phi_{n,\mathbf{m}}\rangle}^2\abs{\langle x+a|\Phi_{n,\mathbf{m}}\rangle}^2
&=\frac{1}{d(\sqrt d+1)}+\frac{d/2-1}{d(\sqrt d+1)^2}\\
&=\frac{\sqrt d+2}{2\sqrt d(\sqrt d+1)^2},
\label{eq:first-pair-sum}
\end{aligned}
\end{equation}
while
\begin{equation}
\begin{aligned}
\sum_{\{x,x+a\}}\abs{\langle x|\Phi_{n,\mathbf{m}}\rangle}^4\abs{\langle x+a|\Phi_{n,\mathbf{m}}\rangle}^4
&=\frac{1}{d^2(\sqrt d+1)^2}+\frac{d/2-1}{d^2(\sqrt d+1)^4}
\\
&=\frac{3\sqrt d+4}{2d^{3/2}(\sqrt d+1)^4}.
\label{eq:second-pair-sum}
\end{aligned}
\end{equation}
\end{itemize}
Substituting Eq.~\eqref{eq:first-pair-sum} and Eq.~\eqref{eq:second-pair-sum} into Eq.~\eqref{eq:fourth-pair-formula} yields, for every $a\ne0$ and every $b$,
\begin{equation}
\mathbb{E}_{\mathbf{m}}\left[\abs{\bra{\Phi_{n,\mathbf{m}}}X^aZ^b\ket{\Phi_{n,\mathbf{m}}}}^4\right]
=\frac{3(d^{3/2}+4d+\sqrt d-4)}{d^{3/2}(\sqrt d+1)^4}.
\label{eq:offdiag-moment}
\end{equation}
Combining Eq.~\eqref{eq:diag-total} and Eq.~\eqref{eq:offdiag-moment} in Eq.~\eqref{eq:frakP2-ab} gives
\begin{equation}\label{eq:average-frakP2}
\begin{aligned}
\mathbb{E}_{\mathbf{m}}\left[\mathfrak{P}_2(\Phi_{n,\mathbf{m}})\right]
&=\frac1d\left[1+\frac{d-1}{(\sqrt d+1)^4}
+d(d-1)\frac{3(d^{3/2}+4d+\sqrt d-4)}{d^{3/2}(\sqrt d+1)^4}\right]\\
&=\frac{4d^2+12d^{3/2}-5d-15\sqrt d+12}{d^{3/2}(\sqrt d+1)^3}.
\end{aligned}
\end{equation}
A direct simplification gives the gap
\begin{equation}
\frac{4}{d+3}-\mathbb{E}_{\mathbf{m}}\left[\mathfrak{P}_2(\Phi_{n,\mathbf{m}})\right]
=\frac{(\sqrt d-1)(5d^{3/2}-12d-9\sqrt d+36)}{d^{3/2}(\sqrt d+1)^3(d+3)}>0.
\label{eq:average-gap}
\end{equation}
The denominator is positive and $\sqrt d>1$. If $n=1$, then $\sqrt d=\sqrt2$ and
$5d^{3/2}-12d-9\sqrt d+36=12+\sqrt2>0$. If $n\ge2$, then $\sqrt d\ge2$; the polynomial $5x^3-12x^2-9x+36$ is increasing for $x\ge2$ because its derivative $15x^2-24x-9$ is positive there, and its value at $x=2$ is $10$. Thus the finite average in Eq.~\eqref{eq:average-frakP2} is strictly below $4/(d+3)$. Especially, we could search lexicographically over $\mathbf{m}\in\{0,1,2,3,4\}^{d-1}$ for a concrete $\ket{\Phi_{n,\mathbf{m}}}$ such that $\mathfrak{P}_2(\Phi_{n,\mathbf{m}})<4/(d+3)$.

We then calculate $\widehat p_\lambda(\Phi)$ and $\widehat q_\lambda(\Phi)$ in Eq.~\eqref{eq:R4-general-inline_app}. Notice that $\Phi$ is a pure state, following a similar argument in deriving Eq.~\eqref{eq:hat_p_pure_state} and Eq.~\eqref{eq:hat_q_pure_state}, we have
\begin{equation}
\begin{aligned}
\widehat p_{[4]}(\Phi)&=1,\qquad&\widehat q_{[4]}(\Phi)&=\frac{\mathfrak{P}_2(\Phi)}{d},\\
\widehat p_\lambda(\Phi)&=0,\qquad&\widehat q_\lambda(\Phi)&=0\quad\textup{for }\lambda\neq[4].
\end{aligned}
\end{equation}
According to the dimensions related to $\lambda=[4]$ in Table~\ref{tab:module_dimension}, we recover the WE $4$-norm in Eq.~\eqref{eq:projector_seed_magic_norm} in the main text:
\begin{equation}\label{eq:projector_seed_magic_norm_app}
\norm{\rho}_{\Phi,\Cl_n,4}^4
=
\frac{6\mathfrak{P}_2(\Phi)}{d(d+1)(d+2)}
\widehat q_{[4]}(\rho)
+
\frac{24(d-\mathfrak{P}_2(\Phi))}{d(d-1)(d+1)(d+2)(d+4)}
\left(
\widehat p_{[4]}(\rho)-\widehat q_{[4]}(\rho)
\right).
\end{equation}
And its pure-state reduction is
\begin{equation}
\norm{\psi}_{\Phi,\Cl_n,4}^4
=
\frac{6\mathfrak{P}_2(\Phi)\mathfrak{P}_2(\psi)}{d^2(d+1)(d+2)}
+
\frac{24(d-\mathfrak{P}_2(\Phi))(d-\mathfrak{P}_2(\psi))}{d^2(d-1)(d+1)(d+2)(d+4)}
.
\end{equation}
For a pure stabilizer state $\phi=\ketbra{\phi}{\phi}$, $\mathfrak{P}_2(\phi)=1$, so the threshold value for all stabilizer states is given by
\begin{equation}
C_{\Phi,\Cl_n,4}^4
=\frac{6(\mathfrak{P}_2(\Phi)+4)}{d(d+1)(d+2)(d+4)},
\end{equation}
which is Eq.~\eqref{eq:qubit_magic_projector_threshold}. Then Eq.~\eqref{eq:projector_seed_pure_magic} follows directly from the above two equations. For an $n$-qubit pure state $\psi$, $\mathfrak{P}_2(\psi)\leq1$, with equality if and only if $\psi$ is a stabilizer state~\cite{Leone2022sre}, which proves Eq.~\eqref{eq:qubit_magic_WE_4_projector_pure}.

\end{proof}

\section{Fermionic non-Gaussianity}\label{app:FNG}
\subsection{Proof of Proposition~\ref{thm:wick_odd_majorana_second_order}}\label{app:FNG_Wick_theorem}
In the resource theory of fermionic non-Gaussianity, it is convenient to introduce Majorana operators. They become products of Pauli operators under Jordan-Wigner transformation.
\begin{definition}[Majorana operators]
For an $n$-mode fermionic state, which corresponds to an $n$-qubit state by Jordan-Wigner transformation, the set of $2n$ Majorana operators $\left\{\gamma_1,\gamma_2,\cdots,\gamma_{2n}\right\}$ is defined as $\forall k\in[n]$,
\begin{equation}
\gamma_{2k-1}\coloneqq\left(\prod\limits_{j=1}^{k-1}Z_j\right)X_k,\quad
\gamma_{2k}\coloneqq\left(\prod\limits_{j=1}^{k-1}Z_j\right)Y_k.
\end{equation}    
\end{definition}
For each $j\in[2n]$, the Majorana operator $\gamma_j$ satisfies
\begin{equation}
\gamma_j=\gamma_j^\dagger,\quad \Tr(\gamma_j)=0,\quad \gamma_j^2=\bI.
\end{equation}
It is also easy to verify that for any $j,k\in[2n]$, we have
\begin{equation}\label{eq:majorana_operator_property}
\{\gamma_j,\gamma_k\}=\delta_{j,k}\cdot2\bI,\quad \Tr(\gamma_j^\dagger\gamma_k)=\delta_{j,k}\cdot 2^n.
\end{equation}

\begin{definition}[Majorana products]
For a non-empty subset $S=\left\{\mu_1,\cdots,\mu_{\abs{S}}\right\}$ of $[2n]$, with $1\leq\mu_1<\cdots<\mu_{\abs{S}}\leq2n$, define
\begin{equation}
\gamma_S\coloneqq\prod\limits_{j=1}^{\abs{S}}\gamma_{\mu_j}.
\end{equation}    
Specially, we define $\gamma_{\emptyset}=\bI$. We call $\abs{S}$ as the degree of $\gamma_S$.
\end{definition}

Notice that for subsets $S,S'\subseteq[2n]$, Majorana products are orthogonal 
\begin{equation}\label{eq:Majorana_product_orthogonal}
\Tr(\gamma_S^\dagger\gamma_{S'})=\delta_{S,S'}\cdot 2^n
\end{equation}
w.r.t.\ the 
Hilbert--Schmidt inner product, where the Hermitian conjugate satisfies
\begin{equation}\label{eq:Majorana_product_conjugate}
\gamma_S^\dagger=(-1)^{\abs{S}(\abs{S}-1)/2}\gamma_S.
\end{equation}
Since $\gamma_S$ may not be Hermitian, 
we also denote the Hermitian Majorana product by 
introducing phase factors
\begin{equation}\label{eq:def_Hermitian_Majorana_product}
\widehat{\gamma}_S=\widehat{\gamma}_S^\dagger=(-i)^{\abs{S}(\abs{S}-1)/2}\gamma_S,
\end{equation}
and they satisfy
\begin{equation}\label{eq:Hermitian_Majorana_product_orthogonal}
\Tr(\widehat{\gamma}_S\widehat{\gamma}_{S'})=\delta_{S,S'}\cdot 2^n.
\end{equation}
Furthermore, there are $4^n$ distinct Majorana products, so $\left\{\gamma_S\mid S\subseteq[2n]\right\}$ and $\left\{\widehat{\gamma}_S\mid S\subseteq[2n]\right\}$ are both orthogonal basis for $\mathcal{L}((\mathbb{C}^2)^{\otimes n})$. 

The free unitaries $\Ufree$ are the $n$-qubit free-fermionic (Gaussian) unitaries $\mathrm{M}_n$, which map Majorana operators to a linear combination of Majorana operators, in which the coefficients form an orthogonal matrix $R$. Each Gaussian unitary $U_R$ is specified by $R$. Equivalently, in some literature $\mathrm{M}_n$ is also referred to as those gates realizable by nearest-neighbor matchgate circuits~\cite{jozsa2008MatchgatesClassicalSimulation,wan2023MatchgateShadowsFermionic}.
\begin{definition}[Gaussian unitary]\label{def:Gaussian-unitary}
For an orthogonal matrix $R\in\mathrm{O}(2n)$, a Gaussian unitary $U_R$ is a unitary satisfying $\forall\mu\in[2n]$,
\begin{equation}
U_R^\dagger \gamma_j U_R
    =
    \sum_{k=1}^{2n} R_{j,k}\gamma_k,
\end{equation} 
and the set of $n$-qubit Gaussian unitaries is $\mathrm{M}_n=\left\{U_R\mid R\in\mathrm{O}(2n)\right\}$.
\end{definition}

An operator $A$ is called parity-preserving if it commutes with the parity operator $Z^{\otimes n}$, i.e., $[A,Z^{\otimes n}]=0$. Under the Majorana basis, this is equivalent to say that $A$ is expanded by Majorana products $\gamma_S$ with even degree $\abs{S}$ only. Since parity is superselected, physical fermionic density matrices are parity-preserving, i.e. they commute with the parity operator. A pure parity-preserving state is therefore a parity eigenstate. We denote the set of valid fermionic states as
\begin{equation}\label{eq:def_fermionic_state}
\begin{aligned}
\mathcal{D}_{F}&=\left\{\rho\in\mathcal{D}((\C^2)^{\otimes n})\middle\mid [\rho,Z^{\otimes n}]=0\right\}
\\&=\left\{\rho\in\mathcal{D}((\C^2)^{\otimes n})\middle\mid \Tr(\gamma_S\rho)=0,\forall\abs{S}\textup{ is odd}\right\}.
\end{aligned}
\end{equation}
Given an arbitrary density matrix $\rho\in\mathcal{D}$, we can certify that it is a valid fermionic state by measuring all odd-degree Majorana products (see Proposition~\ref{thm:wick_odd_majorana_second_order}).
\begin{proof}[Proof of Proposition~\ref{thm:wick_odd_majorana_second_order}]
Let $d=2^n$. According to Theorem~1 in Ref.~\cite{wan2023MatchgateShadowsFermionic}, the first- and second- order matchgate twirling formulas are:
\begin{enumerate}
\item for operator $O_1\in\mathcal{L}\left(\mathbb{C}^d\right)$, 
\begin{equation}\label{eq:matchgate_1_twirling}
\Phi_{\mathrm{M}_n}^{(1)}(O_1)=\underset{U\sim\mu_{\mathrm{M}_n}}{\mathbb{E}}\left[{U^\dagger}O_1U\right]=\frac1d\Tr(O_1)\bI,
\end{equation}
\item for operator $O_2\in\mathcal{L}\left(\left(\mathbb{C}^d\right)^{\otimes2}\right)$,
\begin{equation}\label{eq:matchgate_2_twirling}
\begin{aligned}
\Phi_{\mathrm{M}_n}^{(2)}(O_2)&=\underset{U\sim\mu_{\mathrm{M}_n}}{\mathbb{E}}\left[{U^\dagger}^{\otimes2}O_2U^{\otimes2}\right]
\\&=\frac1{d^2}\sum\limits_{k=0}^{2n}\sum\limits_{\substack{A_1\subseteq[2n]\\\abs{A_1}=k}}\sum\limits_{\substack{A_2\subseteq[2n]\\\abs{A_2}=k}}\binom{2n}k^{-1}\Tr(\left(\widehat{\gamma}_{A_2}\otimes\widehat{\gamma}_{A_2}\right)O_2)\widehat{\gamma}_{A_1}^{\otimes2}
\\&=\frac1{d^2}\sum\limits_{k=0}^{2n}\sum\limits_{\substack{A_1\subseteq[2n]\\\abs{A_1}=k}}\sum\limits_{\substack{A_2\subseteq[2n]\\\abs{A_2}=k}}\binom{2n}k^{-1}\Tr(\left(\gamma_{A_2}^\dagger\otimes\gamma_{A_2}^\dagger\right)O_2)\gamma_{A_1}^{\otimes2}.
\end{aligned}
\end{equation}
\end{enumerate}
So we have
\begin{equation}
\begin{aligned}
\norm{\rho}_{\widetilde{\widehat{\gamma}}_S,\mathrm{M}_n,2}^2&=\underset{U\sim\mu_{\mathrm{M}_n}}{\mathbb{E}}\left[\left(1-\Tr(U^\dagger\widehat{\gamma}_SU\rho)\right)^2\right]
\\&=1-2\underset{U\sim\mu_{\mathrm{M}_n}}{\mathbb{E}}\left[\Tr(U^\dagger\widehat{\gamma}_SU\rho)\right]+\underset{U\sim\mu_{\mathrm{M}_n}}{\mathbb{E}}\left[\Tr({U^\dagger}^{\otimes2}\widehat{\gamma}_S^{\otimes2}U^{\otimes2}\rho^{\otimes2})\right]
\\&=1-2\Tr(\Phi_{\mathrm{M}_n}^{(1)}\left(\widehat{\gamma}_S\right)\rho)+
\Tr(\Phi_{\mathrm{M}_n}^{(2)}\left(\widehat{\gamma}_S^{\otimes2}\right)\rho^{\otimes2})
\\&=1+\binom{2n}{\abs{S}}^{-1}\sum\limits_{\substack{T\subseteq[2n]\\\abs{T}=\abs{S}}}\Tr(\widehat{\gamma}_T\rho)^2,
\end{aligned}
\end{equation}
where we have used $\Tr(\widehat{\gamma}_S)=0$ and
\begin{equation}\label{eq:odd_Majorana_twirling}
\begin{aligned}
\Phi_{\mathrm{M}_n}^{(2)}\left(\widehat{\gamma}_S^{\otimes2}\right)=\underset{U\sim\mu_{\mathrm{M}_n}}{\mathbb{E}}\left[{U^\dagger}^{\otimes2}{\widehat{\gamma}_S}^{\otimes2}U^{\otimes2}\right]
&=\frac1{d^2}\sum\limits_{k=0}^{2n}\sum\limits_{\substack{T\subseteq[2n]\\\abs{T}=k}}\sum\limits_{\substack{R\subseteq[2n]\\\abs{R}=k}}\binom{2n}k^{-1}{\widehat{\gamma}_T}^{\otimes2}\Tr(\widehat{\gamma}_R\widehat{\gamma}_S)^2
\\&=\binom{2n}{\abs{S}}^{-1}\sum\limits_{\substack{T\subseteq[2n]\\\abs{T}=\abs{S}}}{\widehat{\gamma}_T}^{\otimes2}.
\end{aligned}
\end{equation}
Notice that $\abs{S}$ is odd, all physical fermionic states $\rho\in\mathcal{D}_F$ have 
${\rm Tr}(\widehat{\gamma}_T\rho)=0$ for $\abs{T}=\abs{S}$, including all fermionic Gaussian states.
For every parity-preserving fermionic state $\rho\in\mathcal D_F$, 
$\norm{\rho}_{\widetilde{\widehat{\gamma}}_S,\mathrm{M}_n,2}=1$. 
Conversely, checking this equality for all odd degrees $|S|=1,3,\dots,2n-1$ 
is equivalent to the absence of odd Majorana components.
\end{proof}

\subsection{Proof of Theorem~\ref{thm:R1-3_FNG_main} and Corollary~\ref{cor:fermion_WE2}}\label{app:FNG_WE_projector_criterion}
The set of $n$-qubit pure Gaussian (free-fermionic) states is those that can be generated from $\ket{0^n}$ using a Gaussian unitary, $\G_n=\{U\ket{0^n}\mid U\in \mathrm{M}_n\}$. Then the set of free states is the convex hull of $\G_n$, $\Fset=\mathrm{Conv}(\G_n)\subseteq\mathcal{D}_F$, also known as the convex Gaussian states~\cite{oszmaniec2014ClassicalSimulationFermionic}. 
It is known that for $n\leq3$, all valid fermionic states are convex Gaussian~\cite{bravyi2005ClassicalCapacityFermionic}. Going beyond this, the four-qubit GHZ state 
\begin{equation}\label{eq:a_8_state}
\ket{\textup{GHZ}_4}=\frac{\ket{0000}+\ket{1111}}{\sqrt{2}}
\end{equation}
is a typical fermionic non-Gaussian state, which plays a role similar to the $T$-state in the magic resource theory that they can be used as an input to realize SWAP gate by a gadget, rendering matchgate circuits classically not efficiently simulable~\cite{hebenstreit2019AllPureFermionic,oszmaniec2022FermionSamplingRobust}. 

We write
\begin{equation}
n=4m+r,\quad\textup{with } r\in\{0,1,2,3\}\textup{ and } m\geq1
\end{equation}
and define the states
\begin{equation}\label{eq:def_eta_r}
\begin{aligned}
\ket{\eta_0}&=\ket{\textup{GHZ}_4},\\
\ket{\eta_1}&=\ket{\textup{GHZ}_4}\otimes\ket{0}=\frac{\ket{00000}+\ket{11110}}{\sqrt2},\\
\ket{\eta_2}&=\ket{\textup{GHZ}_6},\\
\ket{\eta_3}&=\ket{\textup{GHZ}_6}\otimes\ket{0}=\frac{\ket{0000000}+\ket{1111110}}{\sqrt2}.
\end{aligned}
\end{equation}
Set the projectors onto pure fermionic non-Gaussian states for $n=4m+r$,
\begin{equation}\label{eq:def_P_n_r}
P_n=\ket{\eta_r}\bra{\eta_r}\otimes
(\ket{\mathrm{GHZ}_4}\bra{\mathrm{GHZ}_4})^{\otimes(m-1)}.
\end{equation}

Consider the FLO (Gaussian) fidelity, defined as the maximum overlap between a pure state and any pure Gaussian state
\begin{equation}
F_{\mathrm{FLO}}(\ket{\psi})=\max_{\ket{\phi}\in\G_n}\abs{\langle\psi\mid\phi\rangle}^2,
\end{equation}
Lemma~1 in Ref.~\cite{reardon-smith2024FermionicLinearOptical} shows that it is multiplicative for a fixed parity state vector $\ket{\psi}$ and $\ket{\textup{GHZ}_4}$, i.e.,
\begin{equation}\label{eq:FLO_fidelity_multiplicative}
F_{\mathrm{FLO}}(\ket{\psi}\otimes\ket{\textup{GHZ}_4})=F_{\mathrm{FLO}}(\ket{\psi})\times F_{\mathrm{FLO}}(\ket{\textup{GHZ}_4}).
\end{equation}
We then show the FLO fidelity of $n$-qubit GHZ states for even $n\geq4$ is $1/2$.

\begin{lemma}[FLO fidelity of even-qubit GHZ states]\label{lem:overlap-GHZ-free}
For even $n\geq4$, the FLO fidelity of $n$-qubit GHZ state vector $\ket{\textup{GHZ}_n}$ is $1/2$, i.e.,
\begin{equation}
F_{\mathrm{FLO}}(\ket{\textup{GHZ}_n})=\max_{\ket{\phi}\in\G_n}\abs{\langle\phi\mid\textup{GHZ}_n\rangle}^2=\frac12,
\end{equation}
and the $n$-qubit vacuum state vector $\ket{0^n}$ saturates the 
bound.
\end{lemma}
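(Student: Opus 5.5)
The proof has two halves: a lower bound attained by the vacuum, and a matching upper bound that holds for every pure Gaussian state. For the lower bound, note that $\ket{0^n}\in\G_n$ by definition and $\abs{\langle 0^n|\textup{GHZ}_n\rangle}^2=1/2$, so $F_{\mathrm{FLO}}(\ket{\textup{GHZ}_n})\geq 1/2$.

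For the upper bound I would use the Majorana expansion together with Wick's theorem. Write
\begin{equation}
\ketbra{\textup{GHZ}_n}{\textup{GHZ}_n}=\frac{1}{2}\left(\ketbra{0^n}{0^n}+\ketbra{1^n}{1^n}\right)+\frac12\left(\ketbra{0^n}{1^n}+\ketbra{1^n}{0^n}\right).
\end{equation}
The first bracket is a diagonal mixture of two computational basis states, each of which is a Gaussian state. For any pure Gaussian $\phi$, its contribution to the overlap is $\frac12(\abs{\langle\phi|0^n\rangle}^2+\abs{\langle\phi|1^n\rangle}^2)$.

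The second bracket equals $\frac12\sum_{P}$ of Pauli strings built only from $X$ and $Y$ factors with an even number of $Y$'s, each weighted by a phase factor $2^{-n}$. Under Jordan--Wigner each such string $X/Y$ on all $n$ sites is a Majorana product $\widehat\gamma_S$ of degree exactly $n$ (one Majorana per mode). Its expectation in a pure Gaussian $\phi$ is a Pfaffian of an $n\times n$ submatrix of $\Gamma(\phi)$, so this term is controlled by the top-degree Majorana correlations.

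The cleaner route is a direct Gaussian-overlap argument. Any pure Gaussian state of fixed parity can be written as a Bogoliubov vacuum, and by Bloch--Messiah it equals, up to a passive (number-conserving) unitary, a BCS product $\bigotimes_{k}(u_k\ket{00}+v_k\ket{11})$ over mode pairs, with possibly some occupied modes. The overlap with $\ket{\textup{GHZ}_n}$ then becomes $\frac12\abs{\langle 0^n|\phi\rangle+\langle 1^n|\phi\rangle}^2$. Since $\ket{0^n}$ and $\ket{1^n}$ are fixed by passive unitaries up to a phase ($\det$ of the mode transformation), this can be reduced to the BCS form.

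For the BCS form, $\langle 0^n|\phi\rangle=\prod_k u_k$ and $\langle 1^n|\phi\rangle=\prod_k v_k$ up to phases. The key inequality is then
\begin{equation}
\abs{\prod u_k}+\abs{\prod v_k}\leq 1\quad\textup{when } n/2\geq 2 \textup{ pairs},
\end{equation}
using $\abs{u_k}^2+\abs{v_k}^2=1$ and Cauchy--Schwarz: $\abs{u_1u_2}+\abs{v_1v_2}\leq1$. Hence the overlap is at most $\frac12\cdot 1^2=\frac12$.

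The main obstacle is justifying the reduction step. A passive unitary $V$ acting on $\ket{0^n}$ and $\ket{1^n}$ multiplies them by $1$ and $\det V$ respectively, but the correct statement concerns $\langle 0^n|V|\chi\rangle$ versus $\langle 1^n|V|\chi\rangle$. Since $V^\dagger\ket{0^n}=\ket{0^n}$ and $V^\dagger\ket{1^n}=\overline{\det V}\ket{1^n}$, the phases only change the relative phase, and the absolute-value bound above already absorbs that. Care is also needed for Gaussian states with unpaired occupied modes. If some mode is occupied in one but not the other, then one of $\langle 0^n|\phi\rangle$, $\langle 1^n|\phi\rangle$ vanishes and the overlap is at most $\frac12$ directly. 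The evenness of $n\geq4$ is what ensures at least two pairs, so that the product inequality holds; for $n=2$ the bound would fail, consistent with $\ket{\textup{GHZ}_2}$ being Gaussian.
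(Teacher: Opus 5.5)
Your proof is correct, but it reaches the upper bound by a different route from the paper. Both arguments start from the triangle inequality $\abs{\langle\phi|\textup{GHZ}_n\rangle}^2\le\frac12\left(\abs{\langle\phi|0^n\rangle}+\abs{\langle\phi|1^n\rangle}\right)^2$ and end with the same Cauchy--Schwarz step over mode pairs. They differ in how the two overlaps are computed.

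The paper never puts $\ket{\phi}$ into a normal form. It uses the Pfaffian overlap formula for Gaussian states to write $\abs{\langle\phi|0^n\rangle}^2=\det\bigl(\tfrac{1}{2}(I-\Gamma_\phi\Gamma_0)\bigr)^{1/2}$ and, via $\Gamma_{1^n}=-\Gamma_0$, $\abs{\langle\phi|1^n\rangle}^2=\det\bigl(\tfrac12(I+\Gamma_\phi\Gamma_0)\bigr)^{1/2}$. From the spectrum $\{e^{\pm i\theta_j}\}$ of the orthogonal matrix $\Gamma_\phi\Gamma_0$ it reads off $\prod_j\abs{\sin(\theta_j/2)}$ and $\prod_j\abs{\cos(\theta_j/2)}$. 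It then groups the $n$ angles into $n/2$ pairs and applies Cauchy--Schwarz inductively, using $a_l^2+b_l^2=\cos\bigl((\theta_{2l-1}-\theta_{2l})/2\bigr)\le1$.

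You instead invoke Bloch--Messiah, use that a passive Gaussian unitary fixes $\ket{0^n}$ and $\ket{1^n}$ up to phases, and read the overlaps off the BCS form as $\prod_p u_p$ and $\prod_p v_p$. The two routes are closely linked. For $\ket{\phi}$ of the same parity as $\ket{0^n}$, the angles $\theta_j$ actually occur in equal pairs, and with that pairing the paper's $a_l,b_l$ are exactly your BCS amplitudes $\abs{u_l},\abs{v_l}$.

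Your version is more transparent. The bound becomes a one-line Cauchy--Schwarz on two pairs, and it shows directly why $n=2$ fails. The cost is that it rests on a heavier structural theorem and needs a case split over blocked levels. The paper's version stays inside the correlation-matrix formalism used throughout. It needs only the overlap formula and the normal form of a real orthogonal matrix, and it does not even use the degeneracy, since the bound $\cos(\cdot)\le1$ covers any pairing of the angles.

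Two small tidy-ups. First, the opening Majorana-expansion paragraph is never used and can be dropped. Second, in the blocked-level case you should list unpaired empty levels alongside occupied ones. An unpaired occupied level forces $\langle0^n|\phi\rangle=0$, an unpaired empty level forces $\langle1^n|\phi\rangle=0$, and either way the overlap is at most $\frac12$.
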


\begin{proof}[Proof of Lemma~\ref{lem:overlap-GHZ-free}]
First notice that for $n$-qubit pure Gaussian state vector 
$\ket{\phi}$,
\begin{equation}\label{eq:overlap-GHZ-free_1}
\abs{\langle\phi\mid\textup{GHZ}_n\rangle}^2=\frac12\abs{\langle\phi\mid 0^n\rangle+\langle\phi\mid1^n\rangle}^2\leq\frac12\left(\abs{\langle\phi\mid0^n\rangle}+\abs{\langle\phi\mid1^n\rangle}\right)^2.
\end{equation}

For $\ket{0^n}$, the only nonzero two-point Majorana expectations are
$\langle0^n|\gamma_{2l-1}\gamma_{2l}|0^n\rangle=i$.
For $|1^n\rangle$, since $Z_l|1^n\rangle=-|1^n\rangle$ and
$\gamma_{2l-1}\gamma_{2l}=iZ_l$, these expectations are $-i$,
while all off-pair expectations remain zero. 
Denote the correlation matrices of $\ket{0^n}$ and $\ket{1^n}$ as $\Gamma_0$ and $\Gamma_1$ respectively, we have $\Gamma_1=-\Gamma_0$. Let $\Gamma_\phi$ be the correlation matrix of $\ket{\phi}$, then the overlaps can be calculated as (see Lemma A7 in Ref.~\cite{bittel2025OptimalTraceDistanceBoundsa} and also Ref.~\cite{bravyi2017ComplexityQuantumImpurity})
\begin{equation}
\begin{aligned}
\abs{\langle\phi\mid0^n\rangle}^2&=\abs{\mathrm{Pf}\left(\frac12\left(\Gamma_\phi+\Gamma_0\right)\right)}
\\&=\left(\det\left(\frac12\left(\Gamma_\phi+\Gamma_0\right)\right)\right)^{\frac12}
\\&=\left(\det\left(\frac12\Gamma_\phi\left(I-\Gamma_\phi\Gamma_0\right)\right)\right)^{\frac12}
\\&=\left(\det\left(\Gamma_\phi\right)\det\left(\frac{I-\Gamma_\phi\Gamma_0}2\right)\right)^{\frac12}
\\&=\left(\det\left(\frac{I-\Gamma_\phi\Gamma_0}2\right)\right)^{\frac12},
\end{aligned}
\end{equation}
where we have used the fact that the correlation matrix of a pure Gaussian state is antisymmetric and orthogonal, and that the determinant of an even-dimensional antisymmetric matrix equals the square of its Pfaffian. Similarly, we have
\begin{equation}
\abs{\langle\phi\mid1^n\rangle}^2=\left(\det\left(\frac{I+\Gamma_\phi\Gamma_0}2\right)\right)^{\frac12},
\end{equation}
by replacing $\Gamma_1$ with $-\Gamma_0$. Notice that $\Gamma_\phi\Gamma_0$ is further a real orthogonal matrix, whose eigenvalues can be parametrized as
\begin{equation}
\Lambda(\Gamma_\phi\Gamma_0)=\{e^{i\theta_1},e^{-i\theta_1},\cdots,e^{i\theta_n},e^{-i\theta_n}\},
\end{equation}
with each $\theta_j\in[0,\pi)$. Then we have
\begin{equation}
\begin{aligned}
\det\left(\frac{I-\Gamma_\phi\Gamma_0}2\right)&=\prod\limits_{\lambda\in\Lambda(\Gamma_\phi\Gamma_0)}\frac{1-\lambda}2
\\&=\prod\limits_{j=1}^n\left(\frac{1-e^{i\theta_j}}2\times\frac{1-e^{-i\theta_j}}2\right)
\\&=\prod\limits_{j=1}^n\frac{1-\cos\theta_j}2
\\&=\prod\limits_{j=1}^n\sin^2\left(\frac{\theta_j}2\right),
\end{aligned}
\end{equation}
and similarly
\begin{equation}
\det\left(\frac{I+\Gamma_\phi\Gamma_0}2\right)=\prod\limits_{j=1}^n\cos^2\left(\frac{\theta_j}2\right).
\end{equation}
Inserting each term in Eq.~\eqref{eq:overlap-GHZ-free_1}, we thus obtain
\begin{equation}
\begin{aligned}
\abs{\langle\phi\mid\textup{GHZ}_n\rangle}^2&\leq\frac12\left(\abs{\langle\phi\mid0^n\rangle}+\abs{\langle\phi\mid1^n\rangle}\right)^2
\\&=\frac12\left(\left(\det\left(\frac{I-\Gamma_\phi\Gamma_0}2\right)\right)^{\frac14}+\left(\det\left(\frac{I+\Gamma_\phi\Gamma_0}2\right)\right)^{\frac14}\right)^2
\\&=\frac12\left(\prod\limits_{j=1}^n\abs{\sin\left(\frac{\theta_j}2\right)}^{\frac12}+\prod\limits_{j=1}^n\abs{\cos\left(\frac{\theta_j}2\right)}^{\frac12}\right)^2.
\end{aligned}
\end{equation}
Since $n$ is even, denote for each $l\in[n/2]$
\begin{equation}
\begin{aligned}
a_l&=\abs{\sin\left(\frac{\theta_{2l-1}}2\right)}^{\frac12}\abs{\sin\left(\frac{\theta_{2l}}2\right)}^{\frac12},\\
b_l&=\abs{\cos\left(\frac{\theta_{2l-1}}2\right)}^{\frac12}\abs{\cos\left(\frac{\theta_{2l}}2\right)}^{\frac12}.
\end{aligned}
\end{equation}
It is worth noting that the absolute value operations in the above definitions could be safely removed, since
\begin{equation}
\sin\left(\frac{\theta_j}2\right),\cos\left(\frac{\theta_j}2\right)\geq0,
\end{equation}
for each parameter $\theta_j\in[0,\pi)$.
We first analyze the case when $n=4$, by Cauchy-Schwarz inequality,
\begin{equation}
\begin{aligned}
\left(\prod\limits_{j=1}^4\abs{\sin\left(\frac{\theta_j}2\right)}^{\frac12}+\prod\limits_{j=1}^4\abs{\cos\left(\frac{\theta_j}2\right)}^{\frac12}\right)^2&=(a_1a_2+b_1b_2)^2
\\&\leq\left(a_1^2+b_1^2\right)\left(a_2^2+b_2^2\right)
\\&=\cos\left(\frac{\theta_1}2-\frac{\theta_2}2\right)\cos\left(\frac{\theta_3}2-\frac{\theta_4}2\right)
\\&\leq1,
\end{aligned}
\end{equation}
and the equality is achieved when all $\theta_j$ take the same value. Consider the following inequality:
\begin{equation}\label{eq:overlap-GHZ-free_2}
\left(\prod\limits_{l=1}^ka_l+\prod\limits_{l=1}^kb_l\right)^2\leq\prod\limits_{l=1}^k\left(a_l^2+b_l^2\right).
\end{equation}
We have proved it for $k=2$, assume this inequality holds for an integer $k$ such that $2\leq k\leq n/2-1$, then
\begin{equation}
\begin{aligned}
\left(\prod\limits_{l=1}^{k+1}a_l+\prod\limits_{l=1}^{k+1}b_l\right)^2&=\left(\left(\prod\limits_{l=1}^ka_l\right)\times a_{k+1}+\left(\prod\limits_{l=1}^kb_l\right)\times b_{k+1}\right)^2
\\&\leq\left(\left(\prod\limits_{l=1}^ka_l\right)^2+\left(\prod\limits_{l=1}^kb_l\right)^2\right)\left(a_{k+1}^2+b_{k+1}^2\right)
\\&\leq\left(\prod\limits_{l=1}^k\left(a_l^2+b_l^2\right)\right)\left(a_{k+1}^2+b_{k+1}^2\right)
\\&=\prod\limits_{l=1}^{k+1}\left(a_l^2+b_l^2\right),
\end{aligned}
\end{equation}
where we again use the Cauchy-Schwarz inequality. By induction, Eq.~\eqref{eq:overlap-GHZ-free_2} holds for all $k\in\{2,\cdots,n/2\}$. Finally, we conclude
\begin{equation}
\begin{aligned}
\abs{\langle\phi\mid\textup{GHZ}_n\rangle}^2&\leq\frac12\left(\prod\limits_{j=1}^n\abs{\sin\left(\frac{\theta_j}2\right)}^{\frac12}+\prod\limits_{j=1}^n\abs{\cos\left(\frac{\theta_j}2\right)}^{\frac12}\right)^2
\\&=\frac12\left(\prod\limits_{l=1}^{n/2}a_l+\prod\limits_{l=1}^{n/2}b_l\right)^2
\\&\leq\frac12\prod\limits_{l=1}^{n/2}\left(a_l^2+b_l^2\right)
\\&=\frac12\prod\limits_{l=1}^{n/2}\cos\left(\frac{\theta_{2l-1}}2-\frac{\theta_{2l}}2\right)
\\&\leq\frac12.
\end{aligned}
\end{equation}
It is easy to check that $\ket{\phi}=\ket{0^n}$ reaches the upper bound $1/2$.
\end{proof}

In particular, we have $F_{\mathrm{FLO}}(\ket{\eta_0})=F_{\mathrm{FLO}}(\ket{\eta_2})=1/2$. We then show for any $n$-qubit pure fixed-parity state vector $\ket{\psi}$, $F_{\mathrm{FLO}}(\ket{\psi}\otimes\ket{0})=F_{\mathrm{FLO}}(\ket{\psi})$. This can be seen from for any $(n+1)$-qubit pure Gaussian state vector $\ket{\phi}$, $\left(\bI\otimes\bra{0}\right)\ket{\phi}$ is an $n$-qubit pure Gaussian state. So
\begin{equation}
\begin{aligned}
F_{\mathrm{FLO}}(\ket{\psi}\otimes\ket{0})&=\max_{\ket{\phi}\in\G_{n+1}}\abs{\bra{\phi}(\ket{\psi}\otimes\ket{0})}^2
\\&\leq\max_{\ket{\phi}\in\G_{n+1}}\abs{\bra{\phi}(\bI\otimes\ket{0})\ket{\psi}}^2
\\&\leq F_{\mathrm{FLO}}(\ket{\psi}).
\end{aligned}
\end{equation}
On the other hand, let $\ket{\phi^*}=\arg\max_{\ket{\phi}\in\mathcal{G}_n}\abs{\bra{\phi^*}\ket{\psi}}^2$, then $\ket{\phi^*}\otimes\ket{0}$ is a $(n+1)$-qubit pure Gaussian state, thus
\begin{equation}
F_{\mathrm{FLO}}(\ket{\psi}\otimes\ket{0})\geq\abs{\left(\bra{\phi^*}\otimes\bra{0}\right)\left(\ket{\psi}\otimes\ket{0}\right)}^2=F_{\mathrm{FLO}}(\ket{\psi}).
\end{equation}

So we have $F_{\mathrm{FLO}}(\ket{\eta_1})=F_{\mathrm{FLO}}(\ket{\eta_3})=1/2$. Using Eq.~\eqref{eq:FLO_fidelity_multiplicative} iteratively, we obtain
\begin{equation}
F_{\mathrm{FLO}}(P_n)=\frac{1}{2^{\lfloor n/4\rfloor}},\qquad\forall n\geq4.
\end{equation}
By the convexity, we have for all $\sigma\in\mathrm{Conv}(\G_n)$,
\begin{equation}\label{eq:overlap_convex_Gaussian_GHZ_r_app}
    \Tr(\sigma P_n)
    \leq
    2^{-\lfloor n/4\rfloor}.
\end{equation}
Notice that the FLO fidelity of $P_n$ decays exponentially with $n$, which indicates that $P_n$ is a scalable and resource-sensitive operator as the seed operator to build a WE criterion for detecting fermionic non-Gaussianity. We then derive the explicit calculation for the WE $2,3$-criteria in Theorem~\ref{thm:R1-3_FNG_main}.

\begin{proof}[Proof of Theorem~\ref{thm:R1-3_FNG_main}]
We first introduce some notations. For each $k\in\{0,1,\dots,2n\}$ let 
$\mathcal{H}_k=\operatorname{span}\{\gamma_S\mid\abs{S}=k\}$ be the subspace spanned by degree-$k$ Majorana products, and let $\Pi_k:\mathcal L\left(\left(\mathbb C^2\right)^{\otimes n}\right)\to\mathcal H_k$ be the orthogonal projector onto $\mathcal{H}_k$. Denote $d=2^n$, define for operator $A\in\mathcal{L}\left(\left(\mathbb C^2\right)^{\otimes n}\right)$,
\begin{equation}\label{eq:def_Pi_T}
\begin{aligned}
\Pi_k(A)&=\frac1{d}\sum_{\abs{S}=k}\Tr(\gamma_S^\dagger A)\gamma_S=\frac1{d}\sum_{\abs{S}=k}\Tr(\widehat{\gamma}_SA)\widehat{\gamma}_S,
\\
\cT_{\ell_1,\ell_2,\ell_3}(A)
&=
\sum_{\substack{A_1,A_2,A_3\subset [2n]\ \textup{disjoint}\\ \abs{A_i}=2\ell_i}}
\Tr\left(A\gamma_{A_1}\gamma_{A_2}\right)
\Tr\left(A\gamma_{A_2}\gamma_{A_3}\right)
\Tr\left(A\gamma_{A_3}\gamma_{A_1}\right).
\end{aligned}
\end{equation}

For a physical fermionic state $\rho\in\mathcal{D}_F$, recall the even Majorana sector purities $B_\ell(\rho)$ for $\ell\in\{0,1,\cdots,n\}$ defined in Eq.~\eqref{eq:Majorana_sector_purity_main} in the main text, we show
\begin{equation}\label{eq:Majorana_sector_purity_app}
B_\ell(\rho)
=
\sum_{\abs{S}=2\ell}
\Tr(\widehat{\gamma}_S\rho)^2
=d\Tr\left(\Pi_{2\ell}(\rho)^2\right).
\end{equation}
For any $\rho\in\mathcal{D}_F$, we use Eq.~\eqref{eq:Hermitian_Majorana_product_orthogonal}, then
\begin{equation}\label{eq:trace_Pi_A_squared}
\begin{aligned}
\Tr(\Pi_{2\ell}(\rho)^2)&=\frac1{d^2}\sum\limits_{\substack{S_1\subseteq[2n]\\\abs{S_1}=2\ell}}\sum\limits_{\substack{S_2\subseteq[2n]\\\abs{S_2}=2\ell}}\Tr(\widehat{\gamma}_{S_1} \rho)\Tr(\widehat{\gamma}_{S_2} \rho)\Tr(\widehat{\gamma}_{S_1}\widehat{\gamma}_{S_2})
\\&=\frac{1}d\sum\limits_{\substack{S_1\subseteq[2n]\\\abs{S_1}=2\ell}}\Tr(\widehat{\gamma}_{S_1} \rho)^2=\frac1dB_\ell(\rho).
\end{aligned}
\end{equation} 

We now calculate the WE $2$-norm in Eq.~\eqref{eq:R1-3_FNG}.
\begin{equation}\label{eq:FNG_WE_2_norm_projector_app}
\begin{aligned}
\norm{\rho}_{P_n,\mathrm{M}_n,2}^2&=\underset{U\sim\mu_{\mathrm{M}_n}}{\mathbb{E}}\left[\Tr(U^\dagger P_n U\rho)^2\right]=\Tr\left[P_n^{\otimes{2}}\Phi_{\mathrm{M}_n}^{(2)}(\rho^{\otimes2})\right]
\\&=\frac1{d^2}\sum\limits_{k=0}^{2n}\sum\limits_{\substack{A_1\subseteq[2n]\\\abs{A_1}=k}}\sum\limits_{\substack{A_2\subseteq[2n]\\\abs{A_2}=k}}\binom{2n}k^{-1}\Tr(\gamma_{A_2}^\dagger P_n)^2\Tr(\gamma_{A_1}\rho)^2
\\&=\frac1{d^2}\sum_{\ell=0}^{n}\binom{2n}{2\ell}^{-1}\sum\limits_{\substack{A_2\subseteq[2n]\\\abs{A_2}=2\ell}}\Tr(\gamma_{A_2}^\dagger P_n)^2\sum\limits_{\substack{A_1\subseteq[2n]\\\abs{A_1}=2\ell}}\Tr(\gamma_{A_1}^\dagger\rho)^2
\\&=\sum_{\ell=0}^{n}
\frac{\Tr\left(\Pi_{2\ell}(P_n)^2\right)\Tr\left(\Pi_{2\ell}(\rho)^2\right)}{\binom{2n}{2\ell}}
\\&=\frac1{d^2}
\sum_{\ell=0}^{n}
\frac{B_\ell(P_n)B_\ell(\rho)}{\binom{2n}{2\ell}},
\end{aligned}
\end{equation}
where in the second line we have used Eq.~\eqref{eq:matchgate_2_twirling}, in the third line we have used both $P_n$ and $\rho$ are parity-preserving and coefficients with odd-degree Majorana products vanish, and $\Tr(\gamma_{A_1}\rho)^2=\Tr(\gamma_{A_1}^\dagger\rho)^2$ when $\abs{A_1}$ is even. 

For the WE $3$-norm, we recall the third-order matchgate twirling formula (see Theorem~1 in Ref.~\cite{wan2023MatchgateShadowsFermionic}): for operator $O_3\in\mathcal{L}\left(\left(\mathbb{C}^d\right)^{\otimes3}\right)$,
\begin{equation}\label{eq:matchgate_3_twirling}
\begin{aligned}
\Phi_{\mathrm{M}_n}^{(3)}(O_3)&=\frac{1}{d^3}
\sum_{\substack{k_1,k_2,k_3\ge0\\ k_1+k_2+k_3\le2n}}
\binom{2n}{k_1,k_2,k_3,2n-k_1-k_2-k_3}^{-1}\\
&\hspace{4.4em}\times\sum_{\substack{A_1,A_2,A_3\subset [2n]\ \textup{disjoint}\\ 
A'_1,A'_2,A'_3\subset [2n]\ \textup{disjoint}\\
\abs{A_i}=\abs{A'_i}=k_i}}
\Tr(\left((\gamma_{A_1}\gamma_{A_2})^\dagger\otimes(\gamma_{A_2}\gamma_{A_3})^\dagger\otimes(\gamma_{A_3}\gamma_{A_1})^\dagger\right)O_3)\gamma_{A_1'}\gamma_{A'_2}\otimes\gamma_{A'_2}\gamma_{A'_3}\otimes\gamma_{A'_3}\gamma_{A'_1}.
\end{aligned}
\end{equation}

Then we have
\begin{equation}
\norm{\rho}_{P_n,\mathrm{M}_n,3}^3=\underset{U\sim\mu_{\mathrm{M}_n}}{\mathbb{E}}\left[\Tr(U^\dagger P_n U\rho)^3\right]=\Tr\left[P_n^{\otimes{3}}\Phi_{\mathrm{M}_n}^{(3)}(\rho^{\otimes3})\right]
\end{equation}

Inserting $O_3=\rho^{\otimes3}$, since $\rho$ is parity-preserving, the terms in $\Phi_{\mathrm{M}_n}^{(3)}(\rho^{\otimes3})$ vanish whenever one of $k_1+k_2$, $k_2+k_3$, and $k_3+k_1$ is odd. Equivalently, a nonzero contribution from $\rho$ requires $k_1,k_2,k_3$ to have the same parity. However, we show that the all-odd case vanishes when we consider terms related to $P_n$.

Define the Majorana support of the seed projector $P_n$ as
\begin{equation}\label{eq:Pn_majorana_support_WE3}
\begin{aligned}
\mathsf S(P_n)
=
\left\{
S\subseteq[2n]\middle\mid\Tr(P_n\gamma_S)\neq0
\right\}.
\end{aligned}
\end{equation}
Since $P_n$ is parity-preserving, we have
\begin{equation}\label{eq:Pn_support_even_WE3}
\begin{aligned}
S\in\mathsf S(P_n)
\quad\Longrightarrow\quad
\abs{S}\equiv0\pmod 2.
\end{aligned}
\end{equation}
The seed $P_n$ is a tensor product of GHZ-type stabilizer projectors and computational basis stabilizer projectors. Hence, $P_n$ is a pure stabilizer projector. Therefore, if $S\in\mathsf S(P_n)$, then the Majorana monomial $\gamma_S$, up to its scalar phase, corresponds to an element of the abelian stabilizer group of $P_n$. 
For arbitrary subsets $S,T\subseteq[2n]$, the Majorana commutation rule gives
\begin{equation}\label{eq:majorana_commutation_WE3}
\begin{aligned}
\gamma_S\gamma_T
=
(-1)^{\abs{S}\abs{T}-\abs{S\cap T}}
\gamma_T\gamma_S.
\end{aligned}
\end{equation}
If $S,T\in\mathsf S(P_n)$, then the corresponding stabilizer elements commute. Since $\abs{S}$ and $\abs{T}$ are even by Eq.~\eqref{eq:Pn_support_even_WE3}, Eq.~\eqref{eq:majorana_commutation_WE3} implies
\begin{equation}\label{eq:Pn_support_self_orthogonal_WE3}
\begin{aligned}
S,T\in\mathsf S(P_n)
\quad\Longrightarrow\quad
\abs{S\cap T}\equiv0\pmod 2.
\end{aligned}
\end{equation}

Now consider the contraction with $P_n^{\otimes3}$ of one output term in Eq.~\eqref{eq:matchgate_3_twirling}:
\begin{equation}\label{eq:Pn_output_contraction_WE3}
\Tr\left[
P_n^{\otimes3}
\left(
\gamma_{A'_1}\gamma_{A'_2}
\otimes
\gamma_{A'_2}\gamma_{A'_3}
\otimes
\gamma_{A'_3}\gamma_{A'_1}
\right)
\right]
=
\Tr(P_n\gamma_{A'_1}\gamma_{A'_2})
\Tr(P_n\gamma_{A'_2}\gamma_{A'_3})
\Tr(P_n\gamma_{A'_3}\gamma_{A'_1}).
\end{equation}
Suppose that this product is nonzero. Since $A'_1,A'_2,A'_3$ are pairwise disjoint, for each $i\neq j$ we have
\begin{equation}\label{eq:disjoint_majorana_union_WE3}
\begin{aligned}
\gamma_{A'_i}\gamma_{A'_j}
=
\omega_{i,j}\gamma_{A'_i\cup A'_j},
\qquad
\omega_{i,j}\in\{+1,-1\}.
\end{aligned}
\end{equation}
Therefore the nonzero condition in Eq.~\eqref{eq:Pn_output_contraction_WE3} implies
\begin{equation}\label{eq:union_in_Pn_support_WE3}
\begin{aligned}
A'_1\cup A'_2\in\mathsf S(P_n),
\qquad
A'_2\cup A'_3\in\mathsf S(P_n),
\qquad
A'_3\cup A'_1\in\mathsf S(P_n).
\end{aligned}
\end{equation}
Applying Eq.~\eqref{eq:Pn_support_self_orthogonal_WE3} to the first and third sets in Eq.~\eqref{eq:union_in_Pn_support_WE3}, and using the pairwise disjointness of $A'_1,A'_2,A'_3$, gives
\begin{equation}\label{eq:A'1_even_WE3}
\begin{aligned}
0
&\equiv
\abs{(A'_1\cup A'_2)\cap(A'_3\cup A'_1)}
=
\abs{A'_1}
\pmod 2,\\
0
&\equiv
\abs{(A'_1\cup A'_2)\cap(A'_2\cup A'_3)}
=
\abs{A'_2}
\pmod 2,
\\
0
&\equiv
\abs{(A'_2\cup A'_3)\cap(A'_3\cup A'_1)}
=
\abs{A'_3}
\pmod 2.
\end{aligned}
\end{equation}
Thus every output term surviving the contraction with $P_n^{\otimes3}$ must satisfy
\begin{equation}\label{eq:A'i_all_even_WE3}
\begin{aligned}
\abs{A'_1}\equiv\abs{A'_2}\equiv\abs{A'_3}\equiv0\pmod 2.
\end{aligned}
\end{equation}
Since $\abs{A'_i}=k_i$, this means that all terms with at least one odd $k_i$ vanish after contraction with $P_n^{\otimes3}$. Combining this with the parity-preserving condition on $\rho$, the only remaining sectors are the all-even sectors. Therefore we set
\begin{equation}\label{eq:set_ki_even_WE3}
\begin{aligned}
k_i=2\ell_i,
\qquad
i=1,2,3.
\end{aligned}
\end{equation}
So the possible non-vanishing terms are those with all even $k_i=2\ell_i$, we have
\begin{equation}
\begin{aligned}
&\Tr(\left((\gamma_{A_1}\gamma_{A_2})^\dagger\otimes(\gamma_{A_2}\gamma_{A_3})^\dagger\otimes(\gamma_{A_3}\gamma_{A_1})^\dagger\right)\rho^{\otimes3})\\
=&\Tr((\gamma_{A_1}\gamma_{A_2})^\dagger\rho)\Tr((\gamma_{A_2}\gamma_{A_3})^\dagger\rho)\Tr((\gamma_{A_3}\gamma_{A_1})^\dagger\rho)\\
=&(-1)^{\ell_1+\ell_2}\Tr(\rho\gamma_{A_1}\gamma_{A_2})(-1)^{\ell_2+\ell_3}\Tr(\rho\gamma_{A_2}\gamma_{A_3})(-1)^{\ell_3+\ell_1}\Tr(\rho\gamma_{A_3}\gamma_{A_1})\\
=&\Tr(\rho\gamma_{A_1}\gamma_{A_2})\Tr(\rho\gamma_{A_2}\gamma_{A_3})\Tr(\rho\gamma_{A_3}\gamma_{A_1}).
\end{aligned}
\end{equation}
Using the definition in Eq.~\eqref{eq:def_Pi_T} and the above equation, the WE $3$-norm is
\begin{equation}\label{eq:FNG_WE_3_norm_projector_app}
\begin{aligned}
\norm{\rho}_{P_n,\mathrm{M}_n,3}^3&=\Tr\left[P_n^{\otimes{3}}\Phi_{\mathrm{M}_n}^{(3)}(\rho^{\otimes3})\right]
\\&=\frac1{d^3}\sum_{\substack{\ell_1,\ell_2,\ell_3\ge 0\\ \ell_1+\ell_2+\ell_3\le n}}
\frac{\cT_{\ell_1,\ell_2,\ell_3}(P_n)\cT_{\ell_1,\ell_2,\ell_3}(\rho)}{\binom{2n}{2\ell_1,2\ell_2,2\ell_3,2n-2(\ell_1+\ell_2+\ell_3)}},
\end{aligned}
\end{equation}
which proves Eq.~\eqref{eq:R1-3_FNG}.
The threshold values for convex Gaussian states can be evaluated at an arbitrary pure Gaussian state. We use $\ket{0^n}$ and calculate the functions
\begin{equation}\label{eq:Pi_T_vacuum}
\begin{aligned}
B_\ell(\ket{0^n}\bra{0^n})&=d\Tr\left(\Pi_{2\ell}(\ket{0^n}\bra{0^n})^2\right)
=\binom{n}{\ell},
\\
\cT_{\ell_1,\ell_2,\ell_3}(\ket{0^n}\bra{0^n})
&=(-1)^{\ell_1+\ell_2+\ell_3}\binom{n}{\ell_1,\ell_2,\ell_3,n-(\ell_1+\ell_2+\ell_3)}.
\end{aligned}
\end{equation}
Inserting Eq.~\eqref{eq:Pi_T_vacuum} to Eq.~\eqref{eq:FNG_WE_2_norm_projector_app} and Eq.~\eqref{eq:FNG_WE_3_norm_projector_app}, we prove Eq.~\eqref{eq:FNG_thresholds_main} in the main text.

Short proof of Eq.~\eqref{eq:Pi_T_vacuum}: We rewrite $\ketbra{0^n}{0^n}=d^{-1}\prod_{j=1}^{n}\left(\bI-i\gamma_{2j-1}\gamma_{2j}\right)$.
The contribution to degree-$2\ell$ Majorana sector of $\ket{0^n}\bra{0^n}$ is obtained by choosing $\ell$ of the $n$ commuting pair monomials $-i\gamma_{2j-1}\gamma_{2j}$ from the product expansion. There are $\binom{n}{\ell}$ such terms, each with coefficient $d^{-1}$ and orthogonal to all others, which proves the first equation in Eq.~\eqref{eq:Pi_T_vacuum}. 

For the second equation, denote the index of neighboring Majorana pairs as $p_j=\{2j-1,2j\}$. Each of the three traces in $\cT_{\ell_1,\ell_2,\ell_3}$ (cf.\ Eq.~\eqref{eq:def_Pi_T}) is nonzero if and only if $A_1\cup A_2$, $A_2\cup A_3$, and $A_3\cup A_1$ are all union of disjoint pairs $\{p_j\}_{j=1}^n$. Notice that $A_1, A_2, A_3$ are disjoint sets, that is to say, for each $j\in[n]$, either $\{2j-1,2j\}\subseteq A_i$ for some $i$, or $2j-1,2j\notin A_1\cup A_2\cup A_3$. The number of such choices is the multinomial coefficient $\binom{n}{\ell_1,\ell_2,\ell_3,n-(\ell_1+\ell_2+\ell_3)}$, i.e., $\ell_i$ pairs are assigned to $A_i$ and the remaining $n-(\ell_1+\ell_2+\ell_3)$ pairs are left unassigned. Each assignment contributes 
\begin{equation}
(-i)^{\ell_1+\ell_2}\times(-i)^{\ell_2+\ell_3}\times(-i)^{\ell_3+\ell_1}=(-1)^{\ell_1+\ell_2+\ell_3}
\end{equation}
to the summation, which proves the second equation in Eq.~\eqref{eq:Pi_T_vacuum}.
\end{proof}

To obtain concrete WE $2,3$-criteria, we still need to evaluate the functions $B_\ell(P_n)$ and $\cT_{\ell_1,\ell_2,\ell_3}(P_n)$. They are a bit tedious to calculate, and we provide the generating functions in the following sequential lemmas: Lemma~\ref{lem:Pi_T_GHZ4} for $n=4$, Lemma~\ref{lem:Pi_T_GHZ4m} for $n=4m$, Lemma~\ref{lem:Pi_T_eta_r} for $5\leq n\leq7$, and Lemma~\ref{lem:Pi_T_P_n} for all $n$.

We first look into the 4-qubit GHZ state $P_4$. 
\begin{lemma}[Function for GHZ$_4$]\label{lem:Pi_T_GHZ4}
For the $4$-qubit \textup{GHZ} state $P_4=\ket{\textup{GHZ}_4}\bra{\textup{GHZ}_4}$, the values of functions defined in Eq.~\eqref{eq:def_Pi_T} and Eq.~\eqref{eq:Majorana_sector_purity_app} are given by their generating functions
\begin{equation}\label{eq:Pi_T_GHZ4}
\begin{aligned}
P[z]\coloneqq\sum_{\ell=0}^{4}B_\ell(P_4)z^\ell
&=1+14z^2+z^4,
\\
T[x,y,z]\coloneqq\sum_{\substack{\ell_1,\ell_2,\ell_3\ge 0\\ \ell_1+\ell_2+\ell_3\le 4}}
\cT_{\ell_1,\ell_2,\ell_3}(P_4)
x^{\ell_1}y^{\ell_2}z^{\ell_3}
&=
1+14\left(x^2+y^2+z^2+x^2y^2+y^2z^2+z^2x^2\right)+x^4+y^4+z^4-168xyz.
\end{aligned}    
\end{equation}
\end{lemma}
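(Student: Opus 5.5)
The plan is to exploit the fact that $P_4=\ket{\textup{GHZ}_4}\bra{\textup{GHZ}_4}$ is a stabilizer projector. I would write
\[
P_4=\frac1{16}\sum_{g\in\mathcal S}g,
\]
where $\mathcal S$ is the abelian group generated by $X_1X_2X_3X_4$, $Z_1Z_2$, $Z_2Z_3$, $Z_3Z_4$. Each $g$ is then translated into a single Hermitian Majorana product using $\gamma_{2j-1}\gamma_{2j}=iZ_j$ together with the Jordan--Wigner form of $X_1X_2X_3X_4$. The even-$Z$ strings give the products of pair monomials $p_j=\{2j-1,2j\}$: one of degree $0$ ($\bI$), six of degree $4$ ($p_i\cup p_j$), and one of degree $8$ ($[8]$). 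The coset $X_1X_2X_3X_4\cdot\{\text{even }Z\text{-strings}\}$ gives the eight ``transversal'' sets, which contain exactly one index from each $p_j$ (with an even number of even indices). This gives the Majorana support
\[
\mathsf S(P_4)=\{\emptyset\}\cup\{14\text{ sets of size }4\}\cup\{[8]\},
\qquad \Tr(\widehat{\gamma}_SP_4)=\pm1 \text{ on it}.
\]
Inserting this into Eq.~\eqref{eq:Majorana_sector_purity_app} immediately yields $B_0=1$, $B_2=14$, $B_4=1$, i.e.\ $P[z]=1+14z^2+z^4$.

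For $T[x,y,z]$, I would first derive a selection rule. A term of $\cT_{\ell_1,\ell_2,\ell_3}(P_4)$ is nonzero only if $A_1\cup A_2$, $A_2\cup A_3$ and $A_3\cup A_1$ all lie in $\mathsf S(P_4)$, so their sizes must be $0$, $4$ or $8$. Since the $A_i$ are disjoint of sizes $2\ell_i$, this forces $\ell_i+\ell_j\in\{0,2,4\}$ for all $i\neq j$. The admissible types, up to permutation, are $(0,0,0)$, $(2,0,0)$, $(4,0,0)$, $(2,2,0)$ and $(1,1,1)$; every other type vanishes. For instance, in $(1,1,0)$ the singleton union $A_1$ has size $2$, and in $(2,1,1)$ two of the unions have size $6$.

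I would then count each surviving type and fix its sign.
\begin{itemize}
\item $(2,0,0)$: the product is $\Tr(P_4\gamma_{A_1})^2\cdot\Tr(P_4)=\Tr(P_4\gamma_{A_1})^2$. For $|A_1|=4$ we have $\widehat{\gamma}=-\gamma$, so this equals $1$, and summing over the $14$ support sets gives the coefficient $14$. The type $(4,0,0)$ gives $1$ in the same way.
\item $(2,2,0)$: $A_2$ must be the complement of $A_1$ in $[8]$. The support is closed under complement, because multiplying by the degree-$8$ element $\widehat{\gamma}_{[8]}$ maps it to itself, so there are $14$ ordered pairs. The sign is fixed by the stabilizer consistency relation $\langle\widehat{\gamma}_{A_1}\rangle\langle\widehat{\gamma}_{A_2}\rangle=\langle\widehat{\gamma}_{A_1}\widehat{\gamma}_{A_2}\rangle$; this is valid because $A_1,A_2$ are both even and disjoint, so the two operators commute. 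Converting to the unhatted $\gamma$'s, the phases cancel in pairs and the contribution is $+1$ per pair.
\item $(1,1,1)$: I would enumerate ordered triples of disjoint $2$-sets whose pairwise unions are support elements. Either all three $A_i$ are pair monomials $p_j$, which gives $4\cdot3\cdot2=24$ triples. Or the unions are transversals, i.e.\ the $A_i$ pick indices from distinct pairs compatibly with the even-number-of-even-indices rule; I would count these by fixing $A_1$ and counting the admissible $A_2$, aiming for the remaining $144$, for a total of $168$. For the sign, each factor $\Tr(P_4\gamma_{A_i}\gamma_{A_j})$ equals a phase $(-i)^{\ell_i+\ell_j}$-type factor times an eigenvalue. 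Using the product over the cycle $A_1A_2,\,A_2A_3,\,A_3A_1$, whose operator product is proportional to $\bI$, I expect a uniform sign of $-1$. I would confirm this on one representative of each class, e.g.\ $A_i=p_1,p_2,p_3$, where the three traces are $(-i)^2\cdot(\pm)$ factors, and then argue by the symmetry of $\ket{\textup{GHZ}_4}$ under permutations of modes and under $\gamma_{2j-1}\leftrightarrow\gamma_{2j}$ swaps (Gaussian unitaries fixing $P_4$ up to the stabilizer group).
\end{itemize}

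The main obstacle is this last step: the exact count of $168$ and a careful sign convention check for the $(1,1,1)$ sector, since the phases of $\gamma_{A_i}\gamma_{A_j}$ versus $\widehat{\gamma}$ and the signs of the GHZ stabilizers must combine consistently. If the symmetry argument proves delicate, I would fall back on brute-force evaluation in the $16$-element stabilizer table.
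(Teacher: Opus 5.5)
\textbf{Verdict: genuine gap in the $(1,1,1)$ sector.} Your Majorana support (six pair unions plus the eight transversals with an even number of even indices), your derivation of $P[z]$, the selection rule $\ell_i+\ell_j\in\{0,2,4\}$, and the types $(2,0,0)$, $(4,0,0)$, $(2,2,0)$ are all correct and follow the paper's proof. The gap is the one nontrivial coefficient, $\cT_{1,1,1}(P_4)=-168$, which the proposal does not derive. The number $144$ is read off from the target, and both your structural description of the triples and your symmetry argument for the sign are wrong as stated.

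\emph{The count.} No triple has all three pairwise unions transversal. If $A_1\cup A_2$ and $A_1\cup A_3$ were both transversals, then $A_2$ and $A_3$ would meet the same two pairs, so $A_2\cup A_3$ would be a pair union. The correct structure is as follows. If some $A_i$ is a full pair $p_a$, every union containing it must be a pair union, so all three $A_i$ are pairs; this gives $24$ ordered triples. Otherwise exactly one union, say $A_1\cup A_2=p_a\cup p_b$, is split crosswise (e.g.\ $\{1,3\},\{2,4\}$), which can be done in $2$ ways. Then $A_3$ takes one index from each remaining pair, with its number of even indices of the same parity as $A_1$ ($2$ ways), so the other two unions are admissible transversals. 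This gives $6\cdot2\cdot2=24$ unordered triples, i.e.\ $144$ ordered ones. Together with the $4$ all-pair decompositions, these are exactly the paper's $28$ decompositions indexed by the omitted $2$-set (Types I--III).

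\emph{The symmetry.} A single swap $\gamma_{2j-1}\leftrightarrow\gamma_{2j}$ is not a symmetry of the terms, because it exchanges transversals with even and odd numbers of even indices. For example, it sends the contributing triple $(\{1,3\},\{2,4\},\{5,7\})$ to $(\{2,3\},\{1,4\},\{5,7\})$, which contributes $0$ since $\{1,4,5,7\}\notin\mathsf S(P_4)$. Mode permutations together with swaps in an even number of pairs, corrected by Paulis, would be valid symmetries.

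\emph{The sign.} You do not actually need any symmetry: your cycle idea settles the sign once it is made exact. In a nonzero term, each operator $\gamma_{A_i}\gamma_{A_j}=\pm\gamma_{A_i\cup A_j}$ has nonzero expectation in a stabilizer state. It is therefore proportional to a stabilizer element, so $\ket{\textup{GHZ}_4}$ is an eigenvector of it. The three expectation values then multiply to
\[
\bra{\textup{GHZ}_4}\gamma_{A_1}\gamma_{A_2}\gamma_{A_2}\gamma_{A_3}\gamma_{A_3}\gamma_{A_1}\ket{\textup{GHZ}_4}=(-1)^{\ell_1+\ell_2+\ell_3},
\]
using $\gamma_A^2=(-1)^{|A|/2}\bI$ for even $|A|$. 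This gives $-1$ for every surviving $(1,1,1)$ term and $+1$ for every other surviving type, which also covers your $(2,2,0)$ step. With this, the whole computation of $T[x,y,z]$ reduces to the counting above. That is cleaner than the paper's case-by-case tracking of the merge signs $\nu(A,B)$ and the coefficients $\varepsilon(S)$.
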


\begin{proof}[Proof of Lemma~\ref{lem:Pi_T_GHZ4}]
Rewrite $P_4$ in terms of Majorana products as
\begin{equation}\label{eq:GHZ4_majorana_decomposition}
\begin{aligned}
P_4=&\frac1{16}\sum_{S\subseteq[8]}\varepsilon(S)\gamma_S,
\qquad
\textup{with }\varepsilon(S)\in\{0,\pm1\},\\
=&\frac{1}{16}\Bigl(
\gamma_{\emptyset}
-\gamma_{\{1,2,3,4\}}-\gamma_{\{1,2,5,6\}}-\gamma_{\{1,2,7,8\}}
-\gamma_{\{3,4,5,6\}}-\gamma_{\{3,4,7,8\}}-\gamma_{\{5,6,7,8\}}
+\gamma_{\{1,3,5,7\}}\\
&-\gamma_{\{1,3,6,8\}}-\gamma_{\{1,4,5,8\}}-\gamma_{\{1,4,6,7\}}
-\gamma_{\{2,3,5,8\}}-\gamma_{\{2,3,6,7\}}-\gamma_{\{2,4,5,7\}}+\gamma_{\{2,4,6,8\}}
\\
&+\gamma_{\{1,2,3,4,5,6,7,8\}}
\Bigr).
\end{aligned}
\end{equation}
A direct calculation leads to $B_\ell(P_4)=\abs{\{S\subseteq[8]\mid\abs{S}=2\ell,\varepsilon(S)\neq0\}}$ and the first line in Eq.~\eqref{eq:Pi_T_GHZ4}.

For $\cT_{\ell_1,\ell_2,\ell_3}$, notice that if $A,B\subseteq[8]$ are disjoint, then
\begin{equation}
\gamma_A\gamma_B
=
(-1)^{\nu(A,B)}\gamma_{A\cup B},
\qquad
\textup{with }\nu(A,B)=\abs{\{(a,b)\in A\times B\mid a>b\}}.
\label{eq:merge-sign}
\end{equation}
Hence
\begin{equation}
\begin{aligned}
\cT_{\ell_1,\ell_2,\ell_3}(P_4)
=
\sum_{\substack{A_1,A_2,A_3\subseteq[8]\ \mathrm{disjoint}\\ \abs{A_i}=2\ell_i}}
&(-1)^{\nu(A_1,A_2)+\nu(A_2,A_3)+\nu(A_3,A_1)}
\times
\varepsilon(A_1\cup A_2)
\varepsilon(A_2\cup A_3)
\varepsilon(A_3\cup A_1).
\end{aligned}
\label{eq:T-P4-sign-expanded}
\end{equation}
Because $\varepsilon(S)\neq0$ only for $\abs{S}=0,4,8$ (cf.\ Eq.~\eqref{eq:GHZ4_majorana_decomposition}), a necessary condition for a nonzero contribution is
\begin{equation}
\abs{A_i\cup A_j}=2(\ell_i+\ell_j)\in\{0,4,8\},
\end{equation}
i.e.
\begin{equation}
\ell_i+\ell_j\in\{0,2,4\}\qquad (1\le i<j\le 3).
\end{equation}
With $\ell_1,\ell_2,\ell_3\ge0$ and $\ell_1+\ell_2+\ell_3\le4$, the only possibilities are
\begin{equation}
(0,0,0),\quad
(2,0,0)\textup{ and permutations},\quad
(2,2,0)\textup{ and permutations},\quad
(4,0,0)\textup{ and permutations},\quad
(1,1,1).
\end{equation}
The first four cases are listed here:
\begin{enumerate}
\item $\cT_{0,0,0}(P_4)=1$.

\item If $(\ell_1,\ell_2,\ell_3)=(2,0,0)$, then $A_2=A_3=\emptyset$ and
\begin{equation}
\cT_{2,0,0}(P_4)=\sum_{|A_1|=4}\varepsilon(A_1)^2=14.
\end{equation}
By symmetry, the same holds for the coefficients of $y^2$ and $z^2$.

\item If $(\ell_1,\ell_2,\ell_3)=(4,0,0)$, then $A_1=[8]$ and $A_2=A_3=\emptyset$, so
\begin{equation}
\cT_{4,0,0}(P_4)=\varepsilon([8])^2=1.
\end{equation}
By symmetry, the same holds for $y^4$ and $z^4$.

\item If $(\ell_1,\ell_2,\ell_3)=(2,2,0)$, then $A_3=\emptyset$ and $A_2=[8]\setminus A_1$. In this case $\nu(A_1,A_2)\equiv\sum_{a\in A_1}a\pmod 2$, $\nu(A_2,A_3)=\nu(A_3,A_1)=0$.
The complement of every degree-$4$ support in Eq.~\eqref{eq:GHZ4_majorana_decomposition} is again a degree-$4$ support, and for every such ordered pair
\begin{equation}
(-1)^{\nu(A_1,A_2)+\nu(A_2,\emptyset)+\nu(\emptyset,A_1)}
\varepsilon(A_1\cup A_2)\varepsilon(A_2)\varepsilon(A_1)=1.
\end{equation}
Hence
\begin{equation}
\cT_{2,2,0}(P_4)=14.
\end{equation}
By symmetry, the same holds for the coefficients of $x^2z^2$ and $y^2z^2$.
\end{enumerate}
It remains to compute $\cT_{1,1,1}(P_4)$. In this case, each $A_i$ has size $2$. A direct inspection of the $14$ degree-$4$ supports shows that the nonzero contributions come from three types of unordered decompositions of six indices into three disjoint $2$-subsets:
\begin{itemize}
\item \emph{Type I:} the omitted two-set is one neighboring pair. There are $\binom41=4$ such cases, one example is
\begin{equation}
[8]\setminus\{7,8\}=\{1,2\}\sqcup\{3,4\}\sqcup\{5,6\}.
\end{equation}
For the ordered representative
\begin{equation}
(A_1,A_2,A_3)=(\{1,2\},\{3,4\},\{5,6\}),
\end{equation}
we have
\begin{equation}
\varepsilon(\{1,2,3,4\})=\varepsilon(\{3,4,5,6\})=\varepsilon(\{1,2,5,6\})=-1,
\end{equation}
and
\begin{equation}
(-1)^{\nu(\{1,2\},\{3,4\})+\nu(\{3,4\},\{5,6\})+\nu(\{5,6\},\{1,2\})}=1,
\end{equation}
so the contribution is $-1$. Since each unordered decomposition gives $3!=6$ ordered triples, this example contributes $-6$. After checking, all examples in Type I contribute $-6$, so overall Type I contributes $-24$.

\item \emph{Type II:} the omitted two-set uses one index from each of two different neighboring pairs, and both have the same parity. There are $\binom42\times2=12$ such cases, one example is
\begin{equation}
[8]\setminus\{6,8\}=\{1,3\}\sqcup\{2,4\}\sqcup\{5,7\}.
\end{equation}
For the ordered representative
\begin{equation}
(A_1,A_2,A_3)=(\{1,3\},\{2,4\},\{5,7\}),
\end{equation}
we have
\begin{equation}
\varepsilon(\{1,2,3,4\})=-1,\qquad \varepsilon(\{2,4,5,7\})=-1,\qquad \varepsilon(\{1,3,5,7\})=1,
\end{equation}
and
\begin{equation}
(-1)^{\nu(\{1,3\},\{2,4\})+\nu(\{2,4\},\{5,7\})+\nu(\{5,7\},\{1,3\})}
=
(-1)^{1+0+0}
=-1.
\end{equation}
Thus this representative also contributes $-1$. Each unordered decomposition gives $3!=6$ ordered triples, so this example contributes $-6$. After checking, all examples in Type II contribute $-6$, so overall Type II contributes $-72$.

\item \emph{Type III:} the omitted two-set uses one index from each of two different neighboring pairs, and they have different parities. There are $\binom42\times2=12$ such cases, one example is
\begin{equation}
[8]\setminus\{6,7\}=\{1,4\}\sqcup\{2,3\}\sqcup\{5,8\}.
\end{equation}
For the ordered representative
\begin{equation}
(A_1,A_2,A_3)=(\{1,4\},\{2,3\},\{5,8\}),
\end{equation}
we have
\begin{equation}
\varepsilon(\{1,2,3,4\})=-1,\qquad \varepsilon(\{1,4,5,8\})=-1,\qquad \varepsilon(\{2,3,5,8\})=-1,
\end{equation}
and
\begin{equation}
(-1)^{\nu(\{1,4\},\{2,3\})+\nu(\{2,3\},\{5,8\})+\nu(\{5,8\},\{1,4\})}
=
(-1)^{2+0+4}
=1.
\end{equation}
Thus this representative also contributes $-1$. Each unordered decomposition gives $3!=6$ ordered triples, so this example contributes $-6$. After checking, all examples in Type III contribute $-6$, so overall Type III contributes $-72$.
\end{itemize}
Therefore
\begin{equation}
\cT_{1,1,1}(P_4)=-24-72-72=-168.
\end{equation}
Combining all the above calculations recovers the second line in Eq.~\eqref{eq:Pi_T_GHZ4}.
\end{proof}

Based on Lemma~\ref{lem:Pi_T_GHZ4}, the evaluation for $P_n$ with $n=4m$ is provided as below.
\begin{lemma}[Function for $P_{4m}$]\label{lem:Pi_T_GHZ4m}
For the $m$-fold tensor product of the 4-qubit GHZ state $P_{4m}=P_4^{\otimes m}$, the values of functions defined in Eq.~\eqref{eq:def_Pi_T} and Eq.~\eqref{eq:Majorana_sector_purity_app} are given by
\begin{equation}\label{eq:Pi_T_GHZn}
\begin{aligned}
\sum_{\ell=0}^{n}B_\ell(P_{4m})z^\ell
&=P[z]^m,
\\
\sum_{\substack{\ell_1,\ell_2,\ell_3\ge 0\\ \ell_1+\ell_2+\ell_3\le n}}
\cT_{\ell_1,\ell_2,\ell_3}(P_{4m})
x^{\ell_1}y^{\ell_2}z^{\ell_3}
&=T[x,y,z]^m,
\end{aligned}    
\end{equation}
where $P[z]$ and $T[x,y,z]$ are defined in Eq.~\eqref{eq:Pi_T_GHZ4}.
\end{lemma}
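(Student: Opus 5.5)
The plan is to show that both $B_\ell$ and $\cT_{\ell_1,\ell_2,\ell_3}$ are multiplicative under tensor products of parity-preserving operators, so their generating functions factorize. The statement then follows by induction on $m$ from Lemma~\ref{lem:Pi_T_GHZ4}.

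The first step is to identify the Majorana algebra of $n_1+n_2$ qubits with the graded tensor product of the algebras on $n_1$ and $n_2$ qubits. Under Jordan--Wigner, for $S=S_1\sqcup S_2$ with $S_1\subseteq[2n_1]$ and $S_2$ a subset of the shifted indices $\{2n_1+1,\dots,2n_1+2n_2\}$, we have $\gamma_S=\gamma_{S_1}\gamma_{S_2}$. Here $\gamma_{S_2}$ on the full system equals $(Z^{\otimes n_1})^{|S_2|}\otimes\gamma'_{S_2}$, where $\gamma'$ denotes the Majorana product on the second block. For parity-preserving $A=A_1\otimes A_2$, the trace $\Tr(A\gamma_S)$ vanishes unless $|S_1|,|S_2|$ are even. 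When both are even the string $Z^{\otimes n_1}$ is absent, so $\Tr(A\gamma_S)=\Tr(A_1\gamma_{S_1})\Tr(A_2\gamma'_{S_2})$. For even degrees the Hermitian phase satisfies $(-i)^{|S|(|S|-1)/2}=\pm$ a sign that factorizes up to an overall $\pm1$, which is squared away in $B_\ell$. This gives $B_\ell(A_1\otimes A_2)=\sum_{\ell'+\ell''=\ell}B_{\ell'}(A_1)B_{\ell''}(A_2)$, i.e.\ the generating polynomials multiply.

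For $\cT$, split each of the disjoint sets as $A_i=A_i^{(1)}\sqcup A_i^{(2)}$ across the two blocks. By the parity support argument, each product $\gamma_{A_i}\gamma_{A_j}$ has nonzero trace against $A$ only when $A_i^{(1)}\cup A_j^{(1)}$ and $A_i^{(2)}\cup A_j^{(2)}$ both have even size. The argument used for $P_n$ in the proof of Theorem~\ref{thm:R1-3_FNG_main} applies directly, since $P_4^{\otimes (m-1)}$ and $P_4$ are stabilizer projectors whose Majorana supports are pairwise evenly intersecting. It forces every block piece $A_i^{(b)}$ to have even size. Reordering $\gamma_{A_i}\gamma_{A_j}=\gamma_{A_i^{(1)}}\gamma_{A_i^{(2)}}\gamma_{A_j^{(1)}}\gamma_{A_j^{(2)}}$ into $\gamma_{A_i^{(1)}}\gamma_{A_j^{(1)}}\gamma_{A_i^{(2)}}\gamma_{A_j^{(2)}}$ then costs the sign $(-1)^{|A_i^{(2)}||A_j^{(1)}|}=+1$, because all these sizes are even. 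Each trace therefore factorizes into a product over the two blocks, and summing over the independent block decompositions gives
\[
\cT_{\ell_1,\ell_2,\ell_3}(A_1\otimes A_2)=\sum\cT_{\ell_1',\ell_2',\ell_3'}(A_1)\cT_{\ell_1'',\ell_2'',\ell_3''}(A_2),
\]
where the sum runs over all splittings $\ell_i=\ell_i'+\ell_i''$. This is exactly the statement $T_{A_1\otimes A_2}=T_{A_1}T_{A_2}$ for the generating functions.

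The main obstacle is the sign bookkeeping: the Jordan--Wigner $Z$-strings, the phases of the Hermitian products, and the merge signs $\nu$ from Eq.~\eqref{eq:merge-sign}. My approach is to first restrict to supports where every block piece has even size, since there all strings and reorderings are trivial, and only then factorize. Applying the factorization iteratively to $P_{4m}=P_{4(m-1)}\otimes P_4$ and inserting $P[z]$ and $T[x,y,z]$ from Lemma~\ref{lem:Pi_T_GHZ4} completes the induction.
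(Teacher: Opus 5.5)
Your proposal is correct and follows the same blockwise-factorization argument as the paper, in which the generating functions become convolutions over blocks. The differences are minor: you organize it as a two-factor multiplicativity statement plus induction rather than a direct $m$-block decomposition, you rule out odd block pieces via the pairwise-even-intersection property of stabilizer Majorana supports instead of the paper's observation that $P_4$ is supported only on degrees $\{0,4,8\}$, and you track the Jordan--Wigner strings and reordering signs explicitly where the paper simply treats $\gamma_{A_i}$ as $\bigotimes_j \gamma_{A_i^{(j)}}$.
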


\begin{proof}[Proof of Lemma~\ref{lem:Pi_T_GHZ4m}]
We first decompose the index set, the Hilbert space, and the $m$-fold tensor product of the 4-qubit GHZ state as
\begin{equation}\label{eq:4m_decomposition}
\begin{aligned}
[8m]&=I_1\sqcup\cdots\sqcup I_m,
\quad\textup{with }I_j=\{8(j-1)+1,\cdots,8j\},\\
\mathcal{H}_{4m}&=\mathcal{H}_4^{(1)}\otimes\cdots\otimes \mathcal{H}_4^{(m)},\\
P_{4m}&=P_4^{(1)}\otimes\cdots\otimes P_4^{(m)},\quad\textup{with }P_4^{(j)}\in\mathcal{D}(\mathcal{H}^{(j)}).
\end{aligned}
\end{equation}
Notice that $P_4\in\mathcal{D}_F$ consists of only even-degree Majorana operators by Eq.~\eqref{eq:GHZ4_majorana_decomposition}, we have for each $j\in[m]$ (cf.\ Eq.~\eqref{eq:def_Pi_T}),
\begin{equation}
P_4^{(j)}=\sum_{i_j=0}^4\Pi_{2i_j}\left(P_4^{(j)}\right),
\end{equation}
so
\begin{equation}
P_{4m}=\bigotimes_{j=1}^m\left(\sum_{i_j=0}^4\Pi_{2i_j}\left(P_4^{(j)}\right)\right)
=\sum_{i_j=0}^4\bigotimes_{j=1}^m\Pi_{2i_j}\left(P_4^{(j)}\right),
\end{equation}
and its projection onto subspace $\mathcal{H}_{2\ell}$ is
\begin{equation}
\Pi_{2\ell}\left(P_{4m}\right)=\sum_{i_1+\cdots+i_m=\ell}\bigotimes_{j=1}^m\Pi_{2i_j}\left(P_4^{(j)}\right).
\end{equation}
Then we have
\begin{equation}
\Tr(\Pi_{2\ell}\left(P_{4m}\right)^2)=\sum_{i_1+\cdots+i_m=\ell}\prod_{j=1}^m\Tr(\Pi_{2i_j}\left(P_4^{(j)}\right)^2),
\end{equation}
since $\Tr\left(\Pi_{2i_j}\left(P_4^{(j)}\right)\Pi_{2i_j}\left(P_4^{(j)}\right)\right)=0$ if $i_j\neq i'_j$.
The generating function in the first line of Eq.~\eqref{eq:Pi_T_GHZn} is
\begin{equation}
\begin{aligned}
\sum_{\ell=0}^{4m}B_\ell(P_{4m})z^\ell&=
\sum_{\ell=0}^{4m}2^{4m}\Tr\left(\Pi_{2\ell}(P_{4m})^2\right)z^\ell
\\&=2^{4m}\sum_{\ell=0}^{4m}\sum_{i_1+\cdots+i_m=\ell}\left(\prod_{j=1}^m\Tr(\Pi_{2i_j}\left(P_4^{(j)}\right)^2)\right)z^{i_1+\cdots+i_m}
\\&=\prod_{j=1}^m\left(\sum_{i=0}^42^4\Tr(\Pi_{2i}\left(P_4\right)^2)z^{i}\right)
\\&=\prod_{j=1}^m\left(\sum_{i=0}^4B_\ell(P_4)z^{i}\right)
\\&=P[z]^m
\end{aligned}
\end{equation}

Consider the sets $\{A_1,A_2,A_3\}$ appearing in the summation in Eq.~\eqref{eq:def_Pi_T}, under the decomposition in Eq.~\eqref{eq:4m_decomposition}, we define $A_i^{(j)}=\{a-8(j-1)\mid a\in A_i\cap I_j\}\subseteq[8]$, for $i\in[3]$ and $j\in[m]$. Denote the projection onto $\mathcal{H}^{(j)}$ as $\Pi^{(j)}$, then we have for $i\in[3]$,
\begin{equation}
\gamma_{A_i}=\bigotimes_{j=1}^m\Pi^{(j)}\left(\gamma_{A_i}\right)=\bigotimes_{j=1}^m\gamma_{A_i^{(j)}}
\end{equation}
since each $\gamma_{A_i}$ is a tensor product of single-qubit Paulis. Thus
\begin{equation}
\Tr(P_n\gamma_{A_1}\gamma_{A_2})=\Tr(\bigotimes_{j=1}^mP_4^{(j)}\bigotimes_{j=1}^m\gamma_{A_1^{(j)}}\bigotimes_{j=1}^m\gamma_{A_2^{(j)}})=\prod_{j=1}^m\Tr(P_4\gamma_{A_1^{(j)}}\gamma_{A_2^{(j)}}),
\end{equation}
similar equations hold for $\Tr(P_n\gamma_{A_2}\gamma_{A_3})$ and $\Tr(P_n\gamma_{A_3}\gamma_{A_1})$. Notice that $P_4$ consists of only Majorana operators with degree $\{0,4,8\}$ by Eq.~\eqref{eq:GHZ4_majorana_decomposition}, the product $\Tr(P_n\gamma_{A_1}\gamma_{A_2})\Tr(P_n\gamma_{A_2}\gamma_{A_3})\Tr(P_n\gamma_{A_3}\gamma_{A_1})$ is nonzero if and only if for each $j\in[m]$,
\begin{equation}
\abs{A_1^{(j)}}+\abs{A_2^{(j)}},\quad\abs{A_2^{(j)}}+\abs{A_3^{(j)}},\quad\abs{A_3^{(j)}}+\abs{A_1^{(j)}}
\end{equation}
all take values in $\{0,4,8\}$, which implies each $\abs{A_i^{(j)}}\eqqcolon2\ell_i^{(j)}$ is even. Since $A_i^{(j)}$ are all disjoint subsets of $[8]$, $\ell_1^{(j)},\ell_2^{(j)},\ell_3^{(j)}\ge 0$ and $\ell_1^{(j)}+\ell_2^{(j)}+\ell_3^{(j)}\le 4$. We have
\begin{equation}
\begin{aligned}
\cT_{\ell_1,\ell_2,\ell_3}(P_{4m})
&=
\sum_{\substack{A_1,A_2,A_3\subset [8m]\ \textup{disjoint}\\ \abs{A_i}=2\ell_i}}
\Tr\left(P_{4m}\gamma_{A_1}\gamma_{A_2}\right)
\Tr\left(P_{4m}\gamma_{A_2}\gamma_{A_3}\right)
\Tr\left(P_{4m}\gamma_{A_3}\gamma_{A_1}\right)
\\&=\sum_{\substack{\sum_{j=1}^m\ell_1^{(j)}=\ell_1\\\sum_{j=1}^m\ell_2^{(j)}=\ell_2\\\sum_{j=1}^m\ell_3^{(j)}=\ell_3}}\prod_{j=1}^m\left(\sum_{\substack{A_1^{(j)},A_2^{(j)},A_3^{(j)}\subset [8]\ \textup{disjoint}\\ \abs{A_i^{(j)}}=2\ell_i^{(j)}}}\Tr(P_4\gamma_{A_1^{(j)}}\gamma_{A_2^{(j)}})\Tr(P_4\gamma_{A_2^{(j)}}\gamma_{A_3^{(j)}})\Tr(P_4\gamma_{A_3^{(j)}}\gamma_{A_1^{(j)}})\right)
\\&=\sum_{\substack{\sum_{j=1}^m\ell_1^{(j)}=\ell_1\\\sum_{j=1}^m\ell_2^{(j)}=\ell_2\\\sum_{j=1}^m\ell_3^{(j)}=\ell_3}}\prod_{j=1}^m\cT_{\ell_1^{(j)},\ell_2^{(j)},\ell_3^{(j)}}(P_4).
\end{aligned}
\end{equation}
The generating function in the second line of Eq.~\eqref{eq:Pi_T_GHZn} is
\begin{equation}
\begin{aligned}
\sum_{\substack{\ell_1,\ell_2,\ell_3\ge 0\\ \ell_1+\ell_2+\ell_3\le n}}
\cT_{\ell_1,\ell_2,\ell_3}(P_{4m})
x^{\ell_1}y^{\ell_2}z^{\ell_3}
&=\prod_{j=1}^m\left(\sum_{\substack{\ell_1^{(j)},\ell_2^{(j)},\ell_3^{(j)}\ge 0\\ \ell_1^{(j)}+\ell_2^{(j)}+\ell_3^{(j)}\le 4}}\cT_{\ell_1^{(j)},\ell_2^{(j)},\ell_3^{(j)}}(P_4)x^{\ell_1^{(j)}}y^{\ell_2^{(j)}}z^{\ell_3^{(j)}}\right)
\\&=T[x,y,z]^m.
\end{aligned}
\end{equation}
\end{proof}

For simplicity, we introduce a notation for the summation of symmetric monomials.

\begin{definition}[Distinct symmetric monomial]\label{def:Sym_monomial}
For nonnegative integers $a,b,c$, define
\begin{equation}
\operatorname{Sym}(x^ay^bz^c)
=
\sum_{(p,q,s)\in\{\sigma(a,b,c)\}_{\sigma\in S_3}}x^py^qz^s,
\end{equation}
where $(p,q,s)$ goes over all possible permutations of $(a,b,c)$ without 
repetitions.
\end{definition}
For example, we have $\operatorname{Sym}(x^3)=x^3+y^3+z^3$ and $\operatorname{Sym}(x^2yz)=x^2yz+xy^2z+xyz^2$. We now present the generating functions for state vectors $\ket{\eta_r}\bra{\eta_r}$.
\begin{lemma}[Function for $\ket{\eta_r}\bra{\eta_r}$]\label{lem:Pi_T_eta_r}
For $r\in\{0,1,2,3\}$, $n=4+r$, and the states $\ket{\eta_r}\bra{\eta_r}$ defined in Eq.~\eqref{eq:def_eta_r}, the values of functions defined in Eq.~\eqref{eq:def_Pi_T} and Eq.~\eqref{eq:Majorana_sector_purity_app} are given by
\begin{equation}\label{eq:Pi_T_eta_r}
\begin{aligned}
\sum_{\ell=0}^{n}B_\ell(\ket{\eta_r}\bra{\eta_r})z^\ell
&=P_r[z],
\\
\sum_{\substack{\ell_1,\ell_2,\ell_3\ge 0\\ \ell_1+\ell_2+\ell_3\le n}}
\cT_{\ell_1,\ell_2,\ell_3}(\ket{\eta_r}\bra{\eta_r})
x^{\ell_1}y^{\ell_2}z^{\ell_3}
&=T_r[x,y,z].
\end{aligned}    
\end{equation}
Explicitly,
\begin{equation}\label{eq:Pr-polynomial}
\begin{aligned}
P_0[z]=&P[z]=1+14z^2+z^4,\\
P_1[z]
=&(1+z)(1+14z^2+z^4)
\\=&1+z+14z^2+14z^3+z^4+z^5,
\\
P_2[z]
=&1+15z^2+32z^3+15z^4+z^6,
\\
P_3[z]
=&(1+z)(1+15z^2+32z^3+15z^4+z^6)
\\=&1+z+15z^2+47z^3+47z^4+15z^5+z^6+z^7.
\end{aligned}
\end{equation}
Moreover,
\begin{equation}\label{eq:T1-T3-polynomial}
T_0[x,y,z]=T[x,y,z],
\quad
T_1[x,y,z]=(1-x-y-z)T[x,y,z],
\quad
T_3[x,y,z]=(1-x-y-z)T_2[x,y,z],
\end{equation}
where
\begin{equation}\label{eq:T2-polynomial}
\begin{aligned}
T_2[x,y,z]
={}&1+15\operatorname{Sym}(x^2)-32\operatorname{Sym}(x^3)+15\operatorname{Sym}(x^4)+\operatorname{Sym}(x^6)
+90\operatorname{Sym}(x^2y^2)
\\
&+15\operatorname{Sym}(x^4y^2)+32\operatorname{Sym}(x^3y^3)
-120xyz
+480\operatorname{Sym}(x^2yz)-120\operatorname{Sym}(x^3yz)
\\
&-480\operatorname{Sym}(x^2y^2z)+90x^2y^2z^2,
\end{aligned}
\end{equation}
$P[z]$ and $T[x,y,z]$ are provided in Lemma~\ref{lem:Pi_T_GHZ4}.
\end{lemma}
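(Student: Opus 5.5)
The plan is to reduce every statement to the single-block data of Lemma~\ref{lem:Pi_T_GHZ4} by splitting each $\ket{\eta_r}$ into independent Majorana blocks, exactly as in the proof of Lemma~\ref{lem:Pi_T_GHZ4m}. The case $r=0$ is literally Lemma~\ref{lem:Pi_T_GHZ4}. For $r=1,3$ the state is $\ket{\eta_{r-1}}\otimes\ket{0}$, so the Majorana index set splits as $[2(n-1)]\sqcup\{2n-1,2n\}$. The last qubit contributes $\ketbra{0}{0}=\frac12(\bI-i\gamma_{2n-1}\gamma_{2n})$. Under Jordan--Wigner the Majorana products on the last mode carry the parity string $Z^{\otimes(n-1)}$, but $\ket{\eta_{r-1}}$ is parity even. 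Hence every trace factorizes as in the proof of Lemma~\ref{lem:Pi_T_GHZ4m}, up to a sign that is constant on each sector.

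\textbf{Step 1 (product rule).} I will first prove the factorization carefully. If $A=A'\sqcup A''$ with $A'\subseteq[2(n-1)]$ and $A''\subseteq\{2n-1,2n\}$, then $\gamma_{A}=\pm\gamma_{A'}\gamma_{A''}$. For the even-degree pieces $A''\in\{\emptyset,\{2n-1,2n\}\}$ we have $\gamma_{2n-1}\gamma_{2n}=iZ_n$, so the trace genuinely factorizes. In $\cT$ the sets $A_i''$ could a priori have odd size. However, the factor $\Tr(\ketbra{0}{0}\gamma_{A_i''}\gamma_{A_j''})$ is nonzero only when $A_i''\cup A_j''\in\{\emptyset,\{2n-1,2n\}\}$, which forces each $A_i''$ to be empty or the full pair. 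This is the parity argument already used in Eq.~\eqref{eq:A'i_all_even_WE3}.

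\textbf{Step 2 (generating functions for the last qubit).} For the single qubit, the vacuum formulas in Eq.~\eqref{eq:Pi_T_vacuum} with $n=1$ give $B$-polynomial $1+z$ and $\cT$-polynomial $1-x-y-z$. Multiplying generating functions as in Lemma~\ref{lem:Pi_T_GHZ4m} then yields $P_1=(1+z)P_0$, $P_3=(1+z)P_2$, $T_1=(1-x-y-z)T$, and $T_3=(1-x-y-z)T_2$. These are Eqs.~\eqref{eq:Pr-polynomial} and~\eqref{eq:T1-T3-polynomial}.

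\textbf{Step 3 (GHZ$_6$).} This step needs a direct computation. I will write $\ketbra{\mathrm{GHZ}_6}{\mathrm{GHZ}_6}=\frac12(\ketbra{0^6}{0^6}+\ketbra{1^6}{1^6})+\frac12(\ketbra{0^6}{1^6}+\mathrm{h.c.})$.

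The diagonal part equals $2^{-6}\sum_{S}$ over even-size subsets of pairs $p_j$, with $\gamma_{\bigcup p_j}$ carrying phase $(-i)^{k}$ and sign $\frac{1+(-1)^k}{2}$. This gives the $\binom{6}{k}$ even-$k$ pair supports: $1,15,15,1$ in degrees $0,4,8,12$.

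The coherent part is $\frac12(X^{\otimes6}\pm\text{terms})$, i.e. $2^{-6}$ times the products $\prod_j\{X_j,Y_j\}$ with an even number of $Y$'s, up to sign. Under Jordan--Wigner, $X^{\otimes6}$ and its relatives are Majorana products picking one index from each pair, so they have degree $6$. There are $2^6/2=32$ such supports. Reading off the sector weights gives $P_2=1+15z^2+32z^3+15z^4+z^6$.

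For $T_2$, I will enumerate disjoint triples $(A_1,A_2,A_3)$ whose pairwise unions lie in the support set, tracking the merge signs $(-1)^{\nu}$ of Eq.~\eqref{eq:merge-sign}. There are two building blocks:
\begin{itemize}
\item \emph{Pair-type supports} of degree $4k$. These reproduce a vacuum-like structure, restricted to even numbers of pairs in each union, with sign $(-1)^{\ell_1+\ell_2+\ell_3}$ as in Eq.~\eqref{eq:Pi_T_vacuum}.
\item \emph{Transversal supports} of degree $6$. Their unions with pair supports must again be transversal or paired.
\end{itemize}

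I would classify the configurations by how each of the six Majorana pairs is distributed among $A_1,A_2,A_3$ and the complement. The possibilities are: whole pair in one $A_i$; split between two $A_i$; one index in one $A_i$ and one outside; or entirely outside. Consistency of the three unions forces a uniform pattern. Either no pair is split (the diagonal-part terms, giving coefficients such as $15,90,15,1$), or the pattern is transversal in a controlled way (the terms $-32\operatorname{Sym}(x^3)$, $-120xyz$, $480\operatorname{Sym}(x^2yz)$, and so on). Each class contributes a multinomial count times a sign, which I would compute for one representative and propagate by symmetry, as in the Type I--III analysis of Lemma~\ref{lem:Pi_T_GHZ4}.

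The main obstacle is this sign bookkeeping for the mixed transversal/paired configurations in $T_2$. As a fallback, I would verify the coefficients by machine contraction over all $\le 3^{12}$ assignments and present the count table.
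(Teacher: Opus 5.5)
Your proposal is correct and follows essentially the paper's route: the one-mode vacuum factors $(1+z)$ and $(1-x-y-z)$ from Eq.~\eqref{eq:Pi_T_vacuum} give the $r=1,3$ cases multiplicatively, where your Step~1 spells out the factorization argument that the paper only states. For $r=2$ the paper likewise reads $P_2$ off the $1,15,32,15,1$ degree distribution of the GHZ$_6$ Majorana support, and obtains $T_2$ by enumerating the disjoint-set formula, verifying the coefficients numerically. One correction to your $T_2$ sketch: $-120xyz$ and $-120\operatorname{Sym}(x^3yz)$ come from the no-split configurations with all $\ell_i$ odd, since every degree-$4$ support of GHZ$_6$ is pair-type. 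The transversal degree-$6$ supports enter only when exactly two of the three unions are transversal, and these configurations yield precisely the terms $-(-1)^{a/2}\,32\binom{6}{a}$ for $a\in\{0,2,4,6\}$ split pairs, namely $-32\operatorname{Sym}(x^3)$, $480\operatorname{Sym}(x^2yz)$, $-480\operatorname{Sym}(x^2y^2z)$ and $32\operatorname{Sym}(x^3y^3)$.
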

\begin{proof}[Proof of Lemma~\ref{lem:Pi_T_eta_r}]
For $r=1$, $\ket{\eta_1}=\ket{\mathrm{GHZ}_4}\otimes\ket0$. Using Eq.~\eqref{eq:Pi_T_vacuum}, the one-mode vacuum state vector $\ket{0}\bra{0}$ contributes $(1+z)$ and $(1-x-y-z)$ to the generating function. Multiplying these by the $\mathrm{GHZ}_4$ polynomials $P[z]$ and $T[x,y,z]$ proves the $r=1$ identities. The same is for $r=3$, where the two factors are multiplied to $P_2[z]$ and $T_2[x,y,z]$.

For $r=2$, a direct Majorana expansion of $\ketbra{\eta_2}{\eta_2}$ contains $64$ nonzero monomials, all with coefficient magnitude $1/64$. Their degree distribution is
\begin{equation}
1\textup{ of degree }0,
\quad
15\textup{ of degree }4,
\quad
32\textup{ of degree }6,
\quad
15\textup{ of degree }8,
\quad
1\textup{ of degree }12.
\end{equation}
This gives the formula for $P_2[z]$ in Eq.~\eqref{eq:Pr-polynomial}. Substituting the same support set into the disjoint-set formula defining $\cT$ and grouping the terms by distinct exponent permutations gives Eq.~\eqref{eq:T2-polynomial}, whose coefficients can be verified numerically.
\end{proof}

\begin{lemma}[Function for $P_n$]\label{lem:Pi_T_P_n}
For the states $P_n$ defined in Eq.~\eqref{eq:def_P_n_r}, the values of functions defined in Eq.~\eqref{eq:def_Pi_T} and Eq.~\eqref{eq:Majorana_sector_purity_app} are given by
\begin{equation}\label{eq:Pi_T_P_n}
\begin{aligned}
\sum_{\ell=0}^{n}B_\ell(P_n)z^\ell
&=P_r[z]P[z]^{m-1},
\\
\sum_{\substack{\ell_1,\ell_2,\ell_3\ge0\\ \ell_1+\ell_2+\ell_3\le n}}
\cT_{\ell_1,\ell_2,\ell_3}(P_n)x^{\ell_1}y^{\ell_2}z^{\ell_3}
&=T_r[x,y,z]T[x,y,z]^{m-1},
\end{aligned}
\end{equation}
where $P_r[z]$ and $T_r[x,y,z]$ are given in Lemma~\ref{lem:Pi_T_eta_r}, $P[z]$ and $T[x,y,z]$ are provided in Lemma~\ref{lem:Pi_T_GHZ4}.
\end{lemma}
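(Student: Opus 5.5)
The plan is to reduce Lemma~\ref{lem:Pi_T_P_n} to the tensor-product factorization argument already used in the proof of Lemma~\ref{lem:Pi_T_GHZ4m}, now applied to the two-block decomposition $P_n=\ket{\eta_r}\bra{\eta_r}\otimes P_{4(m-1)}$. I split the index set $[2n]=I_0\sqcup I'$, where $I_0=[2(4+r)]$ collects the Majoranas of the first $4+r$ modes and $I'$ collects the remaining $8(m-1)$ indices. Since the Jordan--Wigner Majoranas on $I'$ carry a $Z$-string over the first $4+r$ qubits, I first check that this string is harmless. Every Majorana product $\gamma_S$ factorizes, up to a sign, as $\gamma_{S\cap I_0}(Z^{\otimes(4+r)})^{|S\cap I'|}\otimes\gamma'_{S\cap I'}$. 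Both factors of $P_n$ are parity eigenstates, so $Z^{\otimes(4+r)}$ acts on $\ket{\eta_r}$ as $\pm1$. Moreover, only even $|S\cap I_0|$ and even $|S\cap I'|$ survive in each factor's expansion. Hence the $Z$-strings either do not appear or contribute a global sign squared away in $B_\ell$.

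For $B_\ell$, I expand $P_n=\sum_{i,j}\Pi_{2i}(\eta_r)\otimes\Pi_{2j}(P_{4(m-1)})$, exactly as in Lemma~\ref{lem:Pi_T_GHZ4m}. This gives $\Pi_{2\ell}(P_n)=\sum_{i+j=\ell}\Pi_{2i}\otimes\Pi_{2j}$. Hilbert--Schmidt orthogonality of distinct sectors then yields $B_\ell(P_n)=\sum_{i+j=\ell}B_i(\eta_r)B_j(P_{4(m-1)})$, with the factors $2^{n}=2^{4+r}2^{4(m-1)}$ matching. In generating-function form this is $P_r[z]\cdot P[z]^{m-1}$, using Lemma~\ref{lem:Pi_T_eta_r} and Lemma~\ref{lem:Pi_T_GHZ4m}.

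For $\cT$, I decompose each $A_i=A_i^{(0)}\sqcup A_i'$ and factor each trace $\Tr(P_n\gamma_{A_i}\gamma_{A_j})$ into a product of traces on the two blocks. The main obstacle is controlling the signs from reordering Majoranas across blocks, namely $\gamma_{A_i}\gamma_{A_j}$ versus $(\gamma_{A_i^{(0)}}\gamma_{A_j^{(0)}})(\gamma_{A_i'}\gamma_{A_j'})$. I handle this as in Lemma~\ref{lem:Pi_T_GHZ4m} in two steps.

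First, nonvanishing requires $|A_i^{(0)}|+|A_j^{(0)}|$ and $|A_i'|+|A_j'|$ to be even for all pairs. In the $P_{4(m-1)}$ block the support degrees are in $\{0,4,8\}$ per GHZ$_4$ factor; in the $\eta_r$ block the stabilizer self-orthogonality argument (Eq.~\eqref{eq:Pn_support_self_orthogonal_WE3}) forces each $|A_i^{(0)}|$ to be even. So all block pieces have even size.

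Second, with all block pieces even, moving even-degree products past each other costs no sign by Eq.~\eqref{eq:majorana_commutation_WE3}, since the pieces are disjoint. The trace therefore factors exactly, and the $Z$-strings cancel by the parity argument above.

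Summing over the splits $\ell_i=\ell_i^{(0)}+\ell_i'$ converts the disjoint-set sum into a product of the two block sums. This gives $T_r[x,y,z]\,T[x,y,z]^{m-1}$. For $m=1$ the statement reduces directly to Lemma~\ref{lem:Pi_T_eta_r}.
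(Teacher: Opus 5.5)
Your proposal is correct and follows the paper's route. The paper proves this lemma in one line by invoking the blockwise convolution argument of Lemma~\ref{lem:Pi_T_GHZ4m}, with the first block replaced by $\ket{\eta_r}\bra{\eta_r}$ from Lemma~\ref{lem:Pi_T_eta_r}; your two-block split $\ket{\eta_r}\bra{\eta_r}\otimes P_{4(m-1)}$ and your explicit handling of Jordan--Wigner strings and reordering signs make precise what the paper leaves implicit. The self-orthogonality step in the $\eta_r$ block is redundant: the multiple-of-$4$ support of $P_{4(m-1)}$ already forces each $|A_i'|$ to be even, and together with $|A_i|=2\ell_i$ this forces $|A_i^{(0)}|$ to be even.
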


\begin{proof}[Proof of Lemma~\ref{lem:Pi_T_P_n}]
The proof is the same blockwise convolution argument as in Lemma~\ref{lem:Pi_T_GHZ4m}. The only difference is that the first block is now the projector $\ketbra{\eta_r}{\eta_r}$, whose generating function is provided in Lemma~\ref{lem:Pi_T_eta_r}, while the remaining $m-1$ blocks are copies of $P_4$, provided in Lemma~\ref{lem:Pi_T_GHZ4}.
\end{proof}

Noticeably, we show that the WE $2$-criterion in Theorem~\ref{thm:R1-3_FNG_main} detects all pure non-Gaussian states for $4\leq n\leq 7$ qubits, by directly relating WE $2$-norm to the \emph{fermionic antiflatness} (FAF) introduced in Ref.~\cite{sierant2026FermionicMagicResources}. Let $\ket{\psi}$ be a fixed-parity pure fermionic state on $n$ modes, i.e., $\psi=\ketbra{\psi}{\psi}\in\mathcal{D}_F$, the elements of its correlation matrix $\Gamma$ are defined by
\begin{equation}
\Gamma_{j,k}(\psi)=-\frac{i}{2}\bra{\psi}[\gamma_j,\gamma_k]\ket{\psi},
\end{equation}
and the first-order FAF is
\begin{equation}
\mathfrak F_1(\ket{\psi})
=
n-\frac12\Tr\left(\Gamma(\psi)^\mathrm{T} \Gamma(\psi)\right).
\end{equation}
For fixed-parity pure states, FAF is a faithful fermionic non-Gaussianity measure, as $\mathfrak F_1(\ket{\psi})\geq0$, with equality holding if and only if $\ket{\psi}$ is Gaussian~\cite{sierant2026FermionicMagicResources}. Since the correlation matrix captures only the coefficients of Majorana products with degree 2, it fully determines WE $2$-norm for pure states with a small size, while WE $2$-norm still involves higher-degree Majorana coefficients to effectively detect non-Gaussianity of a general mixed state. We relate WE $2$-norm to FAF with a small system size for $4\leq n \leq7$ in Corollary~\ref{cor:fermion_WE2} in the main text.

\begin{proof}[Proof of Corollary~\ref{cor:fermion_WE2}]
We denote the parity operator as $Z^{\otimes n}=(-i)^n\gamma_1\gamma_2\cdots\gamma_{2n-1}\gamma_{2n}$, a valid fermionic state vector $\ket{\psi}$ is an eigenstate of $Z^{\otimes n}$ with a fixed parity $+1$ or $-1$:
\begin{equation}
Z^{\otimes n}\ket{\psi}=\pm1\ket{\psi}.
\end{equation}
Recall the even Majorana sector purities for $\ell\in\{0,1,\cdots,n\}$ of a pure state $\psi$ 
(cf.\ Eq.~\eqref{eq:Majorana_sector_purity_app}):
\begin{equation}\label{eq:Majorana_sector_purity}
B_\ell(\psi)
=
\sum_{\abs{S}=2\ell}
\abs{\bra{\psi}\widehat{\gamma}_S\ket{\psi}}^2.
\end{equation}
Those Majorana sector purities obey some relations for a pure fermionic state $\psi$:
\begin{itemize}
\item Boundary constraints:
\begin{equation}\label{eq:Majorana_sector_purity_boundary}
B_0(\psi)=B_n(\psi)=1.
\end{equation}
It can be seen from $\widehat{\gamma}_\emptyset=\bI$ and $\widehat{\gamma}_{[2n]}=(-i)^nZ^{\otimes n}$.
\item These coefficients are symmetric
\begin{equation}\label{eq:Majorana_sector_purity_symmetric}
B_\ell(\psi)=B_{n-\ell}(\psi)
\end{equation}
 towards the center.
Notice that $Z^{\otimes n}\widehat{\gamma}_S=\pm\widehat{\gamma}_{\bar{S}}$, where $\bar{S}=[2n]\setminus S$, we have
\begin{equation}
\abs{\bra{\psi}\widehat{\gamma}_S\ket{\psi}}^2
=
\abs{\bra{\psi}Z^{\otimes n}\widehat{\gamma}_S\ket{\psi}}^2
=
\abs{\bra{\psi}\widehat{\gamma}_{\bar{S}}\ket{\psi}}^2,
\end{equation}
which proves the above equation.
\item These coefficients sum to $2^n$:
\begin{equation}\label{eq:Majorana_sector_purity_normalization}
\sum_{\ell=0}^{n}B_\ell(\psi)=2^n.
\end{equation}
This can be seen from the expansion of the even state
\begin{equation}\label{eq:even_state_expansion}
\psi
=
2^{-n}
\sum_{\ell=0}^{n}
\sum_{\abs{S}=2\ell}
\bra{\psi}\widehat{\gamma}_S\ket{\psi}\widehat{\gamma}_S,
\end{equation}
then use $\Tr(\psi^2)=1$ and Eq.~\eqref{eq:Majorana_product_orthogonal}.
\item These coefficients are linear combinations of each other:
\begin{equation}\label{eq:Majorana_sector_purity_linear}
B_\ell(\psi)=2^{-n}\sum_{k=0}^n\sum_{j=0}^{2\ell}(-1)^j\binom{2k}j\binom{2n-2k}{2\ell-j}B_k(\psi).
\end{equation}
Equivalently, recall that the binary Krawtchouk Polynomial is defined as~\cite{macwilliams1977TheoryErrorCorrecting}
\begin{equation}\label{eq:def_Krawtchouk}
K_r(t;2n)
=
\sum_{j=0}^r(-1)^j\binom tj\binom{2n-t}{r-j},
\end{equation}
the linear transformation of these coefficients is rewritten as
\begin{equation}\label{eq:Majorana_sector_purity_linear_Krawtchouk}
B_\ell(\psi)=2^{-n}\sum_{k=0}^nK_{2\ell}(2k;2n)B_k(\psi).
\end{equation}
We use Eq.~\eqref{eq:even_state_expansion} and $\widehat{\gamma}_S\widehat{\gamma}_T\widehat{\gamma}_S=(-1)^{\abs{S\cap T}}\widehat{\gamma}_T$,
\begin{equation}
\begin{aligned}
B_\ell(\psi)&=\sum_{\abs{S}=2\ell}
\bra{\psi}\widehat{\gamma}_S\left(\ket{\psi}\bra{\psi}\right)\widehat{\gamma}_S\ket{\psi}
\\&=2^{-n}\sum_{k=0}^n\sum_{\abs{T}=2k}\left(\sum_{\abs{S}=2\ell}(-1)^{\abs{S\cap T}}\right)\abs{\bra{\psi}\widehat{\gamma}_T\ket{\psi}}^2
\\&=2^{-n}\sum_{k=0}^n\sum_{\abs{T}=2k}\sum_{j=0}^{2\ell}(-1)^j\binom{2k}j\binom{2n-2k}{2\ell-j}\abs{\bra{\psi}\widehat{\gamma}_T\ket{\psi}}^2
\\&=2^{-n}\sum_{k=0}^n\sum_{j=0}^{2\ell}(-1)^j\binom{2k}j\binom{2n-2k}{2\ell-j}B_k(\psi).
\end{aligned}
\end{equation}
\end{itemize}
Using Eq.~\eqref{eq:Majorana_sector_purity_boundary}, Eq.~\eqref{eq:Majorana_sector_purity_symmetric}, Eq.~\eqref{eq:Majorana_sector_purity_normalization}, and Eq.~\eqref{eq:Majorana_sector_purity_linear}, we can solve a set of linear equations for $\{B_\ell\}_{\ell=0}^n$.  For $4\leq n\leq7$, we can represent all other coefficients with
\begin{equation}
B_1(\psi)=\frac12\Tr\left(\Gamma(\psi)^\mathrm{T} \Gamma(\psi)\right)
=
n-\mathfrak F_1(\ket{\psi}).
\end{equation}
For $4\leq n\leq7$, we explicitly calculate WE $2$-norm of $n$-qubit state $\psi$ in Eq.~\eqref{eq:FNG_WE_2_norm_projector_app}
\begin{equation}\label{eq:FNG_WE_2_norm_projector_app_1}
\norm{\psi}_{P_n,\mathrm{M}_n,2}^2
=
\frac1{d^2}\sum_{\ell=0}^{n}
\frac{B_\ell(P_n)B_\ell(\psi)}{\binom{2n}{2\ell}},
\end{equation}
with all the intermediate data provided in Table~\ref{tab:sector-data-R2_n4-7}. It is worth noting that $n\geq8$ will involve more variables than constraints, so that we cannot represent all $B_\ell$ with $\mathfrak{F}_1$.

\begin{table}[t]
\centering
\renewcommand{\arraystretch}{1.05}
\setlength{\extrarowheight}{2pt}
\begin{tabular}{c|c|c|c|c|c}
\hline
$n$ 
& $\ell$ 
& $\binom{2n}{2\ell}$ 
& $B_\ell(P_n)$, Eq.~\eqref{eq:Pr-polynomial} 
& $B_\ell(\psi)$, Eq.~\eqref{eq:Majorana_sector_purity}
& $\norm{\psi}_{P_n,\mathrm{M}_n,2}^2$, Eq.~\eqref{eq:FNG_WE_2_norm_projector_app_1}\\
\hline
\multirow{5}{*}{$4$}
& $0$ & $1$  & $1$  & $1$
& \multirow{5}{*}{$\dfrac{8+\mathfrak F_1}{640}$} \\
\cline{2-5}
& $1$ & $28$ & $0$  & $4-\mathfrak F_1$ & \\
\cline{2-5}
& $2$ & $70$ & $14$ & $6+2\mathfrak F_1$ & \\
\cline{2-5}
& $3$ & $28$ & $0$  & $4-\mathfrak F_1$ & \\
\cline{2-5}
& $4$ & $1$  & $1$  & $1$ & \\
\hline
\multirow{6}{*}{$5$}
& $0$ & $1$   & $1$  & $1$
& \multirow{6}{*}{$\dfrac{40+\mathfrak F_1}{11520}$}\\
\cline{2-5}
& $1$ & $45$  & $1$  & $5-\mathfrak F_1$ & \\
\cline{2-5}
& $2$ & $210$ & $14$ & $10+\mathfrak F_1$ & \\
\cline{2-5}
& $3$ & $210$ & $14$ & $10+\mathfrak F_1$ & \\
\cline{2-5}
& $4$ & $45$  & $1$  & $5-\mathfrak F_1$ & \\
\cline{2-5}
& $5$ & $1$   & $1$  & $1$ & \\
\hline
\multirow{7}{*}{$6$}
& $0$ & $1$   & $1$  & $1$
& \multirow{7}{*}{$\dfrac{52+\mathfrak F_1}{59136}$}\\
\cline{2-5}
& $1$ & $66$  & $0$  & $6-\mathfrak F_1$ & \\
\cline{2-5}
& $2$ & $495$ & $15$ & $15$ & \\
\cline{2-5}
& $3$ & $924$ & $32$ & $20+2\mathfrak F_1$ & \\
\cline{2-5}
& $4$ & $495$ & $15$ & $15$ & \\
\cline{2-5}
& $5$ & $66$  & $0$  & $6-\mathfrak F_1$ & \\
\cline{2-5}
& $6$ & $1$   & $1$  & $1$ & \\
\hline
\multirow{8}{*}{$7$}
& $0$ & $1$    & $1$  & $1$
& \multirow{8}{*}{$\dfrac{364+\mathfrak F_1}{1537536}$} \\
\cline{2-5}
& $1$ & $91$   & $1$  & $7-\mathfrak F_1$ & \\
\cline{2-5}
& $2$ & $1001$ & $15$ & $21-\mathfrak F_1$ & \\
\cline{2-5}
& $3$ & $3003$ & $47$ & $35+2\mathfrak F_1$ & \\
\cline{2-5}
& $4$ & $3003$ & $47$ & $35+2\mathfrak F_1$ & \\
\cline{2-5}
& $5$ & $1001$ & $15$ & $21-\mathfrak F_1$ & \\
\cline{2-5}
& $6$ & $91$   & $1$  & $7-\mathfrak F_1$ & \\
\cline{2-5}
& $7$ & $1$    & $1$  & $1$ & \\
\hline
\end{tabular}
\caption{Sector data to calculate $\norm{\psi}_{P_n,\mathrm{M}_n,2}^2$ for pure states with $4\leq n\leq7$ qubits.}
\label{tab:sector-data-R2_n4-7}
\end{table}
\end{proof}

Especially, we provide numerical data for the WE $2,3$-criteria applying to non-Gaussian states $P_n$ for $4\leq n\leq11$. The values of the detection gaps
\begin{equation}
\begin{aligned}
\Delta_{P_n,2}(P_n)&=\norm{P_n}_{P_n,\mathrm{M}_n,2}^2-C_{P_n,\mathrm{M}_n,2}^2,\\
\Delta_{P_n,3}(P_n)&=\norm{P_n}_{P_n,\mathrm{M}_n,3}^3-C_{P_n,\mathrm{M}_n,3}^3,
\end{aligned}
\end{equation}
are provided in Table~\ref{tab:projector_seed_margins}. The criterion successfully detects non-Gaussianity if and only if the corresponding gap is positive. We see that both the WE $2,3$-criteria detect $P_n$ for $4\leq n\leq 10$, while both fail at $n=11$.

\begin{table}[h!]
\centering
\renewcommand{\arraystretch}{1.7}
\setlength{\extrarowheight}{2pt}
\begin{tabular}{c|c|c|c|c|c|c|c|c}
\hline
$n$ 
& $4$ & $5$ & $6$ & $7$ & $8$ & $9$ & $10$ & $11$ \\ 
\hline
\begin{tabular}{@{}c@{}}
$2^{2n}\Delta_{P_n,2}(P_n)$
\end{tabular}
& $\dfrac{8}{5}$ 
& $\dfrac{16}{45}$ 
& $\dfrac{32}{77}$ 
& $\dfrac{64}{1001}$ 
& $\dfrac{128}{65}$ 
& $\dfrac{256}{1105}$ 
& $\dfrac{27136}{138567}$ 
& $-\dfrac{60416}{440895}$ 
\\
\hline
\begin{tabular}{@{}c@{}}
$2^{3n}\Delta_{P_n,3}(P_n)$
\end{tabular}
& $\dfrac{96}{5}$ 
& $\dfrac{128}{25}$ 
& $\dfrac{512}{77}$ 
& $\dfrac{8192}{7007}$ 
& $\dfrac{24576}{715}$ 
& $\dfrac{32768}{7293}$ 
& $\dfrac{6201344}{1616615}$ 
& $-\dfrac{55902208}{21966945}$ 
\\ 
\hline
\end{tabular}
\caption{
Exact margins for the WE $2,3$-criteria evaluated on the seed state $P_n$ itself.
$\Delta_{P_n,2}(P_n)$ and
$\Delta_{P_n,3}(P_n)$ are the gap between the WE-norms and the threshold values for all convex Gaussian states.
Thus, both the WE $2,3$-criteria detect $P_n$ for $4\leq n\leq 10$, while both fail at $n=11$.
}
\label{tab:projector_seed_margins}
\end{table}

\subsection{Proof of Theorem~\ref{thm:FNG_Pauli} and Corollary~\ref{cor:FNG_Pauli_pure}}\label{app:FNG_Pauli}
\begin{proof}[Proof of Theorem~\ref{thm:FNG_Pauli}]
Let $d=2^n$, we now calculate the WE $2$-norm in Eq.~\eqref{eq:FNG_WE_2_norm_Pauli_general}.
Recall the seed operator defined in Eq.~\eqref{eq:FNG_pauli_general} for $1\leq\ell\leq n$,
\begin{equation}\label{eq:FNG_pauli_app}
\widehat{\gamma}_{T_\ell}
=
\prod_{j=1}^{\ell} Z_j,
\qquad
T_\ell=[2\ell].
\end{equation}
Using Eq.~\eqref{eq:odd_Majorana_twirling} and Eq.~\eqref{eq:trace_Pi_A_squared}, we have
\begin{equation}\label{eq:FNG_WE_2_norm_Pauli_app}
\begin{aligned}
\norm{\rho}_{\widehat{\gamma}_{T_\ell},\mathrm{M}_n,2}^2&=\underset{U\sim\mu_{\mathrm{M}_n}}{\mathbb{E}}\left[\Tr(U^\dagger \widehat{\gamma}_{T_\ell} U\rho)^2\right]
=\Tr\left[\Phi_{\mathrm{M}_n}^{(2)}\left(\widehat{\gamma}_{T_\ell}^{\otimes{2}}\right)\rho^{\otimes2}\right]
\\&=\binom{2n}{2\ell}^{-1}\sum\limits_{\substack{S\subseteq[2n]\\\abs{S}=2\ell}}\Tr(\widehat{\gamma}_S\rho)^2
\\&=\binom{2n}{2\ell}^{-1} B_\ell(\rho),
\end{aligned}
\end{equation}
which proves Eq.~\eqref{eq:FNG_WE_2_norm_Pauli_general}. Notice that $\norm{\rho}_{\widehat{\gamma}_{T_\ell},\mathrm{M}_n,2}$ is the $L^2(\mu_{\mathrm{M}_n})$-norm of the function $f_\rho:\mathrm{M}_n\rightarrow\mathbb{R}$,
\begin{equation}
f_{\rho}(U)=\Tr(U^\dagger \widehat{\gamma}_{T_\ell} U\rho),
\end{equation}
since
\begin{equation}
\norm{f_\rho}_{L^2(\mu_{\mathrm{M}_n})}
=
\left(
\int_{\mathrm{M}_n}
\abs{\Tr(U^\dagger \widehat{\gamma}_{T_\ell} U\rho)}^2
\mathrm{d}\mu_{\mathrm{M}_n}
\right)^{1/2}
=\left(
\underset{U\sim\mu_{\mathrm{M}_n}}{\mathbb{E}}\left[\Tr(U^\dagger \widehat{\gamma}_{T_\ell} U\rho)^2\right]
\right)^{1/2}
.
\end{equation}
It is convex by the Minkowski inequality, especially the convex Gaussian threshold
\begin{equation}
C_{\widehat{\gamma}_{T_\ell},\mathrm{M}_n,2}=
\sup_{\sigma\in\textup{Conv}(\mathcal{G}_n)}
\norm{\sigma}_{\widehat{\gamma}_{T_\ell},\mathrm{M}_n,2}
\end{equation}
is obtained at some pure Gaussian state $\phi\in\mathcal{G}_n$. Furthermore, all states in $\mathcal{G}_n$ are in a single orbit under $\mathrm{M}_n$, we can evaluate the threshold value at $\ketbra{0^n}{0^n}$. Inserting Eq.~\eqref{eq:Pi_T_vacuum} to Eq.~\eqref{eq:FNG_WE_2_norm_Pauli_app}, we have
\begin{equation}
C_{\widehat{\gamma}_{T_\ell},\mathrm{M}_n,2}^2=\norm{\ketbra{0^n}{0^n}}_{\widehat{\gamma}_{T_\ell},\mathrm{M}_n,2}^2=\frac{\binom{n}{\ell}}{\binom{2n}{2\ell}},
\end{equation}
which proves Eq.~\eqref{eq:FNG_WE_2_threshold_Pauli_general}. In this way, the WE $2$-criterion with $\widehat{\gamma}_T$ is
\begin{equation}
\norm{\rho}_{\widehat{\gamma}_{T_\ell},\mathrm{M}_n,2}>C_{\widehat{\gamma}_{T_\ell},\mathrm{M}_n,2}
\quad\Longleftrightarrow\quad
B_\ell(\rho)>\binom{n}{\ell}
\quad\Longrightarrow\quad
\rho\notin\mathrm{Conv}(\mathcal G_n).
\end{equation}
\end{proof}

We reformulate Corollary~\ref{cor:FNG_Pauli_pure} as the following theorem and then prove it.
\begin{theorem}[Central Majorana sector purity inequality]\label{thm:central-ineq}
For every fixed-parity pure state $\psi$ with $n\ge4$ qubits,
\begin{equation}\label{eq:central-ineq}
 B_{\floor{n/2}}(\psi)-\binom n{\floor{n/2}}\geq0.
\end{equation}
Moreover, equality holds if and only if $\psi$ is a pure fermionic Gaussian state.
\end{theorem}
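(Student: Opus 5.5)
The plan is to recast the statement as a spectral statement about the two-copy operator
\begin{equation*}
\Lambda=\sum_{j=1}^{2n}\gamma_j\otimes\gamma_j ,
\end{equation*}
acting on $\psi^{\otimes 2}$. The reason is Bravyi's characterization of pure Gaussian states: a fixed-parity pure state $\psi$ is Gaussian if and only if $\Lambda\,|\psi\rangle^{\otimes 2}=0$. If $B_{\floor{n/2}}(\psi)-\binom{n}{\floor{n/2}}$ can be written as a nonnegative combination of the weights of $|\psi\rangle^{\otimes2}$ on the nonzero eigenspaces of $\Lambda$, then both the inequality and the equality case follow at once. This route also matches the structure already visible in Eq.~\eqref{eq:Majorana_sector_purity_linear_Krawtchouk}: both $\{B_\ell(\psi)\}$ and the Gaussian values $\{\binom{n}{\ell}\}$ from Eq.~\eqref{eq:Pi_T_vacuum} are fixed points of the same Krawtchouk map.

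\textbf{Step 1: expand $B_\ell$ through $\Lambda$.} The relevant operator is $\Lambda^2=\sum_{j,k}\gamma_j\gamma_k\otimes\gamma_j\gamma_k$. More generally, the symmetric functions $e_m=\sum_{\abs{S}=m}\gamma_S\otimes\gamma_S$ are polynomials in $\Lambda$, and they all commute. For even $m=2\ell$, using $\widehat\gamma_S=(-i)^{\ell(2\ell-1)}\gamma_S$, we get
\begin{equation*}
\langle\psi^{\otimes2}|e_{2\ell}|\psi^{\otimes2}\rangle=\pm B_\ell(\psi),
\end{equation*}
with a sign that depends only on $\ell$. I will then diagonalize the family $\{e_m\}$ jointly. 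By the standard su(2)-type structure of $\Lambda$ (it is a sum of $2n$ commuting-up-to-sign terms, each squaring to the identity), the eigenvalues of $\Lambda/i$ are $2n-4k$ on eigenspaces with projectors $\Pi^{(k)}$, $k=0,\dots,n$. On $\Pi^{(k)}$, $e_m$ acts as a Krawtchouk value $\propto K_m(2k;2n)$. This gives
\begin{equation*}
B_\ell(\psi)=\sum_{k} c_{\ell,k}\, w_k,\qquad w_k=\langle\psi^{\otimes2}|\Pi^{(k)}|\psi^{\otimes2}\rangle\ge0,\qquad \sum_k w_k=1 .
\end{equation*}
Here the coefficients $c_{\ell,k}$ are explicit Krawtchouk numbers. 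This is the two-copy counterpart of Eq.~\eqref{eq:Majorana_sector_purity_linear_Krawtchouk}, and the relations in Eqs.~\eqref{eq:Majorana_sector_purity_boundary}--\eqref{eq:Majorana_sector_purity_normalization} should fall out as consistency checks.

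\textbf{Step 2: read off the Gaussian value.} The kernel of $\Lambda$ corresponds to the central index $k=n/2$ for even $n$. For odd $n$, where $\Lambda$ has no kernel, it corresponds to the pair of eigenvalues $\pm2$, and I expect to handle this by multiplying by the one-mode vacuum, as in Lemma~\ref{lem:Pi_T_eta_r}. Evaluating on $\ketbra{0^n}{0^n}$, for which $B_\ell=\binom{n}{\ell}$ and all weight sits in the Gaussian eigenspace, fixes
\begin{equation*}
c_{\floor{n/2},\,\mathrm{Gauss}}=\binom{n}{\floor{n/2}} .
\end{equation*}
Hence
\begin{equation*}
B_{\floor{n/2}}(\psi)-\binom{n}{\floor{n/2}}=\sum_{k\ne \mathrm{Gauss}}\Bigl(c_{\floor{n/2},k}-\binom{n}{\floor{n/2}}\Bigr)w_k .
\end{equation*}

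\textbf{Step 3: positivity of the coefficients (main obstacle).} It remains to show that $c_{\floor{n/2},k}>\binom{n}{\floor{n/2}}$ for every non-Gaussian $k$ that can actually carry weight. Only $k$ with appropriate parity contribute, because $\psi^{\otimes2}$ is symmetric and has fixed total parity. I expect the central Krawtchouk value to have the closed form
\begin{equation*}
K_{n}(2k;2n)=(-1)^k\binom{2k}{k}\binom{2n-2k}{n-k}\Big/\binom{n}{k}
\end{equation*}
for even $n$, and I would first prove the needed inequality from this form. The decisive point is that the sign $(-1)^k$ is compensated by the phase relating $\gamma_S$ and $\widehat\gamma_S$ together with the symmetric-subspace restriction, leaving a quantity that is strictly minimized at the Gaussian index. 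If the direct closed-form comparison gets messy, the fallback is to use only that $B_\ell$ is a fixed point of the involutive Krawtchouk map. Subtracting the Gaussian solution, the difference $\delta_\ell=B_\ell-\binom{n}{\ell}$ satisfies $\delta_0=\delta_n=0$, $\sum_\ell\delta_\ell=0$, and $\delta$ is symmetric. One would then show that the central component of every admissible $\delta$ with all $w_k\ge0$ is nonnegative.

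\textbf{Equality case.} Equality forces $w_k=0$ for all non-Gaussian $k$, so $\Lambda|\psi\rangle^{\otimes2}=0$, and $\psi$ is Gaussian by Bravyi's criterion. The converse is already supplied by the vacuum computation in Eq.~\eqref{eq:Pi_T_vacuum}, together with invariance of $B_\ell$ under $\mathrm{M}_n$.
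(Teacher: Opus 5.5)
Your route is viable and genuinely different from the paper's. As written, though, it contains one false claim, and the decisive step is asserted rather than proved.

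\textbf{The spectrum.} With the Jordan--Wigner tensor product, the $x_j=\gamma_j\otimes\gamma_j$ commute exactly and are Hermitian involutions. So $\Lambda=\sum_j x_j$ is itself Hermitian; there is no factor $i$. It has eigenvalue $2n-2m$ on joint eigenvectors with $m$ minus signs, and there $e_s$ acts as $K_s(m;2n)$. Two constraints act on $|\psi\rangle^{\otimes2}$.
\begin{itemize}
\item Parity: $\prod_j x_j=\gamma_{[2n]}\otimes\gamma_{[2n]}=(-1)^nZ^{\otimes n}\otimes Z^{\otimes n}$ forces $m\equiv n\pmod 2$.
\item Swap symmetry: the copy swap $\mathbb{S}=2^{-n}\sum_{S}\widehat\gamma_S\otimes\widehat\gamma_S$ acts on such a vector as $(-1)^{(n-m)(n-m-1)/2}$, so symmetry forces $m\equiv n\pmod 4$.
\end{itemize}
Hence the kernel $m=n$ is available for every $n$; for example, $\Lambda|0^n\rangle^{\otimes2}=0$ also for odd $n$. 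The values $2n-4k$ you list are the parity-allowed eigenvalues only for even $n$. For odd $n$ they are $2n-2-4k$, which include $0$. So your statement that for odd $n$ the Gaussian weight sits on $\pm2$ is false, and it contradicts the Bravyi criterion you use for the equality case. The vacuum reduction still works, but for a different reason. From $G_{\psi\otimes|0\rangle}(z)=(1+z)G_\psi(z)$ and $B_\ell=B_{n-\ell}$ you get $B_{(n+1)/2}(\psi\otimes|0\rangle)=2B_{\floor{n/2}}(\psi)$. Also $\binom{n+1}{(n+1)/2}=2\binom{n}{\floor{n/2}}$, and $\psi\otimes|0\rangle$ is Gaussian iff $\psi$ is.

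\textbf{Step 3.} This is the whole content of the theorem. For even $n$, write $m=2k$; the coefficient of the weight $w_{2k}$ in $B_{n/2}$ is $(-1)^{n/2}K_n(2k;2n)$. Your guessed closed form is correct. It follows from the same Cauchy-integral computation the paper uses for $e_{n,j}$:
\begin{equation*}
K_n(2k;2n)=(-1)^k\frac{4^n}{\pi}\int_0^\pi\cos^{2n-2k}\phi\,\sin^{2k}\phi\,\mathrm{d}\phi=(-1)^kf(k),\qquad f(k)=\frac{(2k)!\,(2n-2k)!}{k!\,(n-k)!\,n!}.
\end{equation*}
The swap restriction $k\equiv n/2\pmod 2$ is what makes every surviving coefficient equal to $+f(k)$; without it the inequality fails. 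The ratio $f(k+1)/f(k)=(2k+1)/(2n-2k-1)$ shows that $f$ is strictly minimized over integers at $k=n/2$, where $f(n/2)=\binom{n}{n/2}$. This gives the inequality. At equality all weight sits on $m=n$, so $B_1=n$ and $\mathfrak F_1(\psi)=0$, or equivalently $\Lambda|\psi\rangle^{\otimes2}=0$.

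\textbf{Comparison with the paper.} The paper relies on the same positivity without naming it. Its $\Omega_S$ equals $2^{2n-1}(\Pi_{\varepsilon(S)}+\Pi_{-\varepsilon(S)})$, where $\Pi_\varepsilon$ is the rank-one joint eigenprojector of the $x_j$ (rank one because every $\gamma_T\otimes\gamma_T$ with $T\neq\emptyset$ is traceless). So $S_r(\psi)$ is $2^{2n-1}$ times the weight of $|\psi\rangle^{\otimes2}$ on the $\Lambda$-eigenvalues $\pm(2n-2r)$, and $S_{n-4p}$ corresponds to $\pm8p$. The paper never diagonalizes. It encodes parity and symmetry through the Krawtchouk fixed-point relation, derives the Gleason-type form of $G_\psi$, and reaches $\beta_{n,p}$ only through a triangular binomial inversion plus a sign argument for $e_{n,j}$; this is essentially your fallback. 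Your direct diagonalization gives the coefficients in closed form, $2^{2n-1}\beta_{n,p}=f(n/2-2p)-\binom{n}{n/2}$ for even $n$. For example, this is $64$ for $n=4$, and $128$ and $12800$ for $n=8$. Positivity then becomes a one-line monotonicity check. In exchange, the paper treats odd and even $n$ in a single computation, without the vacuum reduction.
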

To prove Theorem~\ref{thm:central-ineq}, we first introduce some notations and useful lemmas, with their proofs left to the end of this appendix.
For a series $A=(a_0,a_1,a_2,\cdots)$, we define its generating function as the formal power series $F_A(z)$, where in general $z\in\mathbb{C}$,
\begin{equation}
F_A(z)=\sum_{j=0}^\infty a_jz^j.
\end{equation}
Equivalently, denote the coefficients $a_j$ of the term $x^j$ extracted from $F_A(z)$ as
\begin{equation}
\Coeff{z^j}{F_A(z)}=a_j.
\end{equation}
For any fixed-parity pure state $\psi$, denote the generating function of its Majorana sector purities $B_\ell(\psi)$ as
\begin{equation}\label{eq:def_Majorana_sector_purity_generating_function}
G_\psi(z)=\sum_{\ell=0}^nB_\ell(\psi)z^\ell.
\end{equation}
We show $G_\psi(z)$ can be rewritten in the following form.
\begin{lemma}[Rewrite the generating function of Majorana sector purities]\label{lem:Gleason_form}
For every fixed-parity pure state $\psi$, there exist real coefficients $c_j(\psi)$ and $c_0(\psi)=1$, such that the generating function of its Majorana sector purities $B_\ell(\psi)$ can be written as
\begin{equation}\label{eq:gleason-G}
G_\psi(z)=\sum_{\ell=0}^nB_\ell(\psi)z^\ell
=
\sum_{j=0}^{\floor{n/4}}
c_j(\psi)(1+z)^{n-4j}\left(z(1-z)^2\right)^j.
\end{equation}
\end{lemma}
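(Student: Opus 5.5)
The plan is to show that the linear constraints on $\{B_\ell(\psi)\}$ already derived in the proof of Corollary~\ref{cor:fermion_WE2} force $G_\psi$ into a Gleason-type invariant ring. These are the symmetry $B_\ell=B_{n-\ell}$ and the MacWilliams-type identity $B_\ell=2^{-n}\sum_k K_{2\ell}(2k;2n)B_k$. We first translate both into statements about the generating function. Define the weight enumerator in two variables,
\begin{equation}
W_\psi(x,y)=\sum_{\ell=0}^n B_\ell(\psi)\,x^{2n-2\ell}y^{2\ell},
\end{equation}
so that $G_\psi(z)=W_\psi(1,\sqrt z)$.

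The Krawtchouk identity is precisely the MacWilliams transform restricted to even weights. Concretely, $\sum_r K_r(t;2n)y^r=(1-y)^t(1+y)^{2n-t}$. Keeping only even $r$ means averaging over $y\mapsto -y$, and because $t=2k$ is even, the resulting invariance reads
\begin{equation}
W_\psi(x,y)=W_\psi\!\left(\tfrac{x+y}{\sqrt2},\tfrac{x-y}{\sqrt2}\right).
\end{equation}
The symmetry $B_\ell=B_{n-\ell}$ gives $W(x,y)=W(y,x)$, and evenness of the degrees gives $W(x,-y)=W(x,y)$. So $W_\psi$ is a homogeneous polynomial of degree $2n$ invariant under the group generated by
\begin{equation}
\tfrac1{\sqrt2}\begin{pmatrix}1&1\\1&-1\end{pmatrix},\qquad \begin{pmatrix}0&1\\1&0\end{pmatrix},\qquad \mathrm{diag}(1,-1).
\end{equation}
This is the dihedral group of order $16$.

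Next we invoke its invariant ring, which is the classical Gleason-type computation via Molien series. This group is generated by reflections, so the ring is polynomial, freely generated by $f_2=x^2+y^2$ and $f_8=x^2y^2(x^2-y^2)^2$. The invariance of $f_8$ under the Hadamard map should be verified directly: $x^2-y^2\mapsto 2xy$ and $xy\mapsto (x^2-y^2)/2$, so $f_8\mapsto f_8$. It follows that
\begin{equation}
W_\psi=\sum_j c_j f_2^{\,n-4j}f_8^{\,j}.
\end{equation}
Setting $x=1$ and $y^2=z$ gives $f_2\mapsto 1+z$ and $f_8\mapsto z(1-z)^2$. This is exactly Eq.~\eqref{eq:gleason-G}, and the coefficients $c_j$ are real because the $B_\ell$ are real.

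Finally, we fix $c_0$. At $z=0$ every term with $j\ge1$ vanishes, so $c_0=G_\psi(0)=B_0(\psi)=1$ by Eq.~\eqref{eq:Majorana_sector_purity_boundary}. The normalization $\sum_\ell B_\ell=2^n$ is then automatic at $z=1$ and serves only as a consistency check.

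The main obstacle is making the invariant-ring step rigorous rather than citing it. Two routes are available. The first is to compute the Molien series $1/\big((1-t^2)(1-t^8)\big)$ of the order-16 group and observe that $f_2,f_8$ are algebraically independent invariants of degrees $2$ and $8$ with $2\cdot8=16=|G|$, which by the Chevalley criterion implies they generate the whole ring. The second, more elementary route is a dimension count: show the space of admissible $(B_0,\dots,B_n)$ has dimension $\lfloor n/4\rfloor+1$, matching the number of monomials $f_2^{n-4j}f_8^j$, which are linearly independent because $z^j$ is the lowest-order term of $(1+z)^{n-4j}(z(1-z)^2)^j$. We would try the Molien route first.
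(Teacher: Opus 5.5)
Your proposal is correct, and it follows the same skeleton as the paper. The paper also introduces $F_\psi(x,y)=\sum_\ell B_\ell(\psi)x^{2n-2\ell}y^{2\ell}$, derives the same four invariances (both sign flips, the swap, and the Hadamard map $H:(x,y)\mapsto\bigl(\tfrac{x+y}{\sqrt2},\tfrac{x-y}{\sqrt2}\bigr)$ obtained from the Krawtchouk identity), and concludes $F_\psi\in\mathbb{R}[\xi,\eta]$ with $\xi=x^2+y^2$ and $\eta=x^2y^2(x^2-y^2)^2$.

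The only difference is how the invariant ring is determined.
\begin{itemize}
\item The paper does it by hand. Evenness gives a polynomial in $x^2,y^2$, and the swap makes it a polynomial in $\xi$ and $x^2y^2$. Since $H$ sends $x^2y^2\mapsto\xi^2/4-x^2y^2$, the polynomial must be even in $x^2y^2-\xi^2/8$, whose square is $\xi^4/64-\eta/4$.
\item You instead identify the group as the dihedral reflection group of order $16$ and appeal to Chevalley--Shephard--Todd and Molien.
\end{itemize}
Your route is more conceptual: it exhibits the lemma as Gleason's theorem for Type~I self-dual codes and explains why the degrees are $2$ and $8$. The paper's route is elementary and self-contained. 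If you take the Molien route, you can skip the product-of-degrees criterion. The coefficient of $t^{2n}$ in $1/\bigl((1-t^2)(1-t^8)\bigr)$ is $\lfloor n/4\rfloor+1$, and your lowest-order-term argument already shows that the $\lfloor n/4\rfloor+1$ products $f_2^{\,n-4j}f_8^{\,j}$ are linearly independent. Hence they span the degree-$2n$ invariants.

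One step needs a better justification. The Krawtchouk identity $B_\ell=2^{-n}\sum_k K_{2\ell}(2k;2n)B_k$ only says that the even-$y$-degree part of $W_\psi\circ H$ equals $W_\psi$. To conclude $W_\psi\circ H=W_\psi$, you also need the odd-$y$-degree part to vanish. That does not follow from $t=2k$ being even; for example, $K_1(2k;2n)=2n-4k\neq0$ in general.

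It does follow from the swap symmetry: $W_\psi\bigl(H(x,-y)\bigr)=W_\psi\bigl(\tfrac{x-y}{\sqrt2},\tfrac{x+y}{\sqrt2}\bigr)=W_\psi\bigl(H(x,y)\bigr)$, so $W_\psi\circ H$ is even in $y$. Equivalently, $\sum_k K_r(2k;2n)B_k=0$ for odd $r$, because $K_r(2n-2k;2n)=(-1)^rK_r(2k;2n)$ and $B_k=B_{n-k}$. The paper's derivation of this symmetry also drops the odd terms without comment, so it is worth stating explicitly.
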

Recall that by Eq.~\eqref{eq:Pi_T_vacuum}, \begin{equation}
B_\ell(\ketbra{0^n}{0^n})=\binom n\ell,
\end{equation}
its generating function is
\begin{equation}
G_{\ketbra{0^n}{0^n}}(z)=(1+z)^n.
\end{equation}
We calculate the difference in Eq.~\eqref{eq:central-ineq}, for any fixed-parity pure state $\psi$,
\begin{equation}
\begin{aligned}
B_{\floor{n/2}}(\psi)-\binom n{\floor{n/2}}&=B_{\floor{n/2}}(\psi)-B_{\floor{n/2}}(\ketbra{0^n}{0^n})
\\&=\Coeff{z^{\floor{n/2}}}{\left(G_\psi(z)-G_{\ketbra{0^n}{0^n}}(z)\right)}
\\&=\Coeff{z^{\floor{n/2}}}{\sum_{j=1}^{\floor{n/4}}
c_j(\psi)(1+z)^{n-4j}\left[z(1-z)^2\right]^j}
\\&=\sum_{j=1}^{\floor{n/4}}\Coeff{z^{\floor{n/2}-j}}{
c_j(\psi)(1+z)^{n-4j}(1-z)^{2j}}
\\&=\sum_{j=1}^{\floor{n/4}}c_j(\psi)e_{n,j},
\end{aligned}
\end{equation}
where in the third line we 
have used $c_0(\psi)=1$ and the $j=0$ term in Eq.~\eqref{eq:gleason-G} collides with $G_{\ketbra{0^n}{0^n}}(z)$. In the last line, we define
\begin{equation}\label{eq:def_e_n_j}
e_{n,j}=\Coeff{z^{\floor{n/2}-j}}{(1+z)^{n-4j}(1-z)^{2j}}.
\end{equation}

We introduce another set of coefficients through a linear transformation of $e_{n,j}$, and show that each of them is strictly positive.

\begin{lemma}[$\beta_{n,p}$ is strictly positive]\label{lem:beta_n_p_positive}
For $1\leq p\leq\floor{n/4}$, define $\beta_{n,p}$ from a linear transformation of $e_{n,j}$ as
\begin{equation}\label{eq:def_beta_n_p}
\beta_{n,p}
=
\sum_{j=1}^{p}
(-1)^j
\frac{2p}{p+j}\binom{p+j}{p-j}
\frac{e_{n,j}}{4^{n-3j}},
\end{equation}
then $\beta_{n,p}>0$.
\end{lemma}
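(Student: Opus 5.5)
The plan is to turn $\beta_{n,p}$ into a single coefficient extraction, or contour integral, that is visibly positive. First I express $e_{n,j}$ by Cauchy's formula:
\begin{equation}
e_{n,j}=\frac{1}{2\pi i}\oint \frac{(1+z)^{n-4j}(1-z)^{2j}}{z^{\floor{n/2}-j+1}}\,\mathrm{d}z .
\end{equation}
Substituting this into Eq.~\eqref{eq:def_beta_n_p} lets me exchange the finite sum over $j$ with the integral. The factors depending on $j$ then combine into a single power $w^j$, where
\begin{equation}
w=\frac{4^3 z(1-z)^2}{(1+z)^4}.
\end{equation}
The remaining weight is $(-1)^j\frac{2p}{p+j}\binom{p+j}{p-j}$. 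This is exactly the coefficient sequence of the Chebyshev-type identity
\begin{equation}
\sum_{j=0}^{p}(-1)^j\frac{2p}{p+j}\binom{p+j}{p-j}x^{j}=2T_{2p}\!\left(\sqrt{1-x/4}\right)\cdot(\text{sign}),
\end{equation}
that is, the Lucas/Chebyshev expansion $2\cos(2p\theta)=\sum_j(-1)^j\frac{2p}{p+j}\binom{p+j}{p-j}(2\sin\theta)^{2j}$. The $j=0$ term equals $2$, and it must be subtracted because the sum in Eq.~\eqref{eq:def_beta_n_p} starts at $j=1$.

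The key step is then to choose the substitution $z=\tan^2(\phi/2)$ or, better, $z=e^{i\vartheta}$ on the unit circle. There $(1+z)^2/z=4\cos^2(\vartheta/2)$ and $(1-z)^2/z=-4\sin^2(\vartheta/2)$, so the integrand becomes real up to a phase $z^{\,n/2-\floor{n/2}}$. I will verify that the variable $x=4^{-1}\cdot 64\,z(1-z)^2/(1+z)^4\cdot(\cdots)$ simplifies to $4\sin^2(2\theta)$ for a suitable angle $\theta$, namely $x=16\sin^2(\vartheta/2)\cos^{-2}(\vartheta/2)/\ldots$. The $4^{n-3j}$ normalization is presumably chosen to make this work. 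The Chebyshev sum then collapses to $2\cos(2p\cdot\Theta)-2$. This would give $\beta_{n,p}$ as a trigonometric integral of the form
\begin{equation}
\frac{1}{4^{n}}\cdot\frac{1}{2\pi}\int_{-\pi}^{\pi}(2\cos\tfrac{\vartheta}{2})^{n}\,\cos\!\big((n/2-\floor{n/2})\vartheta\big)\,\big(2\cos(2p\Theta(\vartheta))-2\big)\,\mathrm{d}\vartheta .
\end{equation}
If the sign conventions land correctly, $2-2\cos(\cdot)\ge 0$ appears, possibly after flipping the overall sign coming from $(-1)^j$. Since $\cos^n(\vartheta/2)\ge0$ on $[-\pi,\pi]$, the integrand is nonnegative and not identically zero, which gives strict positivity.

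The main obstacle is the odd-$n$ case, where the leftover factor $z^{1/2}$ produces $\cos(\vartheta/2)$ rather than $1$. That factor is still nonnegative on $[-\pi,\pi]$, so the argument should survive, but I need to check the branch carefully. The other difficulty is matching the exact normalization $4^{n-3j}$ and the sign so that the Chebyshev collapse yields $1-\cos$ rather than $1+\cos$.

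If the contour route gets messy, my fallback is to avoid it: I would prove $\beta_{n,p}>0$ by an explicit positive closed form. I would compute $\beta_{n,p}$ for small $p$ symbolically, guess a hypergeometric product formula, and verify it by Zeilberger-style recurrences in $n$ and $p$. Positivity would then be read off the product.
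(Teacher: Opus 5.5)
Your contour setup (swap the finite $j$-sum with Cauchy's integral, then restrict to $z=e^{i\vartheta}$) is sound, but the step you call key is wrong as stated. On $|z|=1$ your variable is $w=64z(1-z)^2/(1+z)^4=-16\sin^2(\vartheta/2)/\cos^4(\vartheta/2)\le 0$. This is real, nonpositive and unbounded, so it is not $4\sin^2\Theta$ for any real angle $\Theta$. If you nevertheless collapse the weighted sum to $2\cos(2p\Theta)-2$ with real $\Theta$, the integrand comes out $\le 0$, which is the wrong sign. There is also no overall sign to flip, because $(-1)^j$ depends on $j$. What actually happens is that the Chebyshev identity is evaluated at $w=-4\sinh^2 t$. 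Since $2T_{2p}(\sqrt{1-w/4})$ is a polynomial in $w$, this gives $\sum_{j=1}^{p}(-1)^j\frac{2p}{p+j}\binom{p+j}{p-j}w^j=2\cosh(2pt)-2\ge 0$. Positivity therefore comes from $\cosh\ge 1$, not from $1-\cos\ge 0$, and your predicted final form is incorrect.

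More to the point, no collapse is needed. Since $w\le 0$ on the circle, $(-1)^j w^j=\abs{w}^j$, so every summand is separately nonnegative. Unwinding the substitution, this is exactly the statement that, for even $n$, $(-1)^j e_{n,j}=\frac{2^{n-2j}}{2\pi}\int_0^{2\pi}\cos^{n-4j}(\vartheta/2)\sin^{2j}(\vartheta/2)\,\mathrm{d}\vartheta>0$. The exponent $n-4j$ is nonnegative because $j\le p\le\floor{n/4}$. For odd $n$, take the real part of the leftover phase $e^{i\vartheta/2}$, which turns $\cos^{n-4j}$ into $\cos^{n-4j+1}$ and settles your branch worry. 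The weights $\frac{2p}{p+j}\binom{p+j}{p-j}4^{-(n-3j)}$ are positive, so $\beta_{n,p}$ is a sum of strictly positive terms. This termwise argument is precisely the paper's proof. The Chebyshev identity and the Zeilberger fallback are unnecessary.
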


Recall the binary Krawtchouk polynomial defined in Eq.~\eqref{eq:def_Krawtchouk},
\begin{equation}\label{eq:def_Krawtchouk_2}
K_r(t;2n)
=
\sum_{j=0}^r(-1)^j\binom tj\binom{2n-t}{r-j},
\end{equation}
we further define the function for any fixed-parity pure state $\psi$,
\begin{equation}\label{eq:def-Sr}
S_r(\psi)
=
\sum_{\ell=0}^n
(-1)^\ell K_r(2\ell;2n)B_\ell(\psi).
\end{equation}
We can show that this function is non-negative for any $0\leq r\leq2n$.
\begin{lemma}[$S_r(\psi)$ is non-negative]\label{lem:S_r_non-negative}
For every fixed-parity pure state $\psi$ and every $0\le r\le2n$,
\begin{equation}
 S_r(\psi)\ge0.
\end{equation}
\end{lemma}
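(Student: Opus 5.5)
The plan is to write $S_r(\psi)$ as a sum of traces of products of two positive semi-definite operators, which is manifestly non-negative. The two ingredients are the Krawtchouk character identity and the reversal anti-automorphism of the Majorana algebra.

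\emph{Step 1 (character identity).} For a fixed $T\subseteq[2n]$ with $\abs{T}=t$, a count by $j=\abs{R\cap T}$ gives
\begin{equation}
\sum_{\abs{R}=r}(-1)^{\abs{R\cap T}}=K_r(t;2n).
\end{equation}
This is the same computation already used in deriving the linear relation for $B_\ell(\psi)$ via Krawtchouk polynomials. Inserting it into the definition of $S_r$ in Eq.~\eqref{eq:def-Sr} and using $B_\ell(\psi)=\sum_{\abs{S}=2\ell}\Tr(\widehat{\gamma}_S\psi)^2$ yields
\begin{equation}
S_r(\psi)=\sum_{\abs{R}=r}\ \sum_{S\ \mathrm{even}}(-1)^{\abs{S}/2}(-1)^{\abs{R\cap S}}\Tr(\widehat{\gamma}_S\psi)^2 .
\end{equation}

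\emph{Step 2 (absorbing the sign $(-1)^{\abs{S}/2}$).} Let $\tau$ be the reversal anti-automorphism, i.e.\ the linear map with $\tau(\gamma_{s_1}\cdots\gamma_{s_k})=\gamma_{s_k}\cdots\gamma_{s_1}$. Then $\tau(\gamma_S)=(-1)^{k(k-1)/2}\gamma_S$ for $k=\abs{S}$. For even $k=2m$ we have $k(k-1)/2=m(2m-1)\equiv m \pmod 2$, so $\tau(\widehat{\gamma}_S)=(-1)^{\abs{S}/2}\widehat{\gamma}_S$ on even-degree products.

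Any anti-automorphism of the full matrix algebra $\mathcal L((\mathbb C^2)^{\otimes n})$ has the form $X\mapsto CX^{\mathrm T}C^{-1}$. Because $\tau$ fixes the Hermitian generators $\gamma_j$, it commutes with the adjoint, so $C$ can be taken unitary. Concretely, under the Jordan--Wigner encoding each $\gamma_j^{\mathrm T}=\pm\gamma_j$, and $C$ is a suitable Majorana product that corrects these signs. Hence
\begin{equation}
\widetilde\psi:=\tau(\psi)=C\psi^{\mathrm T}C^\dagger
\end{equation}
is again a pure state, and $\Tr(\widehat{\gamma}_S\widetilde\psi)=(-1)^{\abs{S}/2}\Tr(\widehat{\gamma}_S\psi)$ for every even $S$. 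I expect this step to be the main obstacle: one must check carefully that $\tau$ is implemented unitarily and preserves positivity. If the explicit construction of $C$ is awkward, I would fall back on the general fact that an anti-automorphism commuting with $\dagger$ preserves the spectrum, which already forces $\widetilde\psi\geq0$.

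\emph{Step 3 (positivity).} Expand $\psi=d^{-1}\sum_S\Tr(\widehat{\gamma}_S\psi)\widehat{\gamma}_S$ with only even $S$ contributing, and similarly for $\widetilde\psi$. Use $\widehat{\gamma}_R\widehat{\gamma}_S\widehat{\gamma}_R=(-1)^{\abs{R\cap S}}\widehat{\gamma}_S$ for even $S$; this holds because $\abs{R}\abs{S}$ is even, and $\widehat{\gamma}_R^2=\bI$. Together with the Hilbert--Schmidt orthogonality in Eq.~\eqref{eq:Hermitian_Majorana_product_orthogonal}, the inner sum becomes
\begin{equation}
\sum_{S}(-1)^{\abs{R\cap S}}\Tr(\widehat{\gamma}_S\psi)\Tr(\widehat{\gamma}_S\widetilde\psi)=d\,\Tr\left(\widehat{\gamma}_R\psi\widehat{\gamma}_R\widetilde\psi\right)=d\,\abs{\bra{\phi}\widehat{\gamma}_R\ket{\psi}}^2\ge0,
\end{equation}
where $\widetilde\psi=\ketbra{\phi}{\phi}$. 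Summing over $\abs{R}=r$ gives $S_r(\psi)=d\sum_{\abs{R}=r}\abs{\bra{\phi}\widehat{\gamma}_R\ket{\psi}}^2\ge0$. This identity holds for every $0\le r\le 2n$, so no separate treatment of different ranges of $r$ is needed.
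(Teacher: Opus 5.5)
Your proof is correct, and it takes a genuinely different route from the paper.

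\textbf{The paper's route.} It passes to two copies. Using $\widehat{\gamma}_T\otimes\widehat{\gamma}_T=(-1)^{\ell}\prod_{j\in T}\gamma_j\otimes\gamma_j$ for $\abs{T}=2\ell$, it writes $S_r(\psi)=\sum_{\abs{S}=r}\langle\psi|^{\otimes2}\Omega_S|\psi\rangle^{\otimes2}$, with $\Omega_S=\tfrac12\bigl[\prod_j(\bI+\varepsilon_j(S)\gamma_j\otimes\gamma_j)+\prod_j(\bI-\varepsilon_j(S)\gamma_j\otimes\gamma_j)\bigr]$. This operator is positive semi-definite because the $\gamma_j\otimes\gamma_j$ are commuting involutions.

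\textbf{Your route.} You stay on a single copy and absorb the sign $(-1)^{\abs{S}/2}$ with the reversal anti-automorphism. This produces the explicit sum-of-squares formula $S_r(\psi)=d\sum_{\abs{R}=r}\abs{\langle\phi|\widehat{\gamma}_R|\psi\rangle}^2$.

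\textbf{The step you flagged goes through.} A sign-correcting unitary exists explicitly: $C=\gamma_2\gamma_4\cdots\gamma_{2n}$ for even $n$, and $C=\gamma_1\gamma_3\cdots\gamma_{2n-1}$ for odd $n$. It commutes with the odd-indexed Majoranas and anticommutes with the even-indexed ones, as required by $\gamma_{2k-1}^{\mathrm T}=\gamma_{2k-1}$ and $\gamma_{2k}^{\mathrm T}=-\gamma_{2k}$. Alternatively, $\tau(\psi)$ is Hermitian, idempotent and of unit trace, hence a rank-one projector.

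\textbf{Relation between the two.} The two arguments prove the same identity term by term, $\langle\psi|^{\otimes2}\Omega_R|\psi\rangle^{\otimes2}=d\,\abs{\langle\phi|\widehat{\gamma}_R|\psi\rangle}^2$. On $\ket{\psi}^{\otimes2}$, the expectation of $\Omega_R$ equals that of $\prod_j(\bI+\varepsilon_j(R)\gamma_j\otimes\gamma_j)$. That operator is $4^n$ times a rank-one projector onto a maximally entangled vector, and your formula is the single-copy unfolding of that overlap.

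\textbf{What each buys.} The paper's route is more elementary: it needs no anti-automorphism and no Skolem--Noether-type argument. It also exhibits $S_r$ as the expectation of a positive two-copy observable, in keeping with its multi-copy viewpoint. Yours gives a concrete transition-amplitude expression that makes the equality case $S_r(\psi)=0$ transparent.
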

Furthermore, $S_r(\psi)$ is related to the coefficients $c_j(\psi)$ in Eq.~\eqref{eq:gleason-G} through the following lemma.
\begin{lemma}[Relation between $S_r(\psi)$ and $c_j(\psi)$]\label{lem:S_r_c_j_relation}
Let $1\leq p\leq\floor{n/4}$, we have
\begin{equation}\label{eq:S_r_c_j_relation}
S_{n-4p}(\psi)=
(-1)^p\sum_{j=p}^{\floor{n/4}}
4^{n-3j}\binom{2j}{j-p}c_j(\psi).
\end{equation}
\end{lemma}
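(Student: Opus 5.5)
The plan is to prove the identity at the level of generating functions in an auxiliary variable $w$. I will first compute $\sum_{r}S_r(\psi)w^r$ in closed form, then substitute the representation of $G_\psi$ from Lemma~\ref{lem:Gleason_form}, and finally read off the coefficient of $w^{n-4p}$.

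\textbf{Step 1 (Krawtchouk generating function).} From the definition in Eq.~\eqref{eq:def_Krawtchouk_2}, the standard identity
\begin{equation}
\sum_{r=0}^{2n}K_r(t;2n)w^r=(1-w)^t(1+w)^{2n-t}
\end{equation}
follows by expanding both binomials. Inserting it into Eq.~\eqref{eq:def-Sr} with $t=2\ell$ gives
\begin{equation}
\sum_{r=0}^{2n}S_r(\psi)w^r=\sum_{\ell=0}^n(-1)^\ell(1-w)^{2\ell}(1+w)^{2n-2\ell}B_\ell(\psi)=(1+w)^{2n}G_\psi(z),
\end{equation}
with $z=-(1-w)^2/(1+w)^2$. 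Since $G_\psi$ is a polynomial of degree $n$, the right-hand side is a polynomial in $w$, so this is a genuine polynomial identity.

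\textbf{Step 2 (substitute the Gleason form).} With this choice of $z$ one computes
\begin{equation}
1+z=\frac{4w}{(1+w)^2},\qquad
z(1-z)^2=-\frac{4(1-w)^2(1+w^2)^2}{(1+w)^6}.
\end{equation}
Substituting into Eq.~\eqref{eq:gleason-G}, the powers of $(1+w)$ in the $j$-th term combine to
\begin{equation}
(1+w)^{2n-2(n-4j)-6j}=(1+w)^{2j}.
\end{equation}
This factor pairs with $(1-w)^{2j}(1+w^2)^{2j}$ to give $(1-w^4)^{2j}$. Hence
\begin{equation}
\sum_{r}S_r(\psi)w^r=\sum_{j=0}^{\floor{n/4}}(-1)^j4^{n-3j}c_j(\psi)\,w^{n-4j}(1-w^4)^{2j}.
\end{equation}

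\textbf{Step 3 (extract the coefficient).} To read off the coefficient of $w^{n-4p}$, the $j$-th term must supply $w^{4(j-p)}$ from $(1-w^4)^{2j}$. This requires $j\ge p$, and the corresponding coefficient is $(-1)^{j-p}\binom{2j}{j-p}$. The overall sign is $(-1)^j(-1)^{j-p}=(-1)^p$, which yields exactly Eq.~\eqref{eq:S_r_c_j_relation}.

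The only delicate point is the bookkeeping in Step 2: the signs and powers of $4$ and $(1+w)$ must be tracked carefully. The rest is routine.
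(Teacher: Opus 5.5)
Your proposal is correct and follows essentially the same route as the paper: the Krawtchouk generating function gives $\sum_r S_r(\psi)w^r=(1+w)^{2n}G_\psi\bigl(-(1-w)^2/(1+w)^2\bigr)$. Substituting the Gleason form reduces each term to $(-1)^j4^{n-3j}c_j(\psi)\,w^{n-4j}(1-w^4)^{2j}$, and taking the coefficient of $w^{n-4p}$ (the $k=j-p$ term of $(1-w^4)^{2j}$, with sign $(-1)^{2j-p}=(-1)^p$) gives exactly Eq.~\eqref{eq:S_r_c_j_relation}, as in the paper.
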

At last, we rewrite the difference in Eq.~\eqref{eq:central-ineq} using $\beta_{n,p}$ and $S_r(\psi)$.
\begin{lemma}[Rewrite the central Majorana sector purity difference]\label{lem:positive-certificate}
For every fixed-parity pure state $\psi$,
\begin{equation}\label{eq:positive-certificate}
B_{\floor{n/2}}(\psi)-\binom n{\floor{n/2}}
=
\sum_{p=1}^{\floor{n/4}}\beta_{n,p}S_{n-4p}(\psi),
\end{equation}
where $\beta_{n,p}$ is given in Eq.~\eqref{eq:def_beta_n_p} and $S_r(\psi)$ is defined in Eq.~\eqref{eq:def-Sr}.
\end{lemma}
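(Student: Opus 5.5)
The plan is pure linear algebra on top of the earlier lemmas. By the computation preceding Lemma~\ref{lem:beta_n_p_positive}, the left-hand side equals $\sum_{j=1}^{\floor{n/4}}c_j(\psi)e_{n,j}$. It therefore suffices to show that the right-hand side, once expanded via Lemma~\ref{lem:S_r_c_j_relation}, yields the same coefficient $e_{n,j}$ in front of each $c_j(\psi)$.

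Substituting Eq.~\eqref{eq:S_r_c_j_relation} into $\sum_p\beta_{n,p}S_{n-4p}(\psi)$ and exchanging the order of summation gives
\begin{equation*}
\sum_{p=1}^{\floor{n/4}}\beta_{n,p}S_{n-4p}(\psi)=\sum_{j=1}^{\floor{n/4}}c_j(\psi)\,4^{n-3j}\sum_{p=1}^{j}(-1)^p\binom{2j}{j-p}\beta_{n,p}.
\end{equation*}
The claim thus reduces to the identity
\begin{equation*}
\sum_{p=1}^{j}(-1)^p\binom{2j}{j-p}\beta_{n,p}=\frac{e_{n,j}}{4^{n-3j}}\qquad(1\le j\le\floor{n/4}).
\end{equation*}
Set $x_k=(-1)^k e_{n,k}/4^{n-3k}$. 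By the definition in Eq.~\eqref{eq:def_beta_n_p}, $\beta_{n,p}=\sum_{k=1}^p\frac{2p}{p+k}\binom{p+k}{p-k}x_k$, so the matrix $M_{pk}=\frac{2p}{p+k}\binom{p+k}{p-k}$ is lower triangular with unit diagonal. The needed identity is then a statement about triangular matrices: the matrix $N_{jp}=(-1)^{p}\binom{2j}{j-p}$ must satisfy $(NM)_{jk}=(-1)^{j}\delta_{jk}$, so that the sum becomes $(-1)^j x_j = e_{n,j}/4^{n-3j}$. Equivalently, after absorbing signs, $\sum_{p=k}^{j}(-1)^{p-k}\binom{2j}{j-p}\frac{2p}{p+k}\binom{p+k}{p-k}=\delta_{jk}$ (up to the sign convention, which I would check against small cases).

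This is the classical inverse pair relating Chebyshev polynomials to powers. We have $\frac{2p}{p+k}\binom{p+k}{p-k}$ as the coefficient expressing $2T_{2p}$-type terms in powers of $(2\sin\theta)^{2k}$, i.e.\ the expansion of $2\cos(2p\theta)$ in $(2\sin\theta)^{2k}$ with alternating signs. Conversely, $\binom{2j}{j-p}$ expresses $(2\cos\theta)^{2j}$ through $2\cos(2p\theta)$. To prove the identity, I would use the substitution $w+w^{-1}$: expand $(w^{1/2}-w^{-1/2})^{2j}$ into $\sum_p(-1)^{j-p}\binom{2j}{j-p}(w^p+w^{-p})$, and expand $w^p+w^{-p}$ as $\sum_k\frac{p}{p+k}\binom{p+k}{p-k}\cdots(w^{1/2}-w^{-1/2})^{2k}$. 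Composing the two expansions gives the identity.

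I expect the main obstacle to be the bookkeeping of signs and of the $p=0$ (constant) term. The $p=0$ term, with $\binom{2j}{j}$ in the binomial expansion, has no counterpart among the $\beta$'s. It must drop out, and it does, because $k\ge1$ terms of $w^p+w^{-p}$ carry no constant, while the identity only concerns indices $\ge1$. I would verify the final sign convention numerically for $j\le2$ before writing the proof.
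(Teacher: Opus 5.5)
Your proposal is correct, and its reduction is the paper's. Both proofs write the left side as $\sum_j c_j(\psi)e_{n,j}$, substitute Lemma~\ref{lem:S_r_c_j_relation}, and exchange the sums. What remains is the triangular inverse relation $\sum_{p=k}^{j}(-1)^{p-k}\binom{2j}{j-p}\frac{2p}{p+k}\binom{p+k}{p-k}=\delta_{jk}$, which is exactly the paper's $MN=\bI$. Your signs are right; for example, $j=3$, $k=1$ gives $-15+24-9=0$.

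The difference is in how this identity is proved.

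The paper verifies the reverse product $NM=\bI$ entry by entry. It rewrites each off-diagonal entry as $\frac{2p}{K}$ times a $K$-th finite difference of the degree-$(K-1)$ polynomial $\binom{x+p+r-1}{K-1}$, which vanishes. It then uses that a one-sided inverse of a square matrix is two-sided.

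You instead treat the two matrices as mutually inverse change-of-basis matrices between $\{w^p+w^{-p}\}$ and $\{y^k\}$, where $y=(w^{1/2}-w^{-1/2})^2=w-2+w^{-1}$. Concretely:
\begin{itemize}
\item Insert $w^p+w^{-p}=\sum_{k=0}^{p}a_{p,k}y^k$, with $a_{p,k}=\frac{2p}{p+k}\binom{p+k}{p-k}$, into $y^j=(-1)^j\binom{2j}{j}+\sum_{p=1}^{j}(-1)^{j-p}\binom{2j}{j-p}(w^p+w^{-p})$.
\item Compare coefficients of $y^k$ for $k\ge1$. This is legitimate because the $y^k$ are linearly independent Laurent polynomials.
\item Your remark about $p=0$ is better phrased this way: that term is a multiple of $y^0$, so it enters only the $k=0$ equation.
\end{itemize}

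Your route proves the needed orientation directly. It also explains the coefficients as Chebyshev data: $2\cos 2p\theta=\sum_k(-1)^k a_{p,k}(2\sin\theta)^{2k}$.

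The cost is that all the combinatorial content now sits in that expansion of $w^p+w^{-p}$ in powers of $y$, which you must prove or cite. The quickest way is induction on $w^{p+1}+w^{-p-1}=(y+2)(w^p+w^{-p})-(w^{p-1}+w^{1-p})$. That amounts to checking $a_{p+1,k}=2a_{p,k}+a_{p,k-1}-a_{p-1,k}$, with the conventions $a_{0,0}=2$ and $a_{0,k}=0$ for $k\ge1$.

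The paper's finite-difference argument, by contrast, is fully self-contained.
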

We can now prove Theorem~\ref{thm:central-ineq} and finally finish the proof of Theorem~\ref{thm:FNG_Pauli} in the main text.

\begin{proof}[Proof of Theorem~\ref{thm:central-ineq}]
Combining Lemma~\ref{lem:positive-certificate} with Lemma~\ref{lem:beta_n_p_positive} and Lemma~\ref{lem:S_r_non-negative}, we have
\begin{equation}\label{eq:central-ineq_1}
 B_{\floor{n/2}}(\psi)-\binom n{\floor{n/2}}\geq0.
\end{equation}
The equality holds if and only if $S_{n-4p}(\psi)=0$ for all $1\leq p\leq\floor{n/4}$. In this case, from Lemma~\ref{lem:S_r_c_j_relation}, $p=\floor{n/4}$ implies $c_{\floor{n/4}}(\psi)=0$. Inserting it to $p=\floor{n/4}-1$ gives $c_{\floor{n/4}-1}(\psi)=0$. Using this iteratively until $p=1$, we have
\begin{equation}
c_1(\psi)=c_2(\psi)=\cdots=c_{\floor{n/4}}(\psi)=0.
\end{equation}
According to Eq.~\eqref{eq:gleason-G} in Lemma~\ref{lem:Gleason_form}, it is equivalent to
\begin{equation}
G_\psi(z)=(1+z)^n,\qquad B_1(\psi)=n.
\end{equation}
Especially, the FAF
\begin{equation}
\mathfrak F_1(\ket{\psi})=n-B_1(\psi)=0,
\end{equation}
which indicates $\psi$ is a pure fermionic Gaussian state.
\end{proof}

We provide the proofs for each of the above lemmas.
\begin{proof}[Proof of Lemma~\ref{lem:Gleason_form}]
For the generating function of the Majorana sector purities
\begin{equation}
G_\psi(z)=\sum_{\ell=0}^nB_\ell(\psi)z^\ell,
\end{equation}
we consider a generalized homogeneous function, defined by
\begin{equation}\label{eq:def-Fxy}
F_\psi(x,y)=\sum_{\ell=0}^n B_\ell(\psi)x^{2n-2\ell}y^{2\ell}.
\end{equation}
In this way, $G_\psi(z)=F_\psi(x,y)$ by setting $x^2=1$ and $y^2=z$. We show $F_\psi(x,y)$ has four symmetries
\begin{equation}
\begin{aligned}
\text{\ding{172} }
F_\psi(x,y)&=F_\psi(-x,y),&\qquad
\text{\ding{173} }
F_\psi(x,y)&=F_\psi(x,-y),
\\
\text{\ding{174} }
F_\psi(x,y)&=F_\psi(y,x),&\qquad
\text{\ding{175} }
F_\psi(x,y)&=F_\psi\left(\frac{x+y}{\sqrt{2}},\frac{x-y}{\sqrt{2}}\right).
\end{aligned}
\end{equation}
Symmetries~\ding{172} and~\ding{173} hold since both $x$ and $y$ have only even power in $F_\psi(x,y)$. Symmetry~\ding{174} is from $B_\ell(\psi)=B_{n-\ell}(\psi)$ by Eq.~\eqref{eq:Majorana_sector_purity_symmetric}. To show symmetry~\ding{175},
\begin{equation}
\begin{aligned}
F_\psi\left(\frac{x+y}{\sqrt{2}},\frac{x-y}{\sqrt{2}}\right)&=2^{-n}F_\psi(x+y,x-y)
\\&=2^{-n}\sum_{k=0}^n B_k(\psi)(x+y)^{2n-2k}(x-y)^{2k}
\\&=2^{-n}\sum_{k=0}^n B_k(\psi)\sum_{a=0}^{2n-2k}\binom{2n-2k}{a}x^{2n-2k-a}y^a\sum_{b=0}^{2k}\binom{2k}b(-1)^bx^{2k-b}y^b
\\&=2^{-n}\sum_{k=0}^n B_k(\psi)\sum_{\ell=0}^n\sum_{j=0}^{2\ell}\binom{2k}j\binom{2n-2k}{2\ell-j}(-1)^jx^{2n-2\ell}y^{2\ell}
\\&=\sum_{\ell=0}^n B_\ell(\psi)x^{2n-2\ell}y^{2\ell}
=F_\psi(x,y),
\end{aligned}
\end{equation}
In the third line, we have used the binomial expansion. In the fourth line we relabel
$a=2\ell-j$ and $b=j$, so the summation becomes
\begin{equation}
\sum_{a=0}^{2n-2k}\sum_{b=0}^{2k}\rightarrow\sum_{\ell=0}^n\sum_{j=0}^{\min\{2\ell,2k\}}.
\end{equation}
We can safely set the upper limit of $j$ to be $2\ell$ since $\binom{2k}j$ appears in the term. In the last line, 
we have used the linear transformation between Majorana sector purities by the binary Krawtchouk polynomial in Eq.~\eqref{eq:Majorana_sector_purity_linear}:
\begin{equation}\label{eq:Majorana_sector_purity_linear_2}
B_\ell(\psi)=2^{-n}\sum_{k=0}^n\sum_{j=0}^{2\ell}(-1)^j\binom{2k}j\binom{2n-2k}{2\ell-j}B_k(\psi).
\end{equation}
We then prove any $2n$-degree homogeneous polynomial $F(x,y)\in\mathbb{R}[x,y]$ satisfying the above four symmetries can be represented as
\begin{equation}
F(x,y)=\sum_{j=0}^{\floor{n/4}}c_j\xi^{n-4j}\eta^j,\qquad\text{with }\xi=x^2+y^2, \eta=x^2y^2(x^2-y^2)^2.
\end{equation}
First, from symmetries~\ding{172} and~\ding{173}, $F$ contains only even powers of both variables $x$ and $y$, there exists a polynomial $F_1$, such that
\begin{equation}
F(x,y)=F_1(x^2,y^2).
\end{equation}
By symmetry~\ding{174}, $F_1$ is invariant under exchange of $x$ and $y$, there exists a polynomial $F_2$, such that
\begin{equation}
F(x,y)=F_1(x^2,y^2)=F_2(x^2+y^2,x^2y^2)=F_2(\xi,x^2y^2).
\end{equation}
We then use symmetry~\ding{175}, under transformation $(x,y)\rightarrow((x+y)/\sqrt{2},(x-y)/\sqrt{2})$, we have
\begin{equation}
F(x,y)=F_2(\xi,x^2y^2)=F_2\left(\xi,\frac{\xi^2}4-x^2y^2\right).
\end{equation}
So $F_2$ is invariant under reflection of the second argument about $\xi^2/8$, there exists a polynomial $F_3$, such that
\begin{equation}
F(x,y)=F_3\left(\xi,\left(\frac{\xi^2}8-x^2y^2\right)^2\right)=F_3\left(\xi,\frac{\xi^4}{64}-\frac\eta4\right).
\end{equation}
Thus, there exists coefficients $b_k$ and $c_j$, such that
\begin{equation}
\begin{aligned}
F(x,y)&=\sum_{k=0}^{\floor{n/4}}b_k\xi^{n-4k}\left(\frac{\xi^4}{64}-\frac\eta4\right)^k
\\&=\sum_{k=0}^{\floor{n/4}}b_k\xi^{n-4k}\sum_{j=0}^k\binom kj\left(\frac{\xi^4}{64}\right)^{k-j}\left(-\frac\eta4\right)^j
\\&=\sum_{j=0}^{\floor{n/4}}\sum_{k=j}^{\floor{n/4}}b_k\binom kj64^{j-k}(-4)^{-j}\xi^{n-4j}\eta^j
\\&=\sum_{j=0}^{\floor{n/4}}c_j\xi^{n-4j}\eta^j.
\end{aligned}
\end{equation}
Especially, there exist coefficients $c_j(\psi)\in\mathbb{R}$, such that
\begin{equation}
F_\psi(x,y)=\sum_{j=0}^{\floor{n/4}}c_j(\psi)\xi^{n-4j}\eta^j=\sum_{j=0}^{\floor{n/4}}c_j(\psi)\left(x^2+y^2\right)^{n-4j}\left(x^2y^2(x^2-y^2)^2\right)^j,
\end{equation}
Setting $x^2=1$ and $y^2=z$, we have
\begin{equation}
G_\psi(z)=\sum_{j=0}^{\floor{n/4}}c_j(\psi)\left(1+z\right)^{n-4j}\left(z(1-z)^2\right)^j,
\end{equation}
which is Eq.~\eqref{eq:gleason-G}. Furthermore, we have
\begin{equation}
c_0(\psi)=\Coeff{z^0}{G_\psi(z)}=B_0(\psi)=1
\end{equation}
from Eq.~\eqref{eq:Majorana_sector_purity_boundary}. We finish the proof of Lemma~\ref{lem:Gleason_form}.
\end{proof}

\begin{proof}[Proof of Lemma~\ref{lem:beta_n_p_positive}]
To prove (cf.\ Eq.~\eqref{eq:def_beta_n_p})
\begin{equation}\label{eq:def_beta_n_p_2}
\beta_{n,p}
=
\sum_{j=1}^{p}
(-1)^j
\frac{2p}{p+j}\binom{p+j}{p-j}
\frac{e_{n,j}}{4^{n-3j}}>0,
\end{equation}
it is sufficient to prove for $1\leq j\leq\floor{n/4}$, $(-1)^je_{n,j}>0$, with (cf.\ Eq.~\eqref{eq:def_e_n_j})
\begin{equation}
e_{n,j}=\Coeff{z^{\floor{n/2}-j}}{(1+z)^{n-4j}(1-z)^{2j}}.
\end{equation}
Relabel $K=n-2j$ and $k=\floor{n/2}-j$, then
\begin{equation}
e_{n,j}=\Coeff{z^k}{(1+z)^{K-2j}(1-z)^{2j}}.
\end{equation}
Kotice that for a polynomial $f(z)\in\mathbb{C}[z]$, the coefficient of $z^k$ can be calculated using Cauchy's coefficient formula
\begin{equation}
\Coeff{z^k}{f(z)}=\frac1{2\pi i}\oint_{\abs{z}=1}\frac{f(z)}{z^{k+1}}\mathrm{d}z=\frac1{2\pi}\int_0^{2\pi}f(e^{i\theta})e^{-ik\theta}\mathrm{d}\theta.
\end{equation}
Inserting $f(z)=(1+z)^{K-2j}(1-z)^{2j}$, we discuss the following two cases and show both lead to $(-1)^je_{n,j}>0$.
\begin{itemize}
\item If $K=2k$ is even,
\begin{equation}
\begin{aligned}
(-1)^je_{n,j}&=\frac{(-1)^j}{2\pi}\int_0^{2\pi}(1+e^{i\theta})^{K-2j}(1-e^{i\theta})^{2j}e^{-ik\theta}\mathrm{d}\theta
\\&=\frac{(-1)^j}{2\pi}\int_0^{2\pi}\left(2e^{i\theta/2}\cos\left(\frac\theta2\right)\right)^{2k-2j}\left(-2ie^{i\theta/2}\sin\left(\frac\theta2\right)\right)^{2j}e^{-ik\theta}\mathrm{d}\theta
\\&=\frac{2^K}{2\pi}\int_0^{2\pi}\cos^{2k-2j}\left(\frac\theta2\right)\sin^{2j}\left(\frac\theta2\right)\mathrm{d}\theta\\&>0,
\end{aligned}
\end{equation}
where in the last line, we have used all terms in the integrand are even powers, and are not always zero for $\theta\in[0,2\pi)$.
\item If $K=2k+1$ is odd,
\begin{equation}
\begin{aligned}
(-1)^je_{n,j}&=\frac{(-1)^j}{2\pi}\int_0^{2\pi}(1+e^{i\theta})^{K-2j}(1-e^{i\theta})^{2j}e^{-ik\theta}\mathrm{d}\theta
\\&=\frac{(-1)^j}{2\pi}\int_0^{2\pi}\left(2e^{i\theta/2}\cos\left(\frac\theta2\right)\right)^{2k+1-2j}\left(-2ie^{i\theta/2}\sin\left(\frac\theta2\right)\right)^{2j}e^{-ik\theta}\mathrm{d}\theta
\\&=\frac{2^K}{2\pi}\int_0^{2\pi}e^{i\theta/2}\cos^{2k+1-2j}\left(\frac\theta2\right)\sin^{2j}\left(\frac\theta2\right)\mathrm{d}\theta
\\&=\frac{2^K}{2\pi}\int_0^{2\pi}\cos^{2k+2-2j}\left(\frac\theta2\right)\sin^{2j}\left(\frac\theta2\right)\mathrm{d}\theta\\&>0.
\end{aligned}
\end{equation}
where in the fourth line, we have used $e_{n,j}\in\mathbb{R}$, so $e_{n,j}=\mathrm{Re}[e_{n,j}]$ and $\mathrm{Re}[e^{i\theta/2}]=\cos\left(\theta/2\right)$. In the last line,  we have used all terms in the integrand are even powers, and are not always zero for $\theta\in[0,2\pi)$.
\end{itemize}
\end{proof}

\begin{proof}[Proof of Lemma~\ref{lem:S_r_non-negative}]
Recall
\begin{equation}\label{eq:def-Sr_2}
S_r(\psi)
=
\sum_{\ell=0}^n
(-1)^\ell K_r(2\ell;2n)B_\ell(\psi),
\end{equation}
where the binary Krawtchouk polynomial is defined as
\begin{equation}\label{eq:def_Krawtchouk_3}
K_r(t;2n)
=
\sum_{j=0}^r(-1)^j\binom tj\binom{2n-t}{r-j}.
\end{equation}
For a subset $S\subseteq[2n]$, define $\varepsilon_j(S)\in\{\pm1\}$ for $j\in[2n]$ and the Hermitian observable $\Omega_S$ as
\begin{equation}
\varepsilon_j(S)=
\begin{cases}
-1,\quad &j\in S,\\
1,&j\notin S,
\end{cases}
\qquad
\text{and}
\qquad
\Omega_S=\sum_{\substack{T\subseteq[2n]\\\abs{T}\text{ is even}}}\prod_{j\in T}\varepsilon_j(S)\gamma_j\otimes\gamma_j.
\end{equation}
Notice that $[\gamma_j\otimes\gamma_j,\gamma_k\otimes\gamma_k]=0$ for $j,k\in[2n]$, so the product $\prod_{j\in T}$ is unambiguous. Specifically, when $T=\emptyset$, we set
\begin{equation}
\prod_{j\in\emptyset}\varepsilon_j(S)\gamma_j\otimes\gamma_j=\bI.
\end{equation}
We show that first $\Omega_S\geq0$ and then
\begin{equation}
S_r(\psi)=\sum_{\substack{S\subseteq[2n]\\\abs{S}=r}}\bra{\psi}^{\otimes2}\Omega_S\ket{\psi}^{\otimes2}\geq0.
\end{equation}
We use the identities
\begin{equation}
\begin{aligned}
\prod_{j=1}^{2n}\left(\bI+\varepsilon_j(S)\gamma_j\otimes\gamma_j\right)&=\sum_{T\subseteq[2n]}\prod_{j\in T}\varepsilon_j(S)\gamma_j\otimes\gamma_j,\\
\prod_{j=1}^{2n}\left(\bI-\varepsilon_j(S)\gamma_j\otimes\gamma_j\right)&=\sum_{T\subseteq[2n]}(-1)^{\abs{T}}\prod_{j\in T}\varepsilon_j(S)\gamma_j\otimes\gamma_j,
\end{aligned}
\end{equation}
which can be seen from the following. Expand the ordered product by distributivity. 
For each factor one chooses either the identity term or the term 
$\varepsilon_j(S)\gamma_j\otimes\gamma_j$. Such choices are in one-to-one 
correspondence with subsets $T\subseteq[2n]$, where $T$ records the 
indices for which the nontrivial term is chosen. 
We have
\begin{equation}
\begin{aligned}
\Omega_S&=\frac12\sum_{T\subseteq[2n]}\left(1+(-1)^{\abs{T}}\right)\prod_{j\in T}\varepsilon_j(S)\gamma_j\otimes\gamma_j
\\&=\frac12\left(\sum_{T\subseteq[2n]}\prod_{j\in T}\varepsilon_j(S)\gamma_j\otimes\gamma_j+\sum_{T\subseteq[2n]}(-1)^{\abs{T}}\prod_{j\in T}\varepsilon_j(S)\gamma_j\otimes\gamma_j\right)
\\&=\frac12\left(\prod_{j=1}^{2n}\left(\bI+\varepsilon_j(S)\gamma_j\otimes\gamma_j\right)+\prod_{j=1}^{2n}\left(\bI-\varepsilon_j(S)\gamma_j\otimes\gamma_j\right)\right)
\\&\geq0.
\end{aligned}
\end{equation}
In the last line, we have used the fact that the eigenvalues of $\varepsilon_j(S)\gamma_j\otimes\gamma_j$ are $\pm1$, so the terms $\left(\bI+\varepsilon_j(S)\gamma_j\otimes\gamma_j\right)$ and $\left(\bI-\varepsilon_j(S)\gamma_j\otimes\gamma_j\right)$ are positive semi-definite. They commute with each other, so their product is also positive semi-definite.
Fix any $T\subseteq[2n]$ with $\abs{T}=2\ell$, we have
\begin{equation}
\begin{aligned}
\widehat{\gamma}_T\otimes\widehat{\gamma}_T&=(-i)^{\abs{T}(\abs{T}-1)}\gamma_T\otimes\gamma_T
\\&=(-1)^\ell\prod_{j\in T}\gamma_j\otimes\gamma_j,
\end{aligned}
\end{equation}
where we have used $\widehat{\gamma}_T=(-i)^{\abs{T}(\abs{T}-1)/2}\gamma_T$ in Eq.~\eqref{eq:def_Hermitian_Majorana_product}. 
And
\begin{equation}
\sum_{\substack{S\subseteq[2n]\\\abs{S}=r}}\prod_{j\in T}\varepsilon_j(S)=\sum_{\substack{S\subseteq[2n]\\\abs{S}=r}}(-1)^{\abs{S\cap T}}=\sum_{j=0}^r(-1)^j\binom{2\ell}j\binom{2n-2\ell}{r-j}=K_r(2\ell;2n),
\end{equation}
where in the middle equality we count $j=\abs{S\cap T}$ and use $\abs{T}=2\ell$. Finally, inserting the definition of $B_\ell(\psi)$in Eq.~\eqref{eq:Majorana_sector_purity}, we obtain
\begin{equation}
\begin{aligned}
S_r(\psi)&=
\sum_{\ell=0}^n
(-1)^\ell K_r(2\ell;2n)B_\ell(\psi)
\\&=\sum_{\ell=0}^n
(-1)^\ell\sum_{\abs{T}=2\ell}\sum_{\substack{S\subseteq[2n]\\\abs{S}=r}}\prod_{j\in T}\varepsilon_j(S)
\abs{\bra{\psi}\widehat{\gamma}_T\ket{\psi}}^2
\\&=\sum_{\substack{S\subseteq[2n]\\\abs{S}=r}}\sum_{\substack{T\subseteq[2n]\\\abs{T}\text{ is even}}}\prod_{j\in T}\varepsilon_j(S)\bra{\psi}^{\otimes2}\left(\gamma_j\otimes\gamma_j\right)\ket{\psi}^{\otimes2}
\\&=\sum_{\substack{S\subseteq[2n]\\\abs{S}=r}}\bra{\psi}^{\otimes2}\Omega_S\ket{\psi}^{\otimes2}\geq0.
\end{aligned}
\end{equation}

\end{proof}

\begin{proof}[Proof of Lemma~\ref{lem:S_r_c_j_relation}]
Let $1\leq p\leq\floor{n/4}$, we want to show
\begin{equation}\label{eq:S_r_c_j_relation_2}
S_{n-4p}(\psi)=
(-1)^p\sum_{j=p}^{\floor{n/4}}
4^{n-3j}\binom{2j}{j-p}c_j(\psi),
\end{equation}
where $c_j(\psi)$ is determined in Eq.~\eqref{eq:gleason-G} in Lemma~\ref{lem:Gleason_form}. We calculate the generating function of $K_r(2\ell;2n)$:
\begin{equation}
\begin{aligned}
\sum_{r=0}^{2n}K_r(2\ell;2n)t^r&=\sum_{r=0}^{2n}\sum_{j=0}^r(-1)^j\binom{2\ell}j\binom{2n-2\ell}{r-j}t^r
\\&=\left(\sum_{j=0}^{2\ell}\binom{2\ell}j(-t)^j\right)\left(\sum_{k=0}^{2n-2\ell}\binom{2n-2\ell}kt^k\right)
\\&=(1-t)^{2\ell}(1+t)^{2n-2\ell},
\end{aligned}
\end{equation}
where in the second line, we have used $k=r-j$.
Define and then calculate the generating function of $S_r(\psi)$:
\begin{equation}\label{eq:S_r_generating_function}
\begin{aligned}
S_\psi(t)&=\sum_{r=0}^{2n}S_r(\psi)t^r
\\&=\sum_{r=0}^{2n}\sum_{\ell=0}^n
(-1)^\ell K_r(2\ell;2n)B_\ell(\psi)t^r
\\&=\sum_{\ell=0}^n
(-1)^\ell B_\ell(\psi)\sum_{r=0}^{2n}K_r(2\ell;2n)t^r
\\&=\sum_{\ell=0}^n
(-1)^\ell B_\ell(\psi)(1-t)^{2\ell}(1+t)^{2n-2\ell}
\\&=(1+t)^{2n}\sum_{\ell=0}^n
 B_\ell(\psi)\left(-\left(\frac{1-t}{1+t}\right)^2\right)^\ell
\\&=(1+t)^{2n}G_\psi\left(-\left(\frac{1-t}{1+t}\right)^2\right),
\end{aligned}
\end{equation}
where in the last step, we 
have used $G_\psi$ is the generating function of $B_\ell(\psi)$. By Eq.~\eqref{eq:gleason-G} in Lemma~\ref{lem:Gleason_form}, there exist real coefficients $\left\{c_j(\psi)\right\}$ with $c_0(\psi)=1$, such that
\begin{equation}\label{eq:gleason-G_2}
G_\psi(z)=\sum_{j=0}^{\floor{n/4}}
c_j(\psi)(1+z)^{n-4j}\left(z(1-z)^2\right)^j.
\end{equation}
Inserting into Eq.~\eqref{eq:S_r_generating_function}, we have
\begin{equation}
\begin{aligned}
S_\psi(t)&=(1+t)^{2n}G_\psi\left(-\left(\frac{1-t}{1+t}\right)^2\right)
\\&=(1+t)^{2n}\sum_{j=0}^{\floor{n/4}}
c_j(\psi)\left(\frac{4t}{(1+t)^2}\right)^{n-4j}(-1)^j\left(\frac{1-t}{1+t}\right)^{2j}\left(\frac{2(1+t^2)}{(1+t)^2}\right)^{2j}
\\&=\sum_{j=0}^{\floor{n/4}}(-1)^j4^{n-3j}c_j(\psi)t^{n-4j}(1-t^4)^{2j}
\\&=\sum_{j=0}^{\floor{n/4}}\sum_{k=0}^{2j}(-1)^{j+k}4^{n-3j}c_j(\psi)\binom{2j}kt^{n-4j+4k}.
\end{aligned}
\end{equation}
Finally, for $1\leq p\leq\floor{n/4}$, we obtain
\begin{equation}
\begin{aligned}
S_{n-4p}(\psi)&=\Coeff{t^{n-4p}}{S_\psi(t)}
\\&=\Coeff{t^{n-4p}}{\sum_{j=0}^{\floor{n/4}}\sum_{k=0}^{2j}(-1)^{j+k}4^{n-3j}c_j(\psi)\binom{2j}kt^{n-4j+4k}}
\\&=\sum_{j=0}^{\floor{n/4}}(-1)^{p}4^{n-3j}c_j(\psi)\binom{2j}{j-p},
\end{aligned}
\end{equation}
which is Eq.~\eqref{eq:S_r_c_j_relation_2} after noticing that the terms for $j<p$ in the summation vanish.
\end{proof}

\begin{proof}[Proof of Lemma~\ref{lem:positive-certificate}]
It is equivalent to prove
\begin{equation}\label{eq:e_n_j_beta_n_p_identity}
\sum_{j=1}^{\floor{n/4}}c_j(\psi)e_{n,j}=\sum_{p=1}^{\floor{n/4}}\beta_{n,p}S_{n-4p}(\psi).
\end{equation}
Inserting Eq.~\eqref{eq:S_r_c_j_relation} in Lemma~\ref{lem:S_r_c_j_relation} into the right-hand side, we obtain
\begin{equation}\label{eq:beta_n_p_intermediate}
\begin{aligned}
\sum_{p=1}^{\floor{n/4}}\beta_{n,p}S_{n-4p}(\psi)
&=\sum_{p=1}^{\floor{n/4}}\beta_{n,p}(-1)^p\sum_{j=p}^{\floor{n/4}}
4^{n-3j}\binom{2j}{j-p}c_j(\psi)
\\&=\sum_{j=1}^{\floor{n/4}}c_j(\psi)4^{n-3j}\sum_{p=1}^j\binom{2j}{j-p}(-1)^p\beta_{n,p}.
\end{aligned}
\end{equation}
Notice that $\beta_{n,p}$ in Eq.~\eqref{eq:def_beta_n_p} can be regarded as a linear transformation from $e_{n,j}$ as
\begin{equation}\label{eq:def_beta_n_p_3}
\beta_{n,p}
=
\sum_{j=1}^{p}
(-1)^j
\frac{2p}{p+j}\binom{p+j}{p-j}
\frac{e_{n,j}}{4^{n-3j}},
\end{equation}
it is sufficient to prove that the inverse transformation is given by
\begin{equation}\label{eq:e_n_j_from_beta_n_p}
e_{n,j}=4^{n-3j}\sum_{p=1}^j\binom{2j}{j-p}(-1)^p\beta_{n,p},
\end{equation}
then plug it into Eq.~\eqref{eq:beta_n_p_intermediate} finishes the proof of Lemma~\ref{lem:positive-certificate}.
Define the vectors $\vec{u}:=(u_j)_{j=1}^{\floor{n/4}}$ and $\vec{v}:=(v_p)_{p=1}^{\floor{n/4}}$ by
\begin{equation}
u_j=\frac{e_{n,j}}{4^{n-3j}},\qquad v_p=(-1)^p\beta_{n,p}.
\end{equation}
The transformation from $e_{n,j}$ to $\beta_{n,p}$ in Eq.~\eqref{eq:def_beta_n_p_3} can be represented as $\vec{v}=N\vec{u}$ by matrix $N\in\mathcal{M}_{\floor{n/4}}(\mathbb{R})$ with elements
\begin{equation}
N_{p,j}=
\begin{dcases}
(-1)^{p-j}\frac{2p}{p+j}\binom{p+j}{p-j},\qquad&p\geq j,\\
0,&p<j.
\end{dcases}
\end{equation}
Define the matrix $M\in\mathcal{M}_{\floor{n/4}}(\mathbb{R})$ with elements
\begin{equation}
M_{j,p}=
\begin{dcases}
\binom{2j}{j-p},\qquad&j\geq p,\\
0,&j<p,
\end{dcases}
\end{equation}
then Eq.~\eqref{eq:e_n_j_from_beta_n_p} reduces to $\vec{u}=M\vec{v}$, which is equivalent to $MN=\bI$.
We notice that both $N$ and $M$ are lower-triangle matrices with diagonal elements all equal to $1$. It is sufficient to show $NM=\bI$, so that $N$ and $M$ are matrix inverses to each other. We calculate the matrix element
\begin{equation}
\begin{aligned}
\left(NM\right)_{p,r}&=\sum_{j=1}^{\floor{n/4}}N_{p,j}M_{j,r}
\\&=\sum_{j=r}^{p}(-1)^{p-j}\frac{2p}{p+j}\binom{p+j}{p-j}\binom{2j}{j-r}
\\&=2p\sum_{j=r}^{p}(-1)^{p-j}\frac{(p+j-1)!}{(p-j)!(j-r)!(j+r)!}.
\end{aligned}
\end{equation}
\begin{itemize}
\item If $p<r$, then the index set in the above summation is empty, so $\left(NM\right)_{p,r}=0$.
\item If $p=r$, then
\begin{equation}
\left(NM\right)_{p,r}=2r\frac{(2r-1)!}{(2r)!}=1.
\end{equation}
\item If $p>r$, relabel $k=j-r$ and $K=p-r\geq1$. We have
\begin{equation}\label{eq:NM_element}
\begin{aligned}
\left(NM\right)_{p,r}&=2p\sum_{j=r}^{p}(-1)^{p-j}\frac{(p+j-1)!}{(p-j)!(j-r)!(j+r)!}
\\&=2p\sum_{k=0}^{K}(-1)^{K-k}\frac{(p+r+k-1)!}{(K-k)!k!(2r+k)!}
\\&=\frac{2p}K\sum_{k=0}^{K}(-1)^{K-k}\binom Kk\binom{p+r+k-1}{K-1}.
\end{aligned}
\end{equation}
 Consider taking the finite difference of a polynomial $P(x)\in\mathbb{R}[x]$, defined as
\begin{equation}
\Delta P(x)=P(x+1)-P(x).
\end{equation}
We show that for $L\in\mathbb{N}_+$, the $L$-fold action results in
\begin{equation}\label{eq:L-fold_finite_difference}
\Delta^LP(x)=\sum_{l=0}^L(-1)^{L-l}\binom LlP(x+l).
\end{equation}
For $L=1$ and $2$, it is easily verified, as by the linearity of $\Delta$,
\begin{equation}
\Delta^2P(x)=\Delta P(x+1)-\Delta P(x)=P(x+2)-2P(x+1)+P(x).
\end{equation}
Assume Eq.~\eqref{eq:L-fold_finite_difference} holds for $L$, then
\begin{equation}
\begin{aligned}
\Delta^{L+1}P(x)&=\Delta^L P(x+1)-\Delta^L P(x)
\\&=\sum_{l'=0}^L(-1)^{L-l'}\binom L{l'}P(x+1+l')-\sum_{l=0}^L(-1)^{L-l}\binom LlP(x+l)
\\&=\sum_{l=0}^{L+1}(-1)^{L+1-l}\left(\binom L{l-1}+\binom Ll\right)P(x+l)
\\&=\sum_{l=0}^{L+1}(-1)^{L+1-l}\binom{L+1}lP(x+l),
\end{aligned}
\end{equation}
where in the third line we relabel $l'=l-1$ in the first summation in the second line and use $\dbinom L{-1}=\dbinom L{L+1}=0$. By induction, Eq.~\eqref{eq:L-fold_finite_difference} holds for any $L\in\mathbb{N}_+$. Let $x=0$, we have
\begin{equation}\label{eq:L-fold_finite_difference_0}
\Delta^LP(0)=\sum_{l=0}^L(-1)^{L-l}\binom LlP(l).
\end{equation}
Notice that $\deg(\Delta(x^a))=a-1$ for $a\in\mathbb{N}_+$, so in general, if $\deg(P(x))\geq1$, the degree decreases at least by one after finite difference,
\begin{equation}
\deg(\Delta P(x))\leq\deg(P(x))-1.
\end{equation}
Especially, if $\deg(P(x))<L$, then $\Delta^LP(x)=0$. Combined with Eq.~\eqref{eq:L-fold_finite_difference_0}, we obtain the identity for any polynomial $P(x)$ with $\deg(P)<L$,
\begin{equation}\label{eq:L-fold_finite_difference_small_degree}
\sum_{l=0}^L(-1)^{L-l}\binom LlP(l)=0.
\end{equation}
Back to the calculation of Eq.~\eqref{eq:NM_element}, we define the polynomial of $x$ with degree $K-1$ as
\begin{equation}
Q(x)=\binom{x+p+r-1}{K-1}=\frac1{(K-1)!}\prod_{l=1}^{K-1}(x+p+r-l),
\end{equation}
and recognize $Q(k)=\dbinom{p+r+k-1}{K-1}$. Thus by Eq.~\eqref{eq:L-fold_finite_difference_small_degree},
\begin{equation}
\begin{aligned}
\left(NM\right)_{p,r}&=\frac{2p}K\sum_{k=0}^{K}(-1)^{K-k}\binom Kk\binom{p+r+k-1}{K-1}
\\&=\frac{2p}K\sum_{k=0}^{K}(-1)^{K-k}\binom KkQ(k)
\\&=0.
\end{aligned}
\end{equation}
\end{itemize}
So we prove Eq.~\eqref{eq:e_n_j_from_beta_n_p} and complete the proof of Lemma~\ref{lem:positive-certificate}.
\end{proof}

\end{document}